\newtheorem{theorem}{Theorem}[section]
\newtheorem{lemma}[theorem]{Lemma}
\newtheorem{corollary}[theorem]{Corollary}
\newtheorem{definition}[theorem]{Definition}
\newtheorem{claim}[theorem]{Claim}
\newtheorem{proposition}[theorem]{Proposition}
\newtheorem{observation}[theorem]{Observation}
\newtheorem{fact}[theorem]{Fact}
\crefname{fact}{Fact}{Facts}
\newtheorem{theorem-restatable}[theorem]{Theorem}
\newtheorem{corollary-restatable}[theorem]{Corollary}
\theoremstyle{remark}
\newtheorem{example}[theorem]{Example}
\newtheorem{remark}[theorem]{Remark}
\newtheorem{problem}[theorem]{Problem}{\bfseries}{\itshape}
\newcommand{\sub}{\subseteq}
\newcommand{\sm}{\setminus}
\newcommand{\Oh}{\mathcal{O}}
\newcommand{\tOh}{\tilde{\mathcal{O}}}
\newcommand{\ed}{\mathsf{ed}}
\newcommand{\hd}{\mathsf{hd}}
\newcommand{\per}{\mathsf{per}}
\newcommand{\Occ}{\mathsf{OCC}}
\newcommand{\aOcc}{\mathsf{aOCC}}
\newcommand{\eps}{\varepsilon}
\newcommand{\brkp}{\mathcal{B}}
\newcommand{\poly}{\mathrm{poly}}
\newcommand{\G}{\mathcal{G}}
\newcommand{\chain}{\mathcal{C}}
\newcommand{\rot}{\mathsf{rot}}
\newcommand{\gr}{\mathsf{GR}}
\newcommand{\qgr}{\mathsf{qGR}}
\newcommand{\ga}{\mathsf{GA}}
\newcommand{\qga}{\mathsf{qGA}}
\newcommand{\mtch}{\mathcal{M}}
\newcommand{\qmtch}{\mathsf{q}\mathcal{M}}
\newcommand{\cost}{\mathsf{cost}}
\newcommand{\width}{\mathsf{width}}
\newcommand{\LZ}{\mathsf{LZ}}
\newcommand{\A}{\mathcal{A}}
\newcommand{\rhs}{\mathsf{rhs}}
\newcommand{\LCE}{\mathsf{LCE}}
\newcommand{\CGK}{\mathsf{CGK}}
\newcommand{\bCGK}{\mathsf{CE}}
\newcommand{\greedyalg}{\mathsf{Greedy}}
\newcommand{\dd}{\mathinner{.\,.\allowbreak}}
\newcommand{\Exp}{\mathbb{E}}
\newcommand{\rank}{\mathsf{rank}}
\newcommand{\select}{\mathsf{select}}
\newcommand{\num}{\mathsf{num}}
\newcommand{\rev}[1]{\overline{#1}}
\newcommand{\Zz}{\mathbb{Z}_{\ge 0}}
\newcommand{\Zp}{\mathbb{Z}_{+}}
\newcommand{\CR}{\mathsf{D}}
\newcommand{\skE}{\mathsf{sk}^E}
\newcommand{\skH}{\mathsf{sk}^H}
\newcommand{\MI}{\mathsf{MI}}
\newcommand{\PMI}{\mathsf{PMI}}
\newcommand{\MP}{\mathsf{MP}}
\newcommand{\skL}{\mathsf{sk}^{P}}
\newcommand{\skq}{\mathsf{sk}^{\mathsf{q}}}
\newcommand{\Alg}{\mathsf{W}}
\newcommand{\maxLZ}{\rev{\mathsf{maxLZ}}}
\newcommand{\Enc}{\mathsf{E}}
\title{Small space and streaming pattern matching with \texorpdfstring{$k$}{k} edits}
\author[1]{Tomasz Kociumaka\thanks{Partly supported by NSF 1652303, 1909046, and HDR TRIPODS 1934846 grants, and an Alfred P. Sloan Fellowship.}}
\author[2]{Ely Porat}
\author[3]{Tatiana Starikovskaya\thanks{Partly supported by the grant ANR-20-CE48-0001 from the French National Research Agency (ANR).}}
\affil[1]{University of California, Berkeley, USA}
\affil[ ]{\texttt{kociumaka@berkeley.edu}}
\affil[2]{Bar-Ilan University, Israel}
\affil[ ]{\texttt{porately@cs.biu.ac.il}}
\affil[3]{DI/ENS, PSL Research University, France}
\affil[ ]{\texttt{tat.starikovskaya@gmail.com}}
\date{\vspace{-.5cm}}
\begin{document}
\maketitle

\begin{abstract}
In this work, we revisit the fundamental and well-studied problem of approximate pattern matching under edit distance. Given an integer $k$, a pattern $P$ of length $m$, and a text $T$ of length $n \ge m$, the task is to find substrings of $T$ that are within edit distance $k$ from $P$. 
Our main result is a streaming algorithm that solves the problem in $\tOh(k^5)$ space and $\tOh(k^8)$ amortised time per character of the text, providing answers correct with high probability. (Hereafter, $\tOh(\cdot)$ hides a $\poly(\log n)$ factor.) This answers a decade-old question:
since the discovery of a $\poly(k\log n)$-space streaming algorithm for pattern matching under Hamming distance by Porat and Porat [FOCS 2009], the existence of an analogous result for edit distance remained open.
Up to this work, no $\poly(k\log n)$-space algorithm was known even in the simpler semi-streaming model, where $T$ comes as a stream but $P$ is available for read-only access. In this model, we give a deterministic algorithm that achieves slightly better complexity.

Our central technical contribution is a new space-efficient deterministic encoding of two strings, called the greedy encoding, which encodes a set of all alignments of cost $\le k$ with a certain property (we call such alignments \emph{greedy}). On strings of length at most $n$, the encoding occupies $\tOh(k^2)$ space. We use the encoding to compress substrings of the text that are close to the pattern. In order to do so, we compute the encoding for substrings of the text and of the pattern, which requires read-only access to the latter.

In order to develop the fully streaming algorithm, we further introduce a new edit distance sketch parametrised by integers $n\ge k$. For any string of length at most $n$, the sketch is of size $\tOh(k^2)$ and it can be computed with an $\tOh(k^2)$-space streaming algorithm.  Given the sketches of two strings, in $\tOh(k^3)$ time we can compute their edit distance or certify that it is larger than~$k$.
This result improves upon $\tOh(k^8)$-size sketches of Belazzougui and Zhu [FOCS 2016]
and very recent $\tOh(k^3)$-size sketches of Jin, Nelson, and Wu [STACS 2021].
\end{abstract}

\section{Introduction}
In the pattern matching problem, given two strings, a \emph{pattern} $P$ of length $m$ and a \emph{text} $T$ of length $n$, one must find all substrings of the text equal to the pattern. 
This is a fundamental problem of string processing with a myriad of applications in such fields as computational biology, information retrieval, and signal processing, to mention just a few. However, in many applications, retrieving substrings that are exactly equal to the pattern is not enough, and one must search for substrings merely \emph{similar} to the pattern. 
This task, which is often referred to as \emph{approximate pattern matching problem}, can be formalised in the following way: for each position $i$ in the text, compute the smallest distance $d_i$ between $P$ and any substring of $T$ that ends at position $i$.
In string processing, the two most popular distances are the Hamming distance and the edit distance. Recall that the Hamming distance between two equal-length strings is the number of mismatching pairs of characters of the strings. The edit distance of two strings, not necessarily of equal lengths, is the smallest number of edits (character insertions, deletions, and substitutions) needed to transform one string into the other.
Due to its practical importance, the  approximate pattern matching problem has been extensively studied in the literature, originally in the classic setting with input strings explicitly stored in memory.

In general, computing Hamming distance is easier and can be considered as a preliminary step towards tackling edit distance. The first solution for approximate pattern matching under the Hamming distance was given by Abrahamson~\cite{Abrahamson} and, independently, Kosaraju~\cite{Kosaraju}; based on the fast Fourier transform, it spends $\Oh(n\sqrt{m \log m})$ time to compute the Hamming distance between the pattern and all the length-$m$ substrings of the text. Up to date, no algorithms improve upon this time complexity for the general version of approximate pattern matching under Hamming distance, but there are better solutions when one is interested only in the distances not exceeding a given $k$, a variant known as the \emph{$k$-mismatch problem}. The first algorithm for the $k$-mismatch problem was given by Landau and Vishkin~\cite{LandauV86}, who improved the running time to $\Oh(kn)$ via so-called ``kangaroo jumps'', a technique utilising the suffix tree to compute the longest common prefix of two suffixes of a string in constant time. This bound was further improved by Amir et al.~\cite{AmirLP04} who showed two solutions, one with running time $\Oh(n\sqrt{k \log k})$ and one with running time $\tOh(n+k^3 n/m)$-time algorithm\footnote{Hereafter, $\tOh(\cdot)$ hides a factor of $\poly(\log n)$.}.
Continuing this line of research, Clifford et al.~\cite{DBLP:conf/soda/CliffordFPSS16} presented an $\tOh(n+k^2 n/m)$-time algorithm, while Gawrychowski and Uzna\'{n}ski~\cite{GawrychowskiU18} demonstrated a smooth trade-off between the latter and the solution of Amir et al. by designing an $\tOh(n+ kn/\!\sqrt{m})$-time algorithm. Very recently, Chan et al.~\cite{DBLP:conf/stoc/ChanGKKP20} shaved off most of the polylogarithmic factors and achieved the running time of  $\Oh(n+\min(k^2 n/m, kn\sqrt{\log m}/\sqrt{m}))$ at the cost of Monte-Carlo randomization. 

For the edit distance, a detailed survey of previous solutions can be found in~\cite{Navarro:2001}, and here we only discuss the landmark results of the theoretical landscape. For the general variant of the problem, the first algorithm was given by Sellers~\cite{SELLERS1980359}. The algorithm was based on dynamic programming and used $\Oh(nm)$ time. Masek and Paterson~\cite{MASEK198018} improved the running time of the algorithm to $\Oh(nm/ \log n)$ via the Four Russians technique. On the lower bound side, it is known that there is no solution with strongly subquadratic time complexity unless the Strong Exponential Time hypothesis~\cite{IP01} is false, even for the binary alphabet~\cite{10.1145/2746539.2746612,DBLP:conf/focs/BringmannK15}. 
Abboud et al.~\cite{10.1145/2897518.2897653} gave a more precise bound under a weaker assumption: Namely, they showed that even  shaving an arbitrarily large polylog factor would imply that NEXP does not have non-uniform $\mathrm{NC}^1$ circuits. Finally, Clifford et al.~\cite{DBLP:conf/soda/CliffordJS15} showed that, in the cell-probe model with the word size $w = 1$, any randomised algorithm that computes the edit distances between the pattern and the text online must spend $\Omega(\frac{\sqrt{\log n}}{(\log \log n)^{3/2}})$ expected amortised time per character of the text. 

Similarly to the Hamming distance, one can define the threshold variant of approximate pattern matching, which we refer to as approximate pattern matching with $k$ edits. The first algorithm for this variant of the problem was developed by Landau and Vishkin~\cite{DBLP:journals/jal/LandauV89}; this by-now classical algorithm solves the problem in $\Oh(nk)$ time. The current best result was achieved by a series of work~\cite{SahinalpV96,DBLP:journals/siamcomp/ColeH02} with the running time (for some range of parameters) $\Oh(n+k^4 n/m)$. Very recently, Charalampopoulos et al.~\cite{DBLP:journals/corr/abs-2004-08350} studied the problem for both distances in the grammar-compressed setting. Their result, in particular, implies an $\Oh(k^4)$-space and $\Oh(nk^4)$-time algorithm for the read-only model, where random access to characters in $P$ and $T$ is allowed and one accounts only for the ``extra'' space, the space required beyond the space needed to store the pattern and the text.

In this work, we focus on developing algorithms for approximate pattern matching with $k$ edits that use as little space as possible. In particular, we consider the streaming model of computation. In this model, we assume that the input arrives as a stream, one character at a time. We define the space complexity of the algorithm to be all the space used, in other words, we cannot store any information about the input without accounting for~it. 

The field of streaming algorithms for string processing is relatively recent but, because of its practical interest, it has received a lot of attention in the literature. It started with a seminal paper of  Porat and Porat in FOCS 2009~\cite{Porat:09}, who showed streaming algorithms for exact pattern matching and for the $k$-mismatches problem. The result of Porat and Porat was followed by a series of works on streaming pattern matching~\cite{DBLP:journals/talg/BreslauerG14,DBLP:conf/esa/CliffordFPSS15,DBLP:conf/esa/GolanP17, DBLP:conf/soda/CliffordFPSS16, starikovskaya:LIPIcs:2017:7320, DBLP:conf/icalp/GolanKP18, gawrychowski_et_al:LIPIcs:2019:10492,clifford2018streaming,DBLP:conf/cpm/GolanKKP20, DBLP:journals/iandc/RadoszewskiS20}, search of repetitions in streams~\cite{Ergun:10,stream-periodicity-mismatches, stream-periodicity-wildcards,DBLP:journals/algorithmica/GawrychowskiMSU19, DBLP:conf/cpm/MerkurevS19, DBLP:conf/spire/MerkurevS19, DBLP:conf/cpm/GawrychowskiRS19}, and recognising formal languages in streams~\cite{DBLP:journals/siamcomp/MagniezMN14,ganardi_et_al:LIPIcs:2018:9131, DBLP:conf/lata/GanardiHL18, ganardi_et_al:LIPIcs:2018:8485, ganardi_et_al:LIPIcs:2016:6853, DBLP:conf/mfcs/GanardiJL18, DBLP:journals/tcs/BabuLRV13, franois_et_al:LIPIcs:2016:6355,ganardi_et_al:LIPIcs:2019:11502}. 

All known streaming algorithms for approximate pattern matching under the Hamming distance~\cite{Porat:09, DBLP:conf/soda/CliffordFPSS16,clifford2018streaming} are based on some rolling hash~--- a hash on strings of fixed length that can be efficiently updated when we delete the first character of a string and append a new one to the end of the string, such as, for example, the famous Karp--Rabin fingerprint, which allows computing the Hamming distance between two strings or certify that it exceeds $k$. The current best algorithm was demonstrated by Clifford et al. in SODA 2019~\cite{clifford2018streaming}. The algorithm uses only $\Oh(k \log \frac{m}{k})$ space, which is optimal up to a logarithmic factor, and spends $\Oh(\log \frac{m}{k} (\sqrt{k \log k} + \log^3 m))$ time per character. The algorithm is necessarily randomised and can err with high probability\footnote{With high probability means with probability at least $1-1/n^c$ for any predefined constant $c>1$.}. In other words, approximate pattern matching under the Hamming distance in the streaming model is essentially fully understood.

On the other hand, for the edit distance there are no small-space solutions, in particular, because there are no rolling hashes that allow to compute the edit distance between strings. When $k$ is small, the state-of-the-art solution~\cite{starikovskaya:LIPIcs:2017:7320} uses $\Oh ( k^8 \sqrt{m} \log^6 m)$ space and $\Oh((k^2 \sqrt{m} + k^{13}) \cdot \log^4 m)$ worst-case time per symbol. Again, the algorithm is randomised and outputs all substrings at edit distance at most $k$ from the pattern with high probability. Another interesting result is that of Chakraborty et al.~\cite{DBLP:conf/fsttcs/Chakraborty0K19}, who developed an algorithm for the general version of approximate pattern matching under edit distance in the model where the text is streaming, but the pattern is read-only. They showed a randomised algorithm that, for every position $i$ of the text $T$, computes the smallest edit distance $d_i$ between $P$ and a suffix of $T[1\dd i]$ with constant multiplicative and $m^{8/9}$-additive approximation (in other words, the algorithm returns a number between $d_i$ and $c \cdot d_i + m^{8/9}$, where $c \ge 1$ is a predetermined constant). The algorithm receives the text online and uses $\Oh(m^{1-1/54})$ extra space, in addition to the space required to store the pattern.  

Naturally, a question arises: is the true complexity of streaming approximate pattern matching under edit distance is on par with that of the Hamming distance? In this work, we answer this question affirmatively.

\subsection{Our results}
The main result of our work is a fully streaming algorithm for approximate pattern matching under the edit distance that uses $\tOh(k^5)$ space and $\tOh(k^8)$ amortized time per character of the text (\cref{th:spst}). The algorithm is randomised and its answers are correct with high probability.

As a stepping stone, we also consider a simpler semi-streaming model introduced in~\cite{DBLP:conf/fsttcs/Chakraborty0K19}. In this model, we assume that the text arrives in a stream, but the pattern is read-only, which means that, at any moment, the algorithm can access any character of the pattern in constant time (but re-writing characters is prohibited). The space complexity of the algorithm is defined as the total space used on top of the read-only memory holding the pattern. In this setting, we show a \emph{deterministic} algorithm for approximate pattern matching under the edit distance that uses $\tOh(k^5)$ space and $\tOh(k^6)$ amortized time per character of the text (\cref{th:rpst}). 

Additionally, we design a new sketch for retrieving the exact edit distance (capped with a threshold $k$) between strings of length at most $n$  (\cref{thm:ske}).
The sketch is of size $\tOh(k^2)$, and it can be built using a streaming algorithm that costs $\tOh(nk)$ total time and uses $\tOh(k^2)$ space. Given the sketches of two strings $X,Y$, in $\tOh(k^3)$ time and $\tOh(k^2)$ space, we can compute the edit distance between $X,Y$ or certify that it is larger than~$k$. The answer is correct with a large constant probability (with standard amplification, we then achieve high probability of success). This improves upon the $\tOh(k^8)$-size sketches of Belazzougui and Zhang~\cite{sketches} and the $\tOh(k^3)$-size sketches of Jin, Nelson, and Wu~\cite{jin2020improved} developed independently.

The conceptual contribution of our work is described in the technical overview (\cref{sec:overview}).

\section{Preliminaries}\label{sec:prelim}
We assume an integer alphabet $\Sigma = \{1, 2, \ldots, \sigma\}$ with $\sigma$ \emph{characters}. A \emph{string} $Y$ is a sequence of characters numbered from $1$ to $n = |Y|$. By $Y[i]$ we denote the $i$-th symbol of $Y$. For a string $Y$ of length $n$, we denote its \emph{reverse} $Y[n] Y[n-1] \ldots Y[1]$ by $\rev{Y}$. We define $Y[i \dd j]$ to be equal to $Y[i] \dots Y[j]$ which we call a \emph{fragment} of $Y$ if $i \le j$ and to the empty string $\eps$ otherwise. We also use notations $Y[i \dd j)$ and $Y(i\dd j]$ which naturally stand for $Y[i] \dots Y[j-1]$ and $Y[i+1] \dots Y[j]$, respectively. 
We call a fragment $Y[1] \dots Y[j]$ \emph{a prefix} of $Y$ and use a simplified notation $Y[\dd i]$, and a fragment $Y[i] \dots Y[n]$ \emph{a suffix} of $Y$ denoted by $Y[i \dd]$. We say that $X$ is a \emph{substring} of $Y$ if $X = Y[i \dd j]$ for some $1 \le i \le j \le |Y|$. The fragment $Y[i \dd j]$ is called an \emph{occurrence} of $X$. 

For a string $Y$, we define $Y^m$ to be the concatenation of $m$ copies of $Y$. We also define $Y^\infty$ to be an infinite string obtained by concatenating infinitely many of copies of $Y$.
We say that a string $X$ of length $x$ is a \emph{period} of a string~$T$ if $X = T[1\dd x]$ and $T[i]=T[i+x]$ for all $i=1,\ldots,|T|-x$. By $\per(T)$ we denote the length of the shortest period of $T$. The string $T$ is called \emph{periodic} if $2 \, \per(T) \le |T|$. 
For a string $Y\in \Sigma^n$, we define a \emph{forward rotation} $\rot(Y)=Y[2] \cdots Y[n]Y[1]$.
In general, a \emph{cyclic rotation} $\rot^s(Y)$ with \emph{shift} $s\in \mathbb{Z}$ is obtained by iterating $\rot$ or the inverse operation $\rot^{-1}$.
A non-empty string $X\in \Sigma^n$ is \emph{primitive} if it is distinct from its non-trivial rotations, i.e., if $X=\rot^s(X)$ holds only when $s$ is a multiple of $n$.

We say that a fragment $X[i\dd i+\ell)$ is a \emph{previous factor} if 
$X[i\dd i+\ell)=X[i'\dd i'+\ell)$ holds for some $i'\in [1\dd i)$.
The \emph{LZ77 factorization} of $X$ is a factorization $X=F_1\cdots F_z$ into non-empty \emph{phrases}
such tht the $j$th phrase $F_j$ is the longest previous factor starting at position $1+|F_1\cdots F_{j-1}|$; if no previous factor starts there, then $F_j$ consists of a single character.
In the underlying \emph{LZ77 representation}, every phrase $F_j=T[j\dd j+\ell)$ that is a previous fragment is encoded as $(i',\ell)$, where $i'\in [1\dd i)$ satisfies $X[i\dd i+\ell)=X[i'\dd i'+\ell)$.
The remaining length-1 phrases are represented by the underlying character.
We use $\LZ(X)$ to denote the underlying \emph{LZ77 representation} 
and $|\LZ(X)|$ to denote its size (the number of phrases).

\subsection{Edit Distance Alignments}
The \emph{edit distance} $\ed(X, Y)$ between two strings $X$ and $Y$ is defined as the smallest number of character insertions, deletions, and substitutions required to transform $X$ to $Y$. The \emph{Hamming distance} $\hd(X,Y)$ allows substitutions only (and we assume $\hd(X,Y)=\infty$ if $|X|\ne |Y|$).

\begin{definition}
  A sequence $(x_t,y_t)_{t=1}^m$ is an \emph{alignment}
  of $X,Y\in \Sigma^*$ if $(x_1,y_1)=(1,1)$,
  $(x_m,y_m)=(|X|+1,|Y|+1)$, and $(x_{t+1},y_{t+1})\in \{(x_t+1,y_t+1),(x_t+1,y_t),(x_t,y_t+1)\}$ for $t\in [1\dd m)$.\footnote{This definition is rather complex, but it is equivalent to the standard definition given in the textbooks. We chose this particular formulation as it allowed us to introduce notions essential for this work in a rigorous way.}
\end{definition}
Given an alignment $\A = (x_t,y_t)_{t=1}^m$ of strings $X,Y\in \Sigma^*$,
for every $t\in [1\dd m)$:
\begin{itemize}
  \item If $(x_{t+1},y_{t+1})=(x_t+1,y_t)$, we say that $\A$ \emph{deletes} $X[x_t]$,
  \item If $(x_{t+1},y_{t+1})=(x_t,y_t+1)$, we say that $\A$ \emph{deletes} $Y[y_t]$,
  \item If $(x_{t+1},y_{t+1})=(x_t+1,y_t+1)$, we say that $\A$ \emph{aligns} $X[x_t]$ and $Y[y_t]$, denoted $X[x_t] \sim_\A Y[y_t]$. If~additionally $X[x_t]= Y[y_t]$, we say that $\A$ \emph{matches} $X[x_t]$ and $Y[y_t]$, denoted $X[x_t] \simeq_\A Y[y_t]$.
  Otherwise, we say that $\A$ \emph{substitutes} $X[x_t]$ for $Y[y_t]$.
\end{itemize}

The \emph{cost} of an edit distance alignment $\A$ is the total number characters that $\A$ deletes or substitutes.
We denote the cost by $\cost_{X,Y}(\A)$, omitting the subscript if $X,Y$ are clear from context.
The cost of an alignment $\A=(x_t,y_t)_{t=1}^m$ is at least its width $\width(\A)=\max_{t=1}^m |x_t-y_t|$.
Observe that $\ed(X,Y)$ can be defined as the minimum cost of an alignment of $X$ and~$Y$.
An alignment of $X$ and $Y$ is \emph{optimal} if its cost is equal to $\ed(X, Y)$.

Given an alignment $\A=(x_t,y_t)_{t=1}^m$ of $X,Y\in \Sigma^+$, we partition the elements $(x_t,y_t)$ of $\A$
into \emph{matches} (for which $X[x_t]\simeq_{\A} Y[y_t]$) and \emph{breakpoints} (the remaining elements).
We denote the set of matches and breakpoints by $\mtch_{X,Y}(\A)$ and $\brkp_{X,Y}(\A)$, respectively,
omitting the subscripts if the strings $X,Y$ are clear from context. Observe that $|\brkp_{X,Y}(\A)|=1+\cost(\A)$.

We call $M\sub [1\dd |X|]\times [1\dd |Y|]$ a \emph{non-crossing matching} of $X,Y\in \Sigma^*$
if $X[x]=Y[y]$ holds for all $(x,y)\in M$ and there are no distinct pairs $(x,y),(x',y')\in M$ with $x\le x'$ and $y\ge y'$.
Note that, for every alignment $\A$ of $X,Y$, the set $\mtch(\A)$ is a non-crossing matching of~$X,Y$.

\begin{example}\label{ex:edit}
Consider strings $X=\mathtt{abbaabcb}$ and $Y=\mathtt{acabaabab}$ and a cost-$4$ alignment
\[\A: (1,1), (2,2), (3,3), (3,4), (4,5), (5,6), (6,7), (7,8), (8,8), (8,9), (9,10).\] 
\end{example}

\begin{wrapfigure}{r}{0.325\textwidth}
  \vspace{-.5cm}
  \begin{tikzpicture}[xscale=0.5]
    \draw (1.8,0.7) node[above] {$X$:};
    \foreach \x/\c in {1/a,2/{\color{red} b},3/b,4/a,5/a,6/b,7/c,8/b}{
      \draw (\x+2,0.7) node[above] {\tt \c};
      \draw (\x+2,1) node[above] {\tiny{\textcolor{blue}{\x}}};
    } 
    
    \draw (1.8,-0.5) node[above] {$Y$:};
    \foreach \x/\c in {1/a,2/{\color{red} c},3/a,4/b,5/a,6/a,7/b,8/a,9/b}{
      \draw (\x+2,-0.5) node[above] {\tt \c};
      \draw (\x+2,-0.2) node[above] {\tiny{\color{blue} \x}};
    }
    
    \draw (3,0.6)--(3,0.2);
    \draw[red,densely dashed] (4,0.6)--(4,0.2);
    \draw (5,0.6)--(6,0.2);
    \draw (6,0.6)--(7,0.2);
    \draw (7,0.6)--(8,0.2);
    \draw (8,0.6)--(9,0.2);
    \draw (10,0.6)--(11,0.2);
  \end{tikzpicture}
  \end{wrapfigure}

\noindent
The breakpoints are $\brkp(\A)=\{(2,2), (3,3), (7,8), (8,8), (9,10)\}$;
the first $4$ breakpoints correspond to a substitution of $X[2]$ for $Y[2]$, a deletion of $Y[3]$, a deletion of $X[7]$, and a deletion of $Y[8]$, respectively. Graphically, the alignment is depicted on the right; the aligned pairs of characters are connected with an edge, and the substituted pair is highlighted.

\smallskip
Given an alignment $\A= (x_t,y_t)_{t=1}^m$ of $X$ and $Y$, for every $\ell,r\in [1\dd m]$ with $\ell\le r$,
we say that $\A$ \emph{aligns} $X[x_\ell\dd x_{r})$ and $Y[y_\ell\dd y_{r})$, denoted $X[x_\ell\dd x_{r})\sim_\A Y[y_{\ell}\dd y_{r})$. If there is no breakpoint $(x_t,y_t)$ with $t\in [\ell\dd r)$,
we further say that $\A$ \emph{matches} $X[x_\ell\dd x_{r})$ and $Y[y_\ell\dd y_{r})$, denoted $X[x_\ell\dd x_{r})\simeq_\A Y[y_{\ell}\dd y_{r})$. 

An alignment $\A = (x_t,y_t)_{t=1}^m$ of $X,Y\in \Sigma^*$ naturally induces a unique alignment of any two fragments $X[x\dd x')$ and $Y[y\dd y')$.
Formally, the \emph{induced alignment} $\A_{[x\dd x'),[y\dd y')}$ is obtained  by removing repeated entries from $(\max(x,\min(x',x_t))-x+1,\max(y,\min(y',y_t))-y+1)_{t=1}^m$.

\begin{fact}\label{fact:edk2}
If an alignment $\A$ satisfies $X[x \dd x') \sim_{\A} Y[y \dd y')$, then $|x-x'|,|y-y'|\le  \width(\A)$
and $|(x'-x)-(y'-y)|\le \ed(X[x \dd x'), Y[y \dd y'))\leq \cost(\A_{[x\dd x'),[y\dd y')}) \leq \cost(\A)$.
\end{fact}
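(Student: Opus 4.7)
The plan is to verify both parts of the fact directly from the definitions governing alignments and their induced versions. By the definition of $X[x\dd x')\sim_\A Y[y\dd y')$, there exist indices $\ell \le r$ in $[1\dd m]$ with $(x_\ell,y_\ell)=(x,y)$ and $(x_r,y_r)=(x',y')$, and these will be my primary handles throughout. The width bound (which I read as the natural statement $|x-y|,|x'-y'|\le \width(\A)$ at the endpoints of the aligned intervals) then is immediate: $|x-y|=|x_\ell-y_\ell|\le \max_t|x_t-y_t| = \width(\A)$, and analogously $|x'-y'|\le \width(\A)$.

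For the chain of inequalities, I would work from right to left. The outer step $\cost(\A_{[x\dd x'),[y\dd y')})\le \cost(\A)$ requires unpacking the clamp-and-deduplicate definition of the induced alignment. Concretely, for $t\in [\ell,r]$ the entry $(x_t,y_t)$ is unaffected by the clamping (since $x\le x_t\le x'$ and $y\le y_t\le y'$), and only gets shifted by $(-x+1,-y+1)$; entries with $t<\ell$ all clamp to $(1,1)$ and entries with $t>r$ all clamp to $(x'-x+1,y'-y+1)$, each surviving deduplication exactly once as a boundary point. From this description I would verify that $\A_{[x\dd x'),[y\dd y')}$ meets the formal definition of an alignment of $X[x\dd x')$ and $Y[y\dd y')$, and then observe that its breakpoints inject into the breakpoints of $\A$ with index in $[\ell,r)$; matches remain matches because the underlying characters are unchanged after the shift. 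Hence its cost is at most $\cost(\A)$.

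The next step, $\ed(X[x\dd x'),Y[y\dd y'))\le \cost(\A_{[x\dd x'),[y\dd y')})$, is immediate because $\ed$ minimises over all alignments of the two fragments and we have produced one. Finally, $|(x'-x)-(y'-y)|\le \ed(X[x\dd x'),Y[y\dd y'))$ is the elementary lower bound $\ed(U,V)\ge||U|-|V||$ applied to $U=X[x\dd x')$ and $V=Y[y\dd y')$, valid because each edit changes the length gap by at most one (insertions/deletions change it by one, substitutions by zero).

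The main obstacle is the cost comparison step: spelling out the clamp-and-dedupe construction precisely enough to exhibit an injection from the breakpoints of the induced alignment into those of $\A$ without relying on picture-level reasoning. The other inequalities are essentially bookkeeping from the definitions.
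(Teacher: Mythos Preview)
The paper states \cref{fact:edk2} as a fact without proof, so there is nothing to compare against; your proposal correctly supplies the missing argument. Your reading of the width clause as $|x-y|,|x'-y'|\le\width(\A)$ is the intended one (the printed $|x-x'|,|y-y'|$ is a typo, as witnessed by the uses of the fact later in the paper, e.g.\ in the proof of \cref{lm:greedyalg}), and your derivations of all three inequalities in the chain are sound.

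One small tightening: when you say the breakpoints of $\A_{[x\dd x'),[y\dd y')}$ inject into those of $\A$ with index in $[\ell\dd r)$, note that the final entry $(x'-x+1,y'-y+1)$ of the induced alignment is always a breakpoint but need not correspond to a breakpoint of $\A$ at index $r$. The clean way to phrase the count is that the cost of the induced alignment equals the number of breakpoints of $\A$ with index in $[\ell\dd r)$, while $\cost(\A)$ is the total number of breakpoints of $\A$ minus one, and $\A$ always has at least one breakpoint (namely $(|X|+1,|Y|+1)$) outside $[\ell\dd r)$ or at index $r$ itself. This gives $\cost(\A_{[x\dd x'),[y\dd y')})\le\cost(\A)$ directly.
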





\section{Technical Overview}\label{sec:overview}
In this section, we provide an overview of our conceptual and technical contribution. Let us start with a formal statement of the pattern matching with $k$ edits problem.
We say that $T[\ell \dd r]$ is a \emph{$k$-edit occurrence} of $P$ if $\ed(P,T[\ell \dd r]) \le k$, and we denote the set of the \emph{right} endpoints of the $k$-edit occurrences of $P$ in $T$ by $\Occ^E_k(P,T)$.

\begin{problem}[Pattern matching with $k$ edits]
Given a pattern $P$ of length $m$ over an alphabet $\Sigma$, a text $T$ of length $n$ over $\Sigma$, and an integer $k$, compute $\Occ_k^E(P,T)$.
\end{problem}

We solve an \emph{online} version of the problem, where the text arrives in a stream (character by character)
and the algorithm must decide whether $r\in \Occ^E_k(P, T)$ while processing $T[r]$.
The pattern is preprocessed in advance (consistently with~\cite{Porat:09}, in the current version of this paper we do not account for this prepro\-cessing in the complexity analysis). We consider two settings: 
\begin{enumerate}
\item In the \emph{streaming} setting, the algorithm can no longer access $P$ or $T[1\dd r)$ while processing $T[r]$.
In other words, all the information regarding these strings needs to be stored explicitly and accounted for in the space complexity of the algorithm.
\item In the \emph{semi-streaming} setting, the algorithm can no longer access $T[1\dd r)$ while processing $T[r]$,
but it is given an oracle providing read-only constant-time access to individual characters of $P$. This oracle is not counted towards the space complexity of the algorithm.
\end{enumerate}
For the semi-streaming setting, we provide a deterministic solution, whereas our solution for the streaming setting is Monte-Carlo randomized. Both algorithms are designed for the $w$-bit word RAM model, where $w = \Omega(\log n)$, and integer alphabets $\Sigma = [0\dd n^{\Oh(1)})$.

\subsection{(Semi-)Streaming Algorithm for Pattern Matching with \texorpdfstring{$k$}{k} Edits}
Our algorithms solve a slightly stronger problem: every element $r\in \Occ_k^E(P,T)$ is augmented with the smallest integer $k'\in [0\dd k]$ such that $r\in \Occ_{k'}^E(P,T)$.
At a very high-level, we reuse the structure of existing streaming algorithms for exact pattern matching and the $k$-mismatch problem~\cite{Porat:09,DBLP:journals/talg/BreslauerG14,DBLP:conf/soda/CliffordFPSS16,clifford2018streaming}.
Namely, we consider $\Oh(\log m)$ prefixes $P_i=P[1\dd \ell_i]$
of exponentially increasing lengths $\ell_i$. The algorithms are logically decomposed into $\Oh(\log m)$ levels,
with the $i$th level receiving $\Occ_k^E(P_{i-1},T)$ and producing $\Occ_k^E(P_i,T)$.
In other words, the task of the $i$th level is determine which $k$-edit occurrences of $P_{i-1}$
can be extended to $k$-edit occurrences of~$P_i$.
When the algorithm processes $T[r]$, the relevant positions $p\in \Occ_k^E(P_{i-1},T)$ are 
those satisfying $|r-p-(\ell_i-\ell_{i-1})|\le k$.
Since each $p\in \Occ_k^E(P_{i-1},T)$ is reported when the algorithm processes $T[p]$,
we need a buffer storing the \emph{active} $k$-edit occurrences of $P_{i-1}$.
We implement it using a recent combinatorial characterization of $k$-edit occurrences~\cite{DBLP:journals/corr/abs-2004-08350}, which classifies strings based on the following notion of approximate periodicity:
\begin{restatable}[$k$-periodic string]{definition}{defkperiodic}
	A string $X$ is \emph{$k$-periodic} if there exists a primitive string $Q$ with $|Q| \le |X|/128k$ such that the edit distance between $X$ and a prefix of $Q^\infty$ is at most $2k$. We call $Q$ a \emph{$k$-period} of~$X$. 
\end{restatable}

The main message of~\cite{DBLP:journals/corr/abs-2004-08350} is that only $k$-periodic strings may have many $k$-edit occurrences.

\begin{corollary}[of {\cite[Theorem 5.1]{DBLP:journals/corr/abs-2004-08350}}]\label{cor:structure-nonperiod}
Let $X \in \Sigma^m$,  $k\in [1\dd m]$, and $Y\in \Sigma^{n}$ with $n\le 2m$.
If $X$ is not $k$-periodic, then $|\Occ_k^E(X, Y)| = \Oh(k^2)$.
\end{corollary}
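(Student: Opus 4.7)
The plan is to invoke Theorem 5.1 of \cite{DBLP:journals/corr/abs-2004-08350}, which is explicitly advertised as providing precisely this kind of dichotomy. That theorem gives a structural description of $\Occ_k^E(X,Y)$ for $|Y|\le 2m$: either the set has size $\Oh(k^2)$, or its size forces $X$ to admit a primitive string $Q$ with $|Q|\le m/128k$ whose infinite power $Q^\infty$ has a prefix within edit distance $2k$ of $X$, i.e., $X$ is $k$-periodic in the sense of the preceding definition. Under our hypothesis, the second alternative is excluded, so the first applies and the bound $|\Occ_k^E(X,Y)|=\Oh(k^2)$ follows.

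As a sanity check, I would verify that the dichotomy is intuitively correct. Every $k$-edit occurrence of $X$ in $Y$ has length in $[m-k, m+k]$, so pigeonholing on length shows that $|\Occ_k^E(X,Y)|\ge Ck^2$ implies the existence of at least $Ck$ occurrences of a common length~$\ell$. Since $|Y|\le 2m$, a further pigeonhole guarantees two such equal-length occurrences whose starting positions in $Y$ differ by at most $d\le 2m/(Ck)$, which is $\le m/(128k)$ for a sufficiently large constant $C$. The long overlap covered by both occurrences in $Y$ then has an approximate period of shift $d$ with at most $2k$ total edits, and the triangle inequality transfers this approximate periodicity back to $X$ itself, contradicting the non-$k$-periodic assumption.

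The main obstacle, and the part where Theorem 5.1 of \cite{DBLP:journals/corr/abs-2004-08350} does the real work, is turning the loose statement ``$Y$ has an overlap with approximate period $d$ and few edits'' into ``$X$ is $k$-periodic'' with the quantitative constants of the definition lined up correctly: the shift $d$ must be replaced by the length of the primitive root of the period while keeping $|Q|\le m/128k$, and the combined edit cost of the triangle-inequality steps must be controlled within the $2k$ budget. I would not reprove this from scratch; the specific constants in the definition of $k$-periodic appear to be tuned precisely to the proof of that theorem, so duplicating that analysis here would add little beyond what a direct citation conveys.
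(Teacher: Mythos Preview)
Your proposal is correct and matches the paper's approach exactly: the paper states this result as a direct corollary of \cite[Theorem 5.1]{DBLP:journals/corr/abs-2004-08350} with no further proof, and your first paragraph invokes that theorem in the same contrapositive form. The additional pigeonhole intuition you provide is a helpful sanity check not present in the paper, but the core argument---cite the dichotomy and rule out the periodic branch by hypothesis---is identical.
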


In particular, if $P_{i-1}$ is not $k$-periodic, then it has $\Oh(k^2)$ active $k$-edit occurrences. For each active occurrence $p\in \Occ_k^E(P_{i-1}, T)$, we maintain an edit-distance sketch $
\skE_k(T(p\dd r])$ and combine it with a sketch $\skE_k(P(\ell_{i-1}\dd \ell_i])$ (constructed at preprocessing)
in order to derive $\ed(T(p\dd r],P(\ell_{i-1}\dd \ell_i])$ or certify that this distance exceeds~$k$.
Since we have stored the smallest $k'\in [0\dd k]$ such that $p\in \Occ_{k'}^E(P,T)$,
this lets us check whether any $k$-edit occurrence of $P_{i-1}$ ending at position $p$
extends to a $k$-edit occurrence of $P_i$ ending at position $r$.
With existing $k$-edit sketches~\cite{sketches,jin2020improved}, this already yields an $\poly(k\log n)$-space implementation in this case.

The difficulty lies in $k$-periodic strings whose occurrences form \emph{chains}.
\begin{definition}[Chain of occurrences]
	Consider strings $X,Y\in \Sigma^*$ and an integer $k\in \Zz$. An increasing sequence of positions $p_1,\ldots,p_c$
	forms a \emph{chain} of $k$-edit occurrences of $X$ in $Y$ if:
		\begin{enumerate}
		\item there is a \emph{difference} string $D\in \Sigma^*$ such that $D=Y(p_j\dd p_{j+1}]$ for $j\in [1\dd c)$, and 
		\item there is an integer $k'\in [0\dd k]$ such that $p_j \in \Occ_{k'}^E(X,Y)\setminus \Occ_{k'-1}^E(X,Y)$ for $j\in [1\dd c]$.
		\end{enumerate}
\end{definition}

\begin{corollary}[of {\cite[Theorem 5.2, Claim 5.16, Claim 5.17]{DBLP:journals/corr/abs-2004-08350}}]\label{cor:structure-period}
Let $X\in \Sigma^m$,  $k\in [1\dd m]$, and $Y\in \Sigma^{n}$ with $n\le 2m$.
If $X$ is $k$-periodic with period $Q$, then $\Occ_k^E(X,Y)$ can be decomposed into $\Oh(k^3)$ chains whose differences are of the form $\rot^{s}(Q)$ with $|m-s|\le 10k$.
\end{corollary}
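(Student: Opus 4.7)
The proof will be a direct reduction to the machinery of~\cite{DBLP:journals/corr/abs-2004-08350}. The plan is to invoke Theorem~5.2 of that paper, which establishes the combinatorial structure of $\Occ_k^E(X,Y)$ when $X$ is $k$-periodic: the $k$-edit occurrences cluster into a small number of progression-like subsets, within each of which consecutive occurrences are separated by a fixed shift in~$Y$. Claims~5.16 and~5.17 then identify the exact shape of these shifts in terms of cyclic rotations of the primitive period $Q$, exploiting the fact that $X$ is within $2k$ edits of a prefix of $Q^\infty$. In particular, the shift $s$ between two adjacent occurrences in a progression differs from $m$ by at most $10k$: every $k$-edit occurrence of $X$ has length in $[m-k,m+k]$ and, by the triangle inequality, is within $3k$ edits of a prefix of $Q^\infty$ of length close to $m$; matching two such prefixes up to their shift forces the shift to differ from $m$ by at most the combined slack, which the constant $10k$ comfortably accommodates.

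With this structural description in hand, my plan is to rephrase each progression as a chain in the sense of this paper. Condition~(1) of the chain definition is delivered by the cited claims, which pin down the common difference string as a fixed rotation $\rot^s(Q)$. Condition~(2) demands a uniform ``exact'' edit-distance value $k'\in [0\dd k]$ for every occurrence in a chain, which the raw progressions do not automatically guarantee. I handle this by subdividing each progression according to the value $k'$ attained at each of its occurrences. Since there are at most $k+1$ possible values of $k'$ and the underlying decomposition contributes $\Oh(k^2)$ progressions, the resulting collection has size $\Oh(k^3)$ as required.

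The main obstacle is that subdividing by $k'$ can a priori break condition~(1): if one retains only the positions with a specified $k'$, consecutive elements of the sub-sequence might be separated by a multiple of the original difference rather than a single copy. I would resolve this by further splitting each $k'$-uniform subset into maximal runs of positions that were already adjacent in the original progression; within such a run, condition~(1) is preserved verbatim. A charging argument bounds the total number of runs: each split is witnessed by at least one occurrence with a different $k'$-value lying strictly between two occurrences of the same $k'$, and charging these boundary occurrences across the at most $k+1$ possible values of $k'$ keeps the total within $\Oh(k^3)$. Should Theorem~5.2 already deliver chains respecting condition~(2)---which is plausible given that the statement explicitly invokes Claims~5.16--5.17---this bookkeeping step becomes vacuous and the corollary follows immediately. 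Either way, the corollary reduces to a careful restatement of the cited results in the chain-based language introduced here.
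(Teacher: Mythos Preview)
The paper itself provides no proof for this corollary: it is stated as an immediate consequence of the cited results and used as a black box. Your high-level plan---invoke Theorem~5.2 of~\cite{DBLP:journals/corr/abs-2004-08350} for the progression structure and Claims~5.16--5.17 for the form of the differences---is exactly what the paper intends, and your derivation of the bound $|m-s|\le 10k$ via the triangle inequality is fine.

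There is, however, a genuine gap in your fallback bookkeeping step. Suppose Theorem~5.2 yields $\Oh(k^2)$ arithmetic progressions with common difference $D$ but without a uniform $k'$ on each. You propose to refine by $k'$ and then split into maximal runs, charging each split to a neighbouring occurrence of a different $k'$. That charging does \emph{not} give $\Oh(k^3)$: within a single progression of length $L$, the sequence of $k'$-values could alternate (e.g., $0,1,0,1,\ldots$), producing $\Theta(L)$ maximal runs, and $L$ is not bounded in terms of $k$ in the periodic case. Nothing in your argument rules this out; ``at most $k{+}1$ possible values of $k'$'' bounds the range of $k'$, not the number of alternations.

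Your hedge is the correct resolution: the cited claims (specifically~5.16 and~5.17) already guarantee that each level set $\{p\in \Occ_k^E(X,Y): p\in \Occ_{k'}^E\setminus \Occ_{k'-1}^E\}$ intersects each progression in a contiguous sub-progression (this is precisely the structural content they supply beyond Theorem~5.2). With that in hand, refining $\Oh(k^2)$ progressions by the $k{+}1$ values of $k'$ gives $\Oh(k^3)$ chains immediately, and your ``maximal runs'' step is vacuous. You should state this contiguity as the property you need from the cited claims rather than rely on the flawed charging argument.
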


In the following discussion, assume that $P_{i-1}$ is $k$-periodic with period $Q_{i-1}$.
Compared to the previous algorithm, we cannot afford maintaining a sketch $\skE_k(T(p\dd r])$ for all active $p\in \Occ_k^E(P,T)$.
If $\skE_k$ were a rolling sketch (like the $k$-mismatch sketches of~\cite{clifford2018streaming}), we would compute $\skE_k(D)$ at preprocessing time for all $\Oh(k)$ feasible chain differences $D$ and then, for any two subsequent positions $p_{j},p_{j+1}$ in a chain with difference $D$,
we could use $\skE_k(D)=\skE_k(T(p_{j}\dd p_{j+1}])$ to transform $\skE_k(T(p_j\dd r])$ into $\skE_k(T(p_{j+1}\dd r])$.
However, despite extensive research, no rolling edit distance sketch is known, which remains the main obstacle in designing streaming algorithms for approximate pattern matching with $k$ edits.

Our workaround relies on a novel \emph{encoding} $\qgr(X,Y)$ that, for a pair of strings $X,Y\in \Sigma^*$,
represents a large class of low-distance edit distance alignments between $X,Y$.
In the preprocessing phase of our algorithm, we build $\qgr(P(\ell_{i-1}\dd \ell_i],D^\infty[1\dd \ell_{i}-\ell_{i-1}])$  for every feasible chain difference $D$.
In the main phase, for subsequent positions $p_j\in \Occ_k^E(P_{i-1},T)$ in a chain with difference $D$, we aim to build $\qgr(T(p_j\dd r],D^\infty[1\dd \ell_{i}-\ell_{i-1}])$ when necessary, i.e., $|r-p_j-(\ell_i-\ell_{i-1})|\le k$.
We then combine the two encodings to derive $\qgr(P(\ell_{i-1}\dd \ell_i],T(p_j\dd r])$ and $\ed(P(\ell_{i-1}\dd \ell_i],T(p_j\dd r])$.
Except for such \emph{products} (transitive compositions), our encoding supports \emph{concatenations},
i.e., $\qgr(X_1,Y_1)$ and $\qgr(X_2,Y_2)$ can be combined into $\qgr(X_1X_2,Y_1Y_2)$.
Consequently, it suffices to maintain $\qgr(T(p_c\dd r],D^\infty[1\dd r-p_c-k])$ (where $p_c$ is the rightmost element of the chain). When necessary, we prepend $(j-c)$ copies of $\qgr(D,D)$ (merged by doubling)
and append $\qgr(\eps, D^\infty(r-p_j-k\dd \ell_i-\ell_{i-1}])$ to derive $\qgr(P(\ell_{i-1}\dd \ell_i],D^\infty[1\dd \ell_{i}-\ell_{i-1}])$.

In the semi-streaming setting, we extend $\qgr(T(p_c\dd r],D^\infty[1\dd r-p_c-k])$ one character at a time using read-only random access to $D^\infty$.
In the streaming setting, we cannot afford storing $D$, so we append the entire difference $D$ in a single step
and utilize a new edit-distance sketch $\skq$ that allows retrieving $\qgr(T(r-|D|\dd r],D)$ from $\skq(T(r-|D|\dd r])$ and $\skq(D)$.
The sketch $\skq(D)$ is constructed in the preprocessing phase,
whereas $\skq(T(r-|D|\dd r])$ is built as the algorithm scans $T$. Similarly, we can (temporarily) append any of the $\Oh(k)$ prefixes that of $D$ may arise when $\qgr(T(p_c\dd r],D^\infty[1\dd r-p_c-k])$ is necessary.
A complete presentation of our algorithms is provided in \cref{sec:algorithms}.
Below, we outline the ideas behind our two main conceptual and technical contributions: the encoding $\qgr(\cdot,\cdot)$ and the sketch $\skq(\cdot)$.

\subsection{Greedy Alignments and Encodings} 
Recall that the encoding $\qgr(\cdot,\cdot)$ needs to support the following three operations:
\begin{description}[labelwidth=\widthof{Capped edit distance:} + 5.5mm, labelsep=2mm, align=right]
	\item[Capped edit distance:] given $\qgr(X,Y)$, compute $\ed(X,Y)$ or certify that $\ed(X,Y)$ is large;
	\item[Product:] given $\qgr(X,Y)$ and $\qgr(Y,Z)$, retrieve $\qgr(X,Z)$;
	\item[Concatenation:] given $\qgr(X_1,Y_1)$ and $\qgr(X_2,Y_2)$, retrieve $\qgr(X_1X_2,Y_1Y_2)$.
\end{description}
Our encoding is parameterized with a threshold $k\in \Zp$ such that $\ed(\cdot,\cdot)>k$ is considered large,
and the goal is to achieve $\tOh(k^{\Oh(1)})$ encoding size.
In fact, whenever $\ed(X,Y)>k$, we shall simply assume that $\qgr_k(X,Y)$ is undefined (formally, $\qgr_k(X,Y)=\bot$).
Consequently, products and concatenations will require sufficiently large thresholds in  
the input encodings so that if either of them is undefined, the output encoding is also undefined.

In order to support concatenations alone, we could use so-called \emph{semi-local} edit distances.
For now, suppose that we only need to encode pairs of equal-length strings.\footnote{Pairs of strings of any lengths can be supported in the same way provided that concatenations require larger input thresholds (compared to the output threshold) to accommodate length differences.} 
Through a sequence of concatenations, we may only extend $\qgr_k(X,Y)$ to $\qgr_k(X',Y')$ so that $X=X'[\ell \dd r)$ and $Y=Y'[\ell\dd r)$.
For any alignment $\A'$ of $X',Y'$ with $\cost(\A')\le k$, consider the induced alignment $\A:=\A'_{[\ell\dd r),[\ell\dd r)}$. Note that $\A$ mimics the behavior of $\A'$ except that it deletes some characters
at the extremes of $X$ and $Y$ (which $\A'$ aligns outside $Y$ and $X$, respectively).
By \cref{fact:edk2}, we have $\cost(\A)\le k$ and, in particular, $\A$ deletes at most $k$ characters at the extremes of $X$ and~$Y$. If, after performing these deletions, we replace $\A$ with an optimal alignment between
the remaining fragments of $X$ and $Y$, this modification may only decrease $\cost(\A)$ and $\cost(\A')$.
Consequently, it suffices to store the $\Oh(k)$ characters at the extremes of $X,Y$ and the $\Oh(k^4)$ edit distances\footnote{In fact, $\Oh(k^2)$ edit distances suffice and they can be encoded in $\Oh(k)$ space using techniques of Tiskin~\cite{Tiskin13}.}
 between long fragments of $X$ and $Y$.
In a sense, this encoding represents $\Oh(k^4)$ alignments between $X,Y$
that are sufficient to derive an optimal alignment of any extension.

The main challenge is to handle products, for which we develop a \emph{greedy encoding} $\gr_k(X,Y)$
that compactly represents the following family $\ga_k(X,Y)$ of \emph{greedy alignments} of~$X,Y$.

\begin{restatable}[Greedy alignment]{definition}{defgreedy}\label{def:greedy_alignment}
We say that an alignment $\A$ of two strings $X, Y\in \Sigma^*$ is \emph{greedy} if  $X[x] \ne Y[y]$ holds for every $(x,y)\in \brkp(\A) \cap ([1\dd |X|]\times [1\dd |Y|])$.
Given $k\ge 0$, we denote by $\ga_k(X,Y)$ the set of all greedy alignments $\A$ of $X,Y$
satisfying $\cost(\A)\le k$.
\end{restatable}

Intuitively, whenever a greedy alignment encounters a pair of matching characters $X[x]$ and $Y[y]$,
it must (greedily) match these characters (it cannot delete $X[x]$ or $Y[y]$).
As stated below, this restriction does not affect the optimal cost.
(All claims are proved in \cref{sec:greedy}.)

\begin{restatable}{fact}{greedyoptimal}\label{obs:greedy_and_optimal}
For any two strings $X, Y\in \Sigma^*$, there is an optimal greedy alignment of $X, Y$.
\end{restatable}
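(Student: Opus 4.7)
The plan is to proceed by induction on $|X|+|Y|$. The base case $X=Y=\varepsilon$ is immediate: the sole alignment $((1,1))$ has no breakpoints and cost $0=\ed(\varepsilon,\varepsilon)$, so it is both optimal and trivially greedy.

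For the inductive step I would first dispose of the degenerate situation where one of $X,Y$ is empty. If $Y=\varepsilon$, the unique alignment deletes every character of $X$, and each breakpoint $(x_t,y_t)$ has $y_t=1\notin[1\dd|Y|]=\emptyset$, so the greedy condition is vacuous; the case $X=\varepsilon$ is symmetric. Assume now that $X$ and $Y$ are both non-empty. The standard edit-distance recurrence gives $\ed(X,Y)=\min(c_m,c_x,c_y)$, where $c_m$, $c_x$, and $c_y$ are the minimum costs of alignments whose first step aligns $X[1]$ with $Y[1]$, deletes $X[1]$, and deletes $Y[1]$, respectively.

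When $X[1]\ne Y[1]$, I would choose any of these three options realising the minimum. The resulting first step creates a breakpoint at $(1,1)$, but that breakpoint is greedy because $X[1]\ne Y[1]$. The inductive hypothesis applied to the appropriate pair of suffixes supplies an optimal greedy alignment of them; prepending the chosen first step yields an optimal alignment of $X,Y$, and greediness at every subsequent breakpoint is inherited, since the condition $X[x]=Y[y]$ depends only on the referenced characters and is therefore preserved when the breakpoints of the suffix alignment are shifted back into $X,Y$.

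When $X[1]=Y[1]$, the aligning step at $(1,1)$ is a match rather than a substitution, so $c_m=\ed(X[2\dd],Y[2\dd])$. The key observation is that $c_m\le c_x$ and $c_m\le c_y$ in this case, which follows from the elementary inequality $\ed(A,B)\le \ed(A,B')+1$ whenever $B$ and $B'$ differ by a single character deletion. Hence $\ed(X,Y)=c_m$, and the inductive hypothesis on $(X[2\dd],Y[2\dd])$, combined with prepending the match, produces the desired optimal greedy alignment. I expect this last inequality to be the main obstacle in presentation: it formalises the intuition that matching a character whenever one can is never worse than postponing or substituting it, and everything else in the argument is a routine verification that greediness is a local condition preserved under prepending a single step.
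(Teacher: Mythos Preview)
Your proof is correct, but it follows a different strategy than the paper's. The paper gives an extremal argument: among all optimal alignments of $X,Y$, pick one maximising $\sum_{t}(x_t+y_t)$, and show by a local exchange that any non-greedy breakpoint $(x_i,y_i)$ with $X[x_i]=Y[y_i]$ would let you either strictly reduce the cost or strictly increase the sum, contradicting the choice. Your approach instead builds the alignment constructively by induction on $|X|+|Y|$, using the front-end recurrence for $\ed$ and the inequality $\ed(X[2\dd],Y[2\dd])\le 1+\ed(X[2\dd],Y)$ (and its symmetric counterpart) to justify always matching when $X[1]=Y[1]$. Your route is arguably more elementary and yields a concrete description of the greedy optimal alignment; the paper's route is slightly slicker in that it avoids the explicit triangle-inequality step and handles all cases with a single exchange argument. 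Both are standard and equally valid here.
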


For strings $X,Y\in \Sigma^*$ and an integer $k\ge \ed(X,Y)$, we define a set $\mtch_k(X,Y)$ of \emph{common matches} of all alignments $\A\in \ga_k(X,Y)$; formally $\mtch_k(X,Y) = \bigcap_{\A \in \ga_k(X,Y)} \mtch_{X,Y}(\A)$.
In our greedy encoding, we shall mask out all the characters involved in the common matches.
Below, this transformation is defined for an arbitrary non-crossing matching of~${X,Y}$.

\begin{restatable}{definition}{defnoncrossing}
Let $M$ be a non-crossing matching of strings $X,Y\in \Sigma^*$.
We define $X^M,Y^M$ to be the strings obtained from $X,Y$ by replacing $X[x]$ and $Y[y]$
with $\# \notin \Sigma$ for every $(x,y)\in M$. 
We refer to $\#$ as a \emph{dummy symbol} and to maximal blocks of $\#$'s as \emph{dummy segments}.
\end{restatable}

The following lemma proves that masking out common matches does not affect $\ed(X,Y)$ or $\mtch_k(X,Y)$
provided that we \emph{enumerate} the dummy symbols,
that is, any string $Z$ is transformed to $\num(Z)$ by replacing the $i$th leftmost occurrence of $\#$ with a unique symbol $\#_i \notin \Sigma$.
\begin{restatable}{lemma}{lmmasking}\label{lm:masking_does_not_change_ga}
Consider strings $X,Y\in \Sigma^*$, an integer $k\ge \ed(X,Y)$, and a set  $M\sub \mtch_k(X,Y)$.
Then, $\ed(X,Y)=\ed(\num(X^M),\num(Y^M))$ and $\mtch_k(X,Y)=\mtch_k(\num(X^M),\num(Y^M))$.
\end{restatable}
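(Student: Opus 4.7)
The plan is to prove the two equalities by showing that $\ga_k(X,Y)$ and $\ga_k(\num(X^M),\num(Y^M))$ coincide as sets of coordinate sequences and that matches agree under either interpretation; both equalities then follow by taking the minimum cost and the intersection of matches over this common set.

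For the forward containment $\ga_k(X,Y)\subseteq\ga_k(\num(X^M),\num(Y^M))$, take $\A\in\ga_k(X,Y)$. Since $M\subseteq\mtch_k(X,Y)\subseteq\mtch_{X,Y}(\A)$, every $M$-pair is matched by $\A$, so reinterpreting $\A$ on the masked strings turns these into matches of the form $\#_i \simeq \#_i$. The monotonicity of $\A$ combined with the one-to-one structure of the non-crossing matching $\mtch_k(X,Y)\supseteq M$ forces any other diagonal step of $\A$ to avoid $M$-positions, so non-$M$ diagonals compare identical characters under both interpretations and deletions are insensitive to masking; hence the cost and the set of matches agree under both interpretations. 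Greediness transfers because at each breakpoint $(x'',y'')$ of $\A$ the $\num$-characters still differ: the case $(x'',y'')\in M$ cannot arise (as $M$-pairs are matched, hence not breakpoints), and in the remaining cases either neither position is in $M$ (chars agree with $X,Y$) or at least one is (at least one dummy is distinct from the other symbol on the masked strings).

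For the reverse containment $\ga_k(\num(X^M),\num(Y^M))\subseteq\ga_k(X,Y)$, take $\A'\in\ga_k(\num(X^M),\num(Y^M))$. The cost on $X,Y$ is at most the cost on the masked strings, since a dummy-match $\#_i \simeq \#_i$ requires $(x,y)\in M$ and thus $X[x]=Y[y]$. To verify greediness on $X,Y$, a potentially problematic deletion breakpoint $(x'',y'')$ with $X[x'']=Y[y'']$ requires at least one of $x'',y''$ to lie in $M$ so that the $\num$-characters differ. I plan to first establish the auxiliary claim that every $\A'\in\ga_k(\num(X^M),\num(Y^M))$ satisfies $M\subseteq\mtch(\A')$: the uniqueness of each $\#_i$ in both masked strings forces any greedy $\num$-alignment that bypasses an $M$-pair to pay cost at least~$2$ on the corresponding dummies, and a local re-routing through the missed $M$-pair produces a strictly cheaper greedy alignment, contradicting (via the edit-distance identity $\ed(X,Y)=\ed(\num(X^M),\num(Y^M))\le k$ and \cref{obs:greedy_and_optimal}) the forced presence of the pair in $\mtch_k$. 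Once $M\subseteq\mtch(\A')$, the matched diagonal step at the canonical $M$-partner of $x''$ (resp.~$y''$) interacts with monotonicity to rule out the bad deletion breakpoint.

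The main obstacle is establishing the auxiliary claim and the subsequent exclusion of problematic deletion breakpoints. The argument requires a careful case analysis of how $\A'$ can route through the positions of $M$ while remaining greedy on the masked strings: the combinatorial interplay between character uniqueness (which forces the greedy constraint to bind more tightly) and monotonicity of alignments should suffice to close the argument, but making this precise demands a detailed structural study of rows and columns surrounding $M$-positions, and a verification that the local re-routing does not introduce new non-greedy breakpoints.
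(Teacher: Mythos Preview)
Your forward containment is correct and matches the paper's argument. The reverse containment has a genuine gap, in two places.

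First, your sketch for the auxiliary claim $M\subseteq\mtch_{X',Y'}(\A')$ does not go through: ``local re-routing produces a strictly cheaper greedy alignment'' yields no contradiction, because $\A'$ is only assumed to have cost at most $k$, not to be optimal; and the appeal to ``forced presence of the pair in $\mtch_k$'' is circular, since $M\subseteq\mtch_k(X',Y')$ is part of what you are trying to prove. Second, even granting the auxiliary claim, monotonicity alone does not exclude all bad breakpoints. Take $(x'',y'')\in\brkp_{X,Y}(\A')$ where $x''$ is an $M$-position with partner $y^*$, while $y''<y^*$ is not an $M$-position and $Y[y'']=Y[y^*]=X[x'']$. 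Then $\A'$ may go $(x'',y'')\to(x'',y''+1)\to\cdots\to(x'',y^*)\to(x''+1,y^*+1)$: at $(x'',y'')$ one has $X'[x'']=\#_i\ne Y'[y'']\in\Sigma$, so greediness on $X',Y'$ holds and the auxiliary claim is satisfied at $(x'',y^*)$; yet $X[x'']=Y[y'']$ makes $(x'',y'')$ a non-greedy breakpoint on $X,Y$. Nothing in your sketch rules this out. What actually excludes such configurations is the hypothesis $M\subseteq\mtch_k(X,Y)$, which your argument never engages at this stage.

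The paper avoids both issues with one device. Taking the \emph{leftmost} bad breakpoint $(x,y)$ of $\A'$ on $X,Y$, it constructs a helper alignment $\A''$ that agrees with $\A'$ up to $(x,y)$ and continues as an optimal greedy alignment of $X[x\dd |X|],Y[y\dd |Y|]$. Then $\A''\in\ga_k(X,Y)$ (prefix greedy on $X,Y$ by the leftmost choice, suffix greedy by construction, cost not increased) and $(x,y)\in\mtch_{X,Y}(\A'')$. Now the hypothesis is applied to $\A''$: since $M\subseteq\mtch_k(X,Y)\subseteq\mtch_{X,Y}(\A'')$ and $\mtch_{X,Y}(\A'')$ is non-crossing, either $(x,y)\in M$ or neither coordinate of $(x,y)$ is an $M$-coordinate. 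In both cases $X'[x]=Y'[y]$; since $(x,y)\in\A'$ and $\A'$ is greedy on $X',Y'$, the step at $(x,y)$ must be a diagonal match, whence $(x,y)\in\mtch_{X,Y}(\A')$, contradicting $(x,y)\in\brkp_{X,Y}(\A')$. The missing idea is to invoke $M\subseteq\mtch_k(X,Y)$ on a freshly constructed member of $\ga_k(X,Y)$ rather than on $\A'$ itself.
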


At the same time, after masking out the common matches, the strings become compressible.
Intuitively, this is because once two greedy alignments converge, they stay together until they encounter a mismatch.
Moreover, when two alignments proceed in parallel without any mismatch, this incurs a small period (at most $2k$)
that is captured by the LZ factorization.
\begin{restatable}{lemma}{lmgreedysize}\label{lm:greedy_size}
	Let $M=\mtch_k(X,Y)$ for strings $X,Y\in \Sigma^*$ and a positive integer $k\ge \ed(X,Y)$.
	Then, $|\LZ(X^M)|, |\LZ(Y^M)| = \Oh(k^2)$,
	and $X^M,Y^M$ contain $\Oh(k)$ dummy segments.
\end{restatable}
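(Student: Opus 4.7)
The plan is to leverage the optimal greedy alignment $\A^* \in \ga_k(X,Y)$ afforded by \cref{obs:greedy_and_optimal}, which has cost $\ed(X,Y) \le k$ and satisfies $M \subseteq \mtch(\A^*)$. The alignment $\A^*$ decomposes $X$ into at most $k+1$ maximal \emph{matching runs} (the maximal diagonal segments of $\mtch(\A^*)$) separated by at most $k$ breakpoint $X$-positions. Writing a generic run as matching $X[a \dd a+L)$ to $Y[c \dd c+L)$, the identity $X[a+i]=Y[c+i]$ for all $i \in [0 \dd L)$ drives the entire analysis.

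My first goal is to show that, within each such run, the restriction of $M$ to the diagonal is always a (possibly empty) suffix $\{(a+i,c+i) : i \in [i_0 \dd L)\}$. Forward greedy propagation yields this: if some $\A\in\ga_k(X,Y)$ matches $(a+i^*,c+i^*)$, the equality $X[a+i^*]=Y[c+i^*]$ forbids any breakpoint at state $(a+i^*,c+i^*)$ by \cref{def:greedy_alignment}, so $\A$ must take a matching step to $(a+i^*+1,c+i^*+1)$; iterating pushes $\A$ along the diagonal all the way to $(a+L,c+L)$. Hence, taking intersections over $\A$, the inclusion $(a+i^*,c+i^*)\in M$ entails $(a+i,c+i)\in M$ for every $i \in [i^* \dd L)$, and setting $i_0$ to the smallest such index gives the claimed suffix structure. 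Consequently, the $X$-projection of $M$ restricted to this run is the contiguous interval $[a+i_0 \dd a+L)$, and since the $X$-intervals of distinct matching runs are pairwise disjoint, the full $X$-projection of $M$ has at most $k+1$ maximal contiguous components, giving $\Oh(k)$ dummy segments in $X^M$. The argument for $Y^M$ is symmetric.

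For the LZ bound I will write $X^M$ as a concatenation of $\Oh(k)$ pieces — one per matching run of $\A^*$ and one per breakpoint $X$-position — and use the standard fact that $|\LZ|$ of a concatenation is bounded by the sum of $|\LZ|$ of the parts (any valid factorisation of the concatenation can be assembled from valid factorisations of the parts). Dummy pieces $\#^L$ and single breakpoint characters each contribute $\Oh(1)$, so it suffices to bound the LZ size of each matching-run piece of $X^M$, namely a non-dummy prefix $X[a \dd a+i_0)$ followed by a dummy suffix $\#^{L-i_0}$, by $\Oh(k)$.

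The crux is thus to control the non-dummy prefix whenever $i_0>0$. In that case, some $\A\in\ga_k(X,Y)$ witnesses the absence of $(a+i_0-1,c+i_0-1)$ from $M$, so $\A$'s entry into the matching run's diagonal occurs at position $i_0$ or later, meaning $\A$ traverses the $x$-range $[a \dd a+i_0)$ entirely off the diagonal of $\A^*$ at shifts $\delta$ bounded in absolute value by $\cost(\A) + \cost(\A^*) \le 2k$. Each match of $\A$ within this region yields an equality $Y[c+i]=Y[c+i+\delta]$ which, combined with $X[a+i]=Y[c+i]$, forces a local period $|\delta|$ of $W := X[a \dd a+L)$. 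I expect the main obstacle to be stitching these local periodicities into a single period $p \le 2k$ of the non-dummy prefix: the shift $\delta$ may change across the (at most $k$) breakpoints of $\A$ inside the region, so a Fine-and-Wilf-style combination of overlapping periods — or a careful rearrangement of the witness alignment into one of constant shift while remaining in $\ga_k(X,Y)$ — will likely be needed. Once this is established, the standard bound $|\LZ(Q^\infty[1 \dd n])| = \Oh(|Q|)$ gives $|\LZ(X[a \dd a+i_0))| = \Oh(k)$, and summing over the $\Oh(k)$ pieces yields $|\LZ(X^M)| = \Oh(k^2)$; $|\LZ(Y^M)|$ follows analogously.
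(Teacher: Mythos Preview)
Your overall plan matches the paper's proof: fix an optimal greedy alignment $\A^*$, split $X$ into $\Oh(k)$ matching runs (the paper's ``horizontal segments''), use forward greedy propagation to show that within each run the $M$-positions form a suffix (this is exactly the paper's argument and gives the $\Oh(k)$ dummy-segment bound), and then bound the LZ size of each run's non-dummy prefix by $\Oh(k)$ using a witness alignment that avoids $\A^*$'s diagonal throughout that prefix.

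The gap is the target you set for the last step. Aiming for a single period $p\le 2k$ of the non-dummy prefix is stronger than needed and does not hold in general: the witness $\A$ can consist of several matched segments at distinct nonzero offsets $\delta_1,\ldots,\delta_m$ (separated by its own breakpoints), and neither Fine--and--Wilf (the periodic pieces need not overlap enough) nor a ``rearrangement into constant shift'' will collapse these into one period. The fix is simpler than your proposed detour. You already implicitly have that $\A$ stays strictly on one side of $\A^*$'s diagonal throughout the prefix (an alignment path cannot cross the diagonal without touching it, and touching it would---by greediness on the matching run---force $\A$ to follow it to position $i_0-1$, a contradiction). Hence every matched segment of $\A$ there has offset $\delta$ of a fixed sign, and the identity $X[a+i]=Y[c+i+\delta]=X[a+i+\delta]$ exhibits that segment's $X$-fragment as equal to a shifted copy lying inside the run. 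After trimming $\Oh(k)$ boundary characters so the shifted copy lands within $X[a\dd a+i_0)$, each segment is itself a \emph{previous factor} (the paper handles the two signs of $\delta$ with slightly different bookkeeping, but in both cases each segment contributes a single phrase). With at most $k$ segments of $\A$, at most $k$ breakpoint characters, and $\Oh(k)$ trimmed characters, this is an LZ-like factorisation of $X[a\dd a+i_0)$ into $\Oh(k)$ phrases, so $|\LZ(X[a\dd a+i_0))|=\Oh(k)$ by \cref{fct:lz_properties}---no global period required.
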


Consequently, for $k\ge \ed(X,Y)$, we could define the \emph{greedy encoding} $\gr_k(X,Y)$ so that it consists
of $\LZ(X^M)$ and $\LZ(Y^M)$. Instead, we use a more powerful compressed representation
(developed in \cref{sec:compr}) that supports more efficient queries concerning $X^M$ and $Y^M$.

Even though $\gr_k(X,Y)$ is small, $\ga_k(X,Y)$ may consist of $2^{\Theta(k)}$ alignments,
which is why constructing $\gr_k(X,Y)$ in $\poly(k)$ time is far from trivial. 
The following combinatorial lemma lets us obtain an $\Oh(k^5)$-time algorithm in \cref{sec:greedyalg}. 
Intuitively, the alignments in $\ga_k(X,Y)$ can be interpreted as paths in a directed 
acyclic graph with $\Oh(k^5)$ branching vertices.

\begin{restatable}{lemma}{lmgreedyalg}\label{lm:greedyalg}
	For all $X,Y\in \Sigma^*$ and $k\in \Zp$, the set $\brkp_k(X,Y) = \hspace{-.4cm}  \bigcup\limits_{\A \in \ga_k(X,Y)} \hspace{-.4cm} \brkp(\A)$ is of size~$\Oh(k^5)$.
\end{restatable}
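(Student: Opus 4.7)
My plan is to reformulate $\ga_k(X, Y)$ in terms of paths in a carefully chosen DAG $G$, and then bound the number of relevant vertices of $G$ by a Landau--Vishkin-style analysis combined with a signature partition.

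\emph{DAG of greedy paths.} First, define $g(x,y) := (x+\ell, y+\ell)$ where $\ell = \lcp(X[x\dd], Y[y\dd])$; this is the result of greedy advancement from $(x,y)$. Let $G$ be the directed graph whose vertex set consists of the \emph{post-greedy} positions---pairs $(x, y) \in [1\dd |X|{+}1]\times[1\dd |Y|{+}1]$ with $x = |X|+1$, $y = |Y|+1$, or $X[x] \ne Y[y]$---together with the source $v_0 := g(1, 1)$. Each non-sink vertex $(x, y)$ has up to three outgoing edges, to $g(x+1, y+1)$, $g(x+1, y)$, and $g(x, y+1)$, encoding a non-match transition (substitution, delete-$X$, or delete-$Y$) followed by greedy matching. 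A direct unpacking of \cref{def:greedy_alignment} shows that length-$c$ source-to-sink paths in $G$ correspond bijectively to greedy alignments in $\ga_k(X, Y)$ of cost $c$, with $\brkp(\A)$ being exactly the vertex sequence of the corresponding path. Hence $\brkp_k(X, Y)$ equals the set of $G$-vertices lying on some source-to-sink path of length at most $k$.

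\emph{Signature partition.} For each vertex $v = (x, y) \in \brkp_k(X, Y)$, I associate the shortest source-to-$v$ and $v$-to-sink distances in $G$, call them $e_L(v)$ and $e_R(v)$; these satisfy $e_L(v) + e_R(v) \le k$, while \cref{fact:edk2} gives $|d(v)| \le e_L(v)$ for the diagonal $d(v) := y - x$. Hence the signature $(e_L, e_R, d)$ ranges over $\Oh(k^3)$ triples. It then suffices to prove that at most $\Oh(k^2)$ breakpoints share any one signature, yielding $|\brkp_k(X, Y)| = \Oh(k^5)$.

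\emph{Per-signature bound.} The core of the argument is a Landau--Vishkin-style recursion: for each $(e, d')$ with $|d'| \le e \le k$, I inductively describe the set $R_e(d')$ of post-greedy points on diagonal $d'$ reachable from $v_0$ in exactly $e$ edges of $G$. Each edge either performs a substitution (staying on the same diagonal) followed by greedy matching, or a deletion (shifting the diagonal by $\pm 1$) followed by greedy matching; in either case, from a fixed starting vertex, the endpoint on the resulting diagonal is determined. I plan to show inductively that $|R_e(d')| = \Oh(k)$, and symmetrically $\Oh(k)$ for the set of right-reachable post-greedy points on $d'$ at distance exactly $e_R$ from the sink. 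Intersecting the two $\Oh(k)$-sized sets yields the target $\Oh(k^2)$ bound per signature class.

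\emph{Main obstacle.} The principal difficulty lies in the per-signature bound. Classical Landau--Vishkin tracks only the \emph{extreme} reachable point per $(e, d')$ pair, whereas here I need to account for \emph{every} reachable one: different $G$-paths, even of the same length, can end at different post-greedy points on the same diagonal. The technical challenge is to argue that the greedy constraint makes the set of such endpoints sufficiently structured---for instance, by showing each endpoint is determined by $\Oh(k)$ combinatorial parameters (such as the decomposition of the path into its diagonal-shifting moves and its runs of substitutions)---so that the DAG's a priori exponential branching collapses to a polynomial number of distinct reachable vertices.
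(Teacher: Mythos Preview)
Your DAG reformulation is correct and matches the paper's stated intuition that alignments in $\ga_k(X,Y)$ correspond to paths in a DAG whose branching vertices are exactly $\brkp_k(X,Y)$. The signature partition into $\Oh(k^3)$ classes $(e_L,e_R,d)$ is also sound. However, the proposal has a genuine gap at the step you yourself flag as the ``principal difficulty''.

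The claim $|R_e(d')|=\Oh(k)$ is left entirely unsupported. You say you ``plan to show inductively'' this bound, but the natural recursion only gives
\[
|R_e(d)| \;\le\; |R_{e-1}(d-1)| + |R_{e-1}(d)| + |R_{e-1}(d+1)|,
\]
since each of sub, delX, delY is a function on vertices; this yields exponential, not linear, growth. Your proposed fix---that each endpoint is ``determined by $\Oh(k)$ combinatorial parameters'' such as the sequence of diagonal shifts and substitution-run lengths---does not give a polynomial bound: $k$ parameters each taking $\Oh(1)$ values still describe exponentially many objects, and nothing in the argument forces distinct paths to converge to the same post-greedy point. Classical Landau--Vishkin avoids this precisely by tracking only the extremal point per $(e,d)$, which is exactly what you note you cannot do here.

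There is also an arithmetic inconsistency that should be a warning sign. If $|R_{e_L}(d)|=\Oh(k)$ really held, each signature class would have size $\Oh(k)$ (it is a \emph{subset} of $R_{e_L}(d)$, so intersection with the backward set can only shrink it), and summing over the $\Oh(k^2)$ pairs $(e_L,d)$ would give $|\brkp_k(X,Y)|=\Oh(k^3)$---two powers of $k$ better than the lemma claims. So either your target lemma is stronger than needed and correspondingly harder, or it is false.

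The paper's proof takes a completely different route. It argues by induction on $|X|$ that one can always delete a character from each of $X$ and $Y$ without decreasing $|\brkp_k(X,Y)|$: either $\mtch_k(X,Y)$ contains two adjacent pairs $(x{-}1,y{-}1),(x,y)$, and removing $X[x],Y[y]$ shifts all breakpoints bijectively; or no two adjacent common matches exist, in which case one finds a long fragment with period $p\le 2k$ that every $\A\in\ga_k(X,Y)$ traverses by pure matching, and excising one period from it again preserves all breakpoints up to a shift. Iterating reduces to the base case $|X|=\Oh(k^4)$, where the trivial bound $(2k{+}1)|X|=\Oh(k^5)$ suffices. There is no layer-by-layer or per-diagonal counting; the $k^5$ arises as $k\cdot k^4$ from the base case, not from a signature argument.
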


The reason why $\gr_k(X,Y)$ supports products
is that every greedy alignment of $X,Z$ can be interpreted as a product of a greedy alignment of $X,Y$
and a greedy alignment of $Y,Z$.
\begin{restatable}{definition}{defalignmentsproduct}\label{def:alignments_product}
	Consider strings $X,Y,Z\in \Sigma^*$, an alignment $\A^{X,Y}$ of $X,Y$, an alignment $\A^{Y,Z}$ of $Y,Z$, and an alignment $\A^{X,Z}$ of $X,Z$. 
	We say that $\A^{X,Z}$ is a \emph{product} of $\A^{X,Y}$ and $\A^{Y,Z}$ if, for every $(x,z)\in \A^{X,Z}$, there is $y\in [1\dd |Y|+1]$ such that $(x,y)\in \A^{X,Y}$ and $(y,z)\in \A^{Y,Z}$.
\end{restatable}
	
\begin{restatable}{lemma}{lmgreedyproduct}\label{lm:greedy_product}
	Consider strings $X,Y,Z\in \Sigma^*$ and $k\in \Zz$.
	Every alignment $\A^{X,Z}\in \ga_k(X,Z)$ is a product of alignments $\A^{X,Y}\in \ga_{d}(X,Y)$ and $\A^{Y,Z}\in \ga_d(Y,Z)$, where $d = 2k+\ed(X,Y)$.
\end{restatable}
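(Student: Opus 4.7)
The plan is to realise the desired factorisation as the two pairwise projections of a single monotone 3D lattice path $\pi$ in $[1\dd|X|+1]\times[1\dd|Y|+1]\times[1\dd|Z|+1]$ going from $(1,1,1)$ to $(|X|+1,|Y|+1,|Z|+1)$, whose $(X,Z)$-projection recovers the given $\A^{X,Z}$. Turning the existence statement into a construction of $\pi$ makes the product relation automatic.

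First, I would fix $\A^{X,Y}$ to be any optimal greedy alignment of $X$ and $Y$, which exists by \cref{obs:greedy_and_optimal} and satisfies $\cost(\A^{X,Y})=\ed(X,Y)\le d$. Then I would scan $\A^{X,Y}$ and $\A^{X,Z}$ in parallel while synchronising them on the $X$-coordinate: $Y$-only steps of $\A^{X,Y}$ (deletions of $Y$) and $Z$-only steps of $\A^{X,Z}$ (deletions of $Z$) can be processed independently, and whenever such steps occur at the same $X$-coordinate I would prefer to merge them into a single 3D move $(0,1,1)$; every step advancing $X$ must be taken jointly by both alignments, producing 3D moves in $\{(1,0,0),(1,1,0),(1,0,1),(1,1,1)\}$. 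Setting $\A^{Y,Z}$ to be the $(Y,Z)$-projection of $\pi$, the product property follows directly: for every $(x,z)\in\A^{X,Z}$, any 3D vertex $(x,y,z)$ of $\pi$ supplies a witness $y$ with $(x,y)\in\A^{X,Y}$ and $(y,z)\in\A^{Y,Z}$. A short case analysis across the seven possible non-trivial 3D move types then charges each breakpoint of $\A^{Y,Z}$ to a breakpoint of either $\A^{X,Y}$ or $\A^{X,Z}$, giving $\cost(\A^{Y,Z})\le\cost(\A^{X,Y})+\cost(\A^{X,Z})\le\ed(X,Y)+k\le d$.

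The main obstacle is that $\A^{Y,Z}$ obtained this way is not automatically greedy: a $(1,1,0)$ move, in which $\A^{X,Y}$ aligns $X[x]$ with $Y[y]$ while $\A^{X,Z}$ deletes $X[x]$, plants a \textit{delete $Y[y]$} breakpoint in $\A^{Y,Z}$ at some $(y,z)$ where we could happen to have $Y[y]=Z[z]$. My plan is to repair each such non-greedy breakpoint by the same local-swap argument that underpins \cref{obs:greedy_and_optimal}, converting the breakpoint into a match without increasing $\cost(\A^{Y,Z})$. The delicate point is that a repair in $\A^{Y,Z}$ may force a mirroring local modification inside $\A^{X,Y}$ in order to preserve the 3D path (and hence the product relation). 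I expect each repair to increase $\cost(\A^{X,Y})$ by at most $1$ while preserving its greediness, and I expect the total number of repairs to be bounded by $\cost(\A^{X,Z})\le k$, since every non-greedy breakpoint of the initial $\A^{Y,Z}$ can ultimately be traced to an edit of $\A^{X,Z}$. This yields the final bounds $\cost(\A^{X,Y})\le \ed(X,Y)+k$ and $\cost(\A^{Y,Z})\le \ed(X,Y)+k$, both of which are $\le d=2k+\ed(X,Y)$. Orchestrating the repairs on the $(Y,Z)$ side while keeping greediness on the $(X,Y)$ side is the step I expect to be the main technical difficulty.
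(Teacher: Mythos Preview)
Your high-level plan --- synchronise a fixed optimal greedy $\A^{X,Y}$ with the given $\A^{X,Z}$ into a 3D path and read off $\A^{Y,Z}$ as the $(Y,Z)$-projection --- is natural, and the product relation and the bound $\cost(\A^{Y,Z})\le\cost(\A^{X,Y})+\cost(\A^{X,Z})$ indeed fall out. But the repair step, which you correctly identify as the crux, is where the proposal stops being a proof.

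Two concrete issues. First, your charging claim (``every non-greedy breakpoint of the initial $\A^{Y,Z}$ can ultimately be traced to an edit of $\A^{X,Z}$'') is not right as stated. A non-greedy breakpoint at $(y,z)$ with $Y[y]=Z[z]$ forces $X[x]\ne Y[y]$ at the corresponding 3D vertex, so $(x,y)$ is always an $\A^{X,Y}$-breakpoint; it need \emph{not} correspond to a distinct $\A^{X,Z}$-breakpoint. For instance, with $X=a$, $Y=bba$, $Z=b$ and $\A^{X,Z}$ the single substitution, your construction produces two non-greedy breakpoints in $\A^{Y,Z}$ while $\cost(\A^{X,Z})=1$. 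Second, the local swap of \cref{obs:greedy_and_optimal} is a statement about a single alignment; once you repair $\A^{Y,Z}$ you must also reshape $\A^{X,Y}$ to keep the product witnesses, and you have given no mechanism that does this while simultaneously preserving greediness of $\A^{X,Y}$, avoiding cascading new violations, and controlling the total cost increase.

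The paper sidesteps the repair problem entirely by \emph{not} fixing $\A^{X,Y}$ in advance. It proves the lemma by induction on $|X|+|Y|+|Z|$: at each step one looks at $X[1],Y[1],Z[1]$ and the first move of $\A^{X,Z}$, and chooses the first moves of $\A^{X,Y}$ and $\A^{Y,Z}$ jointly so that both are greedy by construction. The only subtle case is $X[1]=Z[1]\ne Y[1]$ (where $\A^{X,Z}$ must match): there the paper selects among substituting $X[1]/Y[1]$, deleting $X[1]$, or deleting $Y[1]$ according to which of $\ed(X[2\dd],Y[2\dd])$, $\ed(X[2\dd],Y)$, $\ed(X,Y[2\dd])$ is strictly below $\ed(X,Y)$, and then recurses. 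A short bookkeeping shows both costs stay within $d'=2\cost(\A^{X',Z'})+\ed(X',Y')\le d$. In particular the resulting $\A^{X,Y}$ can have cost as large as $d$ --- it is \emph{not} optimal --- which is precisely the flexibility your ``fix-then-repair'' scheme lacks. If you want to salvage your approach, you would effectively have to rediscover this adaptive construction; building both alignments together from the start is simpler than repairing after the fact.
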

As a result, $\gr_d(X,Y)$ and $\gr_d(Y,Z)$ contain enough information to derive $\gr_k(X,Z)$.
The underlying algorithm propagates the characters of $Y$
stored in $\gr_d(X,Y)$ and $\gr_d(Y,Z)$ along the matchings $\mtch_{d}(Y,Z)$ and $\mtch_d(X,Y)$, respectively.
Then, $\gr_k(X,Z)$ is obtained by masking out all the characters corresponding to $\mtch_k(X,Y)$ (see \cref{sec:product} for details).

To support concatenations,
we extend the family $\ga_k(X,Y)$ of greedy alignments to a family $\qga_k(X,Y)$ of \emph{quasi-greedy alignments},
which are allowed to delete a prefix of $X$ or $Y$ in violation of \cref{def:greedy_alignment}.
The \emph{quasi-greedy encoding} $\qgr_k(X,Y)$ is defined analogously to $\gr_k(X,Y)$.
Equivalently, $\qga_k(X,Y)$ can be derived from $\ga_{k+1}(\$_1X,\$_2Y)$, where $\$_1\ne \$_2$ are sentinel symbols outside $\Sigma$, by taking the alignments induced by $X,Y$.
The latter characterization makes all our claims regarding $\ga$ and $\gr$ easily portable to $\qga$ and $\qgr$
(see \cref{sec:quasi}).
In particular, this is true for the sketches, described in the following subsection for $\gr$ only.

\subsection{Edit Distance Sketches}
Recall that we need an edit-distance sketch $\skE$ allowing to retrieve $\gr_k(X,Y)$ from $\skE_k(X)$ and $\skE_k(Y)$
for any strings $X,Y\in \Sigma^{\le n}$ and any threshold $k\in [0\dd n]$.
Furthermore, we need to make sure that $\skE_k(S)$ can be computed given streaming access to $S\in \Sigma^{\le n}$, and that the encoding and decoding procedures use $\poly(k, \log n)$ space.
While the existing sketches~\cite{sketches,jin2020improved} are designed to compute the exact edit distance $\ed(X,Y)$ capped with $k$, we believe that they could be adapted to output $\qgr_k(X,Y)$.
Nevertheless, these sketches are relatively large (taking $\tOh(k^8)$ and $\tOh(k^3)$ bits, respectively),
and we would need to strengthen the bulk of their analyses to prove that, in principle, they provide enough information to retrieve $\qgr_k(X,Y)$.
Hence, to further demonstrate the power of our techniques, we devise a novel $\tOh(k^2)$-size sketch specifically designed to output $\qgr_k(X,Y)$. We note that the $\tOh(k^2)$ size is optimal for $\qgr_k(X,Y)$, but we are not aware of a matching lower bound for retrieving $\ed(X,Y)$ capped with~$k$.

Just like the sketches of~\cite{sketches,jin2020improved}, ours relies on the embedding of Chakraborty, Goldenberg, and Kouck\'{y}~\cite{CGK}. The CGK algorithm performs a random walk over the input string (with forward and stationary steps only). In abstract terms, such walk can be specified as follows:

\begin{definition}[Complete walk]\label{def:walk}
For a string $S\in \Sigma^*$, we say that $(s_t)_{t=1}^{m+1}$ is an \emph{$m$-step complete walk over $S$} if $s_1 = 1$, $s_{m+1} = |S|+1$, and $s_{t+1}\in \{s_t,s_t+1\}$ for $t\in [1\dd m]$.
\end{definition}
	
For any two strings $X,Y\in \Sigma^*$, the two walks underlying $\CGK(X)$ and $\CGK(Y)$ can be interpreted as an edit distance alignment using the following abstract definition:
\begin{definition}[Zip alignment]\label{def:zip}
	The \emph{zip alignment} of $m$-step complete walks $(x_t)_{t=1}^{m+1}$ and $(y_t)_{t=1}^{m+1}$ over ${X,Y\in \Sigma^*}$
	is obtained by removing repeated entries in $(x_t,y_t)_{t=1}^{m+1}$.
\end{definition}
	
The key result of~\cite{CGK} is that the cost of the zip alignment of CGK walks over $X,Y\in \Sigma^*$ is $\Oh(\ed(X,Y)^2)$ with good probability, which is then exploited to derive a metric embedding (mapping edit distance to Hamming distance) with quadratic distortion. In our sketch, we also need to observe that the CGK alignment is greedy and that its width is $\Oh(\ed(X,Y))$ with good probability. The following proposition, proved in \cref{sec:CGK}, provides a 
complete black-box interface of the properties of the CGK algorithm utilized in our sketches. 
It also encapsulates Nisan's pseudorandom generator~\cite{Nisan} that reduces the number of (shared) random bits.

\begin{restatable}{proposition}{prpalg}\label{prp:alg}
For every constant $\delta \in (0,1)$, there exists a constant $c$ and an algorithm $\Alg$ that, given an integer $n$, a seed $r$ of $\Oh(\log^2 n)$ random bits, and a string $S\in \Sigma^{\le n}$, outputs a $3n$-step complete walk $\Alg(n,r,S)$ over $S$ satisfying the following property for all  $X,Y\in \Sigma^{\le n}$ and the zip alignment $\A_\Alg$ of  $\Alg(n,r,X)$ and $\Alg(n,r,Y)$:
\[\Pr_r\left[ \A_\Alg\in \ga_{c\cdot \ed(X,Y)^2}(X,Y)\text{ and }\width(\A_\Alg)\le c\cdot\ed(X,Y)\right] \ge 1-\delta.\]
Moreover, $\Alg$ is an $\Oh(\log^2 n)$-bit streaming algorithm that costs $\Oh(n\log n)$ time and reports any element $s_t\in [1\dd |S|]$ of $\Alg(n,r,S)$ while processing the corresponding character $S[s_t]$.
\end{restatable}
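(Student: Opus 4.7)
The plan is to adapt the Chakraborty--Goldenberg--Kouck\'y (CGK) random walk algorithm to the streaming setting and then reduce its randomness via Nisan's pseudorandom generator. Interpreting the seed $r$ as pseudorandom bits $(r_{t,c})_{t\in [0\dd 3n),\,c\in \Sigma}$, I would define $\Alg(n,r,S)$ by $s_1=1$, $s_{t+1}=s_t+r_{t,S[s_t]}$ whenever $s_t\le |S|$, and $s_{t+1}=s_t$ once $s_t=|S|+1$; to guarantee completeness within $3n$ steps, I would override $r_{t,S[s_t]}$ and force an advance whenever $s_t+(3n-t)\le |S|$. Since $3n$ fair coin flips yield at least $|S|\le n$ heads with probability $1-\delta/4$ by a Chernoff bound, the override is a small-probability event that is absorbed into~$\delta$.

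For the zip alignment $\A_\Alg$ of $\Alg(n,r,X)$ and $\Alg(n,r,Y)$, greediness is immediate: whenever $X[x_t]=Y[y_t]=c$, the two walks consult the same bit $r_{t,c}$ and therefore either both stay or both advance, so every breakpoint $(x_t,y_t)$ with $x_t\le|X|$ and $y_t\le|Y|$ satisfies $X[x_t]\ne Y[y_t]$, as required by \cref{def:greedy_alignment}. The bound $\cost(\A_\Alg)=\Oh(\ed(X,Y)^2)$ is essentially the main theorem of~\cite{CGK}, which I would invoke in black-box form and amplify to probability $1-\delta/4$. The width bound is the new ingredient: the displacement $x_t-y_t$ evolves by $r_{t,X[x_t]}-r_{t,Y[y_t]}$, which is $0$ whenever $X[x_t]=Y[y_t]$ (by shared randomness) and otherwise a zero-mean variable in $\{-1,0,+1\}$; combined with the CGK bound of $\Oh(\ed(X,Y)^2)$ mismatched steps, a Hoeffding-style argument on the resulting martingale yields $\max_t|x_t-y_t|=\Oh(\ed(X,Y))$ with probability $1-\delta/4$.

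To meet the $\Oh(\log^2 n)$ seed-length and space targets I would invoke Nisan's PRG. The entire construction---maintaining $s_t$, reading $S[s_t]$, and evaluating the override---fits in $\Oh(\log n)$ bits of state, and the combined failure event ``greediness fails, or $\cost(\A_\Alg)$ or $\width(\A_\Alg)$ exceeds the stated bound, or the override triggers'' is decidable by a single-pass logspace randomized algorithm over $X$, $Y$, and the random tape. Nisan's theorem then furnishes a seed of length $\Oh(\log^2 n)$ that $(\delta/4)$-fools this test. Each of the $3n$ walk steps evaluates the PRG on an $\Oh(\log n)$-bit index in $\Oh(\log n)$ time, giving $\Oh(n\log n)$ total; each element $s_t\in[1\dd |S|]$ is reported the first time it is reached, i.e.\ concurrently with reading $S[s_t]$. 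The main obstacle is the width bound, which is not explicit in~\cite{CGK}: one must concentrate the displacement on the set of mismatched rounds even though this set itself depends on~$r$, handled either by conditioning on the set of mismatched rounds or by a union bound over plausible such sets. A secondary technicality is arranging the bits $(r_{t,c})$ in a schedule compatible with logspace access, which can be done via hash evaluations derivable from the PRG output rather than direct table lookup.
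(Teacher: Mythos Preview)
Your proposal is correct and follows essentially the same route as the paper: CGK walk, greediness from shared randomness at matched characters, the cost bound from~\cite{CGK}, a random-walk concentration argument for the width, and Nisan's PRG for derandomization. The paper's treatment differs only in details: it uses 2-independent hash functions $R_t:\Sigma\to\{0,1\}$ (your ``hash evaluations'' fix), forces completeness via $s_t\mapsto\max(t-3n+|S|,s_t)$ rather than an explicit override, and---for the width bound---extends the displacement process at mismatch times to an \emph{infinite} unbiased lazy random walk and bounds $\Pr[\max_{i\le ck^2}|d_i|\ge ck]\le 1/c$ via Chebyshev and the reflection principle, which cleanly sidesteps the dependence issue you flag (no conditioning or union bound over mismatch sets is needed).
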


Next, we analyze the structural similarity between $\A_\Alg$ and any alignment $\A\in \ga_k(X,Y)$.
Based on \cref{prp:alg}, we may assume that $\A_{\Alg}\in \ga_{\Oh(k^2)}(X,Y)$ and $\width(\A_\Alg)=\Oh(k)$.
Consider the set $M=\mtch(\A)\cap\mtch(\A_\Alg)$ of the common matches of $\A$ and $\A_\Alg$ and the string $X^M$ obtained by masking out the underlying characters of $X$.
Whereas \cref{lm:greedy_size} immediately implies that the $\LZ$ factorization of $X^M$ consists of $\Oh(k^4)$ phrases, a more careful application of the same technique provides a refined bound of $\Oh(k^2)$ phrases. Furthermore, there are $\Oh(k)$ dummy segments in $X^M$
and, if $X[x]$ is not masked out in $X^M$ (for some $x\in [1\dd |X|]$), then there is a breakpoint $(x',y')\in \brkp_{X,Y}(\A_{\Alg})$ with $x'\in [1\dd x]$ and $|\LZ(X[x'\dd x])| = \Oh(k)$.
Intuitively, this means that $\A$ and $\A_{\Alg}$ diverge only within highly compressible regions following the breakpoints $\brkp_{X,Y}(\A_{\Alg})$. We call these regions \emph{forward contexts} (formally, a forward context is the longest fragment starting at a given position and satisfying certain compressiblity condition).
Since our choice of $\A\in \ga_k(X,Y)$ was arbitrary, any two alignments $\A,\A'\in \ga_k(X,Y)$ diverge only within these forward contexts. Hence, in order to reconstruct $\gr_k(X,Y)$ and, in particular, $X^{\mtch_k(X,Y)}$, the sketch should be powerful enough to retrieve all characters in forward contexts of breakpoints $\brkp_{X,Y}(\A_{\Alg})$.
Even though $\brkp_{X,Y}(\A_{\Alg})$ could be of size $\Theta(k^2)$, due to the aforementioned bounds on $|\LZ(X^M)|$ and the number of dummy segments in $X^M$, it suffices to take $\Oh(k)$ among these forward contexts to cover the unmasked regions of $X^M$ and $X^{\mtch_k(X,Y)}$.
Each context can be encoded in $\tOh(k)$ bits, so this paves a way towards sketches of size $\tOh(k^2)$.

Nevertheless, while processing a string $X\in \Sigma^{\le n}$,
we only have access to the string $X$ and the $m$-complete walk $(x_t)_{t=1}^m = \Alg(n,r,X)$ over $X$.
In particular, depending on $Y$, any position in $X$ could be involved in a breakpoint.
A naive strategy would be to build a \emph{context encoding} $\bCGK(X)[1\dd m]$ that stores at $t\in [1\dd m]$ (a compressed representation of) the forward context starting at $X[x_t]$,
and then post-process it using a Hamming-distance sketch.
This is sufficiently powerful because $X[x_t]\ne Y[y_t]$ holds for any $(x_t,y_t)\in \brkp(\A_\Alg)$ (recall that $\A_\Alg$ is greedy). Unfortunately, this construction does not guarantee any upper bound on $\hd(\bCGK(X),\bCGK(Y))$ in terms of $k$. (For example, if $X$ is compressible, modifying its final character of $X$ affects the entire $\bCGK(X)$.)
Hence, we sparsify $\bCGK(X)$ by placing a blank symbol $\bot$ at some positions $\bCGK(X)[t]$
so that just a few forward contexts stored in $\bCGK(X)[t]$ cover any single position in $X$. 

This brings two further challenges.
First, if $X[x]$ is involved in a breakpoint, then we are only guaranteed that it is covered by some forward context $X[p\dd q)$  of $\bCGK(X)[t]$ (i.e., $x\in [p\dd q)$). In particular, the forward context starting at position $x$ could extend beyond $X[x\dd q)$. Hence, the string $\bCGK(X)[t]$ actually stores \emph{double forward contexts} $X[p\dd r)=X[p\dd q)X[q\dd r)$ defined as the concatenation of the forward contexts of $X[p]$ and $X[q]$.
We expect this double forward context $X[p\dd r)$ to cover the entire forward context of $X[x]$.
Unfortunately, this is not necessarily true if we use the Lempel--Ziv factorization to quantify compressibility:
we could have $|\LZ(X[x\dd r))| < |\LZ(X[q\dd r))|$ because $\LZ(\cdot)$ is not monotone.
Instead, we use an ad-hoc compressibility measure defined as $\maxLZ(S) = \max_{[\ell\dd r)\sub [1\dd |S|]} \LZ(\rev{S[\ell\dd r)})$. Here, maximization over substrings guarantees monotonicity whereas reversal helps designing an efficient streaming algorithm constructing contexts (beyond the scope of this overview).

Another challenge is that the sparsification needs to be consistent between $\bCGK(X)$ and $\bCGK(Y)$: assuming $\ed(X,Y)\le k$, we should have $\hd(\bCGK(X),\bCGK(Y))=\tOh(k)$, which also accounts for mismatches between $\bot$ and a stored double forward context.
This rules out a naive strategy of covering $X$ from left to right using disjoint forward contexts: any substitution at the beginning of $X$ could then have a cascade of consequences throughout $\bCGK(X)$. Hence, we opt for a memory-less strategy that decides on $\bCGK(X)[t]$ purely based on the forward contexts $X[x_{t-1}\dd x'_{t-1})$ and $X[x_{t}\dd x'_{t})$. For example, we could set $\bCGK(X)[t]=\bot$ unless the smallest dyadic interval containing $[x_{t-1}\dd x'_{t-1})$ differs from the smallest dyadic interval containing $[x_{t}\dd x'_{t})$ (a dyadic interval is of the form $[i2^j\dd (i+1)2^j)$ for some integers $i,j\ge 0$).
With this approach, each position of $X$ is covered by at least one and at most $\Oh(\log |X|)$ forward contexts. Furthermore, substituting any character in $X$ does not have far-reaching knock-on effects.
Unfortunately, insertions and deletions are still problematic as they shift the positions $x_t$.
Thus, the decision concerning $\bCGK(X)[t]$ should be independent of the numerical value of $x_t$.
Consequently, instead of looking at the smallest dyadic interval containing $[x_{t}\dd x'_{t})$, we choose the largest $t'$ such that $[x_t\dd x'_t) = [x_t \dd x_{t'})$, and we look at the smallest dyadic interval containing $[t\dd t')$.

With each forward context $X[x_t\dd x'_t)$ retrieved, we also need to determine the value $x_t$ (so that we know which fragment of $X$ we can learn from $\bCGK(X)[t]$).
To avoid knock-on effects, we actually store differences $x_{t}-x_{u}$ with respect to the previous index satisfying $\bCGK(X)[u]\ne \bot$. 
This completes the intuitive description of the context encoding $\bCGK$ studied in \cref{sec:cover}.

Our edit-distance sketch contains the Hamming-distance sketch of $\bCGK(X)$. For this, we use an existing construction~\cite{clifford2018streaming}, augmented in a black-box manner to support large alphabets (recall that the each forward contexts takes $\tOh(k)$ bits). Furthermore, to retrieve the starting positions $x_t$ (rather than just the differences $x_t - x_u$), we use a hierarchical Hamming-distance sketch similar to those used in~\cite{sketches,jin2020improved}. This way, given sketches of $X$ and $Y$, we can recover all characters that remain unmasked in $X^{\mtch_k(X,Y)}$ and $Y^{\mtch_k(X,Y)}$. The tools of \cref{sec:greedy} are then used to compute $\ed(X,Y)$ (or certify $\ed(X,Y)>k$) and retrieve the greedy encoding $\gr_k(X,Y)$ (see \cref{sec:actualsketches}).
We summarize the properties of the edit distance sketches below:

\begin{restatable}{theorem-restatable}{thskE}\label{thm:ske}
For every constant $\delta \in (0,1)$, there is a sketch $\skE_k$ (parametrized by integers $n\ge k \ge 1$, an alphabet $\Sigma = [0\dd n^{\Oh(1)})$, and a seed of $\Oh(\log^2 n)$ random bits) such that:
\begin{enumerate}[label=\textrm{(\alph*)}]
	\item The sketch $\skE_k(S)$ of a string $\Sigma^{\le n}$ takes $\Oh(k^2 \log^3 n)$ bits. Given streaming access to $S$, it can be constructed in $\tOh(nk)$ time using $\tOh(k^2)$ space.
	\item There exists an $\tOh(k^2)$-space decoding algorithm that, given $\skE_{k}(X), \skE_{k}(Y)$ for $X,Y \in \Sigma^{\le n}$,
	with probability at least $1-\delta$ outputs $\gr_k(X,Y)$ and $\min(\ed(X,Y),k+1)$.
	Retrieving $\gr_k(X,Y)$ costs $\tOh(k^5)$ time, whereas computing  $\min(\ed(X,Y),k+1)$ costs $\tOh(k^3)$ time. 
\end{enumerate}
\end{restatable}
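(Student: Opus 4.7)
The plan is to follow the blueprint outlined in the overview: combine the CGK-style walk from \cref{prp:alg}, a context encoding storing short highly-compressible fragments around breakpoints, and an off-the-shelf Hamming sketch with a hierarchical layer for position recovery. I would fix the constant $c$ from \cref{prp:alg} tuned to attain a failure probability $\le \delta/2$, and use a shared $\Oh(\log^2 n)$-bit seed $r$ so that the walks on $X$ and $Y$ are coupled. While reading $S\in \Sigma^{\le n}$ in a stream, the algorithm simultaneously generates $\Alg(n,r,S)=(s_t)_{t=1}^{3n+1}$, and at each step $t$ it maintains the ``double forward context'' $S[s_t\dd s''_t)$ — the concatenation of the $\maxLZ$-bounded (with bound $\Theta(k)$) forward context of $S[s_t]$ with that of $S[s'_t]$ — together with the information needed to decide whether slot $\bCGK(S)[t]$ is blank. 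Crucially, sparsification is decided \emph{memorylessly} from the dyadic bucket containing $[t\dd t')$, where $t'$ is the largest index with $s_{t'}=s'_t$; this way $\bCGK(S)[t]$ is independent of the numerical positions $s_t$ (so insertions/deletions only shift slots) and every position of $S$ is covered by $\Oh(\log n)$ stored contexts.

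Next, I would define $\skE_k(S)$ by piping $\bCGK(S)$ into a streaming Hamming-distance sketch — the $k$-mismatch sketch of~\cite{clifford2018streaming}, instantiated over an alphabet of $\tOh(k)$-bit symbols (one double context) with mismatch budget $\tOh(k)$ — plus a hierarchical Hamming-style sketch, in the spirit of~\cite{sketches,jin2020improved}, that lets us recover each differing index $s_t$ from successive differences against the previous non-blank slot. A single stream step costs $\tOh(k)$ (sliding the $\maxLZ$-window, updating the context, feeding one character into the Hamming sketch), giving $\tOh(nk)$ total time and $\tOh(k^2)$ working space, while the sketch itself occupies $\Oh(k^2\log^3 n)$ bits because the Hamming sketch of~\cite{clifford2018streaming} has size $\tOh(k)$ per alphabet bit. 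Establishing part~(a) is then a matter of accounting.

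The heart of part~(b) is to show that when $\ed(X,Y)\le k$ we have $\hd(\bCGK(X),\bCGK(Y))=\tOh(k)$ with probability at least $1-\delta/2$. Conditioning on the event of \cref{prp:alg}, the zip alignment $\A_\Alg$ of the CGK walks is greedy, has cost $\Oh(k^2)$ and width $\Oh(k)$. Using the refined compressibility analysis sketched in \cref{sec:overview} (applying the technique of \cref{lm:greedy_size} to the common matches of $\A_\Alg$ and an arbitrary $\A\in \ga_k(X,Y)$), only $\Oh(k)$ \emph{distinct} forward contexts are actually needed to cover the unmasked regions of $X^{\mtch_k(X,Y)}$ and $Y^{\mtch_k(X,Y)}$; because $\maxLZ$ is monotone under taking substrings and the dyadic-on-index sparsification is memoryless, each common match contributes $\Oh(\log n)$ mismatches and each breakpoint an additional $\tOh(1)$ mismatch, for a total of $\tOh(k)$. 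This Hamming-distance bound is the main obstacle, and it is precisely what motivates the $\maxLZ$ measure, the double-context construction, and the index-based dyadic rule.

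Given the two sketches, the decoder first runs the Hamming-sketch recovery (which succeeds with probability $\ge 1-\delta/2$) to obtain the differing slots of $\bCGK(X)\oplus \bCGK(Y)$, then uses the hierarchical layer to compute the underlying indices $s_t$. From the retrieved double contexts it reconstructs every character of $X$ and $Y$ that is not in $\mtch_k(X,Y)$, and hence assembles $\num(X^M)$ and $\num(Y^M)$ for $M=\mtch_k(X,Y)$ in the compressed form of \cref{sec:compr}. By \cref{lm:masking_does_not_change_ga} and \cref{lm:greedy_size}, these two strings have LZ size $\Oh(k^2)$ and determine both $\ed(X,Y)$ (capped at $k+1$) and $\gr_k(X,Y)$. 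A capped edit-distance computation on compressed strings of size $\Oh(k^2)$ runs in $\tOh(k^3)$ time, while constructing the full $\gr_k$ object requires invoking the $\Oh(k^5)$-time algorithm behind \cref{lm:greedyalg} together with the product/concatenation tools, giving $\tOh(k^5)$ time; both stages use $\tOh(k^2)$ space. A union bound over the two failure events yields overall success probability $\ge 1-\delta$, completing the proof.
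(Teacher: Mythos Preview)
Your proposal is correct and follows essentially the same approach as the paper: the CGK walk of \cref{prp:alg}, the $\maxLZ$-based double-context encoding $\bCGK$ with dyadic-on-index sparsification, a large-alphabet Hamming sketch together with a hierarchical sketch (\cref{prp:mpi}) for position recovery, and finally \cref{prp:greedify}/\cref{cor:ED} to output $\gr_k(X,Y)$ and the capped edit distance. One small clarification: the decoder cannot directly assemble $X^M,Y^M$ for $M=\mtch_k(X,Y)$ (which it does not yet know); rather it builds $\Enc^M(X,Y)$ for the coarser matching $M\subseteq \mtch_k(X,Y)$ consisting of CGK-matches lying outside the retrieved contexts (this containment is exactly what \cref{lem:bcgk} guarantees), and it is the subsequent call to \cref{prp:greedify} that runs the greedy algorithm of \cref{lm:greedyalg} to refine $M$ down to $\mtch_k(X,Y)$.
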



\section{Compressed String Representation}\label{sec:compr}
In this section, we develop a data structure that stores a string $X$ in $\Oh(|\LZ(X)|\log^2 |X|)$ space,
supporting various operations in a relatively efficient way (the $\log |X|$ factors are not optimized).

Our data structure can be constructed not only from $\LZ(X)$,
but from any \emph{LZ-like representation} describing a factorization $X=F_1\cdots F_f$ into non-empty phrases such that each phrase $F_j$ with $|F_j|>1$ is a previous factor (unlike LZ77, the phrases do not need to be a maximal).

\begin{observation}\label{fct:lz_properties}
    Every LZ-like factorization of a string $X\in \Sigma^*$ has at least $|\LZ(X)|$ phrases. Moreover, 
    $|\LZ(X)| \le |\LZ(XY)| \le |\LZ(X)|+|\LZ(Y)|$ holds for all strings $X,Y\in \Sigma^*$.
\end{observation}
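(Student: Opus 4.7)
The observation splits into three claims: (a) any LZ-like factorization $X = F_1\cdots F_f$ satisfies $f\ge |\LZ(X)|$; (b) $|\LZ(X)|\le |\LZ(XY)|$; and (c) $|\LZ(XY)|\le |\LZ(X)|+|\LZ(Y)|$. The plan is to establish (a) via a greedy exchange argument showing that LZ77 is optimal among LZ-like factorizations, then to derive (b) and (c) by exhibiting suitable LZ-like factorizations and invoking (a).

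For (a), let $X = G_1\cdots G_{|\LZ(X)|}$ denote the LZ77 factorization, set $p_j := |F_1\cdots F_j|$, and let $z_j$ be the smallest index such that the first $z_j$ LZ77 phrases cover a prefix of length $\ge p_j$. I will show $z_j\le j$ by induction on $j$, so that $j=f$ yields $|\LZ(X)| = z_f \le f$. The base case $j=0$ is immediate. For the inductive step, assume $z_{j-1}\le j-1$ and let $q := |G_1\cdots G_{z_{j-1}}|$, so $q\ge p_{j-1}$. If $q\ge p_j$ already, then $z_j\le z_{j-1}\le j-1$. Otherwise $p_{j-1}\le q < p_j$, and I claim that the fragment $X(q\dd p_j]$ is itself either a single character or a previous factor. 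Indeed, when $|F_j|>1$, the LZ-like property gives a previous occurrence of $F_j = X(p_{j-1}\dd p_j]$ starting at some $i'\le p_{j-1}$; shifting by $q-p_{j-1}$ produces an occurrence of $X(q\dd p_j]$ starting at $i' + (q - p_{j-1})\le q$, which is strictly before $q+1$, so $X(q\dd p_j]$ is a previous factor. When $|F_j|=1$, the inequality $p_{j-1}\le q < p_j = p_{j-1}+1$ forces $q = p_{j-1}$, and LZ77 makes at least one character of progress regardless. In either sub-case, the greedy phrase $G_{z_{j-1}+1}$ extends at least to position $p_j$, yielding $z_j\le z_{j-1}+1\le j$.

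For (b), take the LZ77 factorization $XY = H_1\cdots H_z$ with $z = |\LZ(XY)|$, and let $j^*$ be the index of the phrase containing position $|X|$. Any previous occurrence of $H_{j^*}$ starts strictly before $H_{j^*}$ itself, hence ends strictly before position $|X|+1$ and lies entirely within $X$; the same conclusion therefore holds after truncating $H_{j^*}$ to its prefix $H_{j^*}'$ ending at $|X|$. Thus $H_1,\ldots,H_{j^*-1},H_{j^*}'$ is an LZ-like factorization of $X$ with $j^*\le z$ phrases, and (a) gives $|\LZ(X)|\le z$. For (c), concatenating the LZ77 factorizations of $X$ and of $Y$ produces a factorization of $XY$ with $|\LZ(X)|+|\LZ(Y)|$ phrases, which is LZ-like because any previous occurrence of a phrase of $Y$ inside $Y$ is also a previous occurrence inside $XY$; applying (a) concludes. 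The substantive difficulty is concentrated in (a): since the LZ77 and LZ-like phrase boundaries need not align, the induction must track cumulative LZ77 progress rather than set up a phrase-by-phrase correspondence, with the only genuine combinatorial input being that a suitably shifted suffix of a previous factor is again a previous factor.
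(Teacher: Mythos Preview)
The paper states this observation without proof, so there is no argument to compare against; your proof is correct and is the standard one. Part (a) is clean, and (b) and (c) follow from (a) exactly as you outline.

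One wording slip in (b): the sentence ``Any previous occurrence of $H_{j^*}$ starts strictly before $H_{j^*}$ itself, hence ends strictly before position $|X|+1$'' is not justified as written. LZ77 allows self-overlapping references, and $H_{j^*}$ may extend well past position $|X|$; all you can conclude about the previous occurrence of the \emph{full} $H_{j^*}$ is that it ends strictly before the end of $H_{j^*}$, not before $|X|+1$. The correct statement is about the \emph{truncated} phrase $H_{j^*}' = (XY)[s\dd |X|]$: its previous occurrence starts at $i' \le s-1$ and has length $|X|-s+1$, so it ends at position $i' + (|X|-s) \le |X|-1$, hence lies entirely within $X$. You have all the ingredients for this; just perform the truncation first and then bound the end of the (shortened) previous occurrence. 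The same remark applies, more mildly, to the phrases $H_1,\ldots,H_{j^*-1}$: you should note that their previous occurrences also lie within $X$ (which follows since each such $H_i$ ends at most at position $s-1 \le |X|-1$, and a previous occurrence ends strictly earlier).
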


\begin{proposition}\label{prp:RLSLP}
    In the $w$-bit word RAM model, every string $X\in \Sigma^n$ satisfying $w = \Omega(\log n + \log |\Sigma|)$, $z = |\LZ(X)|$, and $\rev{z}=|\LZ(\rev{X})|$ has a representation $\CR(X)$ that uses 
 $\Oh(z \log^2 n)$ space and supports the following queries:
    \begin{enumerate}[label=\textrm{(\alph*)}]
        \item retrieve $n=|X|$, in $\Oh(1)$ time;\label{it:length}
        \item given $i\in [1\dd n]$, retrieve $X[i]$, in $\Oh(\log n)$ time;\label{it:access}
        \item given $i,j\in [1\dd n]$, compute $\LCE(X[i\dd n],X[j\dd n])$, in $\Oh(\log n)$ time;\label{it:lce}
        \item given $i,j\in [1\dd n]$, compute $\LCE(\rev{X}[i\dd n],\rev{X}[j\dd n])$, in $\Oh(\log n)$ time;\label{it:rlce}
        \item compute $\LZ(X)$, in $\Oh(z\log^4 n)$ time;\label{it:lz}
        \item compute $\LZ(\rev{X})$, in $\Oh(\rev{z}\log^4 n)$ time;\label{it:rlz}
        \item given $i,j\in [1\dd n]$ with $i\le j$, compute $\CR(X[i\dd j])$, in $\Oh(z \log^4 n)$ time.\label{it:extract}
    \end{enumerate}
    Moreover, $\CR(X)$ can be constructed:
    \begin{enumerate}[label=\textrm{(\alph*)},resume]
        \item in $\Oh(f \log^2 n)$ time given an $f$-phrase LZ-like representation of $X$ or $\rev{X}$;\label{it:fromLZ}
        \item in $\Oh(z\log^4 n)$ time given $\CR(X[1\dd i))$ and $\CR(X[i\dd n])$ for some $i\in [1\dd n]$.\label{it:concat}
    \end{enumerate}
\end{proposition}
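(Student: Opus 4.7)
The plan is to implement $\CR(X)$ as a balanced run-length straight-line program (RLSLP) of height $\Oh(\log n)$, with each nonterminal augmented by the length of the string it derives and by Karp--Rabin fingerprints of that string and of its reverse. This costs $\Oh(\log n)$ bits per nonterminal, so an $\Oh(z\log n)$-size grammar occupies $\Oh(z\log^2 n)$ bits overall. Such a balanced grammar can be produced from an LZ-like factorization via a recompression-style construction whose output size is provably bounded in terms of $|\LZ(\cdot)|$ rather than the number of input phrases. By \cref{fct:lz_properties}, the input phrase count $f$ is at least $z$, and the balancing step contracts everything down to $\Oh(z\log n)$ nonterminals.

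Given this backbone, the queries are standard grammar navigation. Length (a) sits at the root. Access (b) and the LCE query on $X$ (c) descend the parse tree one level at a time, using the stored lengths to pick the correct child; for LCE, the fingerprints let us skip entire sibling subtrees on which the two descents agree character-for-character, and the logarithmic height yields the $\Oh(\log n)$ bound. Reverse LCE (d) uses the reverse fingerprints maintained alongside, so a single grammar serves both directions. For (e)--(f), I would run the classical phrase-by-phrase LZ77 algorithm driven by LCE queries: at each new phrase boundary, locate the longest previous factor via $\Oh(\log n)$ LCE calls combined with $\Oh(\log^3 n)$ work in a dynamic Karp--Rabin dictionary, giving $\Oh(z\log^4 n)$ total for (e) and $\Oh(\rev{z}\log^4 n)$ for (f).

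Substring extraction (g) cuts the grammar along $\Oh(\log n)$ root-to-leaf paths to both endpoints and reassembles the middle pieces; the resulting string admits an LZ-like factorization with $\Oh(z\log n)$ phrases, and we rebuild a fresh $\CR$ of size $\Oh(z\log n)$ on top of it via (h). Construction (h) itself processes the $f$ input phrases in order: each previous-factor phrase spawns $\Oh(\log n)$ new nonterminals pointing into the portion of the grammar already produced, and a local rebalancing step after each insertion preserves the logarithmic height, totaling $\Oh(f\log^2 n)$ time. Concatenation (i) is analogous to (g): splice the two grammars at the boundary, extract the resulting LZ-like factorization of the concatenated string, and rebuild via (h), for $\Oh(z\log^4 n)$ time.

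The main obstacle I anticipate is calibrating the balancing procedure so that it simultaneously (i) charges its output size to $z = |\LZ(X)|$ rather than to the larger input phrase count $f$, and (ii) maintains the forward and reverse Karp--Rabin fingerprints at every nonterminal touched during rebalancing without exceeding the promised time bounds. The recompression analysis provides the $\Oh(z\log n)$ size bound, but it reassembles nonterminals in nontrivial ways, so the fingerprint bookkeeping must ride along carefully. A secondary subtlety is that concatenation and substring operations can move $z$ sharply up or down, so after each such modification one must recompute $\LZ$ of the new string (using (e)) and rebuild from scratch; otherwise the $\Oh(z\log^2 n)$ size invariant would degrade across a sequence of operations. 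Beyond these two points, the proof is a fairly routine application of established grammar-compression machinery.
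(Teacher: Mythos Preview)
Your approach diverges from the paper's, and the divergence exposes a real gap. The paper does not build a fingerprint-annotated balanced grammar at all. Instead, $\CR(X)$ consists of two black-box instances of the data structure of \cref{thm:I} (I, CPM~2017), one for $X$ and one for $\rev{X}$. That theorem takes any SLP of size $g$ and in $\Oh(g\log n)$ time returns an RLSLP of size $\Oh(g^*\log n)$ with \emph{deterministic} $\Oh(\log n)$-time access and LCE; the $\Oh(z\log^2 n)$ space bound then follows from $g^*=\Oh(z\log n)$. For~\ref{it:fromLZ} the paper converts the $f$-phrase LZ-like input to an $\Oh(f\log n)$-size SLP, reverses all right-hand sides to obtain an SLP for $\rev{X}$, and feeds both to \cref{thm:I}. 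For~\ref{it:lz} it builds the leftmost-occurrence index of Kempa and Kociumaka on top of the same RLSLP and binary-searches each phrase. Operations~\ref{it:extract} and~\ref{it:concat} first pass through \cref{fct:rlslp} to obtain an LZ-like representation and then reinvoke~\ref{it:fromLZ}.

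Your Karp--Rabin design makes every LCE answer Monte Carlo. This proposition underpins the greedy-encoding machinery used in the semi-streaming algorithm (\cref{th:rpst}), which the paper explicitly advertises as \emph{deterministic}; a randomized $\CR(X)$ would invalidate that claim even if it satisfies the proposition as literally stated. A second, independent issue is your treatment of~\ref{it:lz}: computing $\LZ(X)$ requires, at each phrase boundary, the \emph{leftmost} earlier occurrence of the longest matching prefix. LCE queries only compare two fixed positions; they do not locate occurrences, and a ``dynamic Karp--Rabin dictionary'' indexed by all text positions would occupy $\Theta(n)$ space rather than $\tOh(z)$. You need a genuine leftmost-occurrence oracle operating in compressed space, which is exactly the nontrivial black box the paper imports rather than reinvents.
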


\renewcommand{\S}{\mathcal{S}}
\newcommand{\Tr}{\mathcal{T}}

Before proving \cref{prp:RLSLP}, we need to introduce several more compression schemes.
These concepts are not used any of the subsequent sections.

A \emph{straight-line grammar} is a tuple $\G=(\S, \Sigma, \rhs, S)$,
where $\S$ is a finite set of \emph{symbols}, $\Sigma \sub \S$ is a set of \emph{terminal symbols},
$\rhs : (\S\sm \Sigma) \to \S^*$ is the \emph{production} (or \emph{right-hand side}) function, and $S\in\S$ 
is the start symbol. We further require existence of an order $\prec$ on $\S$ such that $B\prec A$ if $B$ occurs in $\rhs(A)$. 

The \emph{expansion} function $\exp:\S \to \Sigma^+$ is defined recursively:
\[\exp(A) = \begin{cases}
    A & \text{if }A\in \Sigma,\\
    \exp(B_1)\cdots \exp(B_k) & \text{if }\rhs(A)= B_1\cdots B_k,
\end{cases}
\]
We say that $\G$ is a \emph{grammar-compressed representation} of $\exp(S)$.

The \emph{parse tree} $\Tr(A)$ of a symbol $A\in \S$ is a rooted ordered tree with each node $\nu$ associated to a symbol $s(\nu)\in \S$. 
The root of $\Tr(A)$ is a node $\rho$ with $s(\rho)=A$.
If $A \in \Sigma$, then $\rho$ has no children. 
Otherwise, if $\rhs(A)=B_1\cdots B_k$, then $\rho$ has $k$ children,
and the subtree rooted at the $i$th child is (a copy of) $\Tr(B_i)$.
The parse tree of a grammar $\G$ is defined as the parse tree $\Tr(S)$ of the starting symbol $S$,
and the height of $\G$ is defined as the height of $\Tr(S)$.

Each node $\nu$ of $\Tr(A)$ is associated with an occurrence $\exp(\nu)$ of $\exp(s(\nu))$ in $\exp(A)$.
For the root $\rho$, we define $\exp(\rho)=\exp(A)[1\dd |\exp(A)|]$ to be the whole $\exp(A)$.
Moreover, if $\exp(\nu)=\exp(A)[\ell\dd r)$, $\rhs(s(\nu))=B_1\cdots B_k$,
and $\nu_1,\ldots,\nu_k$ are the children of $\nu$, then $\exp(\nu_i)=\exp(A)[r_{i-1}\dd r_{i})$,
where $r_i = \sum_{j=1}^{i} |\exp(B_j)|$ for $0\le i \le k$.
This way, the fragments $\exp(\nu_i)$ form a partition of $\exp(\nu)$,
and $\exp(\nu_i)$ is indeed an occurrence $\exp(s(\nu_i))$ in $\exp(A)$.

A \emph{straight-line program} (SLP) is a straight-line grammar $\G=(\S,\Sigma,\rhs,S)$
such that $|\rhs(A)|=2$ holds for each \emph{nonterminal} $A\in \S \sm \Sigma$. 
In a \emph{run-length straight-line program} (RLSLP), productions of the form $\rhs(A)=B^k$, with $B\in \S$ and $k\in \mathbb{Z}_{\ge 2}$, are also allowed.

An RLSLP $\G$ of size $g$ (with $g$ symbols) representing a string of length $n$
can be stored in $\Oh(g)$ space ($\Oh(g \log (n+g))$ bits) with each non-terminal $A\in \S\sm \Sigma$ 
storing $\rhs(A)$ (represented by $(B,k)$ if $\rhs(A)=B^k$) and $|\exp(A)|$. 
This representation allows for efficiently traversing the parse tree~$\Tr_\G$: given a node $\nu$ represented as a pair $(s(\nu),\exp(\nu))$, it is possible to retrieve in constant time an analogous representation of any child of $\nu$.

\begin{fact}\label{fct:rlslp}
Let $\G$ be an RLSLP of size $g$ representing a string $X$.
\begin{enumerate}[label=(\alph*)]
    \item An LZ-like factorization of $X$ can be constructed in $\Oh(g)$ time.\label{it:LZ}
    \item For every fragment $X[i\dd j]$, an RLSLP representing $X[i\dd j]$ can be constructed in $\Oh(g)$ time.\label{it:sub}
\end{enumerate}
\end{fact}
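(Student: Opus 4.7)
\textbf{For part (a)}, I would traverse the parse tree $\Tr_\G$ in left-to-right preorder, maintaining for each nonterminal $A$ a flag together with the starting position of its first occurrence in $X$. When reaching a node $\nu$ with symbol $A=s(\nu)$: if $A$ is already flagged, emit $\exp(\nu)$ as a single phrase pointing to the recorded previous position (this is a legal previous factor because the same string $\exp(A)$ occurred earlier in $X$); otherwise set the flag, record the current position, and recurse. A terminal first-visit is emitted as a single-character phrase. When first visiting a nonterminal with run-length production $\rhs(A)=B^k$, I would recurse into the first copy of $B$ and then emit the trailing $\exp(B)^{k-1}$ as one phrase; this string is a previous factor because $\exp(A)=\exp(B)^k$ is periodic with period $|\exp(B)|$, so the trailing block coincides with the block of equal length shifted left by $|\exp(B)|$. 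The ``recursive'' internal nodes of the resulting traversal are exactly the first visits to nonterminals, so there are at most $g-|\Sigma|$ of them, each spawning at most two recursive descents (after the trailing-phrase trick). Hence the traversal has $\Oh(g)$ leaves, each emitting one phrase, giving $\Oh(g)$ phrases in $\Oh(g)$ time using a direct-access flag table indexed by a serial number of each symbol.

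\textbf{For part (b)}, I would use the classical cut-down procedure. Starting from the root, descend while tracking the positions of $X[i]$ and $X[j]$ within the current node's expansion, until they split at some deepest node $\nu_0$ containing both; let $\mu_p, \mu_q$ be the children of $\nu_0$ in left-to-right order whose expansions contain $X[i]$ and $X[j]$, respectively (so $p<q$, since $\nu_0$ is deepest). For a binary production $\rhs(\nu_0)=B_1B_2$ there is no middle block; for a run-length production $\rhs(\nu_0)=B^k$, the middle $\exp(B)^{q-p-1}$ is captured by a single fresh nonterminal with production $\to B^{q-p-1}$, reusing the original symbol $B$. The suffix of $\exp(\mu_p)$ starting at $X[i]$ is extracted recursively down the left spine: at each node on the path, the chosen child is cut further, while the full right-siblings (for binary productions) or the full trailing copies (for run-length productions) are folded in via one new binary or run-length symbol per level. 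The prefix of $\exp(\mu_q)$ ending at $X[j]$ is extracted symmetrically on the right spine. Finally, $\Oh(1)$ fresh binary symbols concatenate the prefix, middle, and suffix blocks into the new start symbol. Since every root-to-leaf chain in $\Tr_\G$ has pairwise distinct symbols under $\prec$, the height of $\G$ is at most $g$; each level on either spine contributes $\Oh(1)$ fresh symbols, so the whole construction adds $\Oh(g)$ new symbols and runs in $\Oh(g)$ time.

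The main obstacle in both parts is the careful handling of run-length productions. In part (a), the critical point is that the trailing $\exp(B)^{k-1}$ must qualify as a previous factor of $X$, which follows from the period structure of $\exp(A)$ (and then propagates to $X$ because the leading block already lies in $X$ at a recorded earlier position). In part (b), run-length productions must never be unrolled: the entire middle run has to be encoded by a single fresh run-length production $\to B^{q-p-1}$, and analogous compressions must be used for full trailing copies encountered while descending the spines. Once these observations are in place, both statements reduce to routine bookkeeping on the parse tree that stays within the claimed $\Oh(g)$ bounds.
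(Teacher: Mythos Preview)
Your proposal is correct and follows essentially the same approach as the paper. For part~(a), the paper likewise does a pre-order traversal that skips already-visited symbols and, on a first visit to a run-length node $\rhs(A)=B^k$, recurses into the first child and emits the trailing $(k-1)$ copies as a single previous-factor phrase; for part~(b), the paper's recursive traversal (recurse into the leftmost and rightmost intersecting children, emit the middle run as one pair) is exactly your spine decomposition phrased top-down, and both bound the work by the height of $\G$, which is at most $g$.
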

\begin{proof}
    We traverse the parse tree $\Tr(S)$ in pre-order, skipping some subtrees depending on the application.

    \ref{it:LZ} For each symbol $A$, we maintain an already visited node $\nu$ with symbol $s(\nu)=A$, if any.
    When visiting a node $\nu$, we first check whether any node $\nu'$ with $s(\nu')=s(\nu)$ 
    has already been visited. If so, we retrieve the expansion $\exp(\nu')=X[i'\dd i'+\ell)$, and we output $(i',\ell)$ as a part of the LZ-like representation.
    Otherwise, we save $\nu$ as an already visited node with symbol $s(\nu)$ and proceed as follows:
    \begin{itemize}
        \item If $\nu$ is a leaf, then we output $s(\nu)\in \Sigma$ as a part of the LZ-like representation. 
        \item If $\nu$ has two children, we process them recursively.
        \item If $\nu$ has $k\ge 3$ children, we process the first child $\nu_1$ recursively,
        retrieve $\exp(\nu_1)= X[i\dd i+\ell)$, and output $(i,(k-1)\ell)$ as a part of the LZ-like representation.
    \end{itemize}
    The overall running time $\Oh(g)$ is amortized by the number of distinct symbols visited.
    
    \ref{it:sub} Here, an intermediate goal is to construct a sequence $(A_1,m_1),\ldots,(A_t,m_t)$ with $A_p\in \S$ and $m_p \in \mathbb{Z}_{\ge 1}$ such that $\exp(A_1)^{m_1}\cdots \exp(A_t)^{m_t}=X[i\dd j]$ and $t=\Oh(g)$.
    We only visit nodes $\nu$ such that $\exp(\nu)$ intersects $X[i\dd j]$.
    \begin{itemize}
        \item If $\exp(\nu)$ is contained in $X[i\dd j]$, we simply output $(s(\nu),1)$ and skip the subtree of $\nu$.
        \item Otherwise, we determine the children $\nu_\ell,\ldots,\nu_r$ of $\nu$ whose expansions intersect $X[i\dd j]$.
        Then, we recurse into $\nu_\ell$, output $(s(\nu_\ell),r-\ell-1)$ if $r > \ell+1$,
        and recurse into $\nu_r$ if $r > \ell$.
    \end{itemize}
    The number of visited nodes is proportional to the height of $\G$ and thus this traversal takes $\Oh(g)$ time.
    In the post-processing, for each pair $(A_p, m_p)$, we create a new symbol $B_p$ with $\rhs(B_p)=A_p^{m_p}$ (if $m_p \ge 2$) or set $B_p$ as an alias of $A_p$ (if $m_p=1$).
    Next, we set $C_1$ to be an alias of $B_1$ and, for $p\in [2\dd t]$,
    create a new symbol $C_p$ with $\rhs(C_p)=C_{p-1}B_{p-1}$.
    Finally, we return the extended RLSLP with $C_t$ as the new starting symbol.
\end{proof}

    \begin{theorem}[I~\cite{DBLP:conf/cpm/I17}]\label{thm:I}
        Let $\G$ be a size-$g$ SLP representing a string $X\in \Sigma^n$
        and let $g^*$ be the minimum size of an SLP representing $X$.
        Given $\G$, in $\Oh(g\log n)$ time, one can construct a size-$\Oh(g^* \log n)$ RLSLP
        representing $X$ and a size-$\Oh(g^* \log n)$ data structure that answers the following queries in $\Oh(\log n)$ time:
        \begin{description}
            \item[Access:] Given $i\in [1\dd n]$, retrieve $S[i]$;
            \item[LCE queries:] Given $i,j\in [1\dd n]$, compute $\LCE(S[i\dd n],S[j\dd n])$.
        \end{description}
    \end{theorem}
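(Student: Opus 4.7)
The plan is to implement $\CR(X)$ as a pair of augmented RLSLPs, one for $X$ and one for $\rev{X}$, each equipped with the $\Oh(\log n)$-time Access/LCE structure from \cref{thm:I}, together with an explicit copy of $n$. Items \ref{it:length}--\ref{it:rlce} then follow immediately: \ref{it:length} from the stored length, \ref{it:access} and \ref{it:lce} from the first structure, and \ref{it:rlce} from the second. For the size bound, I would invoke the classical LZ-to-SLP conversion (Rytter, Charikar et al.) to get $g^*(X) = \Oh(z \log(n/z))$. Reversing each SLP rule $A \to BC$ to $A \to C'B'$ (using renamed symbols whose expansions are reversed) produces an SLP for $\rev{X}$ of the same size, so $g^*(\rev{X}) = \Theta(g^*(X))$. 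Applying \cref{thm:I} on each side therefore yields an augmented RLSLP of size $\Oh(g^*(X) \log n) = \Oh(z \log^2 n)$, establishing the overall space bound.

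To build $\CR(X)$ from an $f$-phrase LZ-like representation (item \ref{it:fromLZ}), I would first invoke the Rytter/Charikar et al.\ conversion in $\Oh(f \log n)$ time to obtain an SLP of size $\Oh(f \log n)$, then feed it to \cref{thm:I} in $\Oh(f \log^2 n)$ time, and finally reverse the SLP (cheaply, as above) before a second call to get the $\rev{X}$ component. For substring extraction (item \ref{it:extract}), I would use \cref{fct:rlslp} to obtain an RLSLP for $X[i\dd j]$ of size $\Oh(g) = \Oh(z \log^2 n)$, convert it into an LZ-like representation of the same size (again via \cref{fct:rlslp}), and invoke item \ref{it:fromLZ}; the resulting $\CR(X[i\dd j])$ has size $\Oh(g^*(X[i\dd j]) \log n)$ by \cref{thm:I}, independent of the intermediate blow-up. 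For combination (item \ref{it:concat}), I would glue the two input RLSLPs under a fresh root production and then apply the same reduction through an LZ-like representation. Correctness is immediate, and output size bounds hold because \cref{thm:I}'s output depends only on $g^*$ of the represented string, not on the size of the SLP fed in.

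The main obstacle I anticipate lies in items \ref{it:lz} and \ref{it:rlz}: producing the actual LZ77 factorization of $X$ in $\Oh(z \log^4 n)$ time from a purely grammar-based representation. My plan is to process phrase boundaries left to right while maintaining, for every power-of-two length $2^j$, a Karp--Rabin-fingerprint dictionary of length-$2^j$ fragments ending in the portion of $X$ already covered. At the current boundary $p$, I would binary-search the length $\ell$ of the longest previous factor $X[p\dd p+\ell)$: the fingerprints of candidate prefixes are computed from the grammar, looked up in the dictionary, and verified via an LCE query (item \ref{it:lce}). Bounding the dictionary-maintenance cost so that each new phrase contributes only $\Oh(\log^3 n)$ amortized work is the delicate part of the argument, but once secured it meets the $\Oh(z \log^4 n)$ budget. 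Item \ref{it:rlz} is symmetric via the $\rev{X}$ component. A secondary subtlety, in \ref{it:concat}, is that the individual sizes of $\CR(X[1\dd i))$ and $\CR(X[i\dd n])$ need not separately be bounded by $\Oh(z \log^2 n)$; here I would appeal to the substring-SLP bound $g^*(X[1\dd i)) + g^*(X[i\dd n]) = \Oh(g^*(X)\log n)$ to keep the gluing and reconversion within the stated time budget.
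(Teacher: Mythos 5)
You have written a proof of \cref{prp:RLSLP}, not of \cref{thm:I}. \cref{thm:I} is an external result of I (CPM 2017), quoted verbatim and used as a black box; the paper contains no proof of it, so there is nothing here to compare against. Your proposal opens with ``implement $\CR(X)$ as a pair of augmented RLSLPs'' and references items \ref{it:length}--\ref{it:concat}, so it is manifestly an argument for \cref{prp:RLSLP} that \emph{invokes} \cref{thm:I}, not a derivation of \cref{thm:I} itself.

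Read as an attempt at \cref{prp:RLSLP}, your plan coincides with the paper's on most points but has one genuine gap. For items \ref{it:lz} and \ref{it:rlz} the paper relies on a grammar-based leftmost-occurrence index (Kempa 2019, Theorem 6.11), built on top of the \cref{thm:I} RLSLP, that resolves each LZ77 phrase by binary search in $\Oh(\log^4 n)$ time, deterministically, for a total of $\Oh(z\log^4 n)$. Your proposed Karp--Rabin dictionary of ``length-$2^j$ fragments ending in the portion of $X$ already covered'' does not close: the covered portion has $\Theta(|X|)$ fragments of each power-of-two length, so the dictionary cannot be populated within the $\tOh(z)$ space or $\Oh(z\log^4 n)$ time budgets without a sparsification scheme (synchronizing sets, recompression anchors, or the like) that you do not supply, and you give no argument that a sparsified dictionary would still certify the \emph{longest} previous factor rather than merely \emph{some} previous factor. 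Your own hedges (``the delicate part of the argument,'' ``once secured'') flag exactly this. The route would also inject randomization into a component the paper keeps deterministic. The rest of your plan --- $\CR(X)$ as two \cref{thm:I} instances for $X$ and $\rev{X}$, the $\Oh(z\log^2 n)$ size bound via LZ-to-SLP conversion and SLP reversal, and routing \ref{it:fromLZ}, \ref{it:extract}, \ref{it:concat} through LZ-like representations --- matches the paper's proof in substance, modulo citing Rytter/Charikar et al.\ where the paper cites Kempa (2019, Theorem 6.1), and your correctly identified observation about bounding $g^*$ of the two halves in item \ref{it:concat}.
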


    \begin{proof}[Proof of \cref{prp:RLSLP}]
        We define $\CR(X)$ so that it consists of two instances of the data structure of \cref{thm:I}, one for $X$ and one for $\rev{X}$, including the RLSLPs representing $X$ and $\rev{X}$.
        Thus, the size of $\CR(X)$ is $\Oh(g^* \log n)$, where $g^*$ is the minimum size of an SLP representing $X$ (it is the same for $\rev{X}$).

        To implement~\ref{it:fromLZ}, we first use~\cite[Theorem 6.1]{Kempa2019} and build a size-$\Oh(f\log n)$ SLP representing $X$ or $\rev{X}$.
        By reversing the right-hand sides of all the productions, we obtain an analogous SLP representing $\rev{X}$ or $X$, respectively.
        Finally, we pass these SLPs to the construction algorithm of \cref{thm:I}.
        Overall, the running time is $\Oh(f\log^2 n)$.

        In~\ref{it:concat}, given $\CR(X[1\dd i))$ and $\CR(X[i\dd n])$, we convert the underlying RLSLPs to LZ-like representations (\cref{fct:rlslp}), concatenate them into an $\Oh(g^*\log n)$-phrase LZ-like representation of $X$, and apply~\ref{it:fromLZ}. Overall, this takes $\Oh(g^*\log^3 n)=\Oh(z\log^4 n)$ time.

        As for~\ref{it:length}, we note that $|X|$ can be trivially retrieved from the RLSLP representing $X$.
        Queries~\ref{it:access},~\ref{it:lce}, and~\ref{it:rlce} are implemented directly using \cref{thm:I}.

        In~\ref{it:lz}, we construct an index of~\cite[Theorem 6.11]{Kempa2019} that, given a fragment $X[i\dd i+\ell)$, in $\Oh(\log^3 n)$ time locates the leftmost position $i'$ with $X[i'\dd i'+\ell)=X[i\dd i+\ell)$.
        This index can be constructed in $\Oh(g^*\log^3 n)$ time: it is already based on the RLSLP of \cref{thm:I}, and the extra construction time is $\Oh(\log^2 n)$ per RLSLP symbol; see~\cite[Lemma 6.9]{Kempa2019}.
        This way, each phrase $F_j$ of the LZ77 parsing of $X$ can be constructed, by binary search, with $\Oh(\log n)$ queries to the index, for a total of $\Oh(z \log^4 n)$ time (see also~\cite[Section 5]{NISHIMOTO2020116}).
        The algorithm for~\ref{it:rlz} is symmetric.

        As for~\ref{it:extract}, we construct an $\Oh(g^* \log n)$-phrase LZ-like representation of $X[i\dd j]$ using \cref{fct:rlslp}, and then apply~\ref{it:fromLZ}. Overall, this takes $\Oh(g^*\log^3 n)=\Oh(z\log^4 n)$ time.
    \end{proof}

\section{Greedy Alignments and Encodings}\label{sec:greedy}
\defgreedy*

\greedyoptimal*
\begin{proof}
  Let $\A = (x_t,y_t)_{t=1}^m$ be an optimal alignment maximizing the sum $\sum_{t=1}^m (x_t+y_t)$.
  For a proof by contradiction, suppose that $\A$ is not greedy.
  Then, there exists $i\in [1\dd m]$ such that $(x_i,y_i)\in \brkp(A)\cap ([1\dd |X|]\times [1\dd |Y|])$ and $X[x_i]=Y[y_i]$. Observe that $x_{i+1}=x_i$ or $y_{i+1}=y_i$ and, by symmetry, assume $x_{i+1}=x_i$ without loss of generality. Let us define $j> i$ so that $x_i = \cdots = x_{j}<x_{j+1}$
  and consider two cases depending on whether $y_j = y_{j+1}$:
  \begin{itemize}
    \item If $y_j = y_{j+1}$, we define an alignment $\A'$ obtained from $\A$ by deleting $(x_j,y_j)$.
    It is easy to see that $\A'$ is a valid alignment. Moreover, $\brkp(\A')\sub \brkp(\A)$ and $\brkp(\A)\setminus \brkp(\A') = \{(x_j,y_j)\}$, so $\cost(\A')<\cost(\A)$, contradicting the choice of $\A$ as an optimal alignment.
    \item If $y_j < y_{j+1}$, we define an alignment  $\A' = (x'_t,y'_t)_{t=1}^m$ obtained from $\A$
    by incrementing $x_t$ for $t\in (i\dd j]$.  Moreover, $\brkp(\A)\setminus \brkp(\A') \sub \{(x_t,y_t) : t\in [i\dd j)\}$ and $\brkp(\A')\setminus \brkp(\A) = \{(x'_t,y'_t) : t\in (i\dd j]\}$, so $\cost(\A')\le \cost(\A)$.
    Furthermore, $\sum_{t=1}^m (x'_t+y'_t) = j-i + \sum_{t=1}^m (x_t+y_t)$, contradicting the choice of $\A$
    as an optimal alignment maximizing $\sum_{t=1}^m (x_t+y_t)$.\qedhere
  \end{itemize}
\end{proof}

For strings $X,Y\in \Sigma^*$ and an integer $k\ge \ed(X,Y)$, we define a set \[\mtch_k(X,Y) = \bigcap_{\A \in \ga_k(X,Y)} \mtch_{X,Y}(\A)\] of \emph{common matches} of all alignments $\A\in \ga_k(X,Y)$;
note that it forms a non-crossing matching of~${X,Y}$.

\defnoncrossing*

Additionally, for a string $Z \in (\Sigma \cup \#)^*$, we define $\num(Z)$ as a string obtained from $Z$ by replacing the $i$th leftmost occurrence of $\#$ in it with a unique dummy symbol $\#_i \notin \Sigma$. 

\begin{fact}\label{fct:masking}
Let $X,Y\in \Sigma^*$. For each non-crossing matching $M$ of $X,Y$, we have $\ed(X,Y)\le \ed(\num(X^M),\num(Y^M))$.
\end{fact}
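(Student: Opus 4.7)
The plan is to turn any alignment of $\num(X^M)$ and $\num(Y^M)$ into an alignment of $X$ and $Y$ of no greater cost. The key observation that makes this direct is that masking preserves lengths: $|X|=|\num(X^M)|$ and $|Y|=|\num(Y^M)|$. Hence any alignment $\A^*$ of $\num(X^M),\num(Y^M)$, viewed purely as a sequence of index pairs $(x_t,y_t)_{t=1}^m$, is automatically an alignment $\A$ of $X,Y$. Taking $\A^*$ optimal, the desired bound $\ed(X,Y)\le \cost_{X,Y}(\A)\le \cost_{\num(X^M),\num(Y^M)}(\A^*)=\ed(\num(X^M),\num(Y^M))$ will follow once we show $\cost_{X,Y}(\A)\le \cost_{\num(X^M),\num(Y^M)}(\A^*)$.

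The cost comparison reduces to a straightforward step-by-step check. Deletion steps contribute $1$ to both costs. For a step that aligns $\num(X^M)[x_t]$ with $\num(Y^M)[y_t]$, the contribution to $\cost(\A)$ is at most $1$, whereas the contribution to $\cost(\A^*)$ is $0$ only if the two masked symbols coincide and $1$ otherwise. Thus the only place where real work is needed is verifying that a match in $\A^*$ forces a match in $\A$: if $\num(X^M)[x_t]=\num(Y^M)[y_t]$, then $X[x_t]=Y[y_t]$. I expect this verification to be the main (and only mild) obstacle.

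The verification splits into two cases. If the matched symbols belong to $\Sigma$, they are inherited unchanged from $X$ and $Y$, so equality is immediate. If they are dummies, they must be the same $\#_i$, which means $x_t$ is the position of the $i$th $\#$ in $X^M$ and $y_t$ is the position of the $i$th $\#$ in $Y^M$. Here the non-crossing property of $M$ is essential: its pairs can be listed as $(a_1,b_1),\ldots,(a_{|M|},b_{|M|})$ simultaneously sorted by the first and the second coordinate, so the $i$th $\#$ in $X^M$ sits at position $a_i$ and the $i$th $\#$ in $Y^M$ sits at position $b_i$. Hence $(x_t,y_t)=(a_i,b_i)\in M$, and the defining property of a non-crossing matching ($X[x]=Y[y]$ for all $(x,y)\in M$) yields $X[x_t]=Y[y_t]$, completing the argument.
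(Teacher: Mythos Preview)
Your proof is correct and follows the same approach as the paper: both argue that any alignment has no greater cost on $X,Y$ than on $\num(X^M),\num(Y^M)$, via the implication $\num(X^M)[x]=\num(Y^M)[y]\Rightarrow X[x]=Y[y]$. The paper's proof simply asserts this implication in one line, whereas you spell out the dummy case explicitly using the non-crossing property to identify the $i$th $\#$ in $X^M$ and $Y^M$ with the $i$th pair of $M$ in sorted order; this extra detail is correct and is precisely what justifies the paper's terse claim.
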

\begin{proof}
For every $(x,y)\in [1\dd |X|]\times [1\dd |Y|]$, if $\num(X^M)[x] = \num(Y^M)[y]$, then $X[x]=Y[y]$. Thus,
$\cost_{X,Y}(\A)\le \cost_{\num(X^M),\num(Y^M)}(\A)$ holds for all alignments $\A$ of $X,Y$.
\end{proof}

\lmmasking*
\begin{proof}
Denote $X' = \num(X^M)$ and $Y'=\num(Y^M)$, and let $\A\in \ga_k(X,Y)$. We shall first prove that $\mtch_{X,Y}(\A)\sub \mtch_{X',Y'}(\A)$.
Suppose that $(x,y)\in \mtch_{X,Y}(\A)$, i.e., $X[x]\simeq_{\A} Y[y]$.
Note that $\mtch_{X,Y}(\A)$ is a non-crossing matching of $X,Y$ and a superset of $M$,
so either $(x,y)\in M$, and then  $X'[x]=Y'[y]$ is a dummy symbol,
or $M$ contains no pair involving $x$ or $y$, and then $X'[x]=X[x]=Y[y]=Y'[y]$.
In both cases, we have  $X'[x]\simeq_{\A} Y'[y]$, i.e., $(x,y)\in \mtch_{X',Y'}(\A)$.
This completes the proof that $\mtch_{X,Y}(\A)\sub \mtch_{X',Y'}(\A)$,
from which we derive $\brkp_{X',Y'}(\A)\sub \brkp_{X,Y}(\A)$ and $\cost_{X',Y'}(\A)\le \cost_{X,Y}(\A)\le k$.

Next, we shall prove that $\A\in \ga_k(X',Y')$.
Suppose that $X'[x]=Y'[y]$ holds for some $(x,y)\in \A$.
Then, we have $X[x]=Y[y]$ because either $(x,y)\in M$
or $X[x]=X'[x]=Y'[y]=Y[y]$.
Since $\A$ is a greedy alignment of $X,Y$, this implies $(x,y)\in \mtch_{X,Y}(\A)$,
i.e., $X[x]\simeq_{\A} Y[y]$. 
Due to the assumption $X'[x]=Y'[y]$, we conclude that $X'[x]\simeq_{\A} Y'[y]$,
i.e., $(x,y)\in \mtch_{X',Y'}(\A)$.
This proves $\A\in \ga_k(X',Y')$.

Now, let $\A \in \ga_k(X',Y')$. We shall first prove that $\mtch_{X',Y'}(\A)\sub \mtch_{X,Y}(\A)$.
Suppose that $(x,y)\in \mtch_{X',Y'}(\A)$, i.e., $X'[x]\simeq_{\A} Y'[y]$.
In particular, $X'[x]=Y'[y]$, which implies $X[x]=Y[y]$ because either $(x,y)\in M$ or $X[x]=X'[x]=Y'[y]=Y[y]$.
Hence, $X[x]\simeq_{\A} Y[y]$, i.e., $(x,y)\in \mtch_{X,Y}(\A)$.
This completes the proof that $\mtch_{X',Y'}(\A)\sub \mtch_{X,Y}(\A)$,
from which we derive $\brkp_{X,Y}(\A)\sub \brkp_{X',Y'}(\A)$ and $\cost_{X,Y}(\A)\le \cost_{X',Y'}(\A)\le k$.

Next, we shall prove that $\A\in \ga_k(X,Y)$.
For a proof by contradiction, suppose that $X[x]=Y[y]$ holds for some $(x,y)\in \brkp_{X,Y}(\A)$; if there are several such breakpoints, let us choose the leftmost one.
Note that $X[1\dd x)\sim_{\A} Y[1\dd y)$ and $X[x\dd |X|] \sim_{\A} Y[y\dd |Y|]$.
Let us construct an alignment $\A'$ obtained from $\A$ by replacing the induced alignment $\A_{[x\dd |X|],[y\dd |Y|]}$
with an optimum greedy alignment of $X[x\dd |X|],Y[y\dd |Y|]$ (see \cref{obs:greedy_and_optimal}).
By the choice of $(x,y)$, the induced alignment $\A'_{[1\dd x),[1\dd y)}=\A_{[1\dd x),[1\dd y)}$ is a greedy alignment of $X[1\dd x),Y[1\dd y)$ and thus $\A'$ is a greedy alignment of $X,Y$.
Due to $\cost_{X,Y}(\A')\le \cost_{X,Y}(\A) \le k$, this implies $\A'\in \ga_k(X,Y)$.
Hence, $\mtch_{X,Y}(\A')$ is a non-crossing matching of $X,Y$ and a superset of $M$.
The construction of $\A'$ further guarantees $(x,y)\in \mtch_{X,Y}(\A')$,
so either $(x,y)\in M$, and then  $X'[x]=Y'[y]$ is a dummy symbol,
or $M$ contains no pair involving $x$ or~$y$, and then $X'[x]=X[x]=Y[y]=Y'[y]$.
In both cases, we have $X'[x]=Y'[y]$ and, since $\A$ is a greedy alignment of $X',Y'$,
we derive $X'[x]\simeq_{\A} Y'[y]$ and $(x,y)\in \mtch_{X,Y}(\A)$.
This contradicts the choice of $(x,y)$, completing the proof that $\A \in \ga_k(X,Y)$.

Overall, we conclude that $\ga_k(X,Y)=\ga_k(X',Y')$ and that every alignment $\A$ in this family satisfies both $\mtch_{X,Y}(\A)=\mtch_{X',Y'}(\A)$ and $\cost_{X,Y}(\A)=\cost_{X',Y'}(\A)$. Consequently, $\mtch_k(X,Y)=\mtch_k(X',Y')$ and, due to \cref{obs:greedy_and_optimal}, $\ed(X,Y)=\ed(X',Y')$.
\end{proof}

\newcommand{\RS}{\mathsf{RS}}
\begin{proposition}\label{prop:rank}
Consider a string $S\in (\Sigma\cup\{\#\})^n$ with $s$ dummy segments.
Given the sorted list of dummy segments in $S$, one can in $\Oh(s)$ time construct a data structure $\RS_\#(S)$ supporting the following queries in $\Oh(\log s)$ time: 
\begin{enumerate}
  \item $\mathbf\rank_\#(S,i)$: Given $i\in [1\dd n+1]$, return $|\{i' \in [1\dd i) : S[i'] = \#\}|$.
 \item $\mathbf\select_\#(S,j)$: Given $j\in [1\dd \rank_\#(S,n)]$, return the $j$th smallest element of $\{i \in [1\dd n] : S[i] = \#\}$.
\end{enumerate}
\end{proposition}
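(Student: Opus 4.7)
The plan is to store the $s$ dummy segments as an array $A[1\dd s]$, where each entry $A[j]$ records the segment's endpoints $(\ell_j, r_j)$ together with the prefix sum $c_j := \sum_{j'\le j}(r_{j'}-\ell_{j'}+1)$ of total dummy-symbol counts through segment $j$. Since the input list is already sorted by position, the array and the cumulative counts can be filled in one left-to-right pass, giving $\Oh(s)$ construction time and $\Oh(s)$ space.

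For a $\rank_\#(S,i)$ query I would binary-search for the largest index $j^*$ with $\ell_{j^*} < i$. If no such $j^*$ exists, the answer is $0$. Otherwise, all segments with index $<j^*$ contribute fully, accounting for $c_{j^*-1}$ dummy symbols (taking $c_0=0$). From segment $j^*$ itself we add $\min(i, r_{j^*}+1)-\ell_{j^*}$, which correctly handles both the case where $S[\ell_{j^*}\dd r_{j^*}]$ lies entirely in $[1\dd i)$ and the case where the query position $i$ falls inside segment $j^*$. The binary search is $\Oh(\log s)$ and everything else is $\Oh(1)$.

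For a $\select_\#(S,j)$ query I would binary-search on the sorted array $c_1<c_2<\cdots<c_s$ for the smallest index $j^*$ with $c_{j^*}\ge j$. Then the $j$th occurrence of $\#$ lies inside segment $j^*$, and its position is $r_{j^*} - (c_{j^*} - j) = \ell_{j^*} + (j - c_{j^*-1} - 1)$. Again this takes $\Oh(\log s)$ time. Both formulas are elementary, so the only thing to double-check is the treatment of boundary cases (empty segment list, $i=n+1$, $j=\rank_\#(S,n)$), for which the definitions of $c_0=0$ and the half-open conventions on $i$ suffice. There is no real obstacle here; the main subtlety is simply to make sure that a position $i$ falling inside a dummy segment is counted correctly by the partial term $\min(i,r_{j^*}+1)-\ell_{j^*}$.
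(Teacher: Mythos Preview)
Your proposal is correct and essentially identical to the paper's proof: the paper likewise stores, for each dummy segment, its endpoints together with the cumulative number of preceding $\#$'s, and answers both queries by a single binary search on this array followed by an $\Oh(1)$ arithmetic adjustment. The only differences are cosmetic (the paper uses half-open segment notation $[\ell\dd r)$ and records $\rank_\#(S,\ell)$ rather than your $c_j$, which is the same quantity shifted by one index).
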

\begin{proof}
For each dummy segment $S[\ell\dd r)$, the data structure stores a tuple $(\ell,r,\rank_\#(S,\ell))$.
These tuples are stored in a sorted array (note that the order is the same for all coordinates).
To compute $\rank_\#(S,i)$, we binary search for the rightmost segment $S[\ell^*\dd r^*)$ such that $\ell^*\le i$.
If there is no such segment, we return $\rank_\#(S,i)=0$.
Otherwise, $\rank_\#(S,i)=\rank_\#(S,\ell^*) + \min(i,r^*)-\ell^*$.

To compute $\select_\#(S,j)$, we binary search for the rightmost segment $S[\ell^*\dd r^*)$ such that
$\rank_\#(S,\ell^*) < j$. Then, $\select_\#(S,j) = \ell^* + j-1 - \rank_\#(S,\ell^*)$.
\end{proof}

\begin{definition}\label{def:masked_encoding}
For a non-crossing matching $M$ of strings $X,Y\in \Sigma^*$
we set (cf.~\cref{prp:RLSLP}):  \[\Enc^{M}(X,Y):=\left(\CR(X^MY^M), \RS_\#(X^M), \RS_\#(Y^M) \right).\]
\end{definition}

\begin{lemma}\label{lm:eLCE}
  Let $X' = \num(X^M)$ and $Y' = \num(Y^M)$ for a non-crossing matching $M$ of strings $X,Y\in \Sigma^{\le n}$.
  The encoding $\Enc^{M}(X,Y)$ allows answering $\LCE$ queries
  on the suffixes of $X',Y'$ and on the suffixes of $\rev{X'},\rev{Y'}$
  in $\Oh(\log n)$ time.
\end{lemma}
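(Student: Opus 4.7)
The plan is to reduce every $\LCE$ query on $X':=\num(X^M)$ and $Y':=\num(Y^M)$ (or on their reverses) to an $\LCE$ query on the unnumbered concatenation $X^MY^M$ (handled by $\CR(X^MY^M)$), and then to compensate for the enumeration of the dummy symbols using $\RS_\#(X^M)$ and $\RS_\#(Y^M)$. Both primitives cost $\Oh(\log n)$ per call (\cref{prp:RLSLP}\ref{it:lce},\ref{it:rlce} and \cref{prop:rank}), so it suffices to show that a constant number of calls is enough.

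The key structural observation I will establish first is that, along any matched region between suffixes of $X^M$ and $Y^M$, the ``enumeration offset'' of the dummy symbols stays constant. Concretely, if $X^M[i\dd i+\ell) = Y^M[j\dd j+\ell)$, then for every $t\in[0\dd \ell)$ we have $X^M[i+t]=\#$ iff $Y^M[j+t]=\#$, and
\[\rank_\#(X^M,i+t)-\rank_\#(Y^M,j+t) = \rank_\#(X^M,i)-\rank_\#(Y^M,j),\]
because the number of $\#$-symbols inside the matched prefix is the same on both sides. Consequently, either every $\#$ inside the matched region receives the same index in $X'$ and $Y'$ (when $\rank_\#(X^M,i)=\rank_\#(Y^M,j)$), or none of them do; in the latter case the first dummy position in the matched region is the first mismatch in $X'$ vs.\ $Y'$. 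The intra-string case (both suffixes taken from $X'$, or both from $Y'$) is the same observation with $Y^M$ replaced by $X^M$.

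Using this, $\LCE(X'[i\dd],Y'[j\dd])$ is computed as follows: (i) compute $\ell_1 := \LCE((X^MY^M)[i\dd], (X^MY^M)[|X^M|+j\dd])$ via \cref{prp:RLSLP}\ref{it:lce} and set $\ell_0 := \min(\ell_1,|X^M|-i+1,|Y^M|-j+1)$ to clip the cross-boundary part; (ii) query $\rank_\#(X^M,i)$ and $\rank_\#(Y^M,j)$ — if they are equal, return $\ell_0$; (iii) otherwise, locate via $\select$ the first $\#$-position $p\ge i$ in $X^M$ and return $\min(\ell_0,p-i)$. Intra-string queries $\LCE(X'[i\dd],X'[j\dd])$ and $\LCE(Y'[i\dd],Y'[j\dd])$ are handled identically with appropriate positions inside $X^MY^M$ and appropriate use of $\RS_\#(X^M)$ or $\RS_\#(Y^M)$.

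Reverse queries are handled symmetrically using the reverse $\LCE$ primitive of $\CR(X^MY^M)$ from \cref{prp:RLSLP}\ref{it:rlce}, which operates on $\rev{X^MY^M}=\rev{Y^M}\rev{X^M}$: the positions $i$ in $\rev{X^M}$ and $j$ in $\rev{Y^M}$ are mapped to $|Y^M|+i$ and $j$, respectively. The enumeration adjustment is the same because the ``constant rank-offset'' property above is invariant under reversal (write it in terms of the count of $\#$'s inside $X^M[|X^M|-i-t+1\dd|X^M|-i+1)$, which again equals the corresponding count in $Y^M$). The main obstacle to watch out for is precisely this last point: since $\num$ and reversal do not commute ($\rev{\num(X^M)}\ne\num(\rev{X^M})$), one must re-derive the rank identity in the reverse direction carefully and compensate by querying $\rank_\#$ at the \emph{original} (non-reversed) positions of $X^M$ and $Y^M$. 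Once this is done, the overall time per query is a constant number of $\Oh(\log n)$-time operations, as claimed.
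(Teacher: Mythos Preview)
Your proposal is correct and follows essentially the same approach as the paper: reduce an $\LCE$ query on $X',Y'$ to an $\LCE$ query on $X^M,Y^M$ via $\CR(X^MY^M)$, and then correct for the enumeration of dummy symbols using $\rank_\#/\select_\#$. The paper's formula takes the minimum with the distance to the next $\#$ in \emph{both} $X^M$ and $Y^M$, whereas you argue (correctly, via your ``constant rank-offset'' observation) that a single $\select$ suffices because the $\#$-positions are synchronized inside the matched region; you are also more explicit about cross-boundary clipping and about the non-commutativity of $\num$ with reversal, both of which the paper leaves implicit.
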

\begin{proof}
  Consider a query asking for $\LCE(X'[x\dd ],Y'[y\dd ])$. Observe that 
  $\LCE(X'[x\dd ],Y'[y\dd ])=\LCE(X^M[x\dd ],Y^M[y\dd ])$ if $\rank_\#(X^M,x)=\rank_\#(Y^M,y)$.
  Otherwise, $\LCE(X'[x\dd ],Y'[y\dd ])=\min(\LCE(X^M[x\dd ],Y^M[y\dd ]), \select_\#(X^M,\rank_\#(X^M,x)+1)-x, \select_\#(Y^M,\rank_\#(Y^M,y)+1)-y)$.
  In either case, the query can be answered in $\Oh(\log n)$ time using \cref{prop:rank,prp:RLSLP}\ref{it:lce}.
  The values $\LCE(\rev{X'}[x\dd ],\rev{Y'}[y\dd ])$ are answered in a similar way.
\end{proof}

\begin{corollary}\label{cor:ED}
Given an integer $k> 0$ and the encoding $\Enc^{M}(X,Y)$ for a non-crossing matching $M$ of strings $X,Y\in \Sigma^{\le n}$, one can in $\Oh(k^2\log n)$ time compute a integer $d\in [0\dd k+1]$ such that
\[ d = \begin{cases}
  \ed(X,Y) & \text{if }\ed(X,Y)\le k\text{ and }M\sub \mtch_k(X,Y),\\
  k+1 & \text{if }\ed(X,Y)>k.
\end{cases}\]
\end{corollary}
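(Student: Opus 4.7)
The plan is to apply the classical Landau--Vishkin $k$-threshold edit distance algorithm \cite{DBLP:journals/jal/LandauV89} to the masked strings $X':=\num(X^M)$ and $Y':=\num(Y^M)$, performing $\Oh(k^2)$ longest common extension queries, each answered in $\Oh(\log n)$ time via the encoding $\Enc^M(X,Y)$ using \cref{lm:eLCE}. Recall that Landau--Vishkin computes, for each diagonal $d\in [-k\dd k]$ and each edit budget $e\in [0\dd k]$, the furthest reachable point on diagonal $d$ with $e$ edits; each such update is accomplished by a single LCE query between suffixes of $X'$ and $Y'$ followed by $\Oh(1)$ work. Thus the total running time is $\Oh(k^2 \log n)$, matching the claimed bound.

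If Landau--Vishkin certifies $\ed(X',Y')\le k$, we output that value; otherwise we output $k+1$. The resulting integer $d$ trivially lies in $[0\dd k+1]$.

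For correctness, I would argue two cases. If $\ed(X,Y) > k$, then \cref{fct:masking} gives $\ed(X',Y')\ge \ed(X,Y) > k$, so Landau--Vishkin necessarily reports ``$>k$'' and we return $k+1$, as required. If instead $\ed(X,Y)\le k$ and $M\sub \mtch_k(X,Y)$, then \cref{lm:masking_does_not_change_ga} yields $\ed(X',Y')=\ed(X,Y)\le k$, so Landau--Vishkin returns the exact value $\ed(X,Y)$. In the remaining (unconstrained) case, any output in $[0\dd k+1]$ is acceptable.

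The only non-trivial step is realising Landau--Vishkin on top of the encoding; this is where \cref{lm:eLCE} does the heavy lifting, since each LCE query $\LCE(X'[x\dd], Y'[y\dd])$ that Landau--Vishkin issues is exactly the query type supported by $\Enc^M(X,Y)$ in $\Oh(\log n)$ time. I do not anticipate further obstacles: there are no subtleties about dummy symbols because $\num(\cdot)$ makes all dummy occurrences pairwise distinct, so $\LCE$ values on $X',Y'$ genuinely reflect edit-distance matches, as already exploited in \cref{lm:masking_does_not_change_ga}.
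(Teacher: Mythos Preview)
Your proposal is correct and essentially identical to the paper's own proof: both run the Landau--Vishkin $k$-threshold algorithm on $\num(X^M),\num(Y^M)$ using the $\Oh(\log n)$-time LCE oracle of \cref{lm:eLCE}, and justify the two required output cases via \cref{fct:masking} and \cref{lm:masking_does_not_change_ga} respectively. The only cosmetic discrepancy is that the paper cites \cite{LandauV86} rather than \cite{DBLP:journals/jal/LandauV89} for the LCE-based edit-distance routine.
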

\begin{proof}
We compute $d:=\min(k+1,\ed(\num(X^M),\num(Y^M)))$ using the Landau--Vishkin algorithm~\cite{LandauV86},
with $\Oh(k^2)$ $\LCE$ queries on suffixes of $\num(X^M),\num(Y^M)$ implemented using \cref{lm:eLCE}.
Due to \cref{fct:masking}, we have $d \ge \min(k+1,\ed(X,Y))$ and, in particular, $d=k+1$ if $\ed(X,Y)>k$. 
If $\ed(X,Y)\le k$ and $M\sub \mtch_k(X,Y)$, then \cref{lm:masking_does_not_change_ga}
yields $d = \min(k+1,\ed(X,Y))=\ed(X,Y)$.
\end{proof}

\subsection{Greedy Encoding and Its Size}
\begin{definition}\label{def:greedy_encoding}
	For strings $X,Y\in \Sigma^*$ and an integer $k$, we define the \emph{greedy encoding} 
	\[\gr_k(X,Y)=\begin{cases}
	  \Enc^{\mtch_k(X,Y)}(X,Y) & \text{if }k\ge \ed(X,Y),\\
	  \bot & \text{otherwise.}\end{cases}\]
\end{definition}

\lmgreedysize*
  \begin{proof}
  We show the claim of the lemma for $X^M$, the claim for $Y^M$ follows by symmetry. 
  
  As $\ed(X,Y) \le k$, by \cref{obs:greedy_and_optimal} $\ga_k(X,Y) \neq \emptyset$, and therefore there is an alignment $\A \in \ga_k(X,Y)$ of cost at most $k$ between $X$ and $Y$. We consider yet another graphical representation of an alignment. Namely, we represent $\A$ as a set of at most $k$ horizontal segments, where a horizontal segment $I = [i \dd j; \Delta]$ from $(i, \Delta)$ to $(j, \Delta)$ means that $X[i \dd j] \simeq_\A Y[i+\Delta \dd j + \Delta]$ (see Fig.~\ref{fig:alignments_horizontal}). This representation induces a partitioning of $X = f_1 \cdots f_z$, where $z = \Oh(k)$, and each factor $f_\ell$ is either a single character deleted under $\A$ or a fragment $X[i \dd j]$ corresponding to a horizontal segment $I = [i \dd j; \Delta]$. 
  
   \begin{figure}[ht!]
  \begin{center}
  \begin{tikzpicture}[scale=0.575]
  \begin{scope}
    \draw (1,0.7) node[above] {$X$:};
      \foreach \x/\c in {1/a,2/b,3/b,4/a,5/b,6/c,7/b,8/c,9/b,10/c}{
        \draw (\x+2,-0.7) node[above] {\tt \c};
        \draw (\x+2,-0.3) node[above] {\tiny{\color{gray} \x}};
      } 
      
      \draw (1,-0.6) node[above] {$Y$:};
      \foreach \x/\c in {1/a,2/b,3/a,4/b,5/a,6/b,7/c,8/b,9/c,10/b,11/c}{
        \draw (\x+2,0.7) node[above] {\tt \c};
        \draw (\x+2,1.1) node[above] {\tiny{\color{gray} \x}};
      }
      
      \draw[red,thick] (3,0.5)--(3,0.2);
      \draw[red,thick] (4,0.5)--(4,0.2);
      \draw[red,thick] (5,0.2)--(6,0.5);
      \draw[red,thick] (6,0.2)--(7,0.5);
      \draw[red,thick] (7,0.2)--(8,0.5);
      \draw[red,thick] (8,0.2)--(9,0.5);
      \draw[red,thick] (9,0.2)--(10,0.5);
      \draw[red,thick] (10,0.2)--(11,0.5);    
      \draw[red,thick] (11,0.2)--(12,0.5);
      \draw[red,thick] (12,0.2)--(13,0.5);
          
      \draw (7,-1.5) node {\small{Alignment $\A$.}}; 
   \end{scope}
  
  \begin{scope}[xshift = 14cm]
    \draw (1,0.7) node[above] {$X$:};
      \foreach \x/\c in {1/a,2/b,3/b,4/a,5/b,6/c,7/b,8/c,9/b,10/c}{
        \draw (\x+2,-0.7) node[above] {\tt \c};
        \draw (\x+2,-0.3) node[above] {\tiny{\color{gray} \x}};
      } 
      
      \draw (1,-0.6) node[above] {$Y$:};
      \foreach \x/\c in {1/a,2/b,3/a,4/b,5/a,6/b,7/c,8/b,9/c,10/b,11/c}{
        \draw (\x+2,0.7) node[above] {\tt \c};
        \draw (\x+2,1.1) node[above] {\tiny{\color{gray} \x}};
      }
      
      \draw[blue,thick] (3,0.5)--(3,0.2);
      \draw[blue,thick] (4,0.5)--(4,0.2);
      \draw[blue,thick] (6,0.2)--(5,0.5);
      \draw[blue,thick] (7,0.2)--(6,0.5);   
      \draw[blue,thick] (9,0.2)--(8,0.5);
      \draw[blue,thick] (10,0.2)--(9,0.5);     
      \draw[blue,thick] (11,0.2)--(10,0.5);    
      \draw[blue,thick] (12,0.2)--(11,0.5);        
          
      \draw (7,-1.5) node {\small{Alignment $\A''$.}}; 
   \end{scope}

  \begin{scope}[yshift=-5cm,xshift=2.5cm]
  \draw[thin,->] (0,0)--(24,0) node[below,pos=0.97] {$|X|$};
  \draw[thin,->] (0,-2.5)--(0,2.5) node[left,pos=0.9] {shift};
  \draw[thin] (-0.1,-1)--(0.1,-1) node[left] {$-1$}; 
  
  \foreach \x in {1,2,...,11} {
    \draw[thin] (2*\x,-0.05)--(2*\x,0.05) node[below] {\small{\color{gray} $\x$}};
  };
  
  \draw[thick,red] (2,0)--(4,0);
  \draw[thick,red] (8,-1)--(22,-1) node[below,pos=0.15] {\color{black}{\small $I = [4 \dd 11;-1]$}};
  
  \draw[thick,blue] (2,0.1)--(4,0.1);
  \draw[thick,blue] (6,1)--(8,1);
  \draw[thick,blue] (12,1)--(18,1);
  
  \end{scope}
  \end{tikzpicture}
  \end{center}
  
  \caption{Graphical representations of two greedy alignments $\A, \A'' \in \ga_5(X,Y)$, where $X = \mathtt{abababcbcbc}$ and $Y = \mathtt{abbabcbcbc}$. The alignment $\A$ (red) has cost $1$ (we delete $X[3]$), and the alignment $\A''$ cost $5$ (we delete $X[5]$, $X[10]$, $X[11]$, $Y[3]$, $Y[6]$). The alignments imply that $X[6 \dd 9] = Y[7\dd 10] = X[8 \dd 11]$.}
  \label{fig:alignments_horizontal}
  \end{figure}
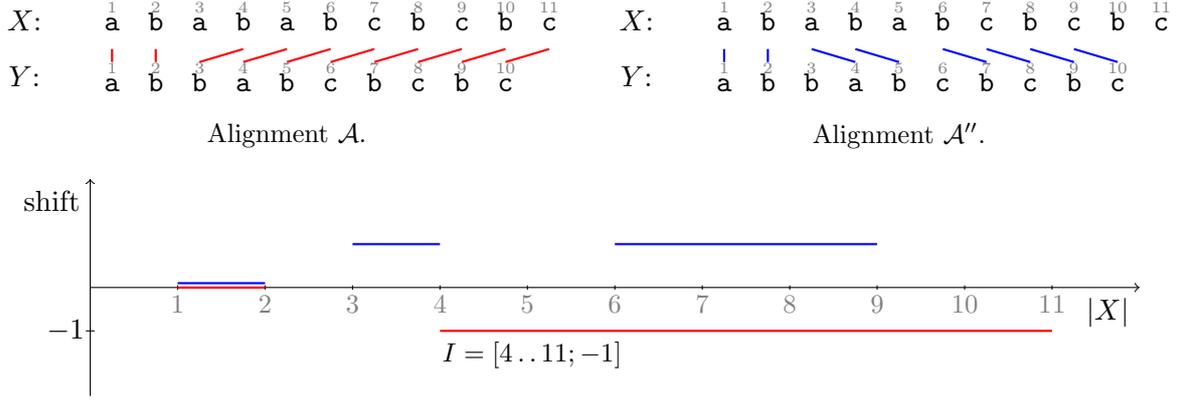
  
  If $|\ga_k(X, Y)| = 1$, then every factor $f_\ell$ with $|f_\ell| \ge 2$ is replaced with an equal length string of dummy symbols. The upper bound on $|\LZ(X^M)|$ follows by sub-additivity of the Lempel--Ziv encoding, and the upper bound on the number of dummy segments is trivial. 
  
  Suppose now that $|\ga_k(X, Y)| > 1$. We  show that in this case the Lempel--Ziv encoding of any factor $f_\ell$ has size $\Oh(k)$. Fix $\ell$. If $f_\ell$ is a single character, the claim obviously holds. Suppose now that the factor $f_\ell$ of the partitioning defined above corresponds to a horizontal segment $I = [i \dd j; \Delta]$. Let us show that there is $i \le q \le j$ such that $X^M[i \dd q] = X[i \dd q]$ and $X^M[q+1\dd j] = \#\#\ldots\#$. This follows immediately from the following observation. Let $\A' \neq \A$ be another alignment $\in \ga_k(X, Y)$. The alignment $\A'$ is greedy, and therefore, if for some $i \le p \le j$ the alignment $\A'$ matches $X[p]$ and $Y[p+\Delta]$, then it also matches $X[p+1]$ and $Y[p+1+\Delta]$, \ldots, $X[j]$ and $Y[j+\Delta]$. We obtain an upper bound on the number of dummy segments as an immediate corollary, but an upper bound on $\LZ(X^M)$ requires more work.

  Consider an alignment $\A'' \in \ga_k(X, Y)$ realising $q$. We have $X^M[q+1 \dd j] = \# \# \dd \#$, and therefore $|\LZ(X^M[q+1 \dd j])| = \Oh(1)$. Consider now $X^M[i \dd q] = X[i \dd q]$. Consider the set of the horizontal segments corresponding to $\A''$ that contain positions in $[i \dd q]$. We claim that either all segments in the set have height less than $\Delta$, or all segments have height larger than $\Delta$. Suppose that there are two consecutive horizontal segments $[i_1 \dd j_1;\Delta_1]$ and $[i_2 \dd j_2;\Delta_2]$ such that one of them is above $I$ and the other is below. $\A''$ must delete the characters $X[j_1+1], X[j_1+2], \ldots, X[i_2-1]$ and $Y[j_1+\Delta_1], Y[j_1+\Delta_1+1], \ldots, Y[i_2+\Delta_2-1]$ in some order. As it deletes the characters one by one, at some moment it arrives to a pair $X[x], Y[x+\Delta]$, where $i < x < q$. However, since $\A''$ is greedy, from this pair it must follow the segment $I$, a contradiction. 
  
  We now use this property to show that $|\LZ(X[i \dd q])| = \Oh(k)$. 
  First, consider the case when the segments corresponding to $\A''$ are below $I$. Let $I' = [i' \dd j'; \Delta']$ be one such segment, where $i \le i' \le j' \le q$, and $-k \le \Delta' < \Delta$. Let $i'' = \max\{i+2k, i'\}$. If $i'' \le j'$, we have $X[i'' \dd j'] = Y[i''+\Delta' \dd j'+\Delta']$, and the alignment $\A$ implies that $Y[i''+\Delta' \dd j'+\Delta'] = X[i''+(\Delta'-\Delta) \dd j'+(\Delta'-\Delta)]$. Consequently, $X[i'' \dd j'] = X[i''+(\Delta'-\Delta) \dd j'+(\Delta'-\Delta)]$ is a previous factor as $\Delta'-\Delta < 0$. It follows that $\A''$ defines a partitioning of $X[i\dd q] = t_1 \cdots t_z$, where $z = \Oh(k)$ and each factor $t_i$ is either a single character, or a previous factor. By Fact~\ref{fct:lz_properties}, we obtain that $|\LZ(X[i \dd q])| = \Oh(k)$. 
  
  The proof for the complementary case is similar. There are at most $k$ horizontal segments corresponding to $\A'$ that contain positions in $[i\dd q]$, denote them by $[i'_\ell \dd j'_\ell; \Delta'_\ell]$. For all $\ell$, we have $k \ge \Delta'_\ell > \Delta$. Let $j''_\ell = \min\{j'_\ell, q-2k\}$. If $i'_\ell \le j''_\ell$, we have $X[i'_\ell\dd j''_\ell] = Y[i'_\ell+\Delta'_\ell \dd j''_\ell+\Delta'_\ell]$ (from the alignment $\A''$), and $Y[i''_\ell+\Delta'_\ell \dd j''_\ell+\Delta'_\ell] = X[i'_\ell+(\Delta'_\ell-\Delta) \dd  j''_\ell+(\Delta'_\ell-\Delta)]$. Consequently, $X[i'_\ell+(\Delta'_\ell-\Delta) \dd j''_\ell+(\Delta'_\ell-\Delta)] = X[i'_\ell \dd j''_\ell]$ is a previous factor as $\Delta'_\ell-\Delta > 0$. As the cost of $\A''$ is bounded from above by $k$, we have $\bigl|[i \dd q] - \cup_\ell [i'_\ell \dd j''_\ell] \bigr | \le k$ and $\sum_\ell |\Delta'_\ell-\Delta'_{\ell+1}| \le k$. Therefore, $\bigl|[i \dd q] - \cup_\ell [i'_\ell+(\Delta'_\ell-\Delta) \dd j''_\ell+(\Delta'_\ell-\Delta)) \bigr | = \Oh(k)$, and we can conclude as above. 
  
  The claim follows.
  \end{proof}

  \begin{corollary}\label{cor:greedy_size}
    For every $X,Y\in \Sigma^{\le n}$ and $k\in \Zp$,
    the encoding $\gr_k(X,Y)$ takes $\Oh(k^2 \log^2 n)$ space.
  \end{corollary}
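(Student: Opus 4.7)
The plan is to unfold the definition of $\gr_k(X,Y)$ and bound each component using the tools already established. If $k < \ed(X,Y)$, then $\gr_k(X,Y) = \bot$ and the bound is trivial, so I would assume $k \ge \ed(X,Y)$ and set $M := \mtch_k(X,Y)$, so that $\gr_k(X,Y) = \Enc^M(X,Y) = (\CR(X^M Y^M), \RS_\#(X^M), \RS_\#(Y^M))$ by \cref{def:greedy_encoding,def:masked_encoding}.

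The space budget then splits into three pieces, each of which is controlled by the size bounds in \cref{lm:greedy_size}. First, by \cref{fct:lz_properties}, $|\LZ(X^M Y^M)| \le |\LZ(X^M)| + |\LZ(Y^M)| = \Oh(k^2)$; hence \cref{prp:RLSLP} gives $\CR(X^M Y^M)$ of size $\Oh(k^2 \log^2 n)$ (observing $|X^M Y^M| \le 2n$, so the $\log |X^M Y^M|$ factor is absorbed into $\log n$). Second, \cref{lm:greedy_size} guarantees that each of $X^M$ and $Y^M$ contains $\Oh(k)$ dummy segments, so by \cref{prop:rank} each rank/select structure $\RS_\#(X^M)$ and $\RS_\#(Y^M)$ takes $\Oh(k)$ space. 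Summing these contributions yields the claimed $\Oh(k^2 \log^2 n)$ space bound.

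There is no serious obstacle here; the entire proof is an accounting exercise. The only minor point to watch is verifying that $X^M Y^M$ has length at most $2n$ so that the $\Oh(z \log^2 |X^M Y^M|)$ guarantee of \cref{prp:RLSLP} indeed matches the $\Oh(k^2 \log^2 n)$ target, which is immediate from $|X|,|Y|\le n$.
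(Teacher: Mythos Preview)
Your proposal is correct and follows exactly the same approach as the paper. The paper's own proof is a one-liner (``Follows immediately from \cref{prp:RLSLP,prop:rank,lm:greedy_size}''), and you have simply unpacked that line: splitting into the $\bot$ case, using \cref{fct:lz_properties} plus \cref{lm:greedy_size} to bound $|\LZ(X^MY^M)|$, invoking \cref{prp:RLSLP} for the $\CR$ component, and \cref{prop:rank} for the two $\RS_\#$ components.
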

  \begin{proof}
    Follows immediately from \cref{prp:RLSLP,prop:rank,lm:greedy_size}.
  \end{proof}

\subsection{Construction of greedy encodings}\label{sec:greedyalg}
In this section, we give a deterministic approach to construction of $\gr_k(X,Y)$. We start with a combinatorial lemma that is a key to this approach.

\lmgreedyalg*
\begin{proof}[Proof\hspace{.1cm}\protect\footnotemark\hspace{-.1cm}]\footnotetext{The authors would like to thank Panagiotis Charalampopoulos for an early write-up of the proof of this lemma, which was originally meant to be included into~\cite{DBLP:journals/corr/abs-2004-08350}.}
We prove $\brkp_k(X,Y)\le 136k^5$ by induction on $|X|$.
In the base case of $|X|\le 42k^4$, we have $|\brkp_k(X,Y)|\le (2k+1)|X|\le 136k^5$ since each $(x,y) \in \brkp_k(X,Y)$ satisfies $|x-y|\le k$.
Thus, we assume that $|X|>42k^4$, $\ga_k(X,Y)\ne \emptyset$ (otherwise, $\brkp_k(X,Y)=\emptyset$),
and the claim holds for shorter strings.

\emph{Case 1}. If $\mtch_k(X,Y)$ contains two adjacent pairs $(x-1,y-1),(x,y)$, then
we construct strings $X^\ast$ and $Y^\ast$ by deleting $X[x]$ and $Y[y]$, respectively,
aiming to prove that $|\brkp_k(X,Y)|\le |\brkp_k(X^\ast,Y^\ast)|\le 136k^5$.
For this, consider an alignment $\A \in \ga_k(X,Y)$. Due to $(x-1,y-1),(x,y)\in \mtch(\A)$,
the pairs $(x-1,y-1)$, $(x,y)$, and $(x+1,y+1)$ are three consecutive elements of $\A$. Let $X^\ast$ be the string obtained by removing $X[x]$ from $X$, and $Y^\ast$ be the string obtained by removing $Y[y]$ from $Y$. 
We define an alignment $\A^\ast$ of $X^\ast,Y^\ast$ by removing $(x,y)$ and shifting all pairs to the right of $(x,y)$
by one position to the left. 
Each breakpoint $(x',y')\in\brkp(\A)$ is preserved (along with $X[x']\ne Y[y']$), if it is located to the left of $(x,y)$, or shifted by 1 position to the left (along with $X[x']\ne Y[y']$), if it is located to the right of $(x,y)$.
Consequently, $\A^\ast \in \ga_k(X^\ast,Y^\ast)$ and, since the shifts are independent of $\A$,
we derive $|\brkp_k(X^\ast,Y^\ast)|\ge |\brkp_k(X,Y)|$.

\emph{Case 2:} If $\mtch_k(X,Y)$ does not contain any two adjacent pairs,
we first prove the following claim:

\begin{claim}\label{claim:2align}
There are alignments $\A,\A'\in \ga_k(X,Y)$ such that 
$X[a \dd a+9k^2) \simeq_{\A} Y[i \dd i+9k^2) \simeq_{\A'} X[b \dd b+9k^2)$ holds for some positions $a \ne b$.
\end{claim}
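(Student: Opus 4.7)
The plan is to exploit the Case~2 sparsity of $\mtch_k(X,Y)$ together with the existence of a long matched run in any alignment from $\ga_k(X,Y)$. First, I invoke \cref{obs:greedy_and_optimal} to pick any $\A_0 \in \ga_k(X,Y)$: it has at most $k$ breakpoints (since $|\brkp(\A_0)|=1+\cost(\A_0)\le k+1$), hence at most $k+1$ matched runs whose total $X$-length is at least $|X|-k > 42k^4 - k$. Pigeonhole yields a matched run of $X$-length $L \ge 20k^3$; write it as $X[p \dd p+L) \simeq_{\A_0} Y[q \dd q+L)$ on diagonal $\delta := q-p$, with $|\delta| \le \width(\A_0) \le \cost(\A_0) \le k$ by \cref{fact:edk2}. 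The goal becomes exhibiting an alignment $\A' \in \ga_k(X,Y)$ with a matched run on some diagonal $\delta' \ne \delta$ whose $Y$-range overlaps $[q, q+L)$ by at least $9k^2$, from which $\A := \A_0$, $\A'$, $a := i - \delta$, $b := i - \delta'$, and $i$ anywhere in the overlap can be read off.

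The Case~2 hypothesis provides the room to find such $\A'$. Since $\mtch_k(X,Y)$ has no two adjacent pairs, at most $\lceil L/2\rceil$ of the matches $(p+j, q+j)$ with $j \in [0, L)$ lie in $\mtch_k(X,Y)$, so at least $\lfloor L/2\rfloor \ge 10k^3$ do not. For each such $j$, fix a witness $\A_j \in \ga_k(X,Y)$ with $(p+j, q+j) \notin \mtch(\A_j)$. At the character $Y[q+j]$, the alignment $\A_j$ either (i) matches $Y[q+j]$ with some $X[p+j - \Delta_j]$ for $\Delta_j \in [-2k, 2k] \setminus\{0\}$ (the bound follows from both $|\delta|$ and $|\delta + \Delta_j|$ being at most $k$), forcing $X[p+j] = X[p+j - \Delta_j]$ via $X[p+j - \Delta_j] = Y[q+j] = X[p+j]$, or (ii) leaves $Y[q+j]$ as a non-match step. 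Pigeonholing the outcome over the $\Oh(k)$ possible values of $\Delta_j$ (including a ``non-match'' label), I extract a common outcome $\Delta^*$ shared by $\Omega(k^2)$ values of $j$.

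In the shift case $\Delta^* \ne 0$, each corresponding witness $\A_j$ has a matched run on diagonal $\delta + \Delta^*$ passing through $Y[q+j]$, with $Y[y] = X[y - \delta - \Delta^*]$ along the run. Provided some such matched run has $Y$-length $\ge 9k^2$ inside $[q, q+L)$, I pick $i$ so that $[i, i + 9k^2)$ lies in the overlap and set $\A := \A_0$, $\A' := \A_j$, $a := i - \delta$, $b := i - \delta - \Delta^*$; since $\Delta^* \ne 0$ we have $a \ne b$, and by construction $X[a \dd a+9k^2) = Y[i \dd i+9k^2) = X[b \dd b+9k^2)$ with the two matchings witnessed by $\A, \A'$ respectively.

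The main technical obstacle is guaranteeing this long witness run. If every shift-type witness has its matched run on diagonal $\delta + \Delta^*$ truncated to $Y$-length $< 9k^2$ inside $[q, q+L)$, I would aggregate the partial equalities $X[p+j] = X[p+j - \Delta^*]$ from the many witnesses, stitching them together via a Fine--Wilf-style argument to produce $9k^2 + |\Delta^*|$ consecutive positions sharing the equality; this yields a genuine period $|\Delta^*| \le 2k$ on a long sub-interval of $X[p \dd p+L)$, from which $\A, \A'$ are supplied directly by two alignments that differ only inside a short periodic region. I also need to rule out the degenerate subcase where too many witnesses fall into (ii), by charging each deletion/substitution against the $\le k$ non-match budget of a single alignment and invoking another round of pigeonhole across the (few) alignments that can then be required to serve as witnesses. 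Weaving these two cases into a clean argument is the technical heart of the proof.
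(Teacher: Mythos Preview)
Your high-level plan---locate a long matched run of some $\A_0\in\ga_k(X,Y)$ and invoke the Case~2 hypothesis to produce a second alignment on a different diagonal---is the same as the paper's. The gap is in how you extract that second alignment. After pigeonholing on the shift $\Delta_j$, you need some witness $\A_j$ whose matched run on diagonal $\delta+\Delta^*$ has $Y$-length at least $9k^2$, and you correctly flag this as the crux. Neither of your fallbacks works. In the shift case, the equalities $X[p+j]=X[p+j-\Delta^*]$ hold only at the scattered values of $j$ in your pigeonhole class (and along each witness's short run), so no Fine--Wilf argument yields $9k^2$ \emph{consecutive} such equalities; and even if you manufactured a period, you would still have to exhibit a \emph{greedy} alignment of cost at most $k$ on the shifted diagonal, which does not follow (shifting $\A_0$ inside a periodic window adds $2|\Delta^*|$ to the cost and need not be greedy at the boundaries). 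In the non-match case, your ``charge against the $\le k$ budget'' cannot work because the witnesses $\A_j$ are a priori distinct alignments with independent budgets; pigeonholing on alignments fails since $|\ga_k(X,Y)|$ can be $2^{\Theta(k)}$.

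The paper sidesteps all of this with one well-chosen witness. After fixing the long run $X[s\dd t]\simeq_{\A}Y[s'\dd t']$, it applies Case~2 at the \emph{last} matched pair: if $(t-1,t'-1)\in\mtch_k(X,Y)$ then greediness (together with $X[t]=Y[t']$) would force $(t,t')\in\mtch_k(X,Y)$, giving two adjacent pairs; hence $(t-1,t'-1)\notin\mtch_k(X,Y)$, yielding a single $\A'$ with $X[t-1]\not\simeq_{\A'}Y[t'-1]$. A second pigeonhole---now applied to $\A'$ over $k+1$ length-$9k^2$ blocks of $Y$ inside the range matched by $\A$---produces $Y[i\dd i+9k^2)\simeq_{\A'}X[b\dd b+9k^2)$, while $\A$ already matches the same block to $X[a\dd a+9k^2)$. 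If $a=b$, greediness of $\A'$ propagates the match along this diagonal all the way to $(t-1,t'-1)$, contradicting the choice of $\A'$. The idea you are missing is precisely this: pick the single witness at the \emph{end} of the run so that forward greediness makes ``same diagonal'' immediately impossible, rather than trying to aggregate many partial witnesses.
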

\begin{proof} 
Consider $k+1$ disjoint fragments of $X$, each of length $21k^3 \le |X| / (k+1)$. 
Each $\A\in \ga_k(X,Y)$ aligns at least one fragment without mismatches: $X[s \dd t] \simeq_\A Y[s' \dd t']$ for some positions $s,t,s',t'$.
Due to $X[t]=Y[t']$, if $(t-1,t'-1)\in \mtch_k(X,Y)$, then $(t,t')\in \mtch_k(X,Y)$ holds by the greedy nature of the considered alignments.
Consequently, there is $\A' \in \ga_k(X,Y)$ such that $X[t-1] \not\simeq_{\A'} Y[t'-1]$.

Next, consider $k+1$ disjoint parts of $Y[s+k \dd t-k)$, each of length $9k^2\le (21k^3-1-2k)/(k+1)$.
By \cref{fact:edk2}, $\A$ aligns the entire $Y[s+k\dd t-k)$ without mismatches to a fragment of $X[s\dd t)$.
Moreover, $\A'$ aligns at least one part without mismatches. Denote this part $Y[i\dd i+9k^2)$,
and fix positions $a,b$ so that $X[a \dd a+9k^2) \simeq_{\A} Y[i \dd i+9k^2) \simeq_{\A'} X[b \dd b+9k^2)$.
Note that $a\in [s\dd t)$ and thus $X[a\dd t)\simeq_{\A} Y[i\dd t')$.
Now, if $a=b$, then we would have $X[a]\simeq_{\A'} Y[i]$ and the greedy nature of $\A'$ would guarantee $X[a\dd t)\simeq_{\A'}Y[i\dd t')$, contradicting $X[t-1] \not\simeq_{\A'} Y[t'-1]$.
Consequently, $a\ne b$.
\end{proof}

By \cref{fact:edk2}, we have $p := \per(Y[i\dd i+9k^2))\leq |a-b|\le |i-a|+|i-b|\leq 2k$. 
Moreover, due to $7k^2\ge 2p$,  the fragment $X[i+k^2 \dd i+8k^2)$ has the same period (up to cyclic rotation).

\begin{claim}\label{claim:nobreaks}
For every $\A'' \in \ga_k (X,Y)$, we have  $X[i+3k^2 \dd i+8k^2) \simeq_{\A''} Y[v \dd v+5k^2)$ for some $v \in [i+2k^2\dd i+4k^2]$.
\end{claim}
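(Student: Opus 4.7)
My plan is to pick any $\A''\in \ga_k(X,Y)$, find a matched anchor run of $\A''$ within the $p$-periodic region, and then extend it in both directions using greediness and periodicity. Since $\cost(\A'')\le k$ and the sub-alignment on $X$-coordinates in $[i+3k^2\dd i+8k^2)$ has at most $k$ breakpoints, pigeonhole yields a matched run $X[x_0\dd x_0+L)\simeq_{\A''} Y[y_0\dd y_0+L)$ with $L>\lceil 5k^2/(k+1)\rceil > 4k \ge 2p$. The width bound gives $|y_0-x_0|\le k$, and both fragments, being length-$L\ge 2p$ substrings of $Q^\infty$, must share the same phase by Fine--Wilf.

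For the rightward extension, I induct on $j$: assume $(x_0+j,y_0+j)\in \mtch(\A'')$ with $x_0+j+1\le i+8k^2-1$. The subsequent element of $\A''$ is $(x_0+j+1,y_0+j+1)$; if it were a breakpoint, greediness would require $X[x_0+j+1]\ne Y[y_0+j+1]$. But combining the $p$-periodicity of $X$ in $[i+k^2\dd i+8k^2)$ and of $Y$ in $[i\dd i+9k^2)$ with the prior match at $(x_0+j+1-p,y_0+j+1-p)$ gives $X[x_0+j+1]=X[x_0+j+1-p]=Y[y_0+j+1-p]=Y[y_0+j+1]$, a contradiction. Hence $\A''$ matches $X[x_0\dd i+8k^2)$ to the corresponding $Y$-fragment without any breakpoints.

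The leftward extension is the main obstacle, as the predecessor of $(x,y)\in \mtch(\A'')$ in $\A''$ can be $(x-1,y-1)$, $(x-1,y)$, or $(x,y-1)$. The first case yields a match by greediness together with $X[x-1]=Y[y-1]$, which follows from the same periodicity argument as above. For the deletion cases, any cascade of $X$-deletions at a fixed $Y$-coordinate has length at most $p-1$ (otherwise one of the breakpoint positions would satisfy $X[\cdot]=Y[\cdot]$ by periodicity, contradicting greediness), and symmetrically for $Y$-deletions. Moreover, by Fine--Wilf any matched run of length $\ge p$ in the periodic region must share the phase of the rightward-extended match, so two such long runs separated by breakpoints require at least $p$ deletion-type breakpoints between them. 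Combining $p\le 2k$ with $\cost(\A'')\le k$, an accounting of cascades and phase shifts shows that the leftward walk from $(x_0,y_0)$ down to $X$-coordinate $i+3k^2$ incurs no breakpoints.

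Finally, setting $v=y_0-x_0+i+3k^2$ yields $X[i+3k^2\dd i+8k^2)\simeq_{\A''} Y[v\dd v+5k^2)$, and the width bound $|y_0-x_0|\le k$ gives $v\in [i+3k^2-k,i+3k^2+k]\subseteq [i+2k^2\dd i+4k^2]$ for $k\ge 1$. The hardest step is the leftward extension: the asymmetry of greedy alignments prevents a direct mirror of the rightward argument, so completing it relies on the periodicity-based cost accounting outlined above.
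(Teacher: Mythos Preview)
Your rightward extension is fine, but the leftward extension has a genuine gap that your ``accounting of cascades and phase shifts'' does not close. The asymmetry you correctly identify is fatal to this approach: greediness propagates matches \emph{forward} (from a match at $(x,y)$ the next element is $(x+1,y+1)$, forced to be a match when $X[x+1]=Y[y+1]$), but there is no analogous backward propagation. Concretely, let $x_1$ be the leftmost $x\ge i+k^2$ with a diagonal-$d_0$ match in $\A''$. One can indeed show that for $x\in[i+k^2,x_1)$ the alignment never touches any diagonal $\equiv d_0\pmod p$ (hitting such a diagonal would force a match and then lock onto it forever by greediness, contradicting arrival at $(x_1,y_1)$). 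But on the remaining ``wrong'' diagonals, matched runs need only have length $<p$ (primitivity gives no more), and with at most $k$ breakpoints you get at most $k+1$ runs. This yields only
\[
x_1 - (i+k^2) \;\le\; (k+1)(p-1) + k \;\le\; 2k^2 + 2k - 1,
\]
i.e.\ $x_1 \le i+3k^2 + 2k - 1$, not $x_1\le i+3k^2$. So your leftward walk can fall $\Theta(k)$ positions short of the target, and the claim does not follow. (There is also a minor arithmetic slip: $\lceil 5k^2/(k+1)\rceil > 4k$ fails for small $k$, though $L\ge p$ is all you actually need for the phase argument.)

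The paper avoids the whole issue by placing the anchor \emph{to the left} of the target interval. It partitions $X[i+k^2\dd i+3k^2+p)$ into $k+1$ disjoint length-$p$ blocks; one of them is matched by $\A''$ without breakpoints, giving $X[j\dd j+p)\simeq_{\A''} Y[u\dd u+p)$ with $j\in[i+k^2,\,i+3k^2]$. Primitivity of the period synchronises the phases, so $X[j\dd i+8k^2)=Y[u\dd u+i+8k^2-j)$, and greediness then forces $X[j\dd i+8k^2)\simeq_{\A''} Y[u\dd u+i+8k^2-j)$. Since $j\le i+3k^2$, this already covers $X[i+3k^2\dd i+8k^2)$ using only the rightward direction---no backward extension is needed at all.
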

\begin{proof}
Consider $k+1$ disjoint parts of $X[i+k^2\dd i+3k^2 + p)$, each of length $p$ (note that $(k+1)p \le 2k^2+p$).
Each $\A''\in \ga_k(X,Y)$ matches at least one part without mismatches:
$X[j\dd j+p)\simeq_{\A''} Y[u\dd u+p)$ for some $j\in [i+k^2\dd i+3k^2]$ and $u\in [j-k\dd j+k]$.
Note that $X[j\dd i+8k^2)$ and $Y[u\dd i+9k^2)$ have, up to cyclic rotation,
the same string period of length~$p$.
This string period is primitive, so ${X[j\dd j+p)}=Y[u\dd u+p)$ implies 
that the periods synchronize, i.e., $X[j\dd i+8k^2)=Y[u\dd u+(i+8k^2-j)]$.
By the greedy nature of $\A''$, we must have $X[j\dd i+8k^2)\simeq_{\A''} Y[u\dd u+(i+8k^2-j)]$,
and therefore $X[i+3k^2\dd i+8k^2) \simeq_{\A''} Y[v\dd v+5k^2]$
for $v = u + i+3k^2-j \in [i+3k^2-k\dd i+3k^2+k]$.
\end{proof}

Consider fragments $Q:=X[i+3k^2 \dd i+8k^2)$ and 
$F:=Y[i+4k^2\dd i+7k^2)$, where $F$ is contained in all fragments $Y[v \dd v+5k^2)$ with $v\in [i+2k^2\dd i+4k^2]$.
We observe that $|Q|\ge |F| = 3k^2 > p$ and construct $X^\ast$ and $Y^\ast$ by removing $X[i+8k^2-p\dd i+8k^2)$
and $Y[i+7k^2-p\dd i+7k^2)$. 
For every $\A''\in \ga_k(X,Y)$, we derive $\A^\ast$ by removing the pairs $(x,y)$
with $y\in [i+7k^2-p\dd i+7k^2)$, and shifting all the subsequent pairs by $p$ positions to the left.
We conclude from \cref{claim:nobreaks} that the breakpoints $(x',y')\in \brkp(\A)$ are preserved (along with $X[x']\ne Y[y']$), if located to the left of $(i+3k^2,i+4k^2)$,
or shifted by $p$ positions to the left (along with $X[x']\ne Y[y']$), if located to the right of $(i+8k^2,i+7k^2)$. 
Consequently, $\A^\ast\in \ga_k(X^\ast,Y^\ast)$ and, since the shifts are independent of $\A''$,
we derive $|\brkp_k(X,Y)|\le |\brkp_k(X^\ast,Y^\ast)|\le 136k^5$ from the inductive assumption.
\end{proof}

Algorithm~\ref{alg:greedy_memoization} is the key procedure for computing greedy encodings. 

\newcommand{\extractmin}{\textsf{extract-min}}
\newcommand{\credit}{\textsf{credit}}
\begin{algorithm}
\newcommand\mycommfont[1]{\footnotesize\ttfamily\textcolor{blue}{#1}}
\SetCommentSty{mycommfont}
\SetAlgoNoLine
    
    \SetKwBlock{Begin}{}{end}
    \SetKwFunction{greedyalg}{Greedy}
	  \SetKwFunction{greedyprocedure}{GreedyMatch}
	  \SetKwFunction{push}{Push}

    $\greedyalg{k}$
    \Begin{
        $Q = \emptyset$\;
		\greedyprocedure{$Q$, $1$, $1$, $k$}\;\label{line:initialisation}	
    	\While{$Q$ is not empty}{
    		$(x,y), \credit = Q.\extractmin()$\;
           \If{$x \le |X|$}{
			     \greedyprocedure{$Q$, $x+1$, $y$, $\credit-1$}\tcp*{delete $X[x]$}\label{line:deletex}
		  } 
		  \If{$y \le |Y|$}{
			    \greedyprocedure{$Q$, $x$, $y+1$, $\credit-1$}\tcp*{delete $Y[y]$}\label{line:deletey}
		  }
		  \If{$x \le |X|$ \KwSty{and} \KwSty{and} $y \le |Y|$}{
			\greedyprocedure{$Q$, $x+1$, $y+1$, $\credit-1$}\tcp*{substitute $X[x]$ for $Y[y]$}\label{line:substitute}
		}
		}

    \BlankLine
    
    \greedyprocedure{$Q$, $x$, $y$, $\credit$}:
    \Begin{
    	\If{$\ed(X[x\dd], Y[y\dd]) \le \credit$}{
        	$\ell = \LCE(X[x\dd], Y[y\dd])$\tcp*{match $X[x]$ and $Y[y]$,\ldots, $X[x+\ell-1]$ and $Y[y+\ell-1]$} \label{line:match}
			$x = x + \ell$; $y = y + \ell$\;
			\push{$Q$, $(x, y)$, $\credit$}\;
			\lIf{$|Q| = 1$}{output $[x-\ell, x-1]$}
		}
	}
	
	\BlankLine
    
	\push{$Q$, $(x,y)$, $\credit$} 
	\Begin{
	  \lIf{$(x,y) \notin Q$}{$Q[(x,y)] = \credit$} 
	  \lElse(\tcp*[f]{points are ordered lexicographically.}){$Q[(x,y)] = \max\{Q[(x,y)], \credit\}$}
	}
   
  \caption{The algorithm receives as an input strings $X,Y $, and integer $k$. Lexicographic order on points is defined in the following way: $(x_1, y_1) < (x_2, y_2)$ if either $x_1 < x_2$ or $x_1 = x_2$ and $y_1 < y_2$.}\label{alg:greedy_memoization}}
\end{algorithm}

\begin{lemma}\label{lm:greedy_memoization}
Given two strings $X, Y\in \Sigma^\ast$ and integer $k\ge \ed(X,Y)$, \cref{alg:greedy_memoization} computes the dummy segments in $X^{\mtch_k(X,Y)}$.
Moreover, it can be implemented in $\Oh(|\brkp_k(X,Y)| (t + \log k))$ time and $\tOh(k)$ space,
provided with an oracle that, given $x\in [1\dd |X|+1]$ and $y\in [1\dd |Y|+1]$, in $\Oh(t)$ time
 computes $\LCE(X[x\dd],Y[y\dd])$ and $\min(k+1,\ed(X[x\dd], Y[y\dd]))$.
\end{lemma}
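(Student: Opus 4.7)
The proof splits into correctness of the output and the resource bounds.

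For correctness, I would prove by induction the following invariant: every point $(x,y)$ pushed to $Q$ with credit $c$ is the endpoint of a prefix of some $\A\in \ga_k(X,Y)$ that has already incurred cost $k-c$; moreover, every $\A\in \ga_k(X,Y)$ whose explored prefix ends at a lex-position at least as large as the minimum element of $Q$ is represented by some point currently in $Q$. The initial LCE jump and the preceding pruning check $\ed(X[x\dd],Y[y\dd])\le c$ encode, respectively, the greedy property (no breakpoint may lie at a matching position) and extendability within budget. Processing $Q$ in lexicographic order ensures that branching at each extracted point exhausts all greedy alignments with a breakpoint exactly at that location before any lex-greater point is examined.

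The central claim is that whenever the procedure outputs $[x-\ell,x-1]$, which happens precisely when $|Q|=1$ right after a Push, the search has funneled through the single state $(x-\ell,y-\ell)$: every alignment in $\ga_k(X,Y)$ whose remaining breakpoints are lex-$\ge (x-\ell,y-\ell)$ must pass through this state, and greediness then forces each such alignment to match the entire LCE. Therefore these positions of $X$ lie in $\mtch_k(X,Y)$. Conversely, any position $x^\ast\in \mtch_k(X,Y)$ is matched by every greedy alignment, so every path in the search graph traverses the diagonal point containing $x^\ast$ at a moment when $Q$ is a singleton, hence $x^\ast$ is output. Taking maximal runs of output positions yields exactly the dummy segments of $X^{\mtch_k(X,Y)}$.

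For complexity, each call to the matching procedure performs one edit-distance oracle query, one LCE query (both in $\Oh(t)$), and one priority-queue operation in $\Oh(\log|Q|)$. A charging argument bounds the total number of Push operations by $\Oh(|\brkp_k(X,Y)|)$: each Push originates from an extraction followed by one of three successor moves, and this correspondence injects into the breakpoints of alignments in $\ga_k(X,Y)$. Multiplying yields the stated time bound.

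The main obstacle will be the $\tOh(k)$ space bound, which requires $|Q|=\tOh(k)$ throughout the execution. I would combine the width constraint $|y-x|\le k$ from \cref{fact:edk2} (every point in $Q$ lies on one of $\Oh(k)$ diagonals) with a monotonicity argument: due to lex-order extraction and the $\max$-merging in Push, only $\Oh(1)$ points per diagonal can coexist in $Q$ at any moment. Since each queue entry takes $\Oh(\log n)$ bits, total space is $\tOh(k)$.
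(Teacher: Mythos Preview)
Your outline matches the paper's argument in structure: the paper also shows that the points ever entering $Q$ are exactly $\brkp_k(X,Y)$, that at any moment $Q$ holds precisely the set of ``next breakpoints'' across all $\A\in\ga_k(X,Y)$, and that dummy segments correspond to the moments when $|Q|=1$. The time bound follows because each extraction triggers at most three \textsf{GreedyMatch} calls.

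There is one concrete gap, in the space bound. You write that ``due to lex-order extraction and the $\max$-merging in \textsf{Push}, only $\Oh(1)$ points per diagonal can coexist in $Q$.'' But the $\max$-merging in \textsf{Push} only coalesces pushes of the \emph{identical} pair $(x,y)$; it does nothing by itself to prevent two \emph{distinct} points on the same diagonal from sitting in $Q$ simultaneously. The actual mechanism is a property of the LCE jump inside \textsf{GreedyMatch}. Suppose $(x',y')$ is already in $Q$ on diagonal $d$; it was produced by a \textsf{GreedyMatch} call starting at some $(x'',y'')$ on diagonal $d$, so $X[x''\dd x')=Y[y''\dd y')$ and $X[x']\neq Y[y']$. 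Because extraction is in lex order, any later \textsf{GreedyMatch} call that enters diagonal $d$ does so at a point lex-between $(x'',y'')$ and $(x',y')$ on that diagonal, and its LCE therefore terminates at the \emph{same} mismatch $(x',y')$; only then does the $\max$-merge kick in. Without this LCE-convergence observation the $\Oh(k)$ bound on $|Q|$ does not follow, so you should make it explicit.
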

\begin{proof}
To show correctness of Algorithm~\ref{alg:greedy_memoization}, consider an (imaginary set) of alignments $\G$ generated by the algorithm. First, associate with each point in $[1\dd |X|] \times [1\dd |Y|]$ an empty set of alignments. Let $(x,y)$ be the latest point extracted from $Q$ and $\G[(x,y)]$ the set of alignments associated with it. In line~\ref{line:deletex} (if the algorithm executes it), we create a new set of alignments containing all alignments in $\G[(x,y)]$ extended by $(x+1,y)$ (deletion of $X[x]$), $(x+1,y), (x+2,y+1), \ldots (x+\ell,y+\ell-1)$, where $\ell = \LCE(X[x+1\dd], Y[y\dd])$ (matching of $X[x+1\dd x+\ell]$ and $Y[y \dd y+\ell-1]$). We union the resulting set of alignments with $\G[x+\ell,y+\ell-1]$ and push $(x+\ell,y+\ell-1)$ to $Q$. In lines~\ref{line:deletey} and~\ref{line:substitute} we process $\G[(x,y)]$ analogously. Finally, we define $\G = G[|X|,|Y|]$.  We claim that $\G = \ga_k(X,Y)$. First note that all alignments in $\G$ are greedy by construction: an alignment can only delete $X[x]$, $Y[y]$, or substitute $X[x]$ for $Y[y]$ if $X[x] \ne Y[y]$. In other words, for every breakpoint $(x,y)$ we have $X[x] \ne Y[y]$. Additionally, the cost of every alignment in $\G$ is at most $k$ as at each call of $\greedyprocedure$ $\credit$ decreases by one. Hence, $\G = \ga_k(X,Y)$.

We define successors and predecessors in the lexicographic order in a standard way. For an alignment $\A \in \G$ and a point $(x,y) \in \brkp(\A)$, define $\credit_\A (x,y)$ as the difference between $k$ and the number of edits that $\A$ makes to reach $(x,y)$.
Algorithm~\ref{alg:greedy_memoization} satisfies the following properties:
\begin{enumerate}
\item \label{property1} The set $\mathcal{P}$ of the points that are ever added to $Q$ is equal to $\brkp(\G)$, and when we process a point $(x,y)$, the credit associated with it equals $\max_{\A \in \G : (x,y) \in \brkp(\A)} \credit_\A (x,y)$;
\item \label{property2} When we process a point $(x,y)$, $Q = \cup_{\A \in \G} \{(x',y')  \text{ is the successor of } (x,y) \text{ in } \brkp(\A)\}$.
\item \label{property3} If $Q$ contains points $p_1 = (x_1,y_1)$ and $p_2 = (x_2, y_2)$ such that $x_1-y_1 = x_2-y_2$, then $p_1 = p_2$.
\end{enumerate}

To show the properties, we exploit the fact that the points are processed in the lexicographical order: When we process a point $q \in Q$, it is the lexicographically smallest point in $Q$, and all the points that we add to $Q$ while processing $q$ are larger than it. 

\underline{Property~\ref{property1}.} We first prove that $\mathcal{P} \supseteq \brkp(\G)$ and that when we process a point $(x,y)$, the credit associated with it is at least $\max_{\A \in \G : (x,y) \in \brkp(\G)} \credit_\A (x,y)$. 
Let $(x,y) \in \brkp(\G)$ be the lexicographically smallest point which either does not belong to $\mathcal{P}$ or such that the credit associated with it is smaller than $\max_{\A \in \G : (x,y) \in \brkp(\A)} \credit_\A (x,y)$. Consider an alignment $\A \in \G$ such that $(x,y) \in \brkp(\A)$ and let $(x',y')$ be the predecessor of $(x,y)$ in $\brkp(\A)$. When we process $(x',y')$, which happens before we process $(x,y)$ as we process the points in the lexicographic order, the credit associated with it is at least $\credit_\A (x',y') \ge \ed(X[x'\dd], Y[y'\dd])$, and therefore $(x,y)$ is added to $Q$ with credit at least $\credit_\A (x',y')-1=\credit_\A(x,y)$. As it holds for any $\A \in \G$, we obtain a contradiction. 

We now show the opposite direction. Consider a point $(x,y) \in Q$ associated with the edit distance credit $c$. Suppose that $c$ is achieved when we process a point $(x',y')$. By the induction assumption, there is a greedy alignment $\A \in \G$ such that the credit $c+1$ associated with $(x',y')$ equals $\credit_\A(x',y')$. Consider a greedy alignment $\A'$ that behaves as $\A$ until $(x',y')$ ($k-\credit_\A(x',y')$ edits), then proceeds to $(x',y')$ (one edit), and ends as an arbitrary optimal greedy  alignment between $X[x'\dd]$ and $Y[y'\dd]$. The cost of the latter is $\ed(X[x'\dd],Y[y'\dd]) \le c$. Therefore, the cost of $\A'$ is at most $k$ and it is in $\G$. It follows that $(x',y')\in \brkp(\A')$ and that $c = \credit_{\A'}(x',y')$, finishing the proof.

\underline{Property~\ref{property2}.} Let $(x,y)$ be the lexicographically smallest point in $\brkp(\G)$ such that when we process it $Q \ni (u,v) \notin \cup_{\A \in \G} \{(x',y')  \text{ is the successor of } (x,y) \text{ in } \brkp(\A)\}$. Suppose that $(u,v)$ was added to $Q$ when we processed a point $(u',v')$. Similarly to above, we can construct an alignment $\A \in \G$ such that $(u',v'), (u,v) \in \brkp(\A)$. As we processed $(u',v')$ before $(x,y)$, we must have that $(u',v')$ is lexicographically smaller than $(x,y)$, while $(u,v)$ is lexicographically larger than $(x,y)$, and therefore $(u,v)$ is the successor of $(x,y)$ in $\brkp(\A)$, a contradiction. On the other hand, if $(u,v)  \text{ is the successor of } (x,y) \text{ in } \brkp(\A)$ for some $\A \in \G$, then it must be in $Q$ when we process $(x,y)$: if $(u',v')$ is the predecessor of $(u,v)$ in $\brkp(\A)$, then $(u',v')$ is at most $(x,y)$, and when we process it, we add $(u,v)$ to $Q$.
 
\underline{Property~\ref{property3}.} We show the property by induction. It is true at initialisation. Suppose that the property is violated when we process a point $(x,y) \in Q$. Recall that we extract $(x,y)$ and call \greedyprocedure for a subset of points $\{(x+1,y+1), (x+1,y), (x,y+1)\}$. The point $(x+1,y+1)$ is in the same diagonal as $(x,y)$ that we extract from $Q$, and therefore the call for it does not violate the property. Consider now the point $(x+1,y)$ (resp., $(x,y+1)$) and assume that the diagonal containing it also contains a point $(x',y') \in Q$. If we added $(x',y')$ to $Q$ when we called \greedyprocedure for a point $(x'',y'')$, we have that $(x'',y'')$ is in the same diagonal, $X[x'' \dd x'-1] = Y[y''\dd y'-1]$, and $X[x'] \neq Y[y']$. As we process the points in the lexicographic order, at least one of the points $(x''-1,y''-1)$, $(x''-1,y'')$, $(x'',y''-1)$ is strictly smaller than $(x,y)$. By considering each of the cases, from the definition we obtain that $(x'',y'')$ is at most $(x+1,y)$ (resp., $(x,y+1)$). As all three points $(x'',y'')$, $(x+1,y)$ (resp., $(x,y+1)$), $(x',y')$ are on the same diagonal, it follows that \greedyprocedure for $(x+1,y)$ outputs $(x',y')$.

\underline{Correctness and complexity.} $[x,x']$ is a dummy segment in $X^{\mtch_k(X,Y)}$ iff there exists $-k \le \delta \le k$ such that the following three conditions are satisfied, where $y = x+\delta$ and $y' = x'+\delta$:
\begin{enumerate}
\item $\brkp(\G)$ does not contain any points with the $x$-coordinate in $[x \dd x']$ or the $y$-coordinate in~$[y \dd y']$;
\item $\brkp(\G)$ contains $q = (x'+1,y'+1)$;
\item The predecessor $p$ of $q$ in $\brkp(\G)$ is one of the points $(x-1,y-1), (x-1,y), (x,y-1)$. 
\end{enumerate}

Therefore, if $[x,x']$ is a maximal dummy segment, then by Property~\ref{property2}, after we have processed $p$, $Q$ contains a single point $q$ that is obtained by calling \greedyprocedure for the point $(x,y)$. Hence, the implementation creates the dummy segment $[x,x']$. On the other hand, if the implementation creates a segment $[x,x']$, then $\brkp(\G)$ satisfies all three conditions and hence $[x,x']$ is a maximal dummy segment.

Algorithm~\ref{alg:greedy_memoization} implements the set $Q$ as a binary search tree. 
By~\cref{lm:greedyalg} and Property~\ref{property1}, we process $|\brkp(\G)| = |\brkp_k(X,Y)|$ points, spending $\Oh(\log k + t)$ time per point, which gives the desired time complexity. To show the space complexity, it suffices to show that at any moment $|Q| = \Oh(k)$. By Fact~\ref{fact:edk2}, for any $(x,y) \in Q$ we have $y = x+\delta$ for some $-k \le \delta \le k$, in other words, the point belongs to one of $2k+1$ diagonals of the grid.  By Property~\ref{property3}, each of the diagonals contains at most one point from $Q$, which concludes the proof.
\end{proof}

We proceed with an auxiliary claim that allows constructing an edit distance oracle. 

\begin{claim}\label{claim:LMS}
Given two strings $U,V$, and a data structure of size $s$ that can answer the $\LCE$ queries on the suffixes of $\rev{U}, \rev{V}$ in $t$ time. We can build in $\Oh(k^2 t)$ time a data structure that occupies $\Oh(k^2)$ space and can retrieve $\min(k+1,\ed(U[u\dd], V[v\dd]))$ in $\Oh(\log n)$ time.
\end{claim}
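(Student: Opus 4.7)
The plan is to use the Landau--Vishkin algorithm on the reverses of $U$ and $V$, building the full DP table once at preprocessing and then answering each query by a binary search.

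First, I would observe that the edit distance between suffixes of $U$ and $V$ equals the edit distance between prefixes of the reverses:
\[
\ed(U[u\dd],V[v\dd]) \;=\; \ed\bigl(\rev{U}[1\dd |U|-u+1],\,\rev{V}[1\dd |V|-v+1]\bigr),
\]
since edit distance is invariant under reversal. Thus it suffices to build a data structure that, given $p,q$, returns $\min(k+1,\ed(\rev{U}[1\dd p],\rev{V}[1\dd q]))$. The given oracle provides $\LCE$ queries on suffixes of $\rev{U}$ and $\rev{V}$, which are precisely the longest-common-prefix queries needed by Landau--Vishkin applied to $A:=\rev{U}$ and $B:=\rev{V}$.

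At preprocessing, I would run the classical Landau--Vishkin computation on the pair $(A,B)$, building the table $L[d,e]$ defined, for $d\in[-k\dd k]$ and $e\in[0\dd k]$, as the largest $r$ such that there exists a cost-$\le e$ alignment from $(0,0)$ to $(r,r+d)$ in the edit-distance grid of $A$ and $B$. Equivalently, $L[d,e]\ge r$ if and only if $\ed(A[1\dd r],B[1\dd r+d])\le e$. The table can be filled in the standard way by $\Oh(k^2)$ $\LCE$ queries on the suffixes of $A,B$ (one ``furthest reaching'' extension per $(d,e)$ cell), taking $\Oh(k^2\cdot t)$ time overall; the table itself occupies $\Oh(k^2)$ space. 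Note also that $L[d,e]$ is non-decreasing in $e$ for every fixed~$d$.

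For a query $(u,v)$, I would set $p=|U|-u+1$, $q=|V|-v+1$, and $d=q-p$. If $|d|>k$, then the lengths differ by more than $k$, so $\ed(U[u\dd],V[v\dd])>k$ and I return $k+1$. Otherwise, by the definition of $L$, we have $\ed(\rev{U}[1\dd p],\rev{V}[1\dd q])\le e$ if and only if $L[d,e]\ge p$. Using monotonicity in $e$, I binary search for the smallest $e\in [0\dd k]$ with $L[d,e]\ge p$, returning this $e$ if it exists and $k+1$ otherwise. The binary search performs $\Oh(\log k)=\Oh(\log n)$ table lookups.

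The only subtle point is the correctness of Landau--Vishkin in this ``variable endpoint'' setting, but it is essentially immediate from the definition of $L[d,e]$: a cost-$e$ alignment between $\rev{U}[1\dd p]$ and $\rev{V}[1\dd q]$ is exactly a cost-$e$ path from $(0,0)$ to $(p,q)$ in the grid of the two infinite strings $A,B$ (truncated to the relevant prefixes), and the furthest-reaching entries $L[d,e]$ are computed in the same grid, so a prefix-endpoint query reduces to a single comparison $L[d,e]\ge p$. No extra care is needed for ``overshooting'' since the $\LCE$ jumps used to fill $L[d,e]$ never depend on $p$ or $q$.
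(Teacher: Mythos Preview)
Your proposal is correct and follows essentially the same approach as the paper. Both compute the Landau--Vishkin/LMS furthest-reaching table $L[d,e]$ on $\rev{U},\rev{V}$ with $\Oh(k^2)$ $\LCE$ queries, store the $\Oh(k^2)$ entries, and answer each suffix query by binary searching along the diagonal $d=q-p$; the paper phrases the query step as ``simple binary search'' over the sorted array $L^0(\delta),\ldots,L^k(\delta)$, which is exactly what you do.
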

\begin{proof}
The data structure is based on the Landau--Myers--Schmidt algorithm~\cite{LMS}. Consider a table $D$ of size $|U| \times |V|$ such that $D[i,j] = \ed(U[|U|-i+1\dd], V[|V|-j+1\dd]) = \ed(\rev{U}[1\dd i], \rev{V}[1 \dd j])$. By Fact~\ref{fact:edk2}, if $D[i,i+\delta] \le h$, then $-h \le \delta \le h$. 
Let $L^h(\delta) = \max\{i: D[i, i+\delta] \le h\}$. The data structure consists of $2k+1$ sorted arrays containing $L^0(\delta), L^1(\delta), \ldots, L^k(\delta)$, $-k \le \delta \le k$. 
The arrays occupy $\Oh(k^2)$ space and allow to retrieve $\ed(U[u\dd], V[v\dd])$ in $\tOh(1)$ time by simple binary search. 
The data structure can be computed via the following recurrence, where $\textsc{Slide}_\delta(u) = u+\LCE(\rev{U}[u\dd], \rev{V}[u+\delta\dd])$. 

\begin{equation}\label{eq:LMS}
L^h(\delta) = \textsc{Slide}_\delta \left( \max
\begin{cases}
L^{h-1}(\delta-1), & \delta > -h;\\
L^{h-1}(\delta)+1, & \text{always};\\
L^{h-1}(\delta+1)+1, & \delta < h.\\
\end{cases}
 \right)
\end{equation}

Therefore, given $L^{h-1}$, $L^h(\delta)$ can be computed via three $\LCE$ queries on the suffixes of $\rev{U}, \rev{V}$. As a corollary, the arrays $\{L^0(\delta), L^1(\delta), \ldots, L^k(\delta)\}$, $-k \le \delta \le k$, can be built in $\tOh(k^2)$ time.
\end{proof}

\begin{proposition}\label{prp:greedify}
Consider a non-crossing matching $M$ of strings $X,Y\in \Sigma^{\le n}$ and an integer $k\in \Zp$
such that $M\sub \mtch_k(X,Y)$ holds if $\ed(X,Y)\le k$.
Given $k$ and $\Enc^M(X,Y)$, the greedy encoding $\gr_k(X,Y)$ can be computed in $\tOh(zk+k^2 + |\brkp_k(X,Y)|)$ time and $\tOh(z+k^2)$ space, where $z = |\LZ(X^MY^M)|$.
\end{proposition}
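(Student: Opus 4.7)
The plan is: invoke \cref{cor:ED} on $\Enc^M(X,Y)$ to compute $d = \min(k+1, \ed(X,Y))$ in $\Oh(k^2 \log n)$ time and output $\bot$ if $d = k+1$; otherwise, compute the dummy segments of $X^{M'}$ and $Y^{M'}$ for $M' := \mtch_k(X,Y)$, and finally reassemble the three components of $\Enc^{M'}(X,Y)$ from $\Enc^M(X,Y)$.

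To recover the $\Oh(k)$ dummy segments of $X^{M'}$ and $Y^{M'}$, I would run \cref{alg:greedy_memoization} from \cref{lm:greedy_memoization} on the strings $\num(X^M)$ and $\num(Y^M)$. By \cref{lm:masking_does_not_change_ga} (applicable since $M \sub M'$), we have $\mtch_k(\num(X^M), \num(Y^M)) = \mtch_k(X,Y) = M'$ and, moreover, masking $\num(X^M)$ by $M'$ produces exactly $X^{M'}$ (and similarly for $Y$), so the algorithm indeed returns the desired dummy segments. The required $\LCE$ queries on suffixes of $\num(X^M)$, $\num(Y^M)$ and of their reverses are answered in $\Oh(\log n)$ time by \cref{lm:eLCE}, and combining the reverse-$\LCE$ oracle with \cref{claim:LMS} produces the capped-edit-distance oracle in $\Oh(k^2 \log n)$ time and $\Oh(k^2)$ space. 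A trivial modification of \cref{alg:greedy_memoization} that also emits $[y-\ell\dd y-1]$ whenever it outputs $[x-\ell\dd x-1]$ recovers the dummy segments of $Y^{M'}$ in the same run. By \cref{lm:greedyalg,lm:greedy_memoization}, this step costs $\tOh(|\brkp_k(X,Y)|)$ time and $\tOh(k)$ space.

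Next I would assemble $\Enc^{M'}(X,Y)$. The rank-select structures $\RS_\#(X^{M'})$ and $\RS_\#(Y^{M'})$ are built in $\Oh(k)$ time from the dummy-segment lists via \cref{prop:rank}. For $\CR(X^{M'}Y^{M'})$, I would first extract $\LZ(X^MY^M)$ from $\CR(X^MY^M)$ using \cref{prp:RLSLP}\ref{it:lz} in $\Oh(z \log^4 n)$ time. Since $M \sub M'$, the string $X^{M'}Y^{M'}$ is obtained from $X^MY^M$ by overwriting the characters inside the $\Oh(k)$ dummy segments of $X^{M'}Y^{M'}$ with $\#$. I patch the factorization by splitting phrases at the $\Oh(k)$ segment endpoints (one extra phrase per split) and re-encoding each all-$\#$ fragment arising from a new segment as two phrases: one literal $\#$ followed by a length-$(\ell-1)$ previous-factor reference to it. The resulting LZ-like factorization of $X^{M'}Y^{M'}$ has $z + \Oh(k)$ phrases, so \cref{prp:RLSLP}\ref{it:fromLZ} builds $\CR(X^{M'}Y^{M'})$ in $\Oh((z+k)\log^2 n)$ time. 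Aggregating yields time $\tOh(z + k^2 + |\brkp_k(X,Y)|)$ (comfortably within the stated $\tOh(zk + k^2 + |\brkp_k(X,Y)|)$ budget) and space $\tOh(z + k^2)$.

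The main obstacle is justifying the applicability of \cref{lm:greedy_memoization} --- originally stated for the raw $X, Y$ --- to $\num(X^M)$ and $\num(Y^M)$ with queries answered purely through $\Enc^M(X,Y)$. The key observation is that the algorithm interacts with its inputs only via $\LCE$ and capped-edit-distance oracles, both of which are supported by $\Enc^M(X,Y)$ via \cref{lm:eLCE,claim:LMS}, while \cref{lm:masking_does_not_change_ga} ensures that the dummy segments it produces coincide with those of $X^{M'}$ and $Y^{M'}$.
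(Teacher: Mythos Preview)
Your overall strategy matches the paper's: invoke \cref{cor:ED}, then run \cref{alg:greedy_memoization} on $\num(X^M),\num(Y^M)$ via \cref{lm:eLCE} and \cref{claim:LMS} (justified by \cref{lm:masking_does_not_change_ga}), and finally rebuild the encoding. The issue is in your last step, where you patch $\LZ(X^MY^M)$ by splitting at the $\Oh(k)$ new segment endpoints and replacing the inside pieces by $\#$-runs. This does not yield a valid LZ-like factorization of $X^{M'}Y^{M'}$: a phrase lying entirely \emph{outside} the new dummy segments may have its previous-factor \emph{source} overlap a region that you have just overwritten with~$\#$. After the overwrite, the source no longer equals the phrase, so the reference is broken and \cref{prp:RLSLP}\ref{it:fromLZ} would build a $\CR$ of the wrong string. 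Concretely, take $X^MY^M=\texttt{abababab}$ with $\LZ$ phrases $\texttt{a}\mid\texttt{b}\mid\texttt{ababab}$ (the last referencing position~$1$), and new dummy segments $[1\dd 2]$ and $[5\dd 6]$. After your splitting and replacement, the phrase at positions $3$--$4$ still references position~$1$, which now holds~$\#$ rather than~$\texttt{a}$.

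This is precisely why the paper does not patch the LZ factorization. Instead, for each of the $\Oh(k)$ new dummy segments $[\ell\dd r)$, it extracts $\CR(Z[1\dd\ell))$ and $\CR(Z[r\dd])$ via \cref{prp:RLSLP}\ref{it:extract}, builds $\CR(\#^{r-\ell})$ via \cref{prp:RLSLP}\ref{it:fromLZ}, and concatenates via \cref{prp:RLSLP}\ref{it:concat}. Each such update costs $\tOh(z+k)$, and doing it $\Oh(k)$ times is exactly where the $\tOh(zk)$ term in the stated bound comes from. Your claimed improvement to $\tOh(z+k^2+|\brkp_k(X,Y)|)$ thus rests on the broken patching step.
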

\begin{proof}
First, we pass $k$ and $\Enc^M(X,Y)$ to the procedure of \cref{cor:ED}.
Observe that the returned value $d$ satisfies $d\le k$ if and only if $\ed(X,Y)\le k$.
If $\ed(X,Y)>k$, then we simply return $\gr_k(X,Y)=\bot$. In this case, the running time and the space complexity are $\tOh(k^2)$.

Otherwise, our strategy is to compute $\mtch_k(X,Y)$ and then mask out the corresponding characters of $X^M$ and $Y^M$
to obtain $X^{\mtch_k(X,Y)}$ and $Y^{\mtch_k(X,Y)}$.
By \cref{lm:masking_does_not_change_ga}, we have $\mtch_k(X,Y)=\mtch_k(X',Y')$, where $X' = \num(X^M)$ and $Y'=\num(Y^M)$. 
Hence, we shall use \cref{lm:greedyalg} to compute the dummy segments corresponding to $\mtch_k(X',Y')$.
For this, we need to support $\LCE$ queries and edit distance queries on the suffixes of $X',Y'$.
As for $\LCE$ queries, we rely on \cref{lm:eLCE}, which provides $\Oh(\log n)$-time $\LCE$ queries on the suffixes of $X',Y'$ and on the suffixes of $\rev{X'},\rev{Y'}$.
We use the latter queries to build a component of \cref{claim:LMS} for the edit distance queries.
After $\Oh(k^2 \log n)$-time preprocessing, these queries can be answered in $\Oh(\log k)$ time.
Overall, constructing the dummy segments corresponding to $\mtch_k(X',Y')$ costs $\tOh(k^2 + |\brkp_k(X,Y)|)$ time and $\Oh(k^2)$ working space.

Our next goal is to convert $\CR(X^MY^M)$ to $\CR(X^{\mtch_k(X,Y)}Y^{\mtch_k(X,Y)})$.
For this, we need to place $\#$s within each of the computed dummy segments.
Consider updating a working string $Z$ by setting $Z[\ell\dd r):=\#^{r-\ell}$.
To implement this operation, we build $\LZ(\#^{r-\ell})$ (of size at most 2),
 derive $\CR(\#^{r-\ell})$ (\cref{prp:RLSLP}\ref{it:fromLZ}),
 extract $\CR(Z[1\dd \ell))$ and $\CR(Z[r\dd |Z|])$ (\cref{prp:RLSLP}\ref{it:extract}),
 and finally concatenate into $\CR(Z[1\dd \ell)\#^{r-\ell}Z[r\dd |Z|])$ (\cref{prp:RLSLP}\ref{it:concat}).
 Overall, each dummy segment is processed in $\Oh((z+k)\log^4 n)$ time and space,
 for a total of $\tOh(zk+k^2)$ time and $\tOh(z+k)$ space across all segments.

Finally, $\RS_\#(X^{\mtch_k(X,Y)})$ and $\RS_\#(Y^{\mtch_k(X,Y)})$ are constructed in $\Oh(k)$ time and space using \cref{prop:rank}.
\end{proof}

\begin{corollary}\label{cor:greedy_short}
  Given an integer $k\in \Zp$ and two strings $X, Y \in \Sigma^{\le k^2}$, one can compute $\gr_k(X, Y)$ in $\tOh(k^3)$ time and $\tOh(k^2)$ space.
  \end{corollary}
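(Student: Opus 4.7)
The plan is to invoke Proposition~\ref{prp:greedify} with the trivial matching $M = \emptyset$, which vacuously satisfies the prerequisite $M \sub \mtch_k(X,Y)$ whenever $\ed(X,Y)\le k$. The short-input regime $|X|,|Y|\le k^2$ is exactly what makes this trivial choice of $M$ affordable.

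First, I build $\Enc^\emptyset(X,Y) = (\CR(XY), \RS_\#(X), \RS_\#(Y))$. Since $X,Y\in\Sigma^*$ contain no dummy symbols, the $\RS_\#$ components are empty and take $\Oh(1)$ space. To build $\CR(XY)$, I feed the trivial length-$1$ LZ-like factorization of $XY$ (each character its own phrase, so $f=|XY|\le 2k^2$) into \cref{prp:RLSLP}\ref{it:fromLZ}; this costs $\Oh(f\log^2 n)=\tOh(k^2)$ time and space.

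Next, I apply \cref{prp:greedify} to $\Enc^\emptyset(X,Y)$ and $k$, which yields $\gr_k(X,Y)$. Its stated cost is $\tOh(zk+k^2+|\brkp_k(X,Y)|)$ time and $\tOh(z+k^2)$ space, where $z=|\LZ(X^\emptyset Y^\emptyset)|=|\LZ(XY)|$. I bound the two quantities as follows:
\begin{itemize}
\item $z\le |XY|\le 2k^2$, because the trivial factorization into single characters is LZ-like and \cref{fct:lz_properties} gives $|\LZ(XY)|\le 2k^2$.
\item Every $(x,y)\in \brkp_k(X,Y)$ lies on some greedy alignment $\A\in \ga_k(X,Y)$, so by \cref{fact:edk2} we have $|x-y|\le \width(\A)\le \cost(\A)\le k$. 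Hence the breakpoints occupy at most $2k+1$ diagonals of the $(|X|{+}1)\times(|Y|{+}1)$ grid, each intersected in at most $|X|+1\le k^2+1$ points, giving $|\brkp_k(X,Y)|\le (2k+1)(k^2+1)=\Oh(k^3)$.
\end{itemize}
Plugging these in yields time $\tOh(k^2\cdot k+k^2+k^3)=\tOh(k^3)$ and space $\tOh(k^2+k^2)=\tOh(k^2)$, as required. The $\tOh(k^2)$ preprocessing for $\Enc^\emptyset$ is absorbed.

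There is no genuine obstacle here: the only thing to check is that the generic estimate of \cref{prp:greedify} collapses to $\tOh(k^3)$ under the short-input assumption, and the two places where this could fail ($z$ and $|\brkp_k|$) are both controlled by trivially bounding them through the input lengths rather than through the sharper structural bounds of \cref{lm:greedy_size,lm:greedyalg}, which would be needed only for longer inputs.
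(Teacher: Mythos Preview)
Your proof is correct and follows essentially the same approach as the paper: build $\Enc^\emptyset(X,Y)$ from the trivial single-character factorization, apply \cref{prp:greedify}, and bound both $z\le 2k^2$ and $|\brkp_k(X,Y)|=\Oh(k^3)$ directly from the short-input assumption via the diagonal count. The paper's proof is identical in structure and in the bounds used.
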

  \begin{proof}
  We shall construct $\Enc^\emptyset(X,Y)$ and then derive $\gr_k(X,Y)$ from \cref{prp:greedify}.
  Hence, we build a trivial LZ-like representation of $XY$ (with length-1 phrases)
  and derive $\CR(XY)=\CR(X^\emptyset Y^\emptyset)$ using \cref{prp:RLSLP}\ref{it:fromLZ}.
  This costs $\Oh(k^2\log^2 k)$ time and space.
  The components $\RS_\#(X^\emptyset)$ and $\RS_\#(Y^\emptyset)$ are trivial (there are no dummy segments).
  Using  \cref{prp:greedify} costs $\tOh(k^3+|\brkp_k(X,Y)|)$ time and $\tOh(k^2)$ space. To finish the proof, note that $\brkp_k(X,Y)$ can only contain pairs $(x,y)$ such that $1\le x \le |X|$, $1 \le y \le |Y|$, and $|x-y| \le k$, which implies $|\brkp_k(X,Y)| = \Oh(|X|k) = \Oh(k^3)$.
  \end{proof}

\subsection{Concatenations and Quasi-greedy Alignments}\label{sec:quasi}

\begin{definition}[Quasi-greedy alignment]\label{def:quasi_greedy_alignment}
  We say that an alignment $\A$ of two strings $X, Y\in \Sigma^*$ is \emph{quasi-greedy} if it satisfies at least one of the following symmetric conditions:
  \begin{itemize}
    \item there exists $\delta \in [1\dd |X|+1]$ such that $(\delta,1)\in \A$ and $X[x] \neq Y[y]$ holds for every $(x,y) \in \brkp(\A) \cap ([\delta\dd|X|]\times [1\dd |Y|])$;
    \item there exists $\delta \in [1\dd |Y|+1]$ such that $(1,\delta)\in \A$ and $X[x] \neq Y[y]$ holds for every $(x,y) \in \brkp(\A) \cap ([1\dd|X|]\times [\delta\dd |Y|])$;
  \end{itemize}
Given $k\ge 0$, we denote by $\qga_k(X,Y)$ the set of all quasi-greedy alignments $\A$ of $X,Y$ with $\cost(\A)\le k$.
\end{definition}

\begin{lemma}\label{lem:concat}
Consider an alignment $\A\in \qga_k(X_pX_s,Y_pY_s)$, where $X_p,X_s,Y_p,Y_s\in \Sigma^*$ and $k\in \Zz$.
Then, $\A_{[1\dd |X_p|+1),[1\dd |Y_p|+1)}\in \qga_{k+||X_s|-|Y_s||}(X_p,Y_p)$
and $\A_{[|X_p|+1\dd ],[|Y_p|+1\dd ]}\in \qga_{k+||X_p|-|Y_p||}(X_s,Y_s)$.
\end{lemma}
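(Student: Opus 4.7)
The plan is to prove the first conclusion (about $\A^p := \A_{[1\dd a),[1\dd b)}$ with $a=|X_p|+1$ and $b=|Y_p|+1$) in detail; the second conclusion follows by a parallel argument that tracks, instead, the first step at which $\A$ \emph{enters} the suffix quadrant $[a,\infty)\times[b,\infty)$. I need to establish both the cost bound $\cost(\A^p)\le k+||X_s|-|Y_s||$ and the quasi-greediness of $\A^p$.

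For the cost bound, let $\tau := \min\{t : x_t=a \text{ or } y_t=b\}$, which is well defined since $\A$ eventually reaches $(|X|+1,|Y|+1)$. Monotonicity of $\A$ forces one of three mutually exclusive cases: (A) $(x_\tau,y_\tau)=(a,b)$; (B1) $x_\tau=a$ and $y_\tau<b$; (B2) $x_\tau<a$ and $y_\tau=b$. In case (A), the clamped sequence coincides with $\A|_{[1,\tau]}$ followed by duplicates of $(a,b)$, so after deduplication $\cost(\A^p)=\cost(\A|_{[1,\tau]})\le\cost(\A)\le k$. Cases (B1) and (B2) are symmetric, so I treat (B1). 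For $t\ge\tau$ the clamp freezes the first coordinate at $a$, so each right-move of $\A$ collapses to a no-op whereas each $y$-increment becomes a $Y_p$-deletion in $\A^p$; a direct count gives
\[\cost(\A^p)=\cost(\A|_{[1,\tau]})+(b-y_\tau).\]
Meanwhile $\A|_{[\tau,m]}$, re-indexed from $(1,1)$, is an alignment of $X_s$ with $Y[y_\tau\dd]$, which has length $|Y_s|+(b-y_\tau)$, so the universal length-difference lower bound yields $\cost(\A|_{[\tau,m]})\ge\bigl||X_s|-|Y_s|-(b-y_\tau)\bigr|$. Substituting into $\cost(\A|_{[1,\tau]})+\cost(\A|_{[\tau,m]})=\cost(\A)\le k$ yields
\[\cost(\A^p)\le k+(b-y_\tau)-\bigl||X_s|-|Y_s|-(b-y_\tau)\bigr|,\]
and the elementary inequality $c-|d-c|\le |d|$ (with $c := b-y_\tau\ge 1$ and $d := |X_s|-|Y_s|$), verified by checking the signs, bounds the right-hand side by $k+||X_s|-|Y_s||$.

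For the quasi-greedy property, assume WLOG that $\A$ satisfies the first disjunct of \cref{def:quasi_greedy_alignment} with parameter $\delta\in[1\dd|X|+1]$. Set $\delta^p := \min(\delta,a)\in[1\dd a]$; clamping sends $(\delta,1)\in\A$ to $(\delta^p,1)\in\A^p$. The key observation is that every breakpoint $(x,y)\in\brkp(\A^p)\cap[1\dd|X_p|]\times[1\dd|Y_p|]$ satisfies $x<a$ and $y<b$, so the clamp acts as the identity on both endpoints of the corresponding step; hence such $(x,y)$ is also a breakpoint of $\A$ at the identical coordinates. For $(x,y)$ additionally in $[\delta^p\dd|X_p|]\times[1\dd|Y_p|]$, either $\delta\le a$ (so $\delta^p=\delta$ and $x\ge\delta$, whence the greedy condition for $\A$ gives $X_p[x]=X[x]\ne Y[y]=Y_p[y]$), or $\delta>a$ (in which case $\delta^p=a>|X_p|$ makes the range empty). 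The second disjunct of quasi-greediness is handled symmetrically, so $\A^p\in\qga_{k+||X_s|-|Y_s||}(X_p,Y_p)$. The main obstacle is the tedious boundary-walk bookkeeping in case (B1), together with the mirror-image case analysis needed for $\A^s$, where one partitions $\A$ according to the first $\sigma$ with both $x_\sigma\ge a$ and $y_\sigma\ge b$, uses the shift $\delta'':=\max(1,\delta-a+1)$ for the quasi-greedy parameter, and runs the same length-difference argument on $\A|_{[1,\sigma]}$.
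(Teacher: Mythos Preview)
Your treatment of the prefix alignment $\A^p$ is correct and is essentially the paper's argument, with the cost bound handled more cleanly: your inequality $c-|d-c|\le |d|$ covers both sign cases at once, whereas the paper only spells out one. The quasi-greediness argument for $\A^p$ via $\delta^p=\min(\delta,a)$ is also fine.

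The gap is in your sketch for the suffix alignment $\A^s$. The proposed parameter $\delta''=\max(1,\delta-a+1)$ does not always certify quasi-greediness. Concretely, suppose (in your notation) $y_\sigma=b$ and $x_\sigma>a$. Then $\A^s$ begins with the run $(1,1),(2,1),\ldots,(x_\sigma-a+1,1)$, each step deleting a character of $X_s$. These are breakpoints of $\A^s$ lying in $[1\dd |X_s|]\times[1\dd |Y_s|]$, but they do \emph{not} arise from breakpoints of $\A$ at the same shifted coordinates; they come from clamping the second coordinate up to $1$. In particular, there is no reason for $X_s[x]\ne Y_s[1]$ to hold for $x<x_\sigma-a+1$. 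A small instance: take $X_p=\texttt{c}$, $Y_p=\texttt{de}$, $X_s=\texttt{aab}$, $Y_s=\texttt{ab}$, and let $\A$ delete all of $X$ first and then all of $Y$ (which is greedy, hence quasi-greedy with $\delta=1$). Here $a=2$, $\sigma$ has $x_\sigma=5$, $y_\sigma=3=b$, and $\A^s$ begins with $(1,1),(2,1),(3,1),(4,1)$; at both $(1,1)$ and $(2,1)$ one has $X_s[x]=\texttt{a}=Y_s[1]$, so neither disjunct holds with parameter $1=\max(1,\delta-a+1)$.

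The fix is exactly what the paper does: absorb those initial boundary deletions into the ``allowed non-greedy prefix'' of the quasi-greedy definition by taking $\delta''=x_\sigma-a+1$ whenever $\delta\le x_\sigma$ (and $\delta''=\delta-a+1$ otherwise, which can only happen when $|Y_p|=0$). With this choice, every breakpoint $(x,y)\in\brkp(\A^s)$ with $x\ge\delta''$ genuinely corresponds to a breakpoint $(x+a-1,y+b-1)\in\brkp(\A)$ in the greedy range of $\A$, and the argument goes through. The cost-bound part of your suffix sketch (length-difference estimate on $\A|_{[1,\sigma]}$) is correct.
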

\begin{proof}
For brevity, let $X = X_p X_s$, $Y = Y_p Y_s$; $n_p = |X_p|$, $n_s = |X_s|$, $n = |X|$, $m_p = |Y_p|$, $m_s = |Y_s|$, $m = |Y|$. 

Let $\A=(x_t,y_t)_{t=1}^q$ and let $(x_{i_p},y_{i_p})$ be the leftmost pair $(x,y)\in \A$ such that $x>n_p$ or $y>m_p$.
By symmetry, we assume without loss of generality that $x_{i_p} = n_p+1$;
see \cref{fig:alignment_repartition}.
Observe that $\A_p:=\A_{[1\dd n_p+1),[1\dd m_p+1)}$ is the union of $(x_t,y_t)_{t=1}^{i_p-1}$ and $(n_p+1,y)_{y=y_{i_p}}^{m_p+1}$. 
Furthermore, $\brkp(\A_p)= \{(x_t,y_t)\in \brkp(\A) : t\in [1\dd p)\}\cup \{(n_p+1,y) : y\in [y_p\dd m_p+1]\}$.
Let $k_p$ be the size of the former component.
Then, $k\ge \cost(\A) \ge k_p + \ed(X_s,Y_p[y_{i_p}\dd]Y_s) \ge k_p + |Y_p[y_{i_p}\dd]|+m_s-n_s$,
so that $\cost(\A_p) = k_p + |Y_p[y_p\dd ]| \le k +n_s - m_s \le k - \big||X_s|-|Y_s|\big|$.
It remains to prove that $\A_p$ is a quasi-greedy alignment.
By \cref{def:greedy_alignment} applied to $\A$, we have two possibilities:
\begin{itemize}
  \item There exists $\delta\in [1\dd n+1]$ such that $(\delta,1)\in \A$ and $X[x]\ne Y[y]$ holds for every $(x,y)\in \brkp(\A)\cap ([\delta\dd n]\times [1\dd m])$.
  If $\delta \le n_p+1$, then we also have that $(\delta,1)\in \A_p$ and $X_p[x]\ne Y_p[y]$ holds for every $(x,y)\in \brkp(\A_p)\cap ([\delta\dd n_p]\times [1\dd m_p])$.
  Otherwise, $(x_{i_p},y_{i_p})=(n_p+1,1)\in \A_p$, so $\A_p$ is trivially quasi-greedy.
  \item There exists $\delta\in [1\dd m+1]$ such that $(1,\delta)\in \A$ and $X[x]\ne Y[y]$ holds for every $(x,y)\in \brkp(\A)\cap ([1\dd n]\times [\delta \dd m])$.
  If $\delta \le y_{i_p}$, then we also have that $(1,\delta)\in \A_p$ and $X_p[x]\ne Y_p[y]$ holds for every $(x,y)\in \brkp(\A_p)\cap ([1\dd n_p]\times [\delta\dd m_p])$.
  Otherwise, $n_p=0$ because $(1,\delta)$ cannot cross $(n_p+1,y_p)$.
  Hence, $\A_p$ is trivially quasi-greedy.
\end{itemize}

Now, let $(x_{i_s},y_{i_s})$ be the leftmost pair $(x,y)\in \A$ such that $x>n_p$ and $y>m_p$.
By symmetry, we assume without loss of generality that $y_{i_s}= m_p+1$;
see \cref{fig:alignment_repartition}.
Observe that $\A_s := \A_{[n_p+1\dd ],[m_p+1\dd ]}$ is the union of  $(x,1)_{x=1}^{x_{i_s}-n_p-1}$ and $(x_t-n_p,y_t-m_p)_{t={i_s}}^q$.
Furthermore, $\brkp(\A_s)= \{(x_t-n_s,y_t-m_s)\in \brkp(\A) : t\in [i_s\dd q]\}\cup \{(x,1) : x\in [1\dd x_{i_s}-n_p)\}$.
Let $k_s$ be the size of the former component.
Then, $k\ge \cost(\A) \ge k_s-1 + \ed(X_pX_s[1\dd x_{i_s}-n_p),Y_p) \ge k_s-1 + x_{i_s}-m_p$,
so that $\cost(\A_s) = k_s-1 + x_{i_s}-n_p\le k + m_p-n_p \le k - \big||X_p|-|Y_p|\big|$.
It remains to prove that $\A_s$ is a quasi-greedy alignment.
By \cref{def:greedy_alignment} applied to $\A$, we have two possibilities:
\begin{itemize}
  \item There exists $\delta\in [1\dd n+1]$ such that $(\delta,1)\in \A$ and $X[x]\ne Y[y]$ holds for every $(x,y)\in \brkp(\A)\cap ([\delta\dd n]\times [1\dd m])$.
  If $\delta \le x_{i_s}$, then $(x_{i_s}-n_p,1)\in \A_s$ and $X_s[x]\ne Y_s[y]$ holds for every $(x,y)\in \brkp(\A_s)\cap ([x_{i_s}-n_p\dd n_s]\times [1\dd m_s])$.
  Otherwise, $(\delta-n_p,1)\in \A_s$ and $X_s[x]\ne Y_s[y]$ holds for every $(x,y)\in \brkp(\A_s)\cap ([\delta-n_p\dd n_s]\times [1\dd m_s])$.
  \item There exists $\delta\in [1\dd m+1]$ such that $(1,\delta)\in \A$ and $X[x]\ne Y[y]$ holds for every $(x,y)\in \brkp(\A)\cap ([1\dd n]\times [\delta \dd m])$.
  If $\delta \le m_p+1$, then $(x_{i_s}-n_p,1)\in \A_s$ and $X_s[x]\ne Y_s[y]$ holds for every $(x,y)\in \brkp(\A_s)\cap ([x_{i_s}-n_p\dd n_s]\times [1\dd m_s])$.
  Otherwise, $x_{i_s} = 1$ and $n_p=0$ because $(1,\delta)$ cannot cross $(x_{i_s},m_p+1)$.
  Hence, $(1,\delta-m_p)\in \A$ and $X_s[x]\ne Y_s[y]$ holds for every $(x,y)\in \brkp(\A_s)\cap ([1\dd n_s]\times [\delta-m_p \dd m_s])$.\qedhere
\end{itemize}
\end{proof}

  \begin{figure}[th]
  \begin{center}
  \begin{tikzpicture}[scale=0.575]
  \draw[thick] (0,0)--(20,0);
  \filldraw[fill=black] (12,0) circle (0.1) node[above] {\small{$n_p+1$}};
  \filldraw[fill=black] (15,0) circle (0.1) node[above] {\small{$x_{i_s}$}};
  \filldraw[fill=black] (0,0) circle (0.1) node[above] {\small{$1$}};
  \filldraw[fill=black] (20,0) circle (0.1);
  \draw (0,0)  node[left] {$X_pX_s:$};
  
  \draw[thick] (0,-2)--(18,-2);
  \filldraw[fill=black] (0,-2) circle (0.1) node[below] {\small{$1$}};
  \filldraw[fill=black] (11,-2) circle (0.1) node[below] {\small{$m_p+1$}};
  \filldraw[fill=black] (8,-2) circle (0.1) node[below] {\small{$y_{i_p}$}};
  \filldraw[fill=black] (18,-2) circle (0.1);
  \draw (0,-2)  node[left] {$Y_pY_s:$};
  
  \draw[dashed] (12,0)--(8,-2);
  \draw[dashed] (15,0)--(11,-2);
  \draw[dashed] (0,0)--(0,-2);
  
  \draw[red] (2,-0.2)--(2,-1.8);
  \draw[red] (3,-0.2)--(3,-1.8);
  \draw[red] (5,-0.2)--(4,-1.8);
  \draw[red] (6,-0.2)--(5,-1.8);
  \draw[red] (8,-0.2)--(6,-1.8);
  \draw[red] (10,-0.2)--(7,-1.8);
  
  \draw[green] (13,-0.2)--(9,-1.8);
  \draw[green] (14,-0.2)--(10,-1.8);
  
  \draw[blue] (16,-0.2)--(13,-1.8);
  \draw[blue] (17,-0.2)--(16,-1.8);
  \draw[blue] (18,-0.2)--(17,-1.8);
  \end{tikzpicture}
  \end{center}
  
  \caption{Alignment $\A \in \qga_k(X_pX_s,Y_pY_s)$.}\label{fig:alignment_repartition}
  \end{figure} 

  For strings $X,Y\in \Sigma^*$ and an integer $k\ge \ed(X,Y)$, we define a set \[\qmtch_k(X,Y) = \bigcap_{\A \in \qga_k(X,Y)} \mtch_{X,Y}(\A)\] of \emph{common matches} of all alignments $\A\in \qga_k(X,Y)$.

  \begin{definition}
    For strings $X,Y\in \Sigma^*$ and an integer $k$, we define the \emph{quasi-greedy encoding} 
    \[\qgr_k(X, Y) = \begin{cases} \Enc^{\qmtch_k(X,Y)}(X,Y)&\text{if }\ed(X,Y)\le k,\\
      \bot &\text{otherwise.}\end{cases}\]
  \end{definition}

  \begin{corollary}\label{cor:qgr_to_gr}
    Consider strings $X,Y \in \Sigma^{\le n}$. Define $X' = \$_1 X$ and $Y' = \$_2 Y$, where $\$_1 \neq \$_2$ are special symbols not in $\Sigma$. Let $M' = \mtch_{k+1}(X',Y')$ and $M = \qmtch_k(X,Y)$. The following properties hold:
\begin{enumerate}[label=\textrm{(\alph*)}]
\item Given the dummy segments for one of the strings $X^{M}, (X')^{M'}$, the dummy segments for the other string can be constructed in $\Oh(k)$ time and space;\label{it:dummy} 
\item  Given one of the encodings $\qgr_k(X, Y), \gr_{k+1}(X', Y')$, the other encoding can be constructed in $\Oh(k^2 \log^4 n)$ time and space;\label{it:encodings}
\item The quasi-greedy encoding $\qgr_k(X, Y)$ occupies $\Oh(k^2 \log^4 n)$ space. \label{it:qga_space} 
\end{enumerate}
\end{corollary}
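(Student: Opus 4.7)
The plan rests on the key identity
\[M' = \{(x+1, y+1) : (x, y) \in M\},\]
which reduces to the alignment correspondence $\qga_k(X, Y) = \{\A'_{[2\dd|X|+2),[2\dd|Y|+2)} : \A' \in \ga_{k+1}(X', Y')\}$, combined with the observation that matching pairs of $\A' \in \ga_{k+1}(X',Y')$ cannot touch position $1$ (because $\$_1 \neq \$_2$ and both sentinels occur only there) and are otherwise in bijection with matching pairs of the induced $\A$ via $(x+1, y+1) \leftrightarrow (x, y)$.

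For the inclusion $\qga_k(X, Y) \supseteq \{\text{induced}\}$, I would take $\A' \in \ga_{k+1}(X', Y')$ with induced $\A$. Since $\$_1 \neq \$_2$, the pair $(1, 1)$ is a breakpoint of $\A'$ that gets absorbed by the clamping---all of $(1, 1), (2, 1), (1, 2), (2, 2)$ map to $(1, 1)$---so $\cost(\A) \leq \cost(\A') - 1 \leq k$. Setting $\delta := \max\{\delta : (\delta, 1) \in \A\}$, every breakpoint $(x, y) \in \brkp(\A) \cap [\delta\dd|X|] \times [1\dd|Y|]$ corresponds to a breakpoint of $\A'$ in $[2\dd|X'|] \times [2\dd|Y'|]$, so greediness of $\A'$ yields $X[x] \neq Y[y]$, witnessing the first quasi-greedy case; the symmetric second case handles the setting where $\A'$'s first move is $(1, 1) \to (1, 2)$.

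For the reverse inclusion---the main technical obstacle---given $\A \in \qga_k(X, Y)$ satisfying the first case of \cref{def:quasi_greedy_alignment} with maximal $\delta$ (symmetric otherwise), I would construct $\A'$ by prepending the path $(1, 1), (2, 1), \ldots, (\delta, 1), (\delta+1, 2)$ (collapsing to $(1, 1), (2, 2)$ when $\delta = 1$) to the shift of $\A$'s entries past $(\delta, 1)$ by $(+1, +1)$. The prefix deletes $\$_1, X[1], \ldots, X[\delta-2]$ and substitutes $X'[\delta]$ for $\$_2$, so $\cost(\A') = \cost(\A) + 1 \leq k+1$, and a clamp-shift computation shows that $\A'$ induces exactly $\A$. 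Greediness of $\A'$ is the delicate point: every breakpoint in the prepended prefix pairs a symbol of $\Sigma \cup \{\$_1\}$ with $\$_2$, and since $\$_2 \notin \Sigma \cup \{\$_1\}$ the inequality is automatic; placing the substitution at the very last step of the prefix is essential to avoid pairing two equal $\Sigma$-characters at a prefix-deletion breakpoint, while the shifted breakpoints inherited from $\A$ are greedy thanks to the quasi-greedy condition at $\delta$.

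Parts (a)--(c) follow routinely. For (a), dummy segments of $X^M$ and $(X')^{M'}$ are in bijection via the shift $[\ell, r] \leftrightarrow [\ell + 1, r + 1]$, yielding an $\Oh(k)$-time conversion (the $\Oh(k)$ bound on the count comes from \cref{lm:greedy_size} applied to $\gr_{k+1}(X', Y')$). For (b), the identity gives $(X')^{M'} = \$_1 \cdot X^M$ and $(Y')^{M'} = \$_2 \cdot Y^M$, so the $\CR$ components interconvert via extraction (\cref{prp:RLSLP}\ref{it:extract}) and concatenation (\cref{prp:RLSLP}\ref{it:concat}), augmented by a trivial size-$1$ LZ-like representation of the sentinels, while the $\RS$ components convert by the shift of (a). Since $|\LZ(X^M Y^M)| = \Oh(k^2)$, the total cost is $\tOh(k^2 \log^4 n)$; (c) is immediate from \cref{cor:greedy_size} applied to $\gr_{k+1}(X', Y')$.
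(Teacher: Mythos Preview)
Your proposal is correct and follows essentially the same approach as the paper: establish the bijection $\ga_{k+1}(X',Y') \leftrightarrow \qga_k(X,Y)$ via $\A' \mapsto \A'_{[2\dd],[2\dd]}$, derive $M' = \{(x+1,y+1) : (x,y)\in M\}$, and then parts (a)--(c) are routine. Two minor differences worth noting: for the forward inclusion the paper invokes \cref{lem:concat} directly rather than arguing from the clamping by hand, and for the reverse inclusion the paper (working the symmetric case) deletes the remaining sentinel at the \emph{end} of the prepended path rather than substituting it there---but both constructions achieve the same goal of keeping every prefix breakpoint paired with a sentinel, and your remark that this placement is ``essential'' is exactly the point.
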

\begin{proof}
    Let us first show that $\A' \mapsto \A'_{[2\dd],[2\dd]}$ bijectively maps $\ga_{k+1}(X',Y')$ to $\qga_k(X,Y)$. 
    
    Consider a greedy alignment $\A' \in \ga_{k+1}(X', Y')$ and let $\A=\A'_{[2\dd],[2\dd]}$. \cref{lem:concat} yields $\A\in \qga_{k+1}(X,Y)$. However, we actually have $\cost(\A) \le k$
    because $\mtch(\A') = \{(x,y) : (x+1,y+1) \in \mtch(\A)\}$ holds due to $X'[1]\ne Y'[1]$. 
    
    We now show the opposite direction. Let $\A \in \qga_{k}(X, Y)$. Suppose that there is $\delta \in [1\dd |Y|]$ such that $(1,\delta) \in \A$ and $X[x] \neq Y[y]$ holds for every $(x,y) \in \brkp(\A) \cap ([1\dd|X|] \times [\delta \dd |Y|])$ (the other case is symmetrical). It follows in particular that $\A$ contains elements $\{(1,1), (1,2), \ldots, (1,\delta)\}$. We start by replacing each element $(x,y)$ of $\A$ with $(x+1,y+1)$. We add an element $(1,1)$ to $\A$ and replace each element $(2,i)$, $2\le i \le \delta+1$, with $(1,i)$. Finally, we add an element $(2,\delta+1)$. We claim that the resulting alignment $\A'$, that we treat as an alignment of $X'$ and $Y'$, is greedy. For every element $(x,y) \in \brkp(\A') \cap  ([2\dd|X'|] \times [\delta+1 \dd |Y'|])$ we have $(x-1,y-1) \in \brkp(\A) \cap ([1\dd|X|] \times [\delta \dd |Y|])$ and hence from quasi-greediness of $\A$ we obtain $X'[x] = X[x-1] \ne Y[y-1] = Y'[y]$. For every element $(1,i) \in \A'$ we have $X'[1] = \$_1 \neq Y[i]$. It follows that $\A'$ is greedy. In addition, the cost of $\A'$ equals the cost of $\A$ plus one and hence is bounded by $k+1$. Finally, note that $\mtch(\A') = \{(x,y) : (x-1,y-1) \in \mtch(\A)\}$. 
    
Let $M' = \mtch_{k+1}(X',Y')$ and $M = \qmtch_k(X,Y)$. From above, we obtain $M' = \{(x+1,y+1) : (x,y) \in M\}$. Therefore, $(X')^{M'}[2\dd] = X^M$ and $(Y')^{M'}[2\dd] = Y^M$. 
    
It follows that we can obtain the dummy segments for $X^M$ by subtracting one from the endpoints of each dummy segment of $(X')^{M'}$, and analogously for $Y^M$. The reverse claim is obtained analogously, by adding one to the endpoints. 
As the number of the dummy segments in $X^M$ and $(X')^{M'}$ is $\Oh(k)$ by Lemma~\ref{lm:greedy_size}, \ref{it:dummy} follows. 

From~\ref{it:dummy} it follows that given the rank data structures for one pair of strings $X^M, Y^M$ and $(X')^{M'}, (Y')^{M'}$, the rank data structures for the other pair can be built in $\Oh(k)$ time and space. We now must show that given the $\CR$ data structure for one of the two strings $X^M Y^M, (X')^{M'} (Y')^{M'}$, we can construct the $\CR$ data structure for the other string efficiently. Assume that we are given $\CR((X')^{M'}(Y')^{M'})$, the other case is analogous. We extract all non-dummy segments using \cref{prp:RLSLP}\ref{it:extract} and concatenate them using \cref{prp:RLSLP}\ref{it:concat}. In total, it takes $\Oh(k^2 \log^4 n)$ time and space, giving \ref{it:encodings}. 
    
\ref{it:qga_space} follows immediately from \ref{it:encodings}.
\end{proof}

From~\cref{cor:greedy_short,cor:qgr_to_gr} it immediately follows that:

\begin{claim}\label{claim:short_strings_qga}
  Given an integer $k\in \Zp$ and strings $X, Y \in \Sigma^{\le k^2}$, one can compute $\qgr_k(X, Y)$ in $\tOh(k^3)$ time and $\tOh(k^2)$ space.
\end{claim}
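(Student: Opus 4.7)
The plan is to reduce the quasi-greedy encoding task to the greedy encoding task by prepending distinct sentinels to $X$ and $Y$, exactly as in the characterization of \cref{cor:qgr_to_gr}. Concretely, I would pick fresh symbols $\$_1 \ne \$_2$ outside $\Sigma$ and form $X' = \$_1 X$ and $Y' = \$_2 Y$. Since $|X'|,|Y'| \le k^2+1 \le (k+1)^2$, the augmented strings are still short enough to feed into \cref{cor:greedy_short}.

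Next, I would invoke \cref{cor:greedy_short} with threshold $k+1$ on the pair $(X',Y')$ to compute $\gr_{k+1}(X',Y')$. By that corollary this costs $\tOh((k+1)^3) = \tOh(k^3)$ time and $\tOh((k+1)^2) = \tOh(k^2)$ space, which already fits the target budget. Finally, I would invoke \cref{cor:qgr_to_gr}\ref{it:encodings} to convert $\gr_{k+1}(X',Y')$ into $\qgr_k(X,Y)$; since $n \le k^2+1$, the cost $\Oh(k^2 \log^4 n)$ is $\tOh(k^2)$ in both time and space and therefore does not dominate.

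There is essentially no obstacle: the two corollaries were formulated precisely to make this reduction immediate. The only minor sanity check is that the bookkeeping of the threshold ($k$ versus $k+1$) matches \cref{cor:qgr_to_gr}, which it does verbatim, and that replacing $n$ by $k^2+1$ absorbs the polylogarithmic factors of \cref{cor:qgr_to_gr}\ref{it:encodings} into the $\tOh(\cdot)$ notation. Combining the two steps yields the claimed $\tOh(k^3)$-time, $\tOh(k^2)$-space construction of $\qgr_k(X,Y)$.
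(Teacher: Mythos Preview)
Your proposal is correct and follows exactly the approach the paper intends: the claim is stated immediately after the phrase ``From~\cref{cor:greedy_short,cor:qgr_to_gr} it immediately follows that'', and you have spelled out precisely that derivation (prepend sentinels, apply \cref{cor:greedy_short} with threshold $k+1$, then convert via \cref{cor:qgr_to_gr}\ref{it:encodings}).
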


\begin{corollary}\label{cor:qgreedify}
  Consider a non-crossing matching $M$  of strings $X,Y\in \Sigma^{\le n}$ and an integer $k\in \Zp$
  such that $M\sub \qmtch_k(X,Y)$ if $\ed(X,Y)\le k$. 
  Given $k$ and $\Enc^M(X,Y)$, the quasi-greedy encoding $\qgr_k(X,Y)$ can be computed in $\tOh(k^5+zk)$ time and $\tOh(k^2+z)$ space, where $z = |\LZ(X^M Y^M)|$.
  \end{corollary}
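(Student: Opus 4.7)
The plan is to reduce to \cref{prp:greedify} via the bijection between quasi-greedy alignments of $X,Y$ and greedy alignments of $X':=\$_1X$ and $Y':=\$_2Y$ established in \cref{cor:qgr_to_gr}. Set $M':=\{(x+1,y+1):(x,y)\in M\}$; the bijection between $\qga_k(X,Y)$ and $\ga_{k+1}(X',Y')$ identifies $\qmtch_k(X,Y)$ with the shift of $\mtch_{k+1}(X',Y')$ by $(1,1)$, so the hypothesis $M\sub \qmtch_k(X,Y)$ (when $\ed(X,Y)\le k$) translates to $M'\sub \mtch_{k+1}(X',Y')$ (when $\ed(X',Y')\le k+1$). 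It therefore suffices to build $\Enc^{M'}(X',Y')$ from $\Enc^M(X,Y)$, invoke \cref{prp:greedify} at threshold $k+1$ to obtain $\gr_{k+1}(X',Y')$, and then apply \cref{cor:qgr_to_gr}\ref{it:encodings} to convert back to $\qgr_k(X,Y)$.

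The first step is routine. From $\CR(X^MY^M)$, we use \cref{prp:RLSLP}\ref{it:extract} to extract $\CR(X^M)$ and $\CR(Y^M)$, construct the trivial one-symbol representations $\CR(\$_1)$ and $\CR(\$_2)$ via \cref{prp:RLSLP}\ref{it:fromLZ}, and concatenate them in order using \cref{prp:RLSLP}\ref{it:concat} to obtain $\CR(\$_1X^M\$_2Y^M)=\CR((X')^{M'}(Y')^{M'})$. Because $(X')^{M'}=\$_1X^M$ and $(Y')^{M'}=\$_2Y^M$, the number of dummy segments in each is at most one more than before, so the rank/select structures $\RS_\#((X')^{M'})$ and $\RS_\#((Y')^{M'})$ are obtained in $\Oh(k)$ additional time by shifting the underlying sorted lists. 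Moreover, $|\LZ((X')^{M'}(Y')^{M'})|\le z+2$ by \cref{fct:lz_properties}, so the whole conversion costs $\tOh(z)$ time and space.

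The second step applies \cref{prp:greedify} with threshold $k+1$ to $\Enc^{M'}(X',Y')$. By \cref{lm:greedyalg}, the size of $\brkp_{k+1}(X',Y')$ is $\Oh(k^5)$, so this step runs in $\tOh((z+2)(k+1)+(k+1)^2+k^5)=\tOh(zk+k^5)$ time and $\tOh(z+k^2)$ space, producing $\gr_{k+1}(X',Y')$ or certifying $\ed(X',Y')>k+1$ (in which case $\ed(X,Y)>k$ as well, and we output $\bot$). The third step uses \cref{cor:qgr_to_gr}\ref{it:encodings} to convert $\gr_{k+1}(X',Y')$ into $\qgr_k(X,Y)$ in $\Oh(k^2\log^4 n)$ time and space.

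The main (but minor) technical obstacle is verifying that the reduction preserves the input hypothesis: we must confirm that when $\ed(X,Y)\le k$ the hypothesis $M\sub \qmtch_k(X,Y)$ indeed yields $M'\sub \mtch_{k+1}(X',Y')$, and that when $\ed(X,Y)>k$ the subroutine from \cref{prp:greedify} still correctly returns $\bot$ (because $\ed(X',Y')=\ed(X,Y)+1>k+1$). Both follow directly from the bijection analysis already carried out in the proof of \cref{cor:qgr_to_gr}. Summing the three costs yields $\tOh(k^5+zk)$ time and $\tOh(k^2+z)$ space, matching the claimed bound.
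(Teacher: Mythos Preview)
Your proposal is correct and follows essentially the same route as the paper: shift $M$ to $M'$, build $\Enc^{M'}(\$_1X,\$_2Y)$, invoke \cref{prp:greedify} at threshold $k+1$ (bounding $|\brkp_{k+1}(\$_1X,\$_2Y)|$ via \cref{lm:greedyalg}), and convert back with \cref{cor:qgr_to_gr}. Your write-up is in fact more explicit than the paper's about how $\Enc^{M'}(\$_1X,\$_2Y)$ is assembled.

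One minor imprecision: the claim $|\LZ((X')^{M'}(Y')^{M'})|\le z+2$ does not follow from \cref{fct:lz_properties} as stated. That observation gives $|\LZ(X^M)|\le z$ (prefix monotonicity) and sub-additivity, but neither bounds $|\LZ(Y^M)|$ by $z$, since $Y^M$ is a suffix rather than a prefix of $X^MY^M$. What does hold is $z'=\tOh(z)$: extracting $\CR(X^M)$ and $\CR(Y^M)$ via \cref{prp:RLSLP}\ref{it:extract} yields RLSLPs of size $\Oh(z\log n)$ each (through \cref{fct:rlslp}\ref{it:sub}), so the concatenated representation and hence $|\LZ(\$_1X^M\$_2Y^M)|$ are $\tOh(z)$. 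Since all bounds in the statement are $\tOh(\cdot)$, this does not affect your conclusion.
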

  \begin{proof}
    Let $\$_1\ne \$_2$ be auxiliary symbols not in $\Sigma\cup\{\#\}$.
    We construct $M'=\{(x+1,y+1): (x,y)\in M\}$ and $\Enc^{M'}(\$_1X,\$_2Y)$.
    Note that $M'$ is a non-crossing matching of $\$_1X,\$_2Y$.
    Moreover, if $M\sub \qmtch_k(X,Y)$, then $M'\sub \mtch_{k+1}(\$_1X,\$_2Y)$ holds
    by \cref{cor:qgr_to_gr}. Hence, we can construct $\gr_{k+1}(\$_1X,\$_2Y)$ using \cref{prp:greedify}. 
    Then, we derive $\qgr_k(X,Y)$ using \cref{cor:qgr_to_gr} again. By~\cref{lm:greedyalg}, we have $|\brkp_{k+1}(\$_1X,\$_2Y)| = \Oh(k^5)$, which gives the desired time bound.
  \end{proof}
  
\begin{observation}\label{obs:qgr_larger}
Let $k' \le k$. Given $\qgr_{k}(X,Y)$, the encoding $\qgr_{k'}(X,Y)$ can be computed in $\tOh(k^5)$ time and $\tOh(k^2)$ space.
\end{observation}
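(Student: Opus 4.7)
The strategy is to observe that the encoding $\qgr_k(X,Y)$ is already essentially the input that \cref{cor:qgreedify} expects, and to rerun that procedure with the tighter threshold $k'$ in place of $k$. If $\qgr_k(X,Y)=\bot$, then $\ed(X,Y)>k\ge k'$, so $\qgr_{k'}(X,Y)=\bot$ as well and the output is trivial. Otherwise, write $\qgr_k(X,Y)=\Enc^M(X,Y)$ with $M=\qmtch_k(X,Y)$.

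First, I would verify the precondition of \cref{cor:qgreedify} at threshold $k'$: because $k'\le k$, every alignment of cost at most $k'$ has cost at most $k$, so $\qga_{k'}(X,Y)\subseteq \qga_{k}(X,Y)$. Taking intersections over a smaller family only enlarges the set of common matches, hence $M=\qmtch_k(X,Y)\subseteq \qmtch_{k'}(X,Y)$ whenever $\ed(X,Y)\le k'$. This is exactly the hypothesis required to feed $\Enc^M(X,Y)$ into \cref{cor:qgreedify} with threshold $k'$, and its output is by definition $\qgr_{k'}(X,Y)$.

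For the complexity, \cref{cor:qgreedify} runs in $\tOh((k')^5+zk')$ time and $\tOh((k')^2+z)$ space, where $z=|\LZ(X^M Y^M)|$. To bound $z$, I would invoke the correspondence of \cref{cor:qgr_to_gr}: letting $X'=\$_1 X$, $Y'=\$_2 Y$, and $M'=\mtch_{k+1}(X',Y')$, the strings $(X')^{M'}$ and $(Y')^{M'}$ coincide with $\$_1 X^M$ and $\$_2 Y^M$. Thus \cref{lm:greedy_size} applied at threshold $k+1$ gives $|\LZ(X^M)|,|\LZ(Y^M)|=\Oh(k^2)$, and sub-additivity from \cref{fct:lz_properties} yields $z=\Oh(k^2)$. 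Plugging this in gives the claimed $\tOh(k^5)$ time and $\tOh(k^2)$ space.

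The only potential pitfall, and the step I would be most careful about, is the regime $k'<\ed(X,Y)\le k$: here the precondition "$M\subseteq \qmtch_{k'}(X,Y)$" is vacuous (since $\qmtch_{k'}$ is not even defined) but \cref{cor:qgreedify} must still correctly return $\bot$. This is handled because its proof reduces to \cref{prp:greedify}, which begins by calling \cref{cor:ED} at threshold $k'$ to decide whether $\ed(X,Y)\le k'$, and returns $\bot$ as soon as the answer is negative. No additional argument is needed, and the main technical work has already been done in \cref{cor:qgreedify}.
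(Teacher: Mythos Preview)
Your proposal is correct and follows exactly the paper's approach: the paper's proof is just the two-sentence observation that $\qmtch_k(X,Y)\subseteq\qmtch_{k'}(X,Y)$ by definition, followed by an appeal to \cref{cor:qgreedify}. You have simply fleshed out the details the paper leaves implicit, including the bound on $z$ via \cref{cor:qgr_to_gr} and \cref{lm:greedy_size} and the handling of the $\bot$ and $k'<\ed(X,Y)\le k$ cases.
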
 
\begin{proof}
For $k' \le k$, we have $\qmtch_k(X,Y) \sub \qmtch_{k'}(X,Y)$ by definition. The claim follows from~\cref{cor:qgreedify}. 
\end{proof}

We now show that the quasi-greedy encodings are \emph{concatenatable}.

\begin{lemma}\label{lm:greedy_concatenation}
Consider strings $X_p, Y_p, X_s, Y_s \in \Sigma^{\le n}$ and $k\in \Zp$. Assume $\max\{\bigl| |X_p|-|Y_p| \bigr|, \bigl| |X_s|-|Y_s| \bigr| \} =\Oh(k)$. Given $\qgr_{k+||X_s|-|Y_s||}(X_p, Y_p)$ and $\qgr_{k+||X_p|-|Y_p||}(X_s, Y_s)$, one can compute $\qgr_k(X_pX_s, Y_pY_s)$ in $\tOh(k^5)$ time and $\tOh(k^2)$ space.
\end{lemma}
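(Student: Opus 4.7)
The plan is to reduce to \cref{cor:qgreedify}. Set $d_p := k+\bigl||X_s|-|Y_s|\bigr|$ and $d_s := k+\bigl||X_p|-|Y_p|\bigr|$; both are $\Oh(k)$ by assumption. Let $M_p := \qmtch_{d_p}(X_p,Y_p)$ and $M_s := \qmtch_{d_s}(X_s,Y_s)$ be the matchings implicitly stored in the two input encodings, and define
\[ M := M_p \cup \{(x+|X_p|,\,y+|Y_p|) : (x,y)\in M_s\}. \]
Since every pair of $M_p$ lies strictly upper-left of every shifted pair from $M_s$, the set $M$ is a non-crossing matching of $X_pX_s$ and $Y_pY_s$; moreover $(X_pX_s)^M = (X_p)^{M_p}(X_s)^{M_s}$ and $(Y_pY_s)^M = (Y_p)^{M_p}(Y_s)^{M_s}$ directly from \cref{def:masked_encoding}.

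The crucial step is to verify that $M \subseteq \qmtch_k(X_pX_s,Y_pY_s)$ whenever $\ed(X_pX_s,Y_pY_s)\le k$. Pick any $\A \in \qga_k(X_pX_s,Y_pY_s)$. \cref{lem:concat} yields $\A_p := \A_{[1\dd|X_p|+1),[1\dd|Y_p|+1)}\in \qga_{d_p}(X_p,Y_p)$ and $\A_s := \A_{[|X_p|+1\dd],[|Y_p|+1\dd]}\in \qga_{d_s}(X_s,Y_s)$, so by definition of $\qmtch$ we have $M_p\subseteq \mtch(\A_p)$ and $M_s\subseteq \mtch(\A_s)$. Unfolding the induced alignment, $\mtch(\A_p)=\mtch(\A)\cap([1\dd|X_p|]\times[1\dd|Y_p|])$, while $\mtch(\A_s)$ sits inside $\mtch(\A)$ after the shift by $(|X_p|,|Y_p|)$. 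Hence $M\subseteq \mtch(\A)$, and intersecting over all such $\A$ gives $M\subseteq \qmtch_k(X_pX_s,Y_pY_s)$.

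Next, I would assemble $\Enc^M(X_pX_s,Y_pY_s)$ from the two input encodings. Using \cref{prp:RLSLP}\ref{it:extract} I split $\CR((X_p)^{M_p}(Y_p)^{M_p})$ and $\CR((X_s)^{M_s}(Y_s)^{M_s})$ into the $\CR$ structures of the four individual masked pieces, then reassemble via \cref{prp:RLSLP}\ref{it:concat} into $\CR((X_p)^{M_p}(X_s)^{M_s}(Y_p)^{M_p}(Y_s)^{M_s})$, which is exactly $\CR((X_pX_s)^M(Y_pY_s)^M)$. The rank/select components are built by merging the sorted dummy-segment lists of the corresponding factors, shifting those from $(X_s)^{M_s}$ by $|X_p|$ and those from $(Y_s)^{M_s}$ by $|Y_p|$; by \cref{lm:greedy_size} (transferred to the quasi-greedy setting via the sentinel reduction of \cref{cor:qgr_to_gr}) each factor has only $\Oh(k)$ dummy segments. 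This assembly phase runs in $\tOh(k^2)$ time and space.

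Finally, feeding $(k,\Enc^M(X_pX_s,Y_pY_s))$ to \cref{cor:qgreedify} produces $\qgr_k(X_pX_s,Y_pY_s)$. Sub-additivity of the LZ77 parsing combined with the $\Oh(d^2) = \Oh(k^2)$ bounds on $|\LZ((X_p)^{M_p})|$, $|\LZ((X_s)^{M_s})|$, $|\LZ((Y_p)^{M_p})|$, $|\LZ((Y_s)^{M_s})|$ yields $z := |\LZ((X_pX_s)^M(Y_pY_s)^M)| = \Oh(k^2)$. Hence \cref{cor:qgreedify} runs in $\tOh(k^5+zk)=\tOh(k^5)$ time and $\tOh(k^2+z)=\tOh(k^2)$ space, dominating every earlier step. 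The main point to scrutinise is the transfer of \cref{lm:greedy_size} from $\mtch_k$ to $\qmtch_k$; since \cref{cor:qgr_to_gr}\ref{it:dummy} matches up the dummy segments one-to-one and the sentinel reduction inflates LZ sizes by only an additive constant, this is routine.
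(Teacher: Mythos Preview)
Your proposal is correct and follows essentially the same approach as the paper: define $M$ as the union of the two input matchings (with the suffix one shifted), invoke \cref{lem:concat} to argue $M\subseteq\qmtch_k(X_pX_s,Y_pY_s)$, assemble $\Enc^M$ by extracting and reconcatenating the four masked pieces via \cref{prp:RLSLP}\ref{it:extract}\ref{it:concat}, and hand the result to \cref{cor:qgreedify}. You supply more detail than the paper does---in particular, you spell out why the induced alignments witness $M_p\subseteq\mtch(\A)$ and $M_s$ (shifted) $\subseteq\mtch(\A)$, and you make the $z=\Oh(k^2)$ bound explicit via LZ sub-additivity and the transfer of \cref{lm:greedy_size} through \cref{cor:qgr_to_gr}---whereas the paper simply cites \cref{lem:concat} and remarks that the assumption $\bigl||X_p|-|Y_p|\bigr|=\Oh(k)$ is what keeps the cost at $\tOh(k^5)$.
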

\begin{proof}
For brevity, let $M_p = \qmtch_{k+||X_s|-|Y_s||}(X_p, Y_p)$ and $M_s = \qmtch_{k+||X_p|-|Y_p||}(X_s, Y_s)$. Let also $X = X_pX_s$, $Y=Y_pY_s$.  

By~\cref{lem:concat}, if $\ed(X,Y)\le k$, then $M:=M_p \cup \{(x+|X_p|,y+|Y_p|) : (x,y)\in M_s\} \sub \qmtch_k(X, Y)$. Hence, we shall construct $\Enc^{M}(X,Y)$ and then apply \cref{cor:qgreedify}.

For this, we extract $\CR(X_p^{M_p}), \CR(X_s^{M_s}), \CR(Y_p^{M_p}), \CR(Y_s^{M_s})$ using \cref{prp:RLSLP}\ref{it:extract}, and then concatenate them to $\CR(X^MY^M)=\CR(X_p^{M_p}X_s^{M_s}Y_p^{M_p}Y_s^{M_s})$ using \cref{prp:RLSLP}\ref{it:concat}. Overall, this takes $\Oh(k^2\log^4 n)$ time.
Then, we use \cref{prop:rank} to build $\RS_\#(X^M)$ and $\RS_\#(Y^M)$ in $\tOh(k)$ time.

Finally, we note that using \cref{cor:qgreedify} costs $\tOh(k^5)$ time and $\tOh(k^2)$ space
(here we use the fact that $\bigl| |X_p|-|Y_p| \bigr| = \Oh(k)$).
\end{proof}

Finally, as a corollary we derive an algorithm that can compute the quasi-greedy encoding of arbitrarily long strings. 

\begin{corollary}\label{cor:greedy_long}
Assuming constant-time random access to a string $X \in \Sigma^{\ell}$ and streaming access to a string $Y \in \Sigma^{\ell}$, where $\ell \le n$, there is an algorithm that constructs $\qgr_k(X,Y)$ with a delay of at most $k^2$ characters in $\tOh(k^3)$ amortised time per character and $\tOh(k^2)$ space. The delay means that at the moment when the $y$-th character of $Y$ arrives, the algorithm knows $\qgr_k(X[\dd y'], Y[\dd y'])$, where $|y-y'| \le k^2$.
\end{corollary}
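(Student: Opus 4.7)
The plan is to process $X$ and $Y$ in aligned blocks of length $k^2$, compute the quasi-greedy encoding of each block pair via \cref{claim:short_strings_qga}, and fold it into a running accumulated encoding via \cref{lm:greedy_concatenation}. Concretely, I maintain $E_i := \qgr_k(X[\dd ik^2], Y[\dd ik^2])$, starting from the trivial $E_0$ for empty strings. As the characters of $Y$ stream in, I buffer them until the $(i+1)$-th block of $k^2$ characters is complete; I then use the random-access oracle on $X$ to read the corresponding block $X(ik^2\dd (i+1)k^2]$ and invoke \cref{claim:short_strings_qga} to produce $\qgr_k$ of the two length-$k^2$ blocks in $\tOh(k^3)$ time and $\tOh(k^2)$ space.

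The key observation that makes concatenation painless is that both $X$ and $Y$ are partitioned along the same positions, so in the application of \cref{lm:greedy_concatenation} the prefix and suffix parts have matching lengths, i.e., $|X_p|=|Y_p|$ and $|X_s|=|Y_s|$. The threshold-adjustment terms $||X_s|-|Y_s||$ and $||X_p|-|Y_p||$ therefore both vanish, and I can concatenate $E_i$ with the freshly computed block encoding at threshold $k$ to obtain $E_{i+1}$ in $\tOh(k^5)$ time and $\tOh(k^2)$ space; the previous $E_i$ and the block buffer are then discarded. The last block, possibly shorter than $k^2$, is handled analogously. The total space is dominated by the current $E_i$ (which takes $\tOh(k^2)$ space by \cref{cor:qgr_to_gr}\ref{it:qga_space}, since $\ell\le n$), the block buffer of at most $k^2$ characters (i.e., $\tOh(k^2)$ bits), and the $\tOh(k^2)$ workspace of the two subroutines.

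For the delay and amortised-time guarantees, note that immediately after the $(ik^2)$-th character of $Y$ arrives, $E_i$ is available; for any $y$ with $ik^2 \le y < (i+1)k^2$, taking $y' = ik^2$ gives $|y-y'| \le k^2$, as required. Each block contributes $\tOh(k^5)$ work distributed over its $k^2$ characters, which is $\tOh(k^3)$ amortised per character. The only mild subtlety will be ensuring that the $\tOh(k^5)$ burst of work at block boundaries is compatible with the amortised-per-character phrasing; this is automatic from the total-work bound, but if worst-case per-character time were needed, the block processing could simply be pipelined across the $k^2$ streaming steps of the following block. I do not foresee a genuine obstacle, as both \cref{claim:short_strings_qga} and \cref{lm:greedy_concatenation} apply directly once the block alignment is fixed.
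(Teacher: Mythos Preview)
Your proposal is correct and essentially identical to the paper's proof: both partition $X$ and $Y$ into aligned blocks of length $k^2$, build the block encoding with \cref{claim:short_strings_qga}, and fold it into the running encoding with \cref{lm:greedy_concatenation}, exploiting the equal block lengths so the threshold stays at $k$. Your explicit remark that the length-difference terms vanish in \cref{lm:greedy_concatenation} is exactly the point the paper relies on implicitly.
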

\begin{proof}
If $\ell \le k^2$, construct $\qgr_k(X, Y)$ via~\cref{claim:short_strings_qga} in $\tOh(k^3)$ (total) time and $\tOh(k^2)$ space. Otherwise, partition the strings into non-overlapping blocks $X = X_1 \cdots X_p$ and $Y = Y_1  \cdots Y_p$ so that $|X_i| = |Y_i| = k^2$ for all $i\in [1\dd p)$ and $|X_p| = |Y_z| = \ell \bmod k^2$.
Suppose that we have computed  $\qgr_{k}(X_1 \cdots X_{i-1}, Y_1 \cdots Y_{i-1})$ for $i\in [1\dd p)$. Compute $\qgr_{k}(X_1 \cdots X_{i}, Y_1 \cdots Y_{i})$ in the following manner: 
first, compute $\qgr_{k}(X_{i}, Y_{i})$ in $\Oh(k^5)$ time and $\Oh(k^2)$ space via~\cref{claim:short_strings_qga}, and then compute $\qgr_{k}(X_1 \cdots X_{i}, Y_1 \cdots Y_{i})$ in $\tOh(k^5)$ time and $\tOh(k^2)$ space via \cref{lm:greedy_concatenation}.
\end{proof}

\subsection{Products of Greedy Alignments}\label{sec:product}
\defalignmentsproduct*

\lmgreedyproduct*
\begin{proof}
We proceed by induction on $|X|+|Y|+|Z|$. In the base case, when at least one of the strings $X,Y,Z$ is empty, we set $\A^{X,Y}$ and $\A^{Y,Z}$ to be any greedy optimal alignments of $X,Y$ and $Y,Z$, respectively,
so that $\cost(\A^{X,Y})=\ed(X,Y)$ and $\cost(\A^{Y,Z})=\ed(Y,Z)\le \ed(X,Y)+\ed(X,Z) \le \ed(X,Y)+\cost(\A^{X,Z})$.
Moreover, it easy to check that $\A^{X,Z}$ is a product of $\A^{X,Y}$ and $\A^{Y,Z}$ because
two out of these three alignments simply delete all characters of the non-empty string.

In the inductive step, we assume that all strings $X,Y,Z$ are non-empty, and we consider several cases.
\begin{enumerate}
\item $\mathbf{X[1] = Z[1] = Y[1]}$. \\We recurse on $X'=X[2\dd]$, $Y'=Y[2\dd]$, $Z'=Z[2\dd]$, and $\A^{X',Z'
}=\A^{X,Z}_{[2\dd],[2\dd ]}$, which is greedy due to $X[1]\simeq_{\A^{X,Z}} Z[1]$. 
This yields greedy alignments $\A^{X',Y'}\!\!, A^{Y',Z'}$ of cost at most $d' = 2\cost(\A^{X,Z})+\ed(X,Y)=d$.
We extend them so that $X[1]\simeq_{\A^{X,Y}} Y[1]$ and $Y[1]\simeq_{\A^{Y,Z}} Z[1]$,
obtaining alignments of cost up to $d$.

\item $\mathbf{X[1] = Z[1] \neq Y[1]}$.\\ In this case, we have $\ed(X,Y)>\min(\ed(X[2\dd],Y[2\dd]),\ed(X[2\dd],Y),\ed(X,Y[2\dd]))$.
\begin{enumerate}[label=(\alph*)]
 \item If $\ed(X,Y)>\ed(X[2\dd],Y[2\dd])$, we recurse on $X'=X[2\dd]$, $Y'=Y[2\dd]$, $Z'=Z[2\dd]$, and $\A^{X',Z'}=\A^{X,Z}_{[2\dd ],[2\dd ]}$, which is greedy due to $X[1]\simeq_{\A^{X,Z}} Z[1]$.
This yields greedy alignments $\A^{X',Y'}\!\!, A^{Y',Z'}$ of cost at most $d' = 2\cost(\A^{X,Z})+\ed(X,Y)-1 = d-1$.
We extend them so that $X[1]\sim_{\A^{X,Y}} Y[1]$ and $Y[1]\sim_{\A^{Y,Z}} Z[1]$, obtaining alignments of cost up to $d$.

 \item If $\ed(X,Y)>\ed(X[2\dd],Y)$, we recurse on $X'=X[2\dd]$, $Y'=Y$, $Z'=Z[2\dd]$, and $\A^{X',Z'}=\A^{X,Z}_{[2\dd ],[2\dd ]}$, which is greedy due to $X[1]\simeq_{\A^{X,Z}} Z[1]$.
 This yields greedy alignments $\A^{X',Y'}\!\!, A^{Y',Z'}$ of cost at most $d' = 2\cost(\A^{X,Z})+\ed(X,Y)-1 = d-1$.
 We extend them so that $\A^{X,Y}$ deletes $X[1]$ and $\A^{Y,Z}$ deletes $Z[1]$, obtaining alignments of cost up to $d$.

 \item If $\ed(X,Y)>\ed(X,Y[2\dd])$, we recurse on $X'=X$, $Y'=Y[2\dd]$, $Z'=Z$, and $\A^{X',Z'}=\A^{X,Z}$.
 This yields greedy alignments $\A^{X',Y'}\!\!, A^{Y',Z'}$ of cost at most $d' = 2\cost(\A^{X,Z})+\ed(X,Y)-1 = d-1$.
We extend them so that $\A^{X,Y}$ and $\A^{Y,Z}$ both delete $Y[1]$, 
obtaining alignments of cost up to $d$.
\end{enumerate}

\item $\mathbf{X[1]\ne Z[1] = Y[1]}$.
\begin{enumerate}[label=(\alph*)]
\item If $\A^{X,Z}$ deletes $X[1]$, we recurse on $X'=X[2\dd]$, $Y'=Y$, $Z'=Z$, and $\A^{X',Z'}=\A^{X,Z}_{[2\dd ],[1\dd ]}$. 
This yields greedy alignments $\A^{X',Y'}\!\!, A^{Y',Z'}$ of cost $\le d' = 2\cost(\A^{X,Z})-2+\ed(X',Y')\le d-1$. 
We derive $\A^{Y,Z}=\A^{Y',Z'}$ and extend $\A^{X',Y'}$ so that $\A^{X,Y}$ deletes $X[1]$,
obtaining an alignment of cost up to $d$.

\item If $\A^{X,Z}$ deletes $Z[1]$, we recurse on $X'=X$, $Y'=Y[2\dd]$, $Z'=Z[2\dd]$, and $\A^{X',Z'}=\A^{X,Z}_{[1\dd],[2\dd]}$. 
This yields greedy alignments $\A^{X',Y'}\!\!, A^{Y',Z'}$ of cost $\le d' = 2\cost(\A^{X,Z})-2+\ed(X',Y')\le d-1$. 
We extend them so that $\A^{X,Y}$ deletes $Y[1]$ and $Y[1]\simeq_{\A^{Y,Z}} Z[1]$, obtaining alignments of cost up to $d$.

\item If $X[1]\sim_{\A^{X,Z}} Z[1]$, we recurse on $X'=X[2\dd]$, $Y'=Y[2\dd]$, $Z'=Z[2\dd]$, and $\A^{X',Z'}=\A^{X,Z}_{[2\dd ],[2\dd ]}$.
This yields greedy alignments $\A^{X',Y'}\!\!, A^{Y',Z'}$ of cost $\le d' = 2\cost(\A^{X,Z})-2+\ed(X',Y')\le d-1$. 
We extend them so that  $X[1]\sim_{\A^{X,Y}} Y[1]$ and $Y[1]\simeq_{\A^{Y,Z}} Z[1]$,
obtaining alignments of cost up to $d$.
\end{enumerate}

\item $\mathbf{X[1] \ne Z[1] \ne Y[1]}$. (Note that this case allows both $X[1]=Y[1]$ and $X[1]\ne Y[1]$.)
\begin{enumerate}[label=(\alph*)]
\item If $\A^{X,Z}$ deletes $X[1]$, we recurse on $X'=X[2\dd]$, $Y'=Y[2\dd]$, $Z'=Z$, and $\A^{X',Z'}=\A^{X,Z}_{[2\dd ],[1\dd ]}$.
This yields greedy alignments $\A^{X',Y'}\!\!, A^{Y',Z'}$ of cost at most $d' = 2\cost(\A^{X,Z})-2+\ed(X',Y')\le d-2$. 
We extend them so that $X[1]\sim_{\A^{X,Y}} Y[1]$ and $\A^{Y,Z}$ deletes $Y[1]$, 
obtaining alignments of cost up to $d-1$.

\item If $\A^{X,Z}$ deletes $Z[1]$, we recurse on $X'=X$, $Y'=Y$, $Z'=Z[2\dd]$, and $\A^{X',Z'}=\A^{X,Z}_{[1\dd],[2\dd]}$. 
This yields greedy alignments $\A^{X',Y'}\!\!, A^{Y',Z'}$ of cost $\le d' = 2\cost(\A^{X,Z})-2+\ed(X,Y)= d-2$. 
We derive $\A^{X,Y}=\A^{X',Y'}$ and extend $\A^{Y',Z'}$ so that $\A^{Y,Z}$ deletes $Z[1]$,
obtaining an alignment of cost up to $d-1$.

\item If $X[1]\sim_{\A^{X,Z}} Z[1]$, we recurse on $X'=X[2\dd]$, $Y'=Y[2\dd]$, $Z'=Z[2\dd]$, and $\A^{X',Z'}=\A^{X,Z}_{[2\dd ],[2\dd ]}$.
This yields greedy alignments $\A^{X',Y'}\!\!, A^{Y',Z'}$ of cost $\le d' = 2\cost(\A^{X,Z})-2+\ed(X',Y')\le d-2$. 
We extend them so that $X[1]\sim_{\A^{X,Y}} Y[1]$ and $Y[1]\sim_{\A^{Y,Z}} Z[1]$, 
obtaining alignments of cost up to $d-1$.
\end{enumerate}
\end{enumerate}

In all the cases above, $\A^{X,Y}$ is greedy because $\A^{X',Y'}$ is greedy and $\A^{X,Y}$ matches $X[1]$ with $Y[1]$ whenever $X[1]=Y[1]$. 
Similarly, $\A^{Y,Z}$ is greedy because $\A^{Y',Z'}$ is greedy and $\A^{Y,Z}$ matches $Y[1]$ with $Z[1]$  whenever $Y[1]=Z[1]$.
Moreover, $\A^{X,Z}$ is a product of $\A^{X,Y}$ and $\A^{Y,Z}$ because each alignment starts with $(1,1)$ and since $\A^{X',Z'}$ is a product of $\A^{X',Y'}$ and $\A^{Y',Z'}$.
\end{proof}

Assume that we are given three strings $X, Y, Z$. Let $d = \ed(X,Y)+2k$, and define $\G_X = \gr_{d}(X,Y)$ and $\G_Z=\gr_d(Y,Z)$. We show that given $\G_X$ and $\G_Z$, we can compute an optimal alignment between $X$ and $Z$ efficiently if its cost is at most $k$. Let $M = \{(x,z) : \exists y \text{ such that } (x,y) \in \mtch_d(X,Y) \text{ and } (y,z) \in \mtch_d(Y,Z)\}$. 

\begin{lemma}\label{lm:greedy_encodings_product}
If $\ed(X,Z)\le k$, then $M \subseteq \mtch_k(X,Z)$. 
\end{lemma}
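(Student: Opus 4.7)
The plan is to fix an arbitrary alignment $\A^{X,Z} \in \ga_k(X,Z)$ and an arbitrary pair $(x,z) \in M$ with witness $y$ (so $(x,y) \in \mtch_d(X,Y)$ and $(y,z) \in \mtch_d(Y,Z)$), and show that $(x,z) \in \mtch(\A^{X,Z})$. The starting point is \cref{lm:greedy_product}: since $\ed(X,Z) \le k$ and $\cost(\A^{X,Z})\le k$, there exist $\A^{X,Y} \in \ga_d(X,Y)$ and $\A^{Y,Z} \in \ga_d(Y,Z)$ with $\A^{X,Z}$ a product of them. The defining property of $\mtch_d$ then gives $X[x] \simeq_{\A^{X,Y}} Y[y]$ and $Y[y] \simeq_{\A^{Y,Z}} Z[z]$, so the step immediately after $(x,y)$ in $\A^{X,Y}$ is diagonal to $(x+1,y+1)$, the step immediately after $(y,z)$ in $\A^{Y,Z}$ is diagonal to $(y+1,z+1)$, and $X[x]=Y[y]=Z[z]$.

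The central step is to prove that every element of $\A^{X,Z}$ lies in one of two disjoint regions, $A := [1\dd x]\times[1\dd z]$ or $B := [x+1\dd |X|+1]\times[z+1\dd |Z|+1]$. Given $(x',z') \in \A^{X,Z}$, pick any witness $Y''$ supplied by the product definition. If $Y'' \le y$, then $(x',Y'') \in \A^{X,Y}$ must occur at or before $(x,y)$ in $\A^{X,Y}$, because $(x,y)$ is the last element of $\A^{X,Y}$ with $Y$-coordinate at most $y$ (its successor is diagonal); monotonicity of the alignment then forces $x' \le x$. The symmetric argument applied to $(Y'',z') \in \A^{Y,Z}$, using that $(y,z)$ is the last element with first coordinate at most $y$, yields $z' \le z$, so $(x',z') \in A$. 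If instead $Y'' \ge y+1$, then $(x+1,y+1)$ and $(y+1,z+1)$ are the first elements of $\A^{X,Y}$ and $\A^{Y,Z}$ whose relevant coordinate is at least $y+1$, which analogously forces $x' \ge x+1$ and $z' \ge z+1$, placing $(x',z')$ in $B$.

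To conclude, I would observe that $(1,1) \in A$ and $(|X|+1,|Z|+1) \in B$ (using $x \in [1\dd |X|]$ and $z \in [1\dd |Z|]$), so the monotone path $\A^{X,Z}$ contains a consecutive step $(x_1,z_1) \to (x_2,z_2)$ crossing from $A$ to $B$. The constraints $x_1 \le x < x_2$ and $z_1 \le z < z_2$, combined with the step-size bound (each coordinate grows by at most $1$), pin down $(x_1,z_1)=(x,z)$ and $(x_2,z_2)=(x+1,z+1)$. Hence $(x,z) \in \A^{X,Z}$ takes a diagonal step, and since $X[x]=Z[z]$ this step is a match, proving $(x,z) \in \mtch(\A^{X,Z})$. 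The main subtlety I anticipate is the non-uniqueness of witnesses in the product definition, but it is dissolved by the disjointness of $A$ and $B$ in their $X$-coordinates: the same $(x',z')$ cannot admit witnesses on both sides of $y$, so the region assignment is a well-defined function of $(x',z')$ regardless of which witness is chosen.
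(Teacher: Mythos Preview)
Your proof is correct and follows essentially the same approach as the paper: both invoke \cref{lm:greedy_product} to decompose $\A^{X,Z}$, then do a case split on whether the witness $y'$ for a given element of $\A^{X,Z}$ satisfies $y'\le y$ or $y'>y$, using that $(x,y)$ and $(y,z)$ are followed by diagonal steps. The only cosmetic difference is that the paper phrases this as a contradiction (pick a ``crossing'' pair $(x',z')$ and show neither case for $y'$ is possible), whereas you phrase it directly via the region dichotomy $A\cup B$; the underlying reasoning is identical.
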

\begin{proof}
Suppose that $(x,y) \in \mtch_d(X,Y)$ and $(y,z) \in \mtch_d(Y,Z)$. 
For a proof by contradiction, suppose that $X[x] \not\sim_{\A} Z[z]$ for some $\A \in \ga_k(X,Z)$.
Then, there exists $(x',z')\in \A$ such that either $x'\le x$ and $z'>z$, or $x'>x$ and $z'\le z$.
By symmetry, without loss of generality, we consider the first alternative.
By \cref{lm:greedy_product}, $\A$ is a product $\A^{X,Y} \in \ga_d(X,Y)$ and $\A^{Y,Z} \in \ga_d(Y,Z)$.
According to \cref{def:alignments_product}, this means that there exists $y'$ such that 
$(x',y')\in \A^{X,Y}$ and $(y',z')\in \A^{Y,Z}$. If $y'\le y$, then $(y,z),(y',z')\in \A^{Y,Z}$
implies that $\A^{Y,Z}$ deletes $Z[z]$, contradicting $(y,z)\in \mtch(\A^{Y,Z})$.
Similarly, if $y'>y$, then $(x,y),(x',y')\in \A^{X,Y}$ implies that $\A^{X,Y}$ deletes $Y[y]$,
contradicting $(x,y)\in \mtch(\A^{X,Y})$.
This completes the proof that $X[x]\sim_{\A} Z[z]$ for every $\A \in \ga_k(X,Z)$.
Due to $X[x]=Y[y]=Z[z]$, we also have $X[x]\simeq_{\A} Z[z]$ for every $\A \in \ga_k(X,Z)$,
i.e., $(x,y)\in \mtch_k(X,Z)$ holds as claimed.
\end{proof}

\begin{lemma}\label{lm:RLSLPs_to_RLSLP}
If $\ed(X,Y) = \Oh(k)$ and $\G_X, \G_Z \neq \bot$, $\Enc^M(X,Z)$ can be computed in $\tOh(k^3)$ time and $\tOh(k^2)$ space.
\end{lemma}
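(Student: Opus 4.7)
The plan is to assemble $\Enc^M(X,Z)=(\CR(X^MZ^M),\RS_\#(X^M),\RS_\#(Z^M))$ piece by piece from what $\G_X$ and $\G_Z$ already encode. Writing $d=\ed(X,Y)+2k=\Oh(k)$, \cref{lm:greedy_size} gives $|\LZ(S)|=\Oh(k^2)$ and at most $\Oh(k)$ dummy segments for each $S\in\{X^{\mtch_d(X,Y)},Y^{\mtch_d(X,Y)},Y^{\mtch_d(Y,Z)},Z^{\mtch_d(Y,Z)}\}$; these dummy segments are read off via \cref{prop:rank}, and the four strings themselves are queriable through the $\CR$-components of $\G_X$ and $\G_Z$ using \cref{prp:RLSLP}.

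First I compute the dummy-segment structures of $X^M$ and $Z^M$. Because $\mtch_d(X,Y)$ is non-crossing, its dummy segments in $X^{\mtch_d(X,Y)}$ and $Y^{\mtch_d(X,Y)}$ form an order- and length-preserving bijection: each segment $[a_i,b_i]$ of $X^{\mtch_d(X,Y)}$ pairs with a segment $[c_i,d_i]$ of $Y^{\mtch_d(X,Y)}$ via the constant shift $a_i-c_i$, and the same is true for $\mtch_d(Y,Z)$. The $Y$-projection of $M$ is exactly the intersection of the two sets of $Y$-dummy segments; a linear merge yields $\Oh(k)$ intervals, and applying the two shifts produces the $\Oh(k)$ dummy segments of $X^M$ and $Z^M$. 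Feeding these sorted lists to \cref{prop:rank} builds $\RS_\#(X^M)$ and $\RS_\#(Z^M)$ in $\Oh(k)$ time.

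I then build $\CR(X^M)$ incrementally; $\CR(Z^M)$ is constructed symmetrically, and one further call to \cref{prp:RLSLP}\ref{it:concat} combines them into $\CR(X^MZ^M)$. Partition $X^M=F_1\cdots F_q$ by cutting at every endpoint of a dummy segment of $X^{\mtch_d(X,Y)}$ or of $X^M$; there are $q=\Oh(k)$ pieces, each of exactly one of three types: (a) a fragment of $X^{\mtch_d(X,Y)}$ outside its dummy segments, which coincides with $X^M$ there since the $x$-projection of $M$ is contained in that of $\mtch_d(X,Y)$; (b) a run $\#^{|F_j|}$ filling a dummy segment of $X^M$; or (c) a fragment inside some segment $[a_i,b_i]$ of $X^{\mtch_d(X,Y)}$ but outside every dummy segment of $X^M$, whose $x$-positions map via the shift to positions $y=x+(c_i-a_i)$ lying outside every dummy segment of $Y^{\mtch_d(Y,Z)}$, so that $F_j$ coincides with the corresponding fragment of $Y^{\mtch_d(Y,Z)}$. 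For types (a) and (c) I extract $\CR(F_j)$ from $\G_X$ or $\G_Z$ via \cref{prp:RLSLP}\ref{it:extract} in $\tOh(k^2)$ time; for type (b) I build $\CR(F_j)$ from the trivial $\Oh(1)$-phrase LZ-like representation via \cref{prp:RLSLP}\ref{it:fromLZ}. I then glue the pieces by repeated invocation of \cref{prp:RLSLP}\ref{it:concat}.

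The only subtle point is bounding $|\LZ(X_j)|$ for the prefixes $X_j:=F_1\cdots F_j$, so that each concatenation costs $\tOh(k^2)$. This holds because one can exhibit an SLP for $X_j$ of size $\tOh(k^2)$ by combining existing size-$\tOh(k^2)$ SLPs for $X^{\mtch_d(X,Y)}$ and $Y^{\mtch_d(Y,Z)}$ (guaranteed by their LZ sizes being $\Oh(k^2)$) with $\Oh(j)=\Oh(k)$ substring-extraction and concatenation non-terminals, each contributing only $\Oh(\log n)$ additional symbols; hence $|\LZ(X_j)|\le g^*(X_j)=\tOh(k^2)$, and \cref{prp:RLSLP}\ref{it:concat} costs $\tOh(k^2)$ per step. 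Summing $\Oh(k)$ extractions and concatenations for each of $X^M$ and $Z^M$, plus one final join, gives $\tOh(k^3)$ time in $\tOh(k^2)$ space. The main obstacle is purely the bookkeeping for the case analysis---verifying that case (c) really lands in an unmasked region of $Y^{\mtch_d(Y,Z)}$ and that the intermediate $\CR$ sizes stay $\tOh(k^2)$---both of which reduce to the definition of $M$ together with the LZ/SLP subadditivity already used elsewhere in the paper.
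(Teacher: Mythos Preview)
Your proposal is correct and follows essentially the same route as the paper. Both proofs rest on the same observation: outside the dummy segments of $X^{\mtch_d(X,Y)}$ the string $X^M$ agrees with $X^{\mtch_d(X,Y)}$, while inside each such segment $[a_i\dd b_i]$ (with counterpart $[c_i\dd d_i]$ in $Y$) one has $X^M[x]=Y^{\mtch_d(Y,Z)}[x+(c_i-a_i)]$, so $X^M$ can be assembled by $\Oh(k)$ extract/concat operations on the $\CR$ structures stored in $\G_X$ and $\G_Z$.

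The only difference is cosmetic: the paper does not split your cases (b) and (c). It simply overwrites each dummy segment $[a_i\dd b_i]$ of $X^{\mtch_d(X,Y)}$ with the single fragment $Y^{\mtch_d(Y,Z)}[c_i\dd d_i]$, which already carries $\#$'s exactly where $(y,z)\in\mtch_d(Y,Z)$ and the true characters elsewhere---so both of your sub-cases are handled by one copy-paste. This yields $\Oh(k)$ operations rather than your (still $\Oh(k)$) finer partition. On the other hand, you are more explicit than the paper about why the intermediate strings keep $|\LZ(\cdot)|=\tOh(k^2)$, which the paper leaves implicit in its ``$\Oh(k^2\log^4 n)$ per dummy segment'' claim.
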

\begin{proof}
Let us first explain how $X^M$ can be constructed.  
Consider a position $x\in [1\dd |X|]$ such that $(x,z) \notin M$ for every $z\in [1\dd |Z|]$. If $X^{\mtch_d(X,Y)}[x] \neq \#$, then $X^M[x] = X[x] = X^{\mtch_d(X,Y)}[x]$. Otherwise, $(x,y) \in \mtch_d(X,Y)$ for some $y\in [1\dd |Y|]$. By~\cref{lm:greedy_encodings_product}, we have $(y,z) \notin \mtch_d(Y,Z)$ for every $z\in [1\dd |Z|]$. Hence, $Y^{\mtch_d(Y,Z)}[y] \neq \#$, and we have $X[x] = Y[y] = Y^{\mtch_d(Y,Z)}[y]$. In words, non-dummy characters of $X^M$ can be retrieved from non-dummy characters of $X^{\mtch_d(X,Y)}$ and non-dummy characters of $Y^{\mtch_d(Y,Z)}$.

Thus, for each dummy segment $[\ell\dd r)$ in $X^{\mtch_d(X,Y)}$ and its counterpart $[\ell'\dd r')$ in $Y^{\mtch_d(X,Y)}$, we need to set $X^M[\ell \dd r):=Y^{\mtch_d(Y,Z)}[\ell'\dd r')$. Such a copy-paste operation can be implemented using \cref{prp:RLSLP}\ref{it:extract}\ref{it:concat}, in $\Oh(k^2 \log^4 n)$ time per dummy segment.
We can also keep track of the dummy segments in $X^M$ within the same procedure.
The algorithm for $Z^M$ is symmetric, and thus we can construct $\CR(X^MZ^M)$ along with the dummy segments
in $\tOh(k^3)$ time and $\tOh(k^2)$ space. Finally, we build $\RS_\#(X^M)$ and $\RS_\#(Z^M)$ in $\Oh(k)$ time.
\end{proof}

\begin{corollary}\label{cor:greedy_to_ed}
Given $\G_X$ and $\G_Z$. If $\ed(X,Y) = \Oh(k)$, then we can compute $\min(k+1,\ed(X,Z))$ in $\tOh(k^3)$ time and $\tOh(k^2)$ space.
\end{corollary}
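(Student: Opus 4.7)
The plan is to chain together the three preceding results in a straightforward way. First, I would invoke \cref{lm:RLSLPs_to_RLSLP} on the input encodings $\G_X = \gr_d(X,Y)$ and $\G_Z = \gr_d(Y,Z)$ to construct $\Enc^M(X,Z)$ for the composed non-crossing matching $M = \{(x,z) : \exists y,\ (x,y)\in \mtch_d(X,Y) \text{ and } (y,z)\in \mtch_d(Y,Z)\}$. This step costs $\tOh(k^3)$ time and $\tOh(k^2)$ space (the hypothesis $\ed(X,Y) = \Oh(k)$ guarantees that $d = \Oh(k)$, which is what \cref{lm:RLSLPs_to_RLSLP} requires).

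Next, I would feed $\Enc^M(X,Z)$ and the threshold $k$ into the procedure of \cref{cor:ED} to obtain an integer $d^\ast \in [0\dd k+1]$. To apply \cref{cor:ED}, we must check that $M$ is a non-crossing matching of $X,Z$ and that, whenever $\ed(X,Z)\le k$, we have $M \sub \mtch_k(X,Z)$. Non-crossing-ness of $M$ follows immediately from non-crossing-ness of $\mtch_d(X,Y)$ and $\mtch_d(Y,Z)$: if $(x,z),(x',z')\in M$ were a crossing pair with witnesses $y,y'$, then either $(x,y),(x',y')$ would cross in $\mtch_d(X,Y)$ or $(y,z),(y',z')$ would cross in $\mtch_d(Y,Z)$. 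The inclusion $M\sub\mtch_k(X,Z)$ under the assumption $\ed(X,Z)\le k$ is precisely the content of \cref{lm:greedy_encodings_product}. Consequently, \cref{cor:ED} returns $d^\ast = \min(k+1,\ed(X,Z))$, which is exactly the desired output.

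Adding up the complexities, constructing $\Enc^M(X,Z)$ dominates the running time at $\tOh(k^3)$, while the Landau--Vishkin-style computation inside \cref{cor:ED} costs only $\Oh(k^2\log n) = \tOh(k^2)$ additional time. The working space stays within $\tOh(k^2)$ throughout, since neither \cref{lm:RLSLPs_to_RLSLP} nor \cref{cor:ED} exceeds this bound. There is no genuine obstacle here: all the conceptual work has already been done in establishing \cref{lm:greedy_encodings_product} (correctness of the composed matching) and \cref{lm:RLSLPs_to_RLSLP} (efficient construction of $\Enc^M(X,Z)$ from the two greedy encodings). The corollary is essentially just the packaging of these two results together with \cref{cor:ED}.
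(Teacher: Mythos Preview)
Your approach matches the paper's almost exactly: construct $\Enc^M(X,Z)$ via \cref{lm:RLSLPs_to_RLSLP}, verify (using \cref{lm:greedy_encodings_product}) that $M\sub\mtch_k(X,Z)$ whenever $\ed(X,Z)\le k$, then apply \cref{cor:ED}.

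There is one small gap. You invoke \cref{lm:RLSLPs_to_RLSLP} directly, but that lemma explicitly requires $\G_X,\G_Z\ne\bot$. While $\G_X=\gr_d(X,Y)\ne\bot$ is automatic (since $d=\ed(X,Y)+2k\ge\ed(X,Y)$), nothing in the hypotheses rules out $\G_Z=\gr_d(Y,Z)=\bot$, i.e., $\ed(Y,Z)>d$. The paper handles this case up front: if $\G_Z=\bot$, then by (the contrapositive of) \cref{lm:greedy_product} we must have $\ga_k(X,Z)=\emptyset$ and hence $\ed(X,Z)>k$, so we can immediately return $k+1$. Once you add this one-line check, your argument is complete and identical to the paper's.
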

\begin{proof}
  If $\G_X$ or $\G_Z$ equals to $\bot$, then $\ed(X,Z)>k$ by \cref{lm:greedy_product}. Otherwise, construct $\Enc^{M}(X,Z)$ using \cref{lm:RLSLPs_to_RLSLP}. Finally, we pass $k$ and $\Enc^{M}(X,Z)$ to the algorithm of \cref{cor:ED},
  and we return the resulting value $d$.

  Note that $M$ is a non-crossing matching of $X,Z$, so $d = k+1$ is correctly returned if $\ed(X,Z)>k$.
  If $\ed(X,Z)\le k$, then \cref{lm:greedy_encodings_product} implies $M\sub \mtch_k(X,Z)$,
  and thus $d = \ed(X,Z)$ holds as claimed.
  
  The overall runtime and space complexity are dominated by the procedure of \cref{lm:RLSLPs_to_RLSLP}. 
\end{proof}

\begin{remark}
Using \cref{prp:greedify} instead of \cref{cor:ED}, we could construct $\gr_k(X,Z)$.
\end{remark}

\begin{corollary}\label{cor:qgr_to_ed}
Consider three strings $X, Y, Z$. Let $d = \ed(X,Y)+2k$, and assume that we are given $\qgr_{d}(X,Y)$ and $\qgr_d(Y,Z)$. If $\ed(X,Y) = \Oh(k)$, then we can compute $\min(k+1,\ed(X,Z))$ in $\tOh(k^3)$ time and $\tOh(k^2)$ space.
\end{corollary}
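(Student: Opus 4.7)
The plan is to reduce to \cref{cor:greedy_to_ed}, the analogous statement for greedy encodings, by converting both input encodings via \cref{cor:qgr_to_gr}. Recall that, after prepending distinct sentinel symbols $\$_1\ne \$_2\notin\Sigma$, the latter corollary interconverts $\qgr_k(X,Y)$ and $\gr_{k+1}(\$_1 X,\$_2 Y)$ in $\tOh(k^2)$ time and space.

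First, I will fix two sentinels $\$_1,\$_2\notin\Sigma$ with $\$_1\ne\$_2$. Applying \cref{cor:qgr_to_gr}\ref{it:encodings} to $\qgr_d(X,Y)$ produces $\gr_{d+1}(\$_1 X,\$_2 Y)$. Swapping the roles of $\$_1$ and $\$_2$ in the same corollary turns $\qgr_d(Y,Z)$ into $\gr_{d+1}(\$_2 Y,\$_1 Z)$. The crucial design choice is to reuse $\$_2$ as the left sentinel of $Y$ in both calls (so that the two greedy encodings share a common middle string $\$_2 Y$) and to reuse $\$_1$ as the left sentinel of both outer strings $X$ and~$Z$.

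Next, I will invoke \cref{cor:greedy_to_ed} with threshold $k$ on the pair $\gr_{d+1}(\$_1 X,\$_2 Y)$ and $\gr_{d+1}(\$_2 Y,\$_1 Z)$. To check its hypothesis, note that $\$_1\ne \$_2$ lie outside $\Sigma$, so every alignment of $\$_1 X$ and $\$_2 Y$ pays exactly one edit on the sentinel prefix; hence $\ed(\$_1 X,\$_2 Y)=\ed(X,Y)+1$. Consequently $(d+1) = \ed(\$_1 X,\$_2 Y)+2k$, matching the required form, and $\ed(\$_1 X,\$_2 Y)=\Oh(k)$ follows from the assumed $\ed(X,Y)=\Oh(k)$. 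The corollary returns $\min(k+1,\ed(\$_1 X,\$_1 Z))$; since both outer strings begin with $\$_1$, an optimal alignment matches the two sentinels for free, so $\ed(\$_1 X,\$_1 Z)=\ed(X,Z)$ and the returned value equals $\min(k+1,\ed(X,Z))$, as required. The degenerate case where $\qgr_d(Y,Z)=\bot$ is handled automatically: then $\gr_{d+1}(\$_2 Y,\$_1 Z)=\bot$ and \cref{cor:greedy_to_ed} correctly returns $k+1$, which is indeed $\min(k+1,\ed(X,Z))$ by the triangle inequality $\ed(X,Z)\ge \ed(Y,Z)-\ed(X,Y)>2k\ge k$.

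The total cost is dominated by \cref{cor:greedy_to_ed}, giving $\tOh(k^3)$ time and $\tOh(k^2)$ space; the two conversions via \cref{cor:qgr_to_gr} add only $\tOh(k^2)$. There is no new combinatorial content here; the only subtlety is the sentinel bookkeeping, namely checking that the single unit by which \cref{cor:qgr_to_gr} inflates the threshold (from $d$ to $d+1$) is exactly offset by the one extra edit that distinct sentinels force on $\ed(\$_1 X,\$_2 Y)$ while identical sentinels contribute nothing to $\ed(\$_1 X,\$_1 Z)$.
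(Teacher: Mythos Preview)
Your proof is correct and follows essentially the same approach as the paper: convert the quasi-greedy encodings to greedy encodings via \cref{cor:qgr_to_gr} and apply \cref{cor:greedy_to_ed}. The paper uses three distinct sentinels $\$_1,\$_2,\$_3$ and computes $\min(k+2,\ed(\$_1X,\$_3Z))-1$, whereas your choice to reuse $\$_1$ on both outer strings so that $\ed(\$_1X,\$_1Z)=\ed(X,Z)$ is a slightly cleaner variant that avoids the final subtraction.
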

\begin{proof}
We compute $\G_X = \gr_{d+1}(\$_1 X, \$_2 Y)$ and $\qgr_{d+1}(\$_2Y, \$_3 Z)$ in $\tOh(k^2)$ time and space via~\cref{cor:qgr_to_gr}, and apply~\cref{cor:greedy_to_ed} to compute $\min(k+2,\ed(\$_1X,\$_2 Z))-1=\min(k+1,\ed(X,Z))$ in $\tOh(k^3)$ time and $\tOh(k^2)$ space. 
\end{proof}

\section{Edit Distance Sketches}\label{sec:sketches}
\subsection{CGK embedding}\label{sec:CGK}
In this section, we prove \cref{prp:alg} based on the CGK embedding introduced in~\cite{CGK}. Recall that the Hamming distance between the embeddings of two strings $X,Y\in \Sigma^{\le n}$ is bounded in terms of the edit distance $\ed(X,Y)$, which allows using Hamming distance sketches to approximate edit distance.

\newcommand{\Hash}{\mathcal{H}}
\begin{definition}[CGK embedding~\cite{CGK}]
Consider an alphabet $\Sigma$, a sentinel character $\bot\notin\Sigma$, and a 2-independent family $\Hash$ of hash functions $h:\Sigma \to \{0,1\}$.
For an integer $n\in \Zp$, a (uniformly random) sequence $R\in \Hash^{3n}$, and a string $S\in \Sigma^{\le n}$,
the \emph{CGK walk} $W_{\CGK}(S)=(s_t)_{t=1}^{3n+1}$
and the \emph{CGK embedding} $\CGK(S)\in (\Sigma\cup\{\bot\})^{3n}$ are defined by the following algorithm:
\begin{algorithm}
	\SetKwInput{KwData}{Randomness}
    \SetAlgoNoLine
	\SetKwBlock{Begin}{}{end}
	\KwIn{An integer $n\in \Zp$, a string $S\in \Sigma^{\le n}$.}
	\KwData{A sequence $R\in \Hash^{3n}$ of 2-independent hash functions $\Sigma \to \{0,1\}$.}
	\KwOut{The CGK walk $W_{\CGK}(S)=(s_t)_{t=1}^{3n+1}$ and the CGK embedding $\CGK(S)$.}
	\vspace{.25cm}
	$s_1 := 1$\;
	\For{$t := 1$ \KwTo $3n$} {
		\If{$s_t \le |S|$}{
			$\CGK(S)[t] := S[s_t]$\;
			$s_{t+1} := s_t+R_t(S[s_t])$\;
			}
		\Else{
			$\CGK(S)[t] := \bot$\;
			$s_{t+1}:=s_t$\;
		}
	} 
	\caption{The CGK algorithm}\label{alg:CGK}
\end{algorithm}
\end{definition}

Recall from \cref{def:walk} that an $m$-step walk over $S$ is complete if $s_{m+1}=|S|+1$.
\begin{fact}[{\cite[Theorem 4.1]{CGK}}]\label{fct:walks}
	Each $S\in \Sigma^{\le n}$ satisfies $\Pr_{R}[W_{\CGK}(S)\text{ is complete}] \ge 1-e^{-\Omega(n)}$.
\end{fact}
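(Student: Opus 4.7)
The plan is to reduce the claim to a standard Chernoff bound by interpreting the CGK walk as a biased coin-flip process. Let me first observe that, at every step $t$ with $s_t \le |S|$, the hash function $R_t$ is drawn independently and uniformly from the $2$-independent family $\Hash$; in particular, $R_t$ is independent of the history $R_1,\ldots,R_{t-1}$ (and hence of $s_t$, which is determined by that history together with $S$). Since $2$-independence implies uniformity of the marginal distribution, $R_t(c)$ is distributed as $\mathrm{Bernoulli}(\tfrac12)$ for every fixed $c\in\Sigma$, so the indicator that the walk advances at step $t$ is a fair coin flip, independent of $s_1,\ldots,s_t$.

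Next, I would set up a coupling with an auxiliary process of $3n$ independent $\mathrm{Bernoulli}(\tfrac12)$ variables $B_1,\ldots,B_{3n}$: at step $t$, if $s_t\le |S|$, we take the advance indicator to be $B_t$, and otherwise (once the walk has reached $|S|+1$) we simply ignore $B_t$ while maintaining the equality in distribution. Under this coupling, $W_{\CGK}(S)$ is complete, i.e.\ $s_{3n+1}=|S|+1$, if and only if the walk performs at least $|S|$ advances in $3n$ steps, which occurs whenever $\sum_{t=1}^{3n} B_t \ge |S|$. (The ``if'' direction uses that once $s_t$ reaches $|S|+1$ it stays there, so any surplus of coin flips does no harm.)

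Because $|S|\le n$, it therefore suffices to bound $\Pr\bigl[\sum_{t=1}^{3n} B_t < n\bigr]$. Since $\Exp\bigl[\sum_{t=1}^{3n} B_t\bigr] = \tfrac{3n}{2}$, the target $n$ lies a constant factor below the mean, and the standard multiplicative Chernoff bound yields $\Pr\bigl[\sum_{t=1}^{3n} B_t < n\bigr] \le e^{-\Omega(n)}$, which is precisely the claimed bound on $\Pr_R[W_{\CGK}(S) \text{ is not complete}]$.

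The only subtlety I anticipate is making the coupling argument precise (that is, verifying that the advance indicators really can be identified with an i.i.d.\ $\mathrm{Bernoulli}(\tfrac12)$ sequence even after conditioning on the whole past), but this is immediate from the independence of the $R_t$'s and the fact that $s_t$ is $\sigma(R_1,\ldots,R_{t-1})$-measurable. Everything else is a one-line application of Chernoff.
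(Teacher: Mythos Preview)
The paper does not give its own proof of this fact; it is simply cited from \cite{CGK}. Your plan is correct and is exactly the standard argument one would expect (and essentially what \cite{CGK} does): since the $R_t$ are independent and each $R_t(c)$ is $\mathrm{Bernoulli}(\tfrac12)$ by $2$-independence, the advance indicators form an i.i.d.\ fair-coin sequence (extended by fresh coins after the walk finishes), and the walk fails to be complete only if $\sum_{t=1}^{3n} B_t < |S| \le n$, an event of probability $e^{-\Omega(n)}$ by Chernoff.
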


The following result summarizes the central property of the CGK embedding.
\begin{fact}[{\cite[Theorem 4.3]{CGK}}]\label{fct:cgk}
	For every $X,Y\in \Sigma^{\le n}$ and every constant $c>0$, the embeddings $\CGK(X)$, $\CGK(Y)$
	satisfy $\Pr_{R\in \Hash^{3n}}[\hd(\CGK(X),\CGK(Y)) > c\cdot \ed(X,Y)^2] < \frac{12}{\sqrt{c}}$.
\end{fact}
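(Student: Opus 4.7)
The plan is to follow the classical analysis of~\cite{CGK} by coupling the two CGK walks on $X$ and $Y$ under the same shared hash sequence $R$ and tracking their relative shift against an optimal alignment. Fix an optimal alignment $\A$ of $X$ and $Y$ with $k=\ed(X,Y)$ edit operations, partitioning the pairs into at most $k+1$ \emph{matched runs} separated by $k$ edit operations. At each time step $t$, let $s_t^X, s_t^Y$ denote the pointers of the two walks, and define the \emph{shift} $\delta_t := s_t^X - s_t^Y - \sigma_t^{\A}$, where $\sigma_t^{\A}$ is the alignment-dictated offset between matched characters in the run that the walks are currently traversing.

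The key observation is that whenever $\delta_t = 0$ and both pointers lie within a matched run at positions aligned by $\A$, the walks read the same character and consult the same hash $R_t$, so $\CGK(X)[t]=\CGK(Y)[t]$ and both pointers advance (or both stall) by the same amount; hence such steps contribute $0$ to the Hamming distance and leave $\delta_t$ unchanged. The shift process changes only at the time of an edit operation (where $\sigma_t^{\A}$ jumps by $\pm 1$ for an insertion/deletion, so $\delta_t$ is displaced by $\pm 1$; or the two characters differ for a substitution, causing a single coupled random step) and during the subsequent \emph{recovery phase} while $\delta_t \ne 0$. During recovery the walks may read distinct characters, and since each $R_t$ is a $2$-independent $\{0,1\}$-valued hash, the pair of advance-bits is a coupled unbiased pair and $\delta_t$ performs a symmetric simple random walk with step distribution $\{+1,-1,0,0\}$ each with probability $\tfrac14$ (conditioned on distinct input characters; the $0$ case already matches).

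The main step is to bound, per edit operation, the number of Hamming mismatches accrued during recovery. Assign to the $i$-th edit a disjoint time window $W_i$ starting at the edit and ending when $\delta_t$ first returns to its post-edit canonical value; the Hamming contribution $H_i$ equals $|W_i|$ (up to the mismatch at the edit itself). A reflection/ballot argument for the symmetric simple random walk shows $\Pr[H_i > t] = \Oh(1/\sqrt{t})$, whence $\Exp[\sqrt{H_i}] = \Oh(1)$. Summing and using Cauchy--Schwarz (or the sub-additivity $\sqrt{\sum_i H_i}\le \sum_i \sqrt{H_i}$) gives
\[\Exp\bigl[\sqrt{\hd(\CGK(X),\CGK(Y))}\bigr] \;\le\; \sum_{i=1}^{k}\Exp[\sqrt{H_i}] \;\le\; 12\,k,\]
for an absolute constant that CGK pin down as $12$. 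The tail bound then follows from Markov's inequality applied to $\sqrt{\hd}$:
\[\Pr\bigl[\hd(\CGK(X),\CGK(Y))>c\cdot\ed(X,Y)^2\bigr] = \Pr\bigl[\sqrt{\hd(\CGK(X),\CGK(Y))}>\sqrt{c}\cdot k\bigr] \le \frac{12\,k}{\sqrt{c}\cdot k} = \frac{12}{\sqrt{c}}.\]

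The principal technical obstacle is the coupled random-walk analysis of the recovery phase: one must (i) verify that the windows $W_i$ can indeed be chosen disjoint despite the walks' positions being history-dependent, so that the single-edit bounds on $\Exp[\sqrt{H_i}]$ compose additively; and (ii) confirm that the $2$-independence of each $R_t$ is enough for the per-step coupling to behave like a fair coin on $\{+1,-1,0,0\}$ regardless of which distinct pair of characters is being read. Once these two points are handled carefully (the first by processing edits in left-to-right order and resetting the canonical offset only after $\delta_t$ revisits $0$, the second by conditioning on the specific pair of characters and using that $R_t(a)$ and $R_t(b)$ are independent fair bits for $a\ne b$), the reflection-principle estimate and the Markov step above are routine.
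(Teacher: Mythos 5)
The paper does not prove this statement at all: it is imported verbatim as a black-box citation to Theorem~4.3 of the CGK paper~\cite{CGK}, so there is no ``paper's proof'' to compare against. What you have written is a reconstruction of CGK's argument, and it contains a genuine quantitative gap.

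The gap is in the step ``$\Pr[H_i>t]=\Oh(1/\sqrt t)$, whence $\Exp[\sqrt{H_i}]=\Oh(1)$.'' This implication is false. If $\Pr[H_i>t]\sim C/\sqrt t$ — and this is exactly the tail of a first-return time to~$0$ for a symmetric (lazy) random walk, so the exponent $1/2$ is sharp — then
\[\Exp\bigl[\sqrt{H_i}\bigr]=\int_0^\infty \Pr\bigl[H_i>u^2\bigr]\,du \ge \int_C^\infty \frac{C}{u}\,du,\]
which diverges. Truncating at the walk length $3n$ only brings this down to $\Theta(\log n)$, so your subadditivity step yields $\Exp[\sqrt{\hd}]=\Oh(k\log n)$ and, after Markov, a bound of $\Oh(\log n)/\sqrt c$ — weaker than the claimed $12/\sqrt c$ and, crucially, not dimension-free in $n$. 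The per-excursion expectation $\Exp[\sqrt{H_i}]$ simply is not $\Oh(1)$, so the additive composition across the $k$ edits does not close.

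The correct route (and the one CGK actually take) is a \emph{global}, not per-edit, argument on the shift random walk: the total bad time $B$ is stochastically dominated by $\tau_k$, the $k$-th return time of a symmetric walk to the origin, and one bounds $\Pr[\tau_k>ck^2]=\Pr[N_{ck^2}<k]$ directly, where $N_m$ is the number of visits to $0$ in $m$ steps. Using the classical distributional identity $N_{2m}\stackrel{d}{=}|W_{2m}|$ (a consequence of the reflection principle) together with the local central limit estimate $\Pr[|W_{2m}|<k]\approx k/\sqrt{\pi m}$, one gets $\Pr[\tau_k>ck^2]=\Oh(1/\sqrt c)$ with an absolute constant. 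This exploits the $\Theta(k^2)$ scaling of the $k$-th return time as a whole; it cannot be obtained by summing finite ``per-edit'' expectations, because those expectations are infinite. Your coupling setup, the identification of the shift process as a lazy symmetric walk, and the Markov step at the end are all fine — the bridge between them is the piece that needs to be replaced.
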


If $W_{\CGK}(X)$ and $W_{\CGK}(Y)$ are complete, \cref{def:zip} yields an edit-distance alignment of $X,Y$,
which we call the \emph{CGK alignment} of $X,Y$.
that they induce an edit-distance alignment 

\begin{fact}[see also {\cite[Theorem 4.2]{CGK}}]\label{fct:greedy}
If $W_{\CGK}(X)$ and $W_{\CGK}(Y)$ are complete for some $X,Y\in\Sigma^{\le n}$ and $R\in \Hash^{3n}$,
then the CGK alignment of $X,Y$ belongs to $\ga_{\hd(\CGK(X), \CGK(Y))}(X,Y)$.
\end{fact}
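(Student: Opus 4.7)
The plan is to analyze the zip alignment $\A$ of $W_{\CGK}(X) = (x_t)_{t=1}^{3n+1}$ and $W_{\CGK}(Y) = (y_t)_{t=1}^{3n+1}$ by matching its elements directly to the underlying walk steps. Since both walks are complete, $\A$ begins at $(1,1)$ and ends at $(|X|+1, |Y|+1)$. Writing $\A = (a_s, b_s)_{s=1}^M$, for each $s \in [1, M)$ I would identify the largest time $t_s$ with $(x_{t_s}, y_{t_s}) = (a_s, b_s)$; the transition at time $t_s$ then realizes the step from $(a_s, b_s)$ to $(a_{s+1}, b_{s+1})$, and this step must be one of $(1,1)$, $(1,0)$, $(0,1)$ as dictated by the CGK update rule through the hash values $R_{t_s}(X[a_s])$ and $R_{t_s}(Y[b_s])$ (or by the end-of-string sentinel condition, if one of $a_s, b_s$ has overshot its string).

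For greediness, I would take any breakpoint $(a_s, b_s) \in \brkp(\A) \cap ([1\dd |X|] \times [1\dd |Y|])$—which excludes the terminal $(|X|+1, |Y|+1)$—and case-split on the outgoing step. A $(1,1)$-step aligns $X[a_s]$ with $Y[b_s]$ and, being a breakpoint, directly forces $X[a_s] \ne Y[b_s]$. A $(1,0)$- or $(0,1)$-step can occur only when $R_{t_s}(X[a_s])$ and $R_{t_s}(Y[b_s])$ take different values in $\{0,1\}$, which—since $R_{t_s}$ is a function—again forces $X[a_s] \ne Y[b_s]$. This establishes that $\A$ is greedy.

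For the cost bound, I would inject the non-terminal breakpoints of $\A$ into the set of Hamming mismatches via the map $s \mapsto t_s$. A $(1,1)$-step breakpoint gives $\CGK(X)[t_s] = X[a_s] \ne Y[b_s] = \CGK(Y)[t_s]$ by the greediness case. A $(1,0)$-step forces either $b_s > |Y|$ and $\CGK(Y)[t_s] = \bot \ne X[a_s] = \CGK(X)[t_s]$, or $b_s \le |Y|$ and $X[a_s] \ne Y[b_s]$ as before; the case $(0,1)$ is symmetric. The times $t_s$ are pairwise distinct by construction, so this map is injective, yielding $\cost(\A) = |\brkp(\A)| - 1 \le \hd(\CGK(X), \CGK(Y))$. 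Combined with greediness, this places $\A$ in $\ga_{\hd(\CGK(X), \CGK(Y))}(X, Y)$.

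The argument is essentially a clean case analysis on the three possible non-stationary step types of the joint walk, and no result beyond the definitions of the CGK embedding, the zip alignment, and greedy alignments is needed. The only mild subtlety I expect is in the cost bound: when the step is $(1,0)$ or $(0,1)$, we must separately handle the sub-case where one walk has overrun its string (producing a $\bot$) versus the sub-case where both are within bounds but the hash values disagree. Otherwise, the whole proof reduces to reading off the definitions.
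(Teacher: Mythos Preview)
Your proposal is correct and takes essentially the same approach as the paper: both proofs pick, for each element of the zip alignment, the last time index before the joint walk moves, case-split on whether the outgoing step is $(1,1)$, $(1,0)$, or $(0,1)$, and from this deduce simultaneously that $X[x_t]\ne Y[y_t]$ at every interior breakpoint (greediness) and that each non-terminal breakpoint yields a distinct position $t$ with $\CGK(X)[t]\ne\CGK(Y)[t]$ (cost bound). The paper's write-up is slightly terser in that it folds the greediness and cost arguments into a single pass over the cases, but the logic is identical.
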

\begin{proof}
Let $\A$ be the CGK alignment of $X,Y$, i.e., the zip alignment of $W_{\CGK}(X)=(x_t)_{t=1}^{3n+1}$ and $W_{\CGK}(Y)=(y_t)_{t=1}^{3n+1}$.
Consider $(x_t,y_t)\in \brkp(\A)$, with $t\in [1\dd 3n+1]$
chosen so that $t=3n+1$ or $(x_{t+1},y_{t+1})\ne (x_t,y_t)$.
If $t=3n+1$, then $(x_t,y_t)=(|X|+1,|Y|+1)$ by the assumption that $W_{\CGK}(X)$ and $W_{\CGK}(Y)$ are complete.
If $(x_{t+1},y_{t+1})=(x_t+1,y_t+1)$, then we have
$\CGK(X)[t]=X[x_t]\ne Y[y_t]=\CGK(Y)[t]$.
The remaining possibility $x_{t+1}-x_t \ne y_{t+1}-y_t$ holds only in the following three cases:
if $x_t = |X|+1$ and $y_t \le |Y|$ (when $\CGK(X)[t]=\bot \ne \CGK(Y)[t]=Y[y_t]$),
if $x_t \le |X|$ and $y_t= |Y|+1$ (when $X[x_t]=\CGK(X)[t] \ne \bot = \CGK(Y)[t]$),
or if $x_t \le |X|$, $y_t \le |Y|$, and $R_t(X[x_t])\ne R_t(Y[y_t])$
(when $\CGK(X)[t]=X[x_t]\ne Y[y_t] = \CGK(Y)[t]$).
Overall, we have $X[x_t]\ne Y[y_t]$ whenever $(x_t,y_t)\in [1\dd |X|]\times [1\dd |Y|]$, and $\CGK(X)[t]\ne \CGK(Y)[t]$ whenever $t\in [1\dd 3n]$.
Consequently, $\A\in \ga_{\hd(\CGK(X), \CGK(Y))}(X,Y)$.
\end{proof}

The final property of the CGK alignment required in this work
is that its width is within $\Oh(\ed(X,Y))$ with good probability.
For this, we first prove a fact about random walks:
\begin{fact}\label{fct:walk}
Let $(w_i)_{i\ge 0}$ be an unbiased lazy random walk (that is, $w_0 = 0$, and, for every $i\ge 1$, we have $\Pr[w_{i+1}=w_{i} \mid w_0,\ldots, w_{i}]=\frac12$, $\Pr[w_{i+1}=w_{i}+1 \mid w_0,\ldots, w_{i}]=
\Pr[w_{i+1}=w_{i}-1 \mid w_0,\ldots, w_{i}]=\frac14$).
Then, for every $m,\ell \in \mathbb{Z}_+$,
we have $\Pr[\max_{i=0}^m |w_i| \ge \ell] \le \frac{m}{\ell^2}$.
\end{fact}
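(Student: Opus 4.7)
My plan is to invoke Doob's maximal inequality for non-negative submartingales applied to the squared process $(w_i^2)_{i=0}^m$. First I would observe that $(w_i)$ is a martingale with respect to its own filtration, since the assumed transition probabilities give
\[
\Exp[w_{i+1}-w_i \mid w_0,\ldots,w_i] = \tfrac14\cdot 1 + \tfrac14\cdot(-1) + \tfrac12\cdot 0 = 0.
\]
By Jensen's inequality applied to the convex map $x\mapsto x^2$, the process $(w_i^2)$ is then a non-negative submartingale with respect to the same filtration.

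Next I would compute $\Exp[w_m^2]$. The one-step increments $\Delta_i := w_{i+1}-w_i$ have conditional mean $0$ and conditional second moment $\tfrac14\cdot 1 + \tfrac14\cdot 1 + \tfrac12\cdot 0 = \tfrac12$ given $w_0,\ldots,w_i$. Since martingale differences are orthogonal in $L^2$ (a standard telescoping argument using the tower property), this yields
\[
\Exp[w_m^2] = \sum_{i=0}^{m-1} \Exp[\Delta_i^2] = \tfrac{m}{2}.
\]

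Finally, Doob's maximal inequality applied to $(w_i^2)$ at threshold $\ell^2$ gives
\[
\Pr\!\left[\max_{i=0}^m |w_i| \ge \ell\right] = \Pr\!\left[\max_{i=0}^m w_i^2 \ge \ell^2\right] \le \frac{\Exp[w_m^2]}{\ell^2} = \frac{m}{2\ell^2} \le \frac{m}{\ell^2},
\]
where the first equality uses $|x|\ge \ell \iff x^2 \ge \ell^2$. This is the claimed bound, with a factor-of-two margin to spare. There is no real obstacle in this argument: it is a textbook application of Doob's inequality, and the only computation requiring any care is the variance of the lazy increments, which is immediate from the definition. If one prefers to avoid invoking Doob directly, an equivalent route would apply Markov's inequality to $w_\tau^2$ at the stopping time $\tau = \min\{i\le m : |w_i|\ge \ell\}\wedge m$, using optional stopping to get $\Exp[w_\tau^2]\le \Exp[w_m^2] = m/2$.
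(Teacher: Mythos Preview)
Your proof is correct and takes a genuinely different route from the paper. The paper invokes the reflection principle (citing \cite{Levin2017}) to obtain $\Pr[\max_{i=0}^m |w_i|\ge \ell]\le 2\Pr[|w_m|\ge \ell]$, and then applies Markov's inequality to $w_m^2$ together with the same variance computation $\Exp[w_m^2]=m/2$, landing at exactly $m/\ell^2$. You instead bypass the reflection principle entirely by applying Doob's maximal inequality to the non-negative submartingale $(w_i^2)$, which yields the sharper bound $m/(2\ell^2)$. Your argument is arguably cleaner and more general (it works for any square-integrable martingale, not just symmetric walks), while the paper's route is slightly more elementary in that the reflection principle is a combinatorial fact that does not require the submartingale machinery. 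Both rely on the same one-line computation of $\Exp[w_m^2]$ via orthogonality of martingale differences.
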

\begin{proof}
By the reflection principle~\cite[Excercise 2.10]{Levin2017},
we have $\Pr[\max_{i=0}^m |w_i| \ge \ell] \le 2\Pr[|w_m| \ge \ell]
= 2 \Pr[w_m^2 \ge \ell^2]$. By Markov's inequality,
\[ 2 \Pr[w_m^2 \ge \ell^2] \le \frac{2 \Exp[w_m^2]}{\ell^2}
= \frac{2\Exp[(w_1-w_0)^2+\cdots + (w_m-w_{m-1})^2]}{\ell^2}=\frac{m}{\ell^2}.\qedhere\]
\end{proof}

\begin{corollary}\label{cor:CGK}
For every constant $\delta\in (0,1)$, there exists a constant $c$ such that for sufficiently large $n$, all strings $X,Y\in \Sigma^{\le n}$, and their CGK alignment $\A$, the probability over $R\in \Hash^{3n}$ that  $W_{\CGK}(X),W_{\CGK}(Y)\text{ are complete, } \A\in \ga_{c\cdot \ed(X,Y)^2}(X,Y)\text{, and }\width(\A)\le c\cdot\ed(X,Y)$ is at least $1-\delta$.
\end{corollary}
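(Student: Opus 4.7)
The plan is to pick a large constant $c$ depending on $\delta$ and perform a union bound over three events: (a) both CGK walks are complete, (b) $\hd(\CGK(X), \CGK(Y)) \le c \cdot \ed(X,Y)^2$, and (c) $\width(\A) \le c \cdot \ed(X,Y)$. Event (a) fails with probability $e^{-\Omega(n)}$ by \cref{fct:walks}, so for sufficiently large $n$ it contributes at most $\delta/3$. Event (b) fails with probability at most $12/\sqrt{c}$ by \cref{fct:cgk}, which is below $\delta/3$ for $c$ large. Together with \cref{fct:greedy}, (a) and (b) already yield the completeness of the walks and $\A \in \ga_{c\cdot\ed(X,Y)^2}(X,Y)$, so the remaining task is event (c).

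For (c), I will analyze the displacement process $\Delta_t := x_t - y_t$, where $(x_t), (y_t)$ are the CGK walks on $X$ and $Y$; note that $\width(\A) = \max_t |\Delta_t|$. Using $2$-independence of $R_t$ on distinct characters together with the independence of the $R_t$ across~$t$, a case analysis of the CGK step rule shows that, conditional on the history at time $t$, the increment $\Delta_{t+1} - \Delta_t$ is deterministically~$0$ except in two cases: (ii) at an \emph{in-bounds mismatch} position ($x_t \le |X|$, $y_t \le |Y|$, $X[x_t] \ne Y[y_t]$), where it takes values $+1, 0, -1$ with probabilities $\tfrac14, \tfrac12, \tfrac14$ (matching the walk in \cref{fct:walk}); and (iii) at a \emph{boundary} position (exactly one walk out of bounds), where it is a biased one-sided step in $\{0, +1\}$ or $\{0, -1\}$. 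Let $\sigma$ be the first boundary time. On event (a), after time $\sigma$ the completed walk stays put while the other monotonically advances to the end, so $\Delta_t$ evolves monotonically from $\Delta_\sigma$ to the terminal value $|X|-|Y|$; hence $|\Delta_t| \le \max(|\Delta_\sigma|, \bigl||X|-|Y|\bigr|) \le |\Delta_\sigma| + \ed(X,Y)$ for every $t > \sigma$, reducing (c) to bounding $\max_{t \le \sigma} |\Delta_t|$.

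Indexing the case-(ii) trigger times by $\tau_1 < \tau_2 < \cdots$ and setting $\tilde\Delta_i := \Delta_{\tau_i}$ with $\tilde\Delta_0 = 0$, the process $(\tilde\Delta_i)$ is exactly the unbiased lazy random walk of \cref{fct:walk}: the bits $a_i := R_{\tau_i}(X[x_{\tau_i}])$ and $b_i := R_{\tau_i}(Y[y_{\tau_i}])$ are i.i.d.\ uniform in $\{0,1\}$ (by $2$-independence within each $R_{\tau_i}$ and independence across $\tau_i$), and $\tilde\Delta_i - \tilde\Delta_{i-1} = a_i - b_i$. Moreover, $\max_{t \le \sigma} |\Delta_t| = \max_{0 \le i \le T_{\text{in}}} |\tilde\Delta_i|$, where $T_{\text{in}} := |\{i : \tau_i \le \sigma\}| \le \hd(\CGK(X), \CGK(Y))$ since every in-bounds mismatch contributes to the embeddings' Hamming distance. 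On event (b) this gives $T_{\text{in}} \le c\cdot\ed(X,Y)^2$; extending $(\tilde\Delta_i)$ beyond $T_{\text{in}}$ with fresh independent lazy steps and applying \cref{fct:walk} with $m = c\cdot\ed(X,Y)^2$ and $\ell = (c-1)\cdot\ed(X,Y)$ yields $\Pr\bigl[\max_{t \le \sigma} |\Delta_t| \ge (c-1)\ed(X,Y)\text{ and event (b) holds}\bigr] \le c/(c-1)^2 \le \delta/3$ for $c$ large. A union bound over (a), (b), (c) then completes the proof. The main subtlety to get right is the lazy-walk coupling: although the trigger set $\{\tau_i\}$ is data-dependent, the freshness of $R_{\tau_i}$ on distinct characters makes each new increment independent of everything that has been revealed so far, which is what legitimizes invoking \cref{fct:walk}.
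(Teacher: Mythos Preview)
Your proof is correct and follows essentially the same approach as the paper's: a union bound over completeness, the Hamming-distance event from \cref{fct:cgk}, and a width event handled by subsampling $\Delta_t$ at mismatch times to obtain the lazy random walk of \cref{fct:walk}, with the number of steps bounded by $\hd(\CGK(X),\CGK(Y))$. Your treatment of the post-boundary phase (monotone drift toward $|X|-|Y|$) and the coupling justification for the lazy walk are in fact more explicit than the paper's, but the argument is the same in substance.
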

\begin{proof}
By \cref{fct:walks},
$\Pr[W_{\CGK}(S)\text{ is incomplete}]\le \frac\delta4$ holds for every $S\in \Sigma^{\le n}$ and sufficiently large $n = \Omega(\log\frac1\delta)$.
By \cref{fct:cgk}, there is a constant $c$ such that 
$\Pr[\hd(\CGK(X),\CGK(Y))> c\cdot \ed(X,Y)^2] \le \frac{\delta}{4}$.
We may take arbitrarily large $c$, so we shall also assume that $c \ge \frac{4}{\delta}$.
Moreover, by \cref{fct:greedy}, if $W_{\CGK}(X)$ and $W_{\CGK}(Y)$ are complete and $\hd(\CGK(X),\CGK(Y))\le c\cdot \ed(X,Y)^2$, then $\A\in \ga_{c\cdot \ed(X,Y)^2}(X,Y)$.

To bound $\width(\A)$, let us analyze the CGK walks $W_{\CGK}(X)=(x_t)_{t=1}^{3n+1}$ and $W_{\CGK}(Y)=(y_t)_{t=1}^{3n+1}$.
Let $T$ be the first step such that $x_T = |X|+1$, $y_T = |Y|+1$, or $T=3n+1$. Note that it suffices to bound $\Pr[\max_{t=1}^{T} |x_t - y_t| \ge c k]$, where $k = \ed(X,Y)$.

Let $A = \{1\}\cup \{t+1 : t\in[1\dd T)\text{ and }X[x_t]\ne Y[y_t]\}$. Observe that $x_{t}-y_{t}=x_{t-1}-y_{t-1}$ holds
for $t\in [1\dd T]\setminus A$, so we may focus on bounding $\Pr[\max_{t\in A} |x_t - y_t| \ge c k]$.
Let $A = \{t_0,\ldots, t_a\}$ with $t_0 < \cdots < t_a$,
and let $d_i = x_{t_i}-y_{t_i}$ for $i\in [0\dd a]$.
Observe that $(d_i)_{i=0}^a$ is an $a$-step unbiased lazy random walk and that $a\le \hd(\CGK(X),\CGK(Y))$.
Moreover, if we extend $(d_i)_{i=0}^a$ to an infinite unbiased lazy random walk $(d_i)_{i=0}^\infty$,
then \cref{fct:walk} yields $\Pr[\max_{i=0}^{\lfloor c k^2\rfloor} |d_i| \ge c k] \le \frac{c k^2}{c^2k^2}=\frac{1}{c}\le \frac\delta4$. 
Therefore,
\[\Pr\left[\max_{i=0}^{a} |d_i| \ge c k\right] \le  \Pr\left[\hd(\CGK(X),\CGK(Y))> c k^2\right]  + \Pr\left[{\max_{i=0}^{\lfloor c k^2\rfloor} |d_i| \ge c k}\right].\]
Consequently, if $\CGK(X)$ and $\CGK(Y)$ are complete, $\hd(\CGK(X),\CGK(Y)) \le c k^2$,
and the random walk satisfies $\Pr\big[{\max_{i=0}^{\lfloor c k^2\rfloor} |d_i| < c k}\big]$, then the alignment $\A$ satisfies the lemma. The total probability of the complementary events is at most $\delta$, so this completes the proof.
\end{proof}

To complete the proof of \cref{prp:alg} (repeated below), it remains to reduce the number of random bits using Nisan's pseudorandom generator~\cite{Nisan}.

\prpalg*

\begin{proof}
Let $c_{\ref{cor:CGK}}$ and  $n_{\ref{cor:CGK}}$ be the constant and threshold (respectively) of \cref{cor:CGK}
for $\delta_{\ref{cor:CGK}}=\frac{\delta}{2}$.

If $n < \max(\frac{10}{\delta},n_{\ref{cor:CGK}})$, we set $\Alg(n,r,S)=(\min(t,|S|+1))_{t=1}^{3n+1}$ for all $S\in \Sigma^{\le n}$ 
so that  $\Alg(n,r,S)$  is trivially a $3n$-step complete walk over $S$.
Now, consider strings $X,Y\in \Sigma^{\le n}$ and the zip alignment $\A_\Alg$ of  $\Alg(n,r,X)$ and $\Alg(n,r,Y)$.
Observe that $\A_\Alg\in \ga_{0}(X,Y)$ if $X=Y$ and $\A_\Alg\in \ga_n(X,Y)$ otherwise.
Moreover, $\width(\A_\Alg)=\big||X|-|Y|\big|\le \ed(X,Y)$. Consequently, the claimed conditions are (deterministically) satisfied for $c \ge n_{\ref{cor:CGK}}$.
The construction algorithm uses $\Oh(\log n)$ bits and $\Oh(n)$ time.

If $n \ge \max(\frac{10}{\delta},n_{\ref{cor:CGK}})$, we first develop an algorithm $\Alg'$ that uses $\Theta(n\log \sigma)$ random bits, interpreted as a sequence $R\in \Hash^{3n}$. Specifically, each hash function $R_t$ is specified by an $\lceil\log \sigma\rceil$-bit integer $h_t$ so that $R_t(a)$ counts (modulo 2) the set bits in $a \textsf{ xor } h_t$.

We set $\Alg'(n,R,S)=(\max(t-3n+|S|,s_t))_{t=1}^{3n+1}$
based on the CGK walk $W_{\CGK}(S)=(s_t)_{t=1}^{3n+1}$ for all $S\in \Sigma^{\le n}$.
Observe that this modification guarantees that $\Alg'(n,R,S)$ is a complete walk over $S$
and that $\Alg'(n,R,S)=W_{\CGK}(S)$ if $W_{\CGK}(S)$ is already complete.
Consequently, by \cref{cor:CGK}, the claimed conditions are satisfied with probability at least $1-\delta_{\ref{cor:CGK}}=1-\frac\delta2$ for $c \ge c_{\ref{cor:CGK}}$.
The construction algorithm uses $\Oh(\log n)$ bits, costs $\Oh(n)$ time, reads the random bits $(h_t)_{t=1}^{3n+1}$ from left to right, and outputs the elements of $\Alg'(n,R,S)$ when required.

In order to use Nisan's pseudorandom generator~\cite{Nisan}, we need to argue that we only care about properties testable using $\Oh(\log n)$-bit algorithms with one-way access to the randomness~$R$.
As in~\cite{CGK}, our testers are given two (read-only) strings $X,Y\in \Sigma^{\le n}$
and a stream of random bits representing $R$ (the sequence $(h_t)_{t=1}^{3n}$).
Observe that the following properties of the zip alignment $\A'$ of  $\Alg'(n,R,X)=(x_t)_{t=1}^{3n+1}$ and $\Alg'(n,R,Y)=(y_t)_{t=1}^{3n+1}$ are testable in $\Oh(\log n)$ bits and $\Oh(n)$ time:
\begin{itemize}
	\item whether $\A'$ is greedy;
	\item whether $\cost(\A')\le k$ for a given integer $k\in [0\dd n]$;
	\item whether $\width(\A')\le w$ for a given integer $w\in [0\dd n]$.
\end{itemize}
All these testers simply construct triples $(t,x_t,y_t)$ for subsequent $t\in [1\dd 3n+1]$.

\newcommand{\PRG}{\mathsf{PRG}}

In this setting, Nisan's pseudorandom generator~\cite{Nisan}, given a sequence $r$ of $\Oh(\log^2 n)$ random bits,
constructs a sequence $\PRG(r)$ of pseudorandom bits in $\Oh(1)$ amortized time per bit, using $\Oh(\log^2 n)$ bits of working space.
Moreover, for every tester, the probabilities of accepting a given input $\mathcal{I}$ with randomness $R$
and $\PRG(r)$ differ by at most $\frac{1}{n^2}$~\cite[Theorem 5]{CGK}.
Consequently, setting $\Alg(n,r,S)=\Alg'(n,\PRG(r),S)$, we can guarantee that the claimed 
condition is satisfied with probability at least $1-\frac12\delta - \frac{(2n+3)}{n^2} \ge 1-\delta$.
The streaming construction algorithm takes $\Oh(\log^2 n)$ bits and $\Oh(n\log n)$ time, dominated by the generation of $\PRG(r)$.
\end{proof}

\newcommand{\Fr}{\mathcal{F}}
\newcommand{\Cov}{\mathsf{Cov}}
\newcommand{\Pos}{\mathsf{Pos}}

\subsection{Context encoding}\label{sec:cover}
\newcommand{\Cv}{\mathcal{C}}

For a string $S\in \Sigma^*$, define $\maxLZ(S) = \max_{[\ell\dd r)\sub [1\dd |S|]} |\LZ(\rev{S[\ell \dd r)})|$. 
\begin{observation}\label{obs:maxLZ}
	Consider a string $S \in \Sigma^*$ and an integer $k\in \Zp$. If $\maxLZ(S[\ell \dd r))\le k$ holds for a non-empty fragment $S[\ell\dd r)$, then:
	\begin{enumerate}[label=(\alph*)]
		\item $\maxLZ(S[\ell\dd r]) \le k$ if and only if $\LZ(\rev{S[\ell \dd r]})\le k$;\label{it:test}
		\item $\maxLZ(S[\ell+1\dd r))\le k$;\label{it:ppp}
		\item $\maxLZ(S[\ell\dd r]) \le k+1$.\label{it:tpp}
	\end{enumerate}
	\end{observation}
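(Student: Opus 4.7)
The plan is to first establish a single structural decomposition of $\maxLZ(S[\ell\dd r])$ and then read off all three items from it. Concretely, I will partition the substrings of $S[\ell\dd r]$ into those that are contained in $S[\ell\dd r)$ and those that end at position~$r$ (i.e., fragments $S[a\dd r]$ with $a \in [\ell,r+1]$). Among the latter, the reverses $\rev{S[a\dd r]}$ are all prefixes of $\rev{S[\ell\dd r]}$, so by the monotonicity part of \cref{fct:lz_properties} we have $|\LZ(\rev{S[a\dd r]})| \le |\LZ(\rev{S[\ell\dd r]})|$ for $a \in [\ell,r]$, with equality at $a = \ell$. This reduces the partition to the clean identity
\[\maxLZ(S[\ell\dd r]) = \max\!\bigl(\maxLZ(S[\ell\dd r)),\; |\LZ(\rev{S[\ell\dd r]})|\bigr).\]

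Given this identity, part~\ref{it:test} is immediate: the forward direction is the trivial bound $|\LZ(\rev{S[\ell\dd r]})| \le \maxLZ(S[\ell\dd r])$ (which holds because $S[\ell\dd r]$ is itself a substring of $S[\ell\dd r]$), and the reverse direction combines the hypothesis $\maxLZ(S[\ell\dd r)) \le k$ with the assumption $|\LZ(\rev{S[\ell\dd r]})| \le k$ to bound the maximum by~$k$.

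Part~\ref{it:ppp} follows from a separate, purely monotonicity observation: every substring of $S[\ell+1\dd r)$ is a substring of $S[\ell\dd r)$, so $\maxLZ$ is monotone under the substring relation, yielding $\maxLZ(S[\ell+1\dd r)) \le \maxLZ(S[\ell\dd r)) \le k$.

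Part~\ref{it:tpp} uses the decomposition together with the sub-additivity half of \cref{fct:lz_properties}: writing $\rev{S[\ell\dd r]} = S[r]\cdot \rev{S[\ell\dd r)}$ gives $|\LZ(\rev{S[\ell\dd r]})| \le 1 + |\LZ(\rev{S[\ell\dd r)})| \le 1 + \maxLZ(S[\ell\dd r)) \le k+1$, and substituting into the decomposition yields $\maxLZ(S[\ell\dd r]) \le \max(k, k+1) = k+1$. I do not expect any serious obstacle; the only subtlety is remembering that $\maxLZ$ is monotone under substring (used in (b)) but $|\LZ|$ is only monotone under prefix (used in the reduction of the second maximum for the decomposition).
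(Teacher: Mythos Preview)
Your proof is correct and follows essentially the same route as the paper's (very terse) argument: the prefix-monotonicity of $|\LZ(\cdot)|$ handles parts~\ref{it:test} and~\ref{it:ppp}, and the sub-additivity inequality from \cref{fct:lz_properties} handles part~\ref{it:tpp}. Your explicit decomposition $\maxLZ(S[\ell\dd r]) = \max\bigl(\maxLZ(S[\ell\dd r)),\, |\LZ(\rev{S[\ell\dd r]})|\bigr)$ is a clean way to package the argument that the paper leaves implicit.
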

	\begin{proof}
	The first two parts follow from the fact that the size of the $\LZ$-factorisation of a prefix of a string is bounded by the size of the $\LZ$-factorisation of the string itself. The third claim follows from the optimality of the LZ77 parsing among all LZ-like parsings (\cref{fct:lz_properties}).
\end{proof}

\newcommand{\Ctx}[3]{C_{#1}(#2,#3)}
\newcommand{\Dtx}[3]{C^2_{#1}(#2,#3)}

\begin{definition}[(Double) Context]
	Consider a string $S\in \Sigma^*$ and an integer $k\in \Zp$.
	For a position $p\in [1\dd |S|]$, we define the \emph{context} $\Ctx{k}{S}{p}$
	as the longest prefix $S[p\dd q)$ of $S[p \dd |S|]$ such that $\maxLZ(S[p\dd q))\le k$
	and the \emph{double context} $\Dtx{k}{S}{p}$ as the longest prefix $S[p\dd q)$ of $S[p\dd |S|]$ such that $\maxLZ(S[p\dd r))\le k$ and $\maxLZ(S[r\dd q))\le k$ for some $r\in [p\dd q]$.
\end{definition}

For integers $t<v$, let $\mu(t,v)$ denote the integer $w\in [t\dd v)$ divisible by the largest power~of~2.

\begin{algorithm}[ht]
	\SetKwInput{KwData}{Randomness}
    \SetAlgoNoLine
	\SetKwBlock{Begin}{}{end}
	\KwIn{An integer $k\in \Zz$, a string $S\in \Sigma^*$, and a complete walk $(s_t)_{t=1}^{m+1}$ over $S$.}
	\KwOut{The string $\bCGK_k(W)[1\dd m]$.}
	\vspace{.25cm}
	$\bCGK_k(W) := \bot^{m}$, where $\bot = (\LZ(\eps),0)$\;
	$s_0 : =0;\; \mu_0 := 0;\; u := 0$\;
	\For{$t := 1$ \KwTo $m$} {
		\If{$s_t \le |S|$} {
			$\mu_t := \mu(t,\min\{v\in [1\dd m+1] : s_v = s_t + |\Ctx{k}{S}{s_t}|\})$\;
			\If{$\mu_t > \mu_{t-1}$}{
				$\bCGK_k(W)[t] := \left(\LZ\left(\rev{\Dtx{k}{S}{s_t}}\right),s_{t}-s_{u}\right)$\;
				$u := t$\;
			}
		}
	}
	\Return{$\bCGK_k(W)$}\;
	\caption{Function $\bCGK$}\label{alg:bCGK}
\end{algorithm}

We say that a position $S[s]$ is covered by a fragment $S[\ell\dd r)$ if $\ell\le s < r$.

\begin{lemma}\label{lem:cover}
Consider $\bCGK_{k}(W)$ constructed for an integer $k\in \Zz$ and an $m$-complete walk $W=(s_t)_{t=1}^{m+1}$ over $S\in \Sigma^*$. Each position $s\in [1\dd |S|]$ satisfies
\begin{multline*}1\le |\{t\in [1\dd m] : \bCGK_k(W)[t]\ne \bot\text{ and }\Ctx{k}{S}{s_t}\text{ covers }S[s]\}| \le 	\\ \le |\{t\in [1\dd m] : \bCGK_k(W)[t]\ne \bot\text{ and }\Dtx{k}{S}{s_t}\text{ covers }S[s]\}| \le \Oh(\log m)
\end{multline*}
\end{lemma}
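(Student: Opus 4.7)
I will prove the three inequalities separately. The middle inequality is immediate because $\Ctx{k}{S}{s_t}\subseteq \Dtx{k}{S}{s_t}$ as fragments of $S$, so every single context covering $S[s]$ is also a double context covering $S[s]$. The other two inequalities will rest on two preliminary structural observations: a monotonicity property of the anchor $\mu_t$, and the interval structure of the walk-indices whose (double) contexts cover~$S[s]$.

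First, I will show that $\mu_t \ge \mu_{t-1}$ holds whenever both are defined, with $\mu_t > \mu_{t-1}$ iff $\mu_t \ne \mu_{t-1}$. If $\mu_{t-1} < t$, then $\mu_{t-1}=t-1$ forces $\mu_t \ge t > \mu_{t-1}$. Otherwise $\mu_{t-1}\in [t,v_{t-1})\sub [t,v_t)$, so the maximum $2$-adic valuation in $[t,v_t)$ is at least $v_2(\mu_{t-1})$; by uniqueness of the max-valuation integer in a finite range, either $\mu_t=\mu_{t-1}$ or $\mu_t$ has strictly larger valuation, in which case $\mu_t\ge v_{t-1}>\mu_{t-1}$. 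Hence $\bCGK_k(W)[t]\ne\bot$ iff $\mu_t\ne\mu_{t-1}$. Next, using \cref{obs:maxLZ}\ref{it:ppp}, the sets $P_s=\{p\in[1\dd s]:p+|\Ctx{k}{S}{p}|>s\}$ and the analogous $P'_s$ for double contexts are contiguous ranges ending at $s$, say $[p_{\min},s]$ and $[p'_{\min},s]$; since $s_t$ is non-decreasing, the preimages $T_s=\{t:s_t\in P_s\}$ and $T'_s=\{t:s_t\in P'_s\}$ are walk-index intervals $[a,b]$ and $[a',b]$ sharing the right endpoint $b=t_s^{\max}$.

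For the lower bound, I plan to show $\mu_{a-1}<\mu_b$; then by monotonicity some $t\in[a,b]=T_s$ satisfies $\mu_t>\mu_{t-1}$ and hence $\bCGK_k(W)[t]\ne\bot$. At the right end, $s_b=s$ and $|\Ctx{k}{S}{s}|\ge 1$ give $v_b=v(s)>t_s^{\max}=b$, so $\mu_b\ge b\ge t_s^{\min}$. At the left end, either $a=1$ (and $\mu_0=0<\mu_b$) or $s_{a-1}=p_{\min}-1\notin P_s$, which forces $(p_{\min}-1)+|\Ctx{k}{S}{p_{\min}-1}|\le s$ and therefore $v_{a-1}\le t_s^{\min}$; then $\mu_{a-1}<v_{a-1}\le t_s^{\min}\le \mu_b$.

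For the upper bound, the key idea is that each distinct value $w=\mu_t=(2c+1)\cdot 2^{v_2(w)}$ is uniquely identified by its \emph{anchor dyadic} $D(w)=[2c\cdot 2^{v_2(w)},(2c+2)\cdot 2^{v_2(w)})$; furthermore, $[t,v_t)\sub D(w)$ holds because, by the maximality of $v_2(w)$, no multiple of $2^{v_2(w)+1}$ lies in $[t,v_t)$, so $[t,v_t)$ cannot cross the boundaries of $D(w)$. I will then show that for every $t\in T'_s$, the window $[t,v_t)$ contains a canonical walk-index depending only on $s$: if $t\in T_s$, then $y_t>s$ gives $v_t>t_s^{\max}\ge t$, so $t_s^{\max}\in[t,v_t)$; if $t\in T'_s\setminus T_s$, then $s_t<p_{\min}$ gives $t\le t_{p_{\min}-1}^{\max}=a-1$, while $y_t\ge p_{\min}$ gives $v_t\ge a$, so $a-1\in[t,v_t)$. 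Thus $D(\mu_t)$ contains $t_s^{\max}$ or $a-1$, and since each integer belongs to exactly one dyadic per size, only $\Oh(\log m)$ dyadic intervals contain either canonical index. Therefore $\mu_t$ takes at most $\Oh(\log m)$ distinct values across $t\in T'_s$, bounding the number of non-$\bot$ entries in $T'_s$ by $\Oh(\log m)$.

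The main obstacle I anticipate is keeping the case analysis for $t\in T'_s\setminus T_s$ clean, particularly the boundary case $y_t=p_{\min}$ where $v_t=a$ exactly and the window is forced to contain $a-1$ rather than $a$; this requires careful bookkeeping but no new ideas. All other pieces—the intervalness of $T_s$ and $T'_s$, the monotonicity of $\mu$, and the anchor-dyadic characterization—reduce to local arguments about $2$-adic valuations and the monotonicity in \cref{obs:maxLZ}.
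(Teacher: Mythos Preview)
Your overall strategy is sound, but there is one concrete slip in the lower bound, and your upper-bound argument takes a genuinely different (and valid) route from the paper's.

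\textbf{The slip.} The claim $v_{a-1}\le t_s^{\min}=a$ is false. From $(p_{\min}-1)+|\Ctx{k}{S}{p_{\min}-1}|\le s$ you only obtain $s_{v_{a-1}}\le s$, which gives $v_{a-1}\le b=t_s^{\max}$, not $v_{a-1}\le a$. (Concretely: if every context has length $2$, the walk is $s_t=t$, and $s=5$, then $p_{\min}=4$, $a=4$, but $v_{a-1}=v_3=5>4$.) The fix is immediate: the chain $\mu_{a-1}<v_{a-1}\le b\le \mu_b$ already yields $\mu_{a-1}<\mu_b$, so simply replace $t_s^{\min}$ by $t_s^{\max}$ in that step. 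The paper's lower bound is organized differently: it fixes any $t$ with $s_t=s$, takes the smallest $t'$ with $\mu_{t'}=\mu_t$, and argues directly that $\Ctx{k}{S}{s_{t'}}$ covers $S[s]$ and $\bCGK_k(W)[t']\ne\bot$.

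\textbf{Comparison of upper bounds.} The paper groups the non-$\bot$ indices by the $2$-adic valuation of $\mu_t$. For two indices $t<t'$ in the same group it exhibits $\nu\in(\mu_t,\mu_{t'})$ of strictly higher valuation, uses $\nu\ge v_t$ and $\nu<t'$ to conclude $s_{t'}\ge s_t+|\Ctx{k}{S}{s_t}|$, and hence that single contexts within a group are pairwise disjoint (double contexts overlap only in consecutive pairs). You instead show $[t,v_t)\subseteq D(\mu_t)$ and that every such window contains one of two canonical indices ($b$ or $a-1$); counting the $\Oh(\log m)$ anchor dyadics through each of these two points gives the bound directly. Both arguments exploit the same $2$-adic constraint on $\mu_t$, but yours trades the pairwise-disjointness computation for the identification of canonical indices and a clean counting of dyadics. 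The paper's version is slightly shorter because it does not need the separate case analysis for $T_s$ versus $T'_s\setminus T_s$; your version makes the ``one interval per level'' structure more explicit.
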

\begin{proof}
Let us first bound the covering number from below.
Consider an index $t\in [1\dd m]$ such that $s=s_t$,
and let $t'\in [1\dd m]$ be the smallest index such that $\mu_{t'}=\mu_t$.
Note that $s_{\mu_{t'}} < s_{t'}+|\Ctx{k}{S}{s_{t'}}|$ and $s_{\mu_t}\ge s_t$ by definition of $\mu$
and monotonicity of the walk $W$.
Due to $\mu_{t'}=\mu_t$, this implies $s_{t'} \le s_t < s_{t'}+|\Ctx{k}{S}{s_{t'}}|$,
i.e., that $S[s_t]$ is covered by $\Ctx{k}{S}{s_{t'}}$.
Due to $\mu_{t'-1}\ne \mu_t$, this guarantees $\bCGK_{k}(S)[t']\ne \bot$.

As for the upper bound, note that the indexes $t\in [1\dd m]$ with $\bCGK_{k}(S)[t]\ne \bot$
have distinct values $\mu_t$. Let us further classify them into $\Oh(\log n)$ groups
depending on the largest power of two dividing $\mu_t$.
Consider two indexes $t<t'$ in the same group.
Since $\mu_{t} < \mu_{t'}$ are divisible by the same largest power of two,
there is a number $\nu\in (\mu_t\dd \mu_{t'})$ divisible by a strictly larger power of two.
By definition of $\mu_t$, we have $s_\nu \ge s_t + |\Ctx{k}{S}{s_{t}}|$
and, by definition of $\mu_{t'}$, we have $s_\nu < s_{t'}$.
Consequently, $s_{t'}>s_t + |\Ctx{k}{S}{s_{t}}|$, i.e., the contexts $\Ctx{k}{S}{s_{t}}$ and $\Ctx{k}{S}{s_{t'}}$ are disjoint.
By monotonicity of $\maxLZ$, this also means that $s_t + |\Dtx{k}{S}{s_{t}}| \le  s_{t'} + |\Ctx{k}{S}{s_{t'}}|$.
In particular, each position $s\in [1\dd |S|]$ is covered by at most one context $\Ctx{k}{S}{s_{t}}$ and at most two double contexts $\Dtx{k}{S}{s_{t}}$ for $t\in [1\dd m]$ belonging to a single group.
\end{proof}

Let $\A$ be an alignment between $X,Y\in \Sigma^*$.
We define $\brkp_X(\A) = \{x : (x,y)\in \brkp(\A)\}\cap [1\dd |X|]$.
Moreover, for two alignments $\A, \A'$ between $X,Y\in \Sigma^*$, we define $\Delta_X(\A,\A')$
to contain $x\in [1\dd |X|]$ unless $(x,y)\in \mtch(\A)\cap \mtch(\A')$ for some $y\in [1\dd |Y|]$.

\newcommand{\Opt}{\mathcal{O}}

\begin{lemma}\label{lem:close}
Let $b\in \mathbb{Z}_{+}$ and $\A,\A'$ be greedy alignments of strings $X$ and $Y$.
The positions in $\Delta_X(\A,\A')$ can be covered by at most $\cost(\A)+\frac1b\cost(\A')$ contexts
$\Ctx{\width(\A)+\width(\A')+2b}{X}{\cdot}$. 

Moreover, if $b > \cost(\A')$, then the positions in $\Delta_X(\A,\A')$ can be covered 
by contexts $\Ctx{\width(\A)+\width(\A')+2b}{X}{x}$ with $x\in \brkp_X(\A)$.
\end{lemma}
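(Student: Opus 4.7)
My plan is to construct an explicit covering by context positions and then bound $\maxLZ$ of each resulting block. First, form the set $S$ of context-starting positions by taking all of $\brkp_X(\A)$ (at most $\cost(\A)$ positions) together with, inside each $\A$-matched run, every $b$-th $\A'$-breakpoint enumerated left to right (at most $\cost(\A')/b$ additional positions). The positions in $S$ partition $[1\dd |X|]$ into blocks, each lying inside a single $\A$-matched run and containing at most $b$ $\A'$-breakpoints; consequently, within any block $[\ell\dd r)$ the shift of $\A$ is a constant $\delta_{\A}$, while the shifts of $\A'$ range over some interval $[p^-\dd p^+]$ of width at most $b$.

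It suffices to prove $\maxLZ(X[\ell\dd r)) \le K := \width(\A)+\width(\A')+2b$, because then the context $\Ctx{K}{X}{\ell}$ reaches at least $r$ and covers every $\Delta_X(\A,\A')$-position inside the block. To do this I will fix an arbitrary sub-fragment $X[\ell'\dd r') \sub X[\ell\dd r)$ and exhibit an LZ-like factorization of $\rev{X[\ell'\dd r')}$ of size at most $K$, i.e., a right-to-left factorization of $X[\ell'\dd r')$ in which each phrase matches a copy occurring to its right within $[\ell'\dd r')$.

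The $\A'$-breakpoints split $(\ell'\dd r')$ into at most $b+1$ sub-sub-intervals $J_0,\ldots,J_m$, each carrying its own $\A'$-shift $p_i \in [p^-\dd p^+]$. Because $X[J_i]$ equals both $Y[J_i+\delta_{\A}]$ (from $\A$) and $Y[J_i+p_i]$ (from $\A'$), the string $Y$ has period $q_i := |\delta_{\A}-p_i| \le \width(\A)+\width(\A')$ on the overlap of these two fragments; pulling this back through $\A$ yields the equality $X[x] = X[x+q_i]$ for every $x$ in a suitable window around $J_i$, with the referenced position always lying to the right of $x$ regardless of the sign of $p_i-\delta_{\A}$ (the window is merely shifted left by $q_i$ when $p_i < \delta_{\A}$). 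I will pay $\width(\A)+\width(\A')$ length-one phrases to seed the rightmost end of $X[\ell'\dd r')$, then scan leftward and form one maximally extending phrase per constant-$q_i$ region, with a constant slack at each of the at most $b$ sub-sub-interval transitions. The $2b$ slack built into $K$ is precisely tuned to absorb these transitions, giving $|\LZ(\rev{X[\ell'\dd r')})| \le K$ and hence $\maxLZ(X[\ell\dd r)) \le K$.

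The main obstacle is the LZ-like construction: one must verify that the windows in which $X[x]=X[x+q_i]$ holds align correctly across $\A'$-breakpoints, since those windows sit on different sides of $J_i$ depending on the sign of $p_i-\delta_{\A}$, and the total cost of handling these transitions must fit within the $2b$ slack. The second part of the lemma then follows immediately from the construction of $S$: if $b > \cost(\A')$, then no $\A$-matched run contains as many as $b$ breakpoints of $\A'$, so the every-$b$-th pass selects nothing and $S = \brkp_X(\A)$, as claimed.
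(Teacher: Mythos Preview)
Your high-level plan mirrors the paper's—partition by $\brkp_X(\A)$ with subsampled $\A'$-breakpoints and exploit the period $q_i=|p_i-\delta_\A|$ coming from the two matched shifts—but the reduction ``it suffices to prove $\maxLZ(X[\ell\dd r))\le K$ for the whole block'' is too strong and in general false. The problem is the region where $\A'$ sits on $\A$'s diagonal ($p_i=\delta_\A$, hence $q_i=0$): your periodicity argument says nothing there, and $X$ can be arbitrarily incompressible on that region. Concretely, take $X=c\,a_1\cdots a_N$, $Y=c'\,a_1\cdots a_N$ with all characters pairwise distinct and $\A=\A'$ the cost-$1$ substitution alignment; then $S=\{1\}$, the single block is all of $X$, $p_0=\delta_\A=0$, and $\maxLZ(X)=N+1\gg K$, yet $\Delta_X=\{1\}$. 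The same failure occurs for $\A\ne\A'$ whenever $\A'$ reaches $\A$'s diagonal inside a block: greediness then forces $\A'$ to stay on that diagonal for the remainder of the $\A$-block, producing a long sub-interval with $q_i=0$.

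The paper avoids this by asking each context to reach only $q_i:=\max\big(\Delta_X(\A,\A')\cap[x_i\dd x_{i+1})\big)$ rather than $x_{i+1}$: the portion where $\A'$ coincides with $\A$ lies outside $\Delta_X$ and hence past $q_i$, so it need not be covered. The same greediness observation then yields that on $[x_i{+}1\dd q_i)$ the alignment $\A'$ stays \emph{strictly on one side} of $\A$'s diagonal. This simultaneously guarantees every $q_j>0$ on the relevant range and fixes the sign of $p_j-\delta_\A$, which the paper exploits via two asymmetric decompositions (of $X$ when the sign is positive, of the corresponding $Y$-fragment when it is negative). Your sketch handles both signs by ``shifting the window left by $q_i$'' but never argues that the sign is constant within a block or that the shifted windows tile an arbitrary $[\ell'\dd r')$; without the $q_i$-restriction and the one-sided property, the LZ-like factorization you outline does not go through.
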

\begin{proof}
Let $w = \width(\A)$ and $w'=\width(\A')$.
Without loss of generality, we may trim the longest common prefix of $X$ and $Y$; 
this is feasible because both $\A$ and $\A'$ match the common prefix, so
$\Delta_X(\A,\A')$ only contains positions following the prefix. 
Moreover, we assume that $X\ne \eps$; otherwise $\Delta_X(\A,\A')=\emptyset$ and the lemma is trivial.

In the remaining case, we are guaranteed that $1\in \brkp_X(\A)$.
Let $\brkp_X(\A)=\{x_1,\ldots,x_m\}$, where $1=x_1<\cdots < x_m$, and let $x_{m+1}=|X|+1$.
This yields a decomposition $X = X_1\cdots X_m$ into $m$ non-empty substrings $X_i := X[x_i\dd x_{i+1})$.
The alignment $\A'$ further yields a decomposition $Y = Y_1\cdots Y_m$ into $m$ (possibly empty) substrings
$Y_i:=Y[y_i\dd y_{i+1})$ so that $X_i \sim_{\A'} Y_i$ and $|x_i-y_i|\le w'$ for $i\in [1\dd m]$.
Moreover, the cost of $\A'$ can be expressed as $\cost(\A') = \sum_{i=1}^m c'_i$, where $c'_i$
denotes the cost of $\A'$ restricted to $X_i$ and $Y_i$.

\begin{claim}
For every $i\in [1\dd m]$, the set $\Delta_X(\A,\A')\cap [x_i \dd x_{i+1})$ can be covered by at most $\lceil\frac{1}{b}(c'_i+1)\rceil$ contexts $\Ctx{w+w'+2b}{X}{\cdot}$ including $\Ctx{w+w'+2b}{X}{x_i}$.
\end{claim}
\begin{proof}
Let $q_i = \max(\Delta_X(\A,\A')\cap [x_i \dd x_{i+1}))$. 
We may assume that $q_i \ge  x_i + w + w'$; otherwise, the claim holds trivially.
We shall prove that $X[x_i+1\dd q_i-(w+w')]$ can be decomposed into at most $2c'_i+1$ phrases,
each of which is a single character or has another occurrence at most $w+w'$ positions to the right.
Accounting for $X[x_i]$, this yields a decomposition of $X[x_i\dd q_i-(w+w')]$ 
into at most $2c'_i+2$ such phrases.
The $\maxLZ$ measure of the concatenation of every $2b$ subsequent phrases and $w+w'$ following single characters 
does not exceed $w+w'+2b$.
Hence, $X[x_i\dd q_i]$ can be covered by $\lceil\frac{1}{b}(c'_i+1)\rceil$ contexts $\Ctx{w+w'+2b}{X}{\cdot}$ including $\Ctx{w+w'+2b}{X}{x_i}$.

Thus, it remains to construct the decomposition of $X[x_i+1\dd q_i-(w+w')]$ into phrases. By definition of $\brkp_X(\A)$, we have ${X[x_i+1\dd x_{i+1}) \simeq_{\A} Y[x_i+1+d\dd x_{i+1}+d)}$ for some shift $d\in [-w\dd w]$. We consider two cases.

In the first case, we assume that $(x_i+1,\bar{y})\in \A'$ holds for some $\bar{y}\ge x_i+1+d$.
The greedy nature of $\A'$ then guarantees that if $(x,y)\in \A'$ with $x\in [x_i+1\dd q_i)$, then $y>x+d$.
We decompose $X[x_i+1\dd q_i-(w+w')]$ (which is contained in $X_i$) into maximal phrases $X[x\dd x']$ satisfying $X[x\dd x'] \simeq_{\A'} Y[x+d'\dd x'+d']$ for some $d' \in (d\dd w']$ and remaining single characters (deleted or substituted by $\A'$).
Observe that this yields at most $1+c'_i$ phrases and at most $c'_i$ single characters.
Moreover, we have $X[x\dd x'] \simeq_{\A'} Y[x+d'\dd x'+d'] \simeq_{\A} X[x+(d'-d)\dd x'+(d'-d)]$
due to $x_i+1 \le x \le x+(d'-d)$ and $x'+(d'-d) \le x'+(w'+w) \le q_i$. Hence,
each phrase $X[x\dd x']$ has another occurrence located $d'-d\in [1\dd w+w']$ positions to the right.

In the second case, we assume that $(x_i+1,\bar{y})\in \A'$ holds for some $\bar{y}\le x_i+1+d$.
The greedy nature of $\A'$ then guarantees that if $(x,y)\in \A'$ with $x\in [x_i+1\dd q_i)$, then $y<x+d$. 
We decompose $Y[x_i+1+d\dd q_i-(w+w')+d]$ (which is contained in $Y_i$) into maximal phrases $Y[y\dd y']$ satisfying $Y[y\dd y'] \simeq_{\A'} X[y-d'\dd y'-d']$ for some $d'\in [-w\dd d)$ and remaining single characters (deleted or substituted by $\A'$).
Observe that this yields at most $1+c'_i$ phrases and at most $c'_i$ single characters.
Moreover, we have $Y[y\dd y'] \simeq_{\A'} X[y-d'\dd y'-d'] \simeq_{\A} Y[y+(d-d')\dd y+(d-d')]$
due to $x_i+1 \le x_i+1 + d-d' \le y-d'$ and $y'-d' \le q_i-(w+w')+d-d' \le q_i$.
Hence, each phrase $Y[y\dd y']$ has another occurrence located $d-d' \in [1\dd w+w']$ characters to the right.
Since $Y[x_i+1+d\dd q_i+d] = X[x_i+1\dd q_i]$, this decomposition of $Y[x_i+1+d\dd q_i-(w+w')+d]$
gives an analogous decomposition of $X[x_i+1\dd q_i-(w+w')]$.
\end{proof}

The set $\Delta_X(\A,\A')$ can be covered by $\sum_{i=1}^m \lceil\frac{1}{b}(c'_i+1)\rceil \le m + \frac{\cost(\A')}{b}\le \cost(\A)+\frac{\cost(\A')}{b}$ contexts $\Ctx{w+w'+2b}{X}{\cdot}$. 
If $b \ge \cost(\A')+1$, then the contexts starting at positions in $\brkp_X(\A)$ are sufficient.
\end{proof}

\begin{lemma}\label{lem:bcgk}
Let $\A_W$ be the zip alignment of $m$-complete walks $W_X = (x_t)_{t=1}^{m+1}$, $W_Y=(y_t)_{t=1}^{m+1}$ over strings $X,Y\in \Sigma^*$, and let $k'\in \Zp$.
If $\A_W$ is greedy, then $\brkp_X(\A_W)\sub P_X$, where $P_X$ is the set of positions $x\in [1\dd |X|]$ such that $X[x]$ is covered by $\Dtx{k'}{X}{x_t}$ for some $t\in [1\dd m]$ with $\bot \ne \bCGK_{k'}(W_X)[t] \ne \bCGK_{k'}(W_Y)[t]$.
Moreover, for every positive integer $k\ge \ed(X,Y)$ such that $k'\ge \width(\A_W)+5k$:
\begin{itemize}
	\item every $\A\in\ga_k(X,Y)$ satisfies $\Delta_X(\A,\A_W)\sub P_X$,
	\item $\hd(\bCGK_{k'}(X),\bCGK_{k'}(Y)) \le c(k+\frac1k\cost(\A_W))\log m$ holds for a sufficiently large constant $c$.
\end{itemize}
\end{lemma}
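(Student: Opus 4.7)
This is the technical core, and the two bullets of Claim~2 will follow from it combined with \cref{lem:close}. Fix $x \in \brkp_X(\A_W)$, realised at some time $\tau$ of the zip alignment, so $x_\tau = x$. Applying \cref{lem:cover} to the walk $W_X$, some $t \in [1\dd m]$ with $\bCGK_{k'}(W_X)[t] \ne \bot$ satisfies that $\Dtx{k'}{X}{x_t}$ covers $X[x]$ (the single context already suffices, and the double context is no shorter). What is missing is a $t$ for which additionally $\bCGK_{k'}(W_X)[t] \ne \bCGK_{k'}(W_Y)[t]$. I argue by contradiction, assuming equality for every covering $t$. Each $\bCGK$ entry stores (the reversed LZ of) a double context and the offset $s_t - s_u$ to the previous active index, so equality at every active index up to $\tau$ propagates two invariants by induction on the active indices: $x_t = y_t$ (from equal offsets, bootstrapped from $x_0 = y_0 = 0$) and $\Dtx{k'}{X}{x_t} = \Dtx{k'}{Y}{y_t}$ as strings. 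Between two consecutive active indices, the earlier double context fully covers the walk positions of the inactive gap---this is the raison d'\^etre of the $\mu$-based sparsification in \cref{alg:bCGK}---so $X$ and $Y$ agree pointwise throughout the union of covering contexts, including at $x$. Hence $X[x_\tau] = Y[y_\tau]$, contradicting greediness of $\A_W$ at the breakpoint; the alternative $y_\tau = |Y|+1$ is already excluded because it forces $\bCGK_{k'}(W_Y)[t] = \bot$ at the covering $t$.

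\textbf{First bullet of Claim 2 ($\Delta_X(\A,\A_W)\sub P_X$).} Apply \cref{lem:close} with $\A_W$ in the role of $\A$ and the given alignment $\A \in \ga_k(X,Y)$ in the role of $\A'$, choosing $b = k+1 > \cost(\A)$. Its second part produces, for every $x \in \Delta_X(\A, \A_W)$, a position $x' \in \brkp_X(\A_W)$ such that $\Ctx{\width(\A_W) + \width(\A) + 2b}{X}{x'}$ covers $X[x]$. Using $\width(\A) \le \cost(\A) \le k$ and $k \ge 1$, the threshold is at most $\width(\A_W) + 3k + 2 \le \width(\A_W) + 5k \le k'$, so the enclosing context lies inside $\Ctx{k'}{X}{x'}$. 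Claim~1 applied to $x'$ yields a witness index $t^*$; the only subtlety left is to verify that $\Dtx{k'}{X}{x_{t^*}}$ still covers $X[x]$ and not merely $X[x']$. I would do this by re-splitting the double context at $x'$: \cref{obs:maxLZ} gives $\maxLZ(X[x_{t^*} \dd x')) \le k'$ inherited from the original left piece of $\Dtx{k'}{X}{x_{t^*}}$, while the right piece starting at $x'$ can be extended at least to the end of $\Ctx{k'}{X}{x'}$, which contains $X[x]$.

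\textbf{Second bullet of Claim 2 (Hamming bound).} Fix any greedy alignment $\A_{\mathrm{opt}}$ of $X,Y$ with $\cost(\A_{\mathrm{opt}}) \le k$ (existing by \cref{obs:greedy_and_optimal}). The synchronisation argument from Claim~1 shows that $\bCGK_{k'}(W_X)[t] = \bCGK_{k'}(W_Y)[t]$ at every active $t$ whose double contexts $\Dtx{k'}{X}{x_t}$ and $\Dtx{k'}{Y}{y_t}$ avoid $\Delta_X(\A_{\mathrm{opt}}, \A_W)$ and its $Y$-analogue, so mismatches can be charged to active $t$'s whose double context touches one of the two discrepancy sets. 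The first half of \cref{lem:close} (with $b = k$) covers $\Delta_X(\A_{\mathrm{opt}}, \A_W)$ by at most $k + \cost(\A_W)/k$ single contexts of threshold $\le \width(\A_W) + 3k \le k'$. A careful counting, charging each mismatching $t$ to a covering context via the $\Oh(\log m)$ multiplicity bound of \cref{lem:cover} (and using that the dyadic $\mu$-sparsification prevents active double contexts from clustering around any single covering context), yields at most $\Oh(\log m)$ mismatching active $t$'s per covering context; a symmetric count for $Y$ produces the total $\Oh((k + \cost(\A_W)/k) \log m)$.

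\textbf{Main obstacle.} The hardest ingredient, used in both Claim~1 and the Hamming bound, is the synchronisation step: showing that equality of $\bCGK$ entries propagates to joint equality of the walks at every active index together with pointwise agreement of $X$ and $Y$ on the union of covering contexts. Making this rigorous requires carefully handling the inactive gaps between active indices and verifying that \emph{double} (rather than single) contexts are long enough to bridge them, which is precisely why $\bCGK$ is built from double contexts and the dyadic thresholds $\mu(\cdot,\cdot)$.
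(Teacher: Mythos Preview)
Your overall architecture matches the paper's, but the execution of Claim~1 and the first bullet contains real gaps.

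\textbf{Claim 1.} Your inductive synchronisation argument does not go through. The contradiction hypothesis is only that $\bCGK_{k'}(W_X)[t]=\bCGK_{k'}(W_Y)[t]$ at indices $t$ whose double context covers $X[x]$; you then slide to ``equality at every active index up to $\tau$'', which is unjustified (an active index $u$ earlier than the covering $t$ need not have a double context covering $X[x]$, so you cannot invoke the hypothesis at $u$ to pin down $x_u=y_u$). More importantly, the whole induction is unnecessary. The paper's argument is local: take \emph{one} covering $t$ from \cref{lem:cover}; equality of the first components alone gives $\Dtx{k'}{X}{x_t}=\Dtx{k'}{Y}{y_t}$ as strings. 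Since $(x_t,y_t)\in\A_W$ by definition of the zip alignment and $\A_W$ is greedy, it must match $X[x_t\dd x_t+L)$ with $Y[y_t\dd y_t+L)$ for $L=|\Dtx{k'}{X}{x_t}|$, so the breakpoint at $x\in[x_t,x_t+L)$ is already contradicted. No offset component, no $x_t=y_t$, no induction.

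\textbf{First bullet.} Your re-splitting step asserts $\maxLZ(X[x_{t^*}\dd x'))\le k'$ ``inherited from the original left piece'' of $\Dtx{k'}{X}{x_{t^*}}$. This fails when $x'$ lies in the \emph{right} piece of the double context: then $X[x_{t^*}\dd x')$ strictly extends the left piece and its $\maxLZ$ may exceed $k'$. What you actually need is that the \emph{single} context $\Ctx{k'}{X}{x_{t^*}}$ covers $x'$; this does not follow from Claim~1 as a black box (which only promises double-context coverage), but it is exactly what \cref{lem:cover}'s lower bound gives. The paper therefore does not reuse Claim~1 here: it applies \cref{lem:cover} directly at $r\in\brkp_X(\A_W)$ to get a $t$ with $\Ctx{k'}{X}{x_t}$ covering $r$, then the concatenation $\maxLZ(X[x_t\dd r))\le k'$ and $\maxLZ(X[r\dd x])\le k'$ forces $\Dtx{k'}{X}{x_t}$ to cover $x$, and finally re-proves the mismatch at this $t$ by the same greediness argument.

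\textbf{Second bullet.} Your outline is on the right track (optimal greedy $\A_{\mathrm{opt}}$, \cref{lem:close} with $b=k$, charging via \cref{lem:cover}'s $\Oh(\log m)$ multiplicity), but two ingredients are missing that the paper isolates as separate claims. First, the charging is not to arbitrary positions in $\Delta_X(\A_{\mathrm{opt}},\A_W)$ but to a set $R_X$ of $\Oh(k+\tfrac1k\cost(\A_W))$ \emph{endpoints} of covering contexts (plus $|X|$); one shows every mismatching active $t$ has its double context hitting $R_X$. Second, the Hamming distance counts also indices where exactly one side is $\bot$; the paper bounds these by showing that each such index is immediately preceded in the mismatch set by an index where the first components differ, yielding $\hd\le 2|M|$. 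Your sketch does not address this second-coordinate/$\bot$ bookkeeping.
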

\begin{proof}
Consider a position $x\in \brkp_X(\A_W)$. By \cref{lem:cover}, there is an index $t\in [1\dd m]$ 
such that $\bCGK_{k'}(W_X)[t]\ne \bot$ and $\Dtx{k'}{X}{x_t}$ covers $X[x]$.
To conclude that $x\in P_X$, it remains to prove $\bCGK_{k'}(W_X)[t]\ne \bCGK_{k'}(W_Y)[t]$.
For a proof by contradiction, suppose that $\bCGK_{k'}(W_X)[t]=\bCGK_{k'}(W_Y)[t]$.
This implies $\Dtx{k'}{X}{x_t}=\Dtx{k'}{Y}{y_t}$. Since $\A_W$ is greedy, we derive $X[x_t\dd x_t+|\Dtx{k'}{X}{x_t}|) \simeq_{\A_W} Y[y_t\dd y_t + |\Dtx{k'}{Y}{y_t}|)$, which contradicts $x\in \brkp_X(\A_W)$.

In the remainder of the proof, we assume that $k\ge \ed(X,Y)$ and $k'\ge \width(\A_W)+5k$.
Next, consider a greedy alignment $\A$ with $\cost(\A)\le k$ and a position $x\in \Delta_X(\A,\A_W)$.
Due to $k' \ge \width(\A_W)+\width(\A)+4k$, \cref{lem:close} shows that there exists a position $r\in \brkp_X(\A_W)$ such that $\Ctx{k'}{X}{r}$ covers $X[x]$
and, by \cref{lem:cover}, there exists a position $t\in [1\dd m]$ such that $\bCGK_{k'}(X)[t]\ne \bot$ and $\Ctx{k'}{X}{x_t}$ covers $X[r-1]$.
Since $\maxLZ(X[x_t\dd r)),\maxLZ(X[r\dd x])\le k'$, we conclude that both $r$ and $x$ are covered by $\Dtx{k'}{X}{x_t}$.
We shall prove that $\bCGK_{k'}(X)[t] \ne \bCGK_{k'}(Y)[t]$.
For a proof by contradiction, suppose that $\bCGK_{k'}(W_X)[t]=\bCGK_{k'}(W_Y)[t]$,
which implies $\Dtx{k'}{X}{x_t}=\Dtx{k'}{Y}{y_t}$.
Let us fix the smallest $t'\in [t\dd m+1]$ such that $(x_{t'},y_{t'})\in \brkp(\A_W)$.
Observe that $X[x_t\dd x_{t'}) \simeq_{\A_{W}} Y[y_t \dd y_{t'})$ and, by the greedy nature of $\A_{W}$,
$X[x_t\dd x_{t'})=Y[y_t \dd y_{t'})$ is the longest common prefix of $X[x_t\dd]$ and $Y[y_t\dd]$.
At the same time, due to $x_t  < r \in \brkp_X(\A_W)$, we have 
$x_{t'}\le r$, so $X[x_t\dd x_{t'}]$ is a prefix of $\Dtx{k'}{X}{x_t}$.
However, $X[x_t\dd x_{t'}]$ is not a prefix $Y[y_t\dd ]$, so $\Dtx{k'}{X}{x_t}$
is not a prefix of $Y[y_t\dd ]$ and  $\Dtx{k'}{X}{x_t} \ne \Dtx{k'}{Y}{y_t}$.
The contradiction completes the proof.

Finally, we bound $\hd(\bCGK_{k'}(X),\bCGK_{k'}(Y))$ using several claims.
Consider a set $M$ of indices $t\in [1\dd m]$
such that $\bCGK_{k'}(X)[t]$ and $\bCGK_{k'}(Y)[t]$ differ on the first coordinate.
\begin{claim}\label{clm:bcgk}
If $t\in M$ satisfies $\bCGK_{k'}(X)[t] \ne \bot$,
then \[[x_{t-1}\dd x_t+|\Dtx{k'}{X}{x_t}|]\cap (\brkp_X(\A_W)\cup \{|X|+1\})\ne \emptyset.\]
\end{claim}
\begin{proof}
For a proof by contradiction, suppose that $[x_{t-1}\dd x_t+|\Dtx{k'}{X}{x_t}|]\cap (\brkp_X(\A_W)\cup \{|X|+1\})=\emptyset$.
Let $L$ be the length of the longest common prefix of $X[x_t\dd]$ and $Y[y_t\dd ]$.
Note that $(x_t+L,y_t+L)\in \brkp(\A_W)$, so $x_t + L \in \brkp_X(\A_W)\cup \{|X|+1\}$.
Consequently, $L > |\Dtx{k'}{X}{x_t}| \ge |\Ctx{k'}{X}{x_t}|$,
and therefore $X[x_t\dd x_t+|\Dtx{k'}{X}{x_t}|]=Y[y_t\dd y_t+|\Dtx{k'}{Y}{y_t}|]$ and 
$X[x_t\dd x_t+|\Ctx{k'}{X}{x_t}|]=Y[y_t\dd y_t+|\Ctx{k'}{Y}{y_t}|]$.
In particular, $\Dtx{k'}{X}{x_t}=\Dtx{k'}{Y}{y_t}$.

If $t=1$, then we note that $\bCGK_{k'}(X)[t]\ne \bot \ne \bCGK_{k'}(Y)[t]$,
so $\Dtx{k'}{X}{x_t}=\Dtx{k'}{Y}{y_t}$ contradicts $t\in M$.

In the following, we assume that $t\ge 2$. 
Due to $(x_{t-1},y_{t-1})\notin \brkp(\A_W)$, either $(x_{t-1},y_{t-1})=(x_t,y_t)$,
or $(x_{t-1},y_{t-1})=(x_t-1,y_t-1)$ and $X[x_t-1]=Y[y_t-1]$.
In either case, we have $X[x_{t-1}\dd x_{t-1}+|\Ctx{k'}{X}{x_{t-1}}|]=Y[y_{t-1}\dd y_{t-1}+|\Ctx{k'}{Y}{y_{t-1}}|]$
due to $X[x_{t-1}\dd x_{t}+|\Ctx{k'}{X}{x_t}|]=Y[y_{t-1}\dd y_t+|\Ctx{k'}{Y}{y_t}|]$ and by \cref{obs:maxLZ}.
Consequently, $\min\{u\in [1\dd m] : x_u = x_{t-1} + |\Ctx{k'}{X}{x_{t-1}}|\}
=\min\{u\in [1\dd m] : y_u = y_{t-1} + |\Ctx{k'}{Y}{y_{t-1}}|\}$ and,
by a similar reasoning, $\min\{u\in [1\dd m] : x_u = x_{t} + |\Ctx{k'}{X}{x_t}|\}=
\min\{u\in [1\dd m] : y_u = y_{t} + |\Ctx{k'}{Y}{y_t}|\}$.
Hence, the assumption  $\bCGK_{k'}(X)[t] \ne \bot$ implies $\bCGK_{k'}(Y)[t]\ne \bot$.
Therefore, $\Dtx{k'}{X}{x_t}=\Dtx{k'}{Y}{y_t}$ contradicts $t\in M$.
\end{proof}

\begin{claim}
There are $\Oh((k+\frac1k\cost(\A_W))\log m)$ positions $t\in M$ such that $\bCGK_{k'}(X)[t]\ne \bot$.
\end{claim}
\begin{proof}
Let $\Opt$ be an optimum greedy alignment between $X$ and $Y$. 
Due to $k' \ge \width(\Opt)+\width(\A_W)+4k$, from~\cref{lem:close} it follows that $\Delta_X(\Opt,\A_{W})$
can be covered by $\Oh(\cost(\Opt)+\frac{1}{k}\cost(\A_W))=\Oh(k+\frac1k\cost(\A_W))$ contexts $\Ctx{k'}{X}{\cdot}$. Let us fix such the smallest family $\Cv$ of contexts covering $\brkp_X(\A_{W})\sub \Delta_X(\Opt,\A_{W})$.

Define a set $R_X = \{|X|\} \cup \bigcup_{X[q\dd r]\in \Cv} \{q-1,r+1\}$
and note that $|R_X| = \Oh(|\Cv|)$.
We shall prove that, if $t\in M$ and $\bCGK_{k'}(X)[t]\ne \bot$, then $\Dtx{k'}{X}{x_t}$ covers at least one position in $R_X$. This is sufficient to derive the claim because, by \cref{lem:cover}, every position in $X$ is covered by at most $\Oh(\log m)$ double contexts $\Dtx{k'}{X}{x_t}$ with $\bCGK_{k'}(X)[t]\ne \bot$.

By \cref{clm:bcgk}, we have $[x_{t-1}\dd x_t+|\Dtx{k'}{X}{x_t}|] \cap (\brkp_X(\A_{W})\cup \{|X|+1\})\ne \emptyset$. If $ \{|X|+1\}\in [x_{t-1}\dd x_t+|\Dtx{k'}{X}{x_t}|]$, then $|X|+1=  x_t+|\Dtx{k'}{X}{x_t}|$.
Hence, $\Dtx{k'}{X}{x_t}$ covers the position $|X|\in R_X$.
Thus, we may assume that $[x_{t-1}\dd x_t+|\Dtx{k'}{X}{x_t}|]$ contains a position in $\brkp_X(\A_W)$.
Note that the fragment $X[q\dd r]\in \Cv$ covering that position in $\brkp_X(\A_W)$
satisfies $r \ge x_{t-1}$ and $q \le x_t+|\Dtx{k'}{X}{x_t}|$.
In particular, $r+1 \ge x_t$ and $q-1 < x_t+|\Dtx{k'}{X}{x_t}|$.
Now, if $\Dtx{k'}{X}{x_t}$ covers $q-1$ or $r+1$, then we are done.
Otherwise, $q-1 < x_t$ and $r+1 \ge x_t+|\Dtx{k'}{X}{x_t}|$,
so $q \le x_t$ and $r \ge  x_t+|\Dtx{k'}{X}{x_t}|$, i.e., $\Dtx{k'}{X}{x_t}$ is contained in $X[q\dd r]$
and, since $\maxLZ$ is monotone, $\maxLZ(\Dtx{k'}{X}{x_t})\le k'$.
Consequently, $\Dtx{k'}{X}{x_t}=\Ctx{k'}{X}{x_t}$ is a suffix of $X$, so $\Dtx{k'}{X}{x_t}$ covers the position $|X|\in R_X$.
\end{proof}

A symmetric argument shows that there are $\Oh((k+\frac1k\cost(\A_W))\log m)$ positions $t\in M$ such that $\bCGK_{k'}(Y)[t] \ne \bot$. Consequently, $|M|=\Oh((k+\frac1k\cost(\A_W))\log m)$. 
We bound $\hd(\bCGK_{k'}(X),\bCGK_{k'}(Y))$ using the following claim. 
\begin{claim}
	$\hd(\bCGK_{k'}(X),\bCGK_{k'}(Y)) \le 2|M|$.
\end{claim}
\begin{proof}
 Let $M' = \{t\in [1\dd m] :\bCGK_{k'}(X)[t] \ne \bCGK_{k'}(Y)[t]\}$.
 Note that $M\sub M'$, so the claim is equivalent to $|M'\sm M| \le |M|$.
 For every  $t\in M'\sm M$, we shall prove that $t$ is not the leftmost position in $M'$ and that the preceding position in $M'$ belongs to $M$.

	The assumption $t\in M'\sm M$ implies $\bCGK_{k'}(X)[t] \ne \bot \ne \bCGK_{k'}(Y)[t]$.
	We define $t'$ as the largest position in $[1\dd t)$ such that $\bCGK_{k'}(X)[t']\ne \bot$ or $\bCGK_{k'}(Y)[t']\ne \bot$.
	To see that $t'$ is well defined, note that $t>1$ and $\bCGK_{k'}(X)[1]\ne \bot$.

	Now, suppose that $t'\notin M$. Consequently, we have $\bCGK_{k'}(X)[t'] \ne \bot \ne \bCGK_{k'}(Y)[t']$.
	Due to $\bCGK_{k'}(X)[t'+1\dd t)=\bot^{t-t'-1}$, \cref{lem:cover} implies that $\Ctx{k'}{X}{x_{t'}}$
	covers $X[x_t-1]$, that is, $\maxLZ(X[x_{t'}\dd x_t))\le k'$, and
	hence $X[x_{t'}\dd x_t]$ is a prefix of $\Dtx{k'}{X}{x_{t'}}$.
	Moreover, $\Dtx{k'}{X}{x_{t'}} = \Dtx{k'}{Y}{y_{t'}}$
	implies $X[x_{t'}\dd x_t] \simeq_{\A_{W}} Y[y_{t'}\dd y_{t'}+x_t - x_{t'}]$ by the greedy nature of $\A_{W}$.
	As $(x_t,y_t)\in \A_{W}$, we conclude that $y_t = y_{t'}+x_t - x_{t'}$.
	At the same time, since $\bCGK_{k'}(X)[t'+1\dd t) = \bot^{t-t'-1} = \bCGK_{k'}(Y)[t'+1\dd t)$,
	we have $\bCGK_{k'}(X)[t] = (\LZ(\rev{\Dtx{k'}{X}{x_{t}}}),x_t-x_{t'})$ and $\bCGK_{k'}(Y)[t] = (\LZ(\rev{\Dtx{k'}{Y}{y_{t}}}),y_t-y_{t'})$.
	Thus, $\bCGK_{k'}(X)[t]= \bCGK_{k'}(Y)[t]$, which contradicts $t\in M'$. 

	Consequently, $t'\in M$.
	In this case, $M'\cap [t'\dd t] = \{t',t\}$, so $t'$ is the position of $M'$ preceding~$t$.
	Our goal was to show that such a position exists and belongs to $M$, so this completes the proof of the claim.
\end{proof}
Overall, we conclude that $\hd(\bCGK_{k'}(X),\bCGK_{k'}(Y)) \le 2|M| = \Oh((k+\frac1k\cost(\A_W))\log m)$.
\end{proof}

\begin{lemma}\label{cor:bcgkp_construction}
	Given integers $n\ge k \ge 1$, a seed $r$ of $\Oh(\log^2 n)$ random bits, and streaming access to a string $S\in \Sigma^{\le n}$, the string $\bCGK_k(\Alg(S,n,r))$ can be computed in $\tOh(k)$ space and $\tOh(nk)$ time.
	\end{lemma}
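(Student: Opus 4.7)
We run $\Alg$ from \cref{prp:alg} on the incoming stream $S$, producing walk positions $W=(s_t)_{t=1}^{m+1}$ online, and in parallel maintain a compressed representation $\CR$ (via \cref{prp:RLSLP}) of a sliding window of $S$ that contains the double context $\Dtx{k}{S}{p}$ at the walk position $p$ of the currently-processed walk time. According to \cref{alg:bCGK}, each entry $\bCGK_k(W)[t]$ is determined by $s_t$, the context length $|\Ctx{k}{S}{s_t}|$, the first walk time $v$ satisfying $s_v = s_t + |\Ctx{k}{S}{s_t}|$, and---for non-$\bot$ entries---the representation $\LZ(\rev{\Dtx{k}{S}{s_t}})$ together with the offset $s_t-s_u$; all of these are extractable from the window and the walk simulator.

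\textbf{Window maintenance.} Any prefix of a double context is the concatenation of (parts of) two $k$-bounded contexts, so its reverse has LZ size at most $2k$ by \cref{fct:lz_properties}; hence $\CR$ of the window occupies $\tOh(k)$ space by \cref{prp:RLSLP}. Extending the window by one character on the right uses \cref{prp:RLSLP}\ref{it:concat} to concatenate with a trivial single-character $\CR$, after which \cref{prp:RLSLP}\ref{it:rlz} computes $|\LZ(\rev{S[p\dd r+1]})|$ and thus, by \cref{obs:maxLZ}\ref{it:test}, decides whether the currently-tracked context still extends past $r$. When the walk advances $p\to p+1$, the algorithm extracts $\CR(S[p+1\dd r])$ via \cref{prp:RLSLP}\ref{it:extract}; \cref{obs:maxLZ}\ref{it:ppp} preserves the invariant $\maxLZ(S[p+1\dd r])\le 2k$. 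Each of these operations costs $\tOh(k)$ time.

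\textbf{Orchestration and output.} The algorithm interleaves three actions: reading the next character of $S$ (extending the window on the right), advancing the walk simulator by one step (consuming the appropriate buffered character on forward steps), and emitting outputs for walk times whose dependencies are resolved. Specifically, once $|\Ctx{k}{S}{s_t}|$ has been determined (the rightward extension failed), the window has been extended to cover $\Dtx{k}{S}{s_t}$, and the walk simulation has reached $s_t+|\Ctx{k}{S}{s_t}|$ (resolving $v$), the algorithm computes $\mu_t=\mu(t,v)$, compares with the previously emitted $\mu_{t-1}$, and, if $\mu_t>\mu_{t-1}$, emits the non-$\bot$ pair after extracting $\LZ(\rev{\Dtx{k}{S}{s_t}})$ via \cref{prp:RLSLP}\ref{it:rlz} and updates $u$ accordingly.

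\textbf{Main obstacle and analysis.} The principal subtlety is ensuring that the bookkeeping for pending walk times---those whose $v$ is not yet known---fits in $\tOh(k)$ space. The crucial observation is that all pending positions lie within the single currently-active double context, so their contextual data is already captured by one $\CR$ of size $\tOh(k)$; the associated walk-time counters can be maintained incrementally by exploiting the overlap of consecutive contexts guaranteed by \cref{obs:maxLZ}\ref{it:ppp}. Each character of $S$ and each walk step is processed in $\tOh(k)$ time (dominated by a constant number of $\CR$ operations, each costing $\Oh(k\log^4 n)$); with $\Oh(n)$ characters and $m=\Oh(n)$ walk steps, the total time is $\tOh(nk)$, and the working space is $\tOh(k)$ for the $\CR$ plus $\Oh(\log^2 n)$ for the walk simulator state (\cref{prp:alg}).
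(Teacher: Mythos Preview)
Your overall strategy—maintain a compressed sliding window and drive it by the walk from \cref{prp:alg}—is in the right spirit, but there is a genuine gap precisely where you flag the ``main obstacle.'' With a \emph{single} walk simulator, once you advance it to position $q=s_t+|\Ctx{k}{S}{s_t}|$ in order to learn $v$, you have already consumed all walk times $t+1,\dots,v-1$. To later emit $\bCGK_k(W)[t']$ for those $t'$ you must still know $s_{t'}$ and the corresponding $v_{t'}$, but the simulator of \cref{prp:alg} is one-pass and cannot be rewound; storing this information per pending time (or even per pending position in the window) costs $\Theta(q-s_t)$ words, which can be $\Theta(n)$ when contexts are long. Your sentence ``the associated walk-time counters can be maintained incrementally by exploiting the overlap of consecutive contexts'' does not resolve this: \cref{obs:maxLZ}\ref{it:ppp} only says that context endpoints are monotone in the starting position, which is necessary but nowhere near sufficient to collapse the pending-time bookkeeping to $\tOh(k)$ space.

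The paper closes this gap by running \emph{two} copies of \cref{prp:alg} and \emph{two} copies of a small context-computing subroutine, staggered so that one pipeline is at the leading position $q$ while the other is at the trailing position $p$. Concretely, a first streaming pass reads the input and emits $\CR(\Ctx{k}{S}{q})$ for successive $q$; upon each emission it extracts $S[q]$, feeds it to the first walk simulator (which reports every $v$ with $s_v=q$), and also feeds $S[q]$ to a second context pass. That second pass then emits $\CR(\Ctx{k}{S}{p})$ for every $p$ whose context ends at $q$, and forwards $S[p]$ to the second walk simulator (which reports every $t$ with $s_t=p$). At that instant both $\Ctx{k}{S}{p}$ and $\Ctx{k}{S}{q}$—hence $\Dtx{k}{S}{s_t}=\Ctx{k}{S}{p}\cdot\Ctx{k}{S}{q}$—are in hand, $v_t=\min\{v:s_v=q\}$ is already known from the leading simulator, and $\mu_t$ can be computed and compared to $\mu_{t-1}$. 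Nothing is buffered beyond the two $\tOh(k)$-size context windows and the two $\Oh(\log^2 n)$-bit simulator states. Your single-simulator sketch is missing exactly this two-lag pipeline, and without it the space bound does not go through.
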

	\begin{proof}
	Let us start with an auxiliary subroutine:
	\begin{claim}\label{clm:streaming_single}
		There is a streaming algorithm that computes $\CR(\Ctx{k}{S}{p})$ for subsequent positions $p\in [1\dd |S|]$. The algorithm uses $\tOh(k)$ space and $\tOh(nk)$ time. 
		Moreover, if $\Ctx{k}{S}{p}=S[p\dd q)$, then $\CR(\Ctx{k}{S}{p})$ is reported while the algorithm processes $S[q]$ (or the end-of-string token if $q=|S|+1$).
	\end{claim}
	\begin{proof}
	The algorithm maintains a fragment $S[p\dd q)$ satisfying $\maxLZ(S[p\dd q))\le k$
	and the encoding $\CR(S[p\dd q))$. Initially, $S[p\dd q)=S[1\dd 1)=\eps$.
	
	In each iteration, we read $S[q]$, compute $\CR(S[p\dd q])$ (using \cref{prp:RLSLP}\ref{it:concat}),
	and check whether $|\LZ(\rev{S[p\dd q]})|\le k$ (using \cref{prp:RLSLP}\ref{it:rlz}).
	By \cref{obs:maxLZ}\ref{it:test}, this condition is equivalent to $\maxLZ(S[p\dd q])\le k$.
	If $\maxLZ(S[p\dd q])\le k$, we discard $\CR(S[p\dd q))$ and increment $q$.
	Otherwise, we are guaranteed that $S[p\dd q)=\Ctx{k}{S}{p}$, so we output $\CR(S[p\dd q))$,
	compute $\CR(S[p+1\dd q))$ (using \cref{prp:RLSLP}\ref{it:extract}), discard $\CR(S[p\dd q))$ and $\CR(S[p\dd q])$,
	and increment $p$.
	By \cref{obs:maxLZ}\ref{it:ppp}, we are guaranteed that the invariant $\maxLZ(S[p\dd q))\le k$ remains satisfied.
	In the special case of $q=|S|+1$, we proceed as if $\maxLZ(S[p\dd q])> k$.
	
	By \cref{obs:maxLZ}\ref{it:tpp}, we store $\CR(X)$ only for strings $X$ satisfying $|\maxLZ(X)|\le k+1$,
	so the space usage and the per-iteration running time is $\tOh(k)$.
	Each iteration increments either $p$ or $q$, so the algorithm reads $S$ in a streaming fashion and 
	the amortized running time is $\tOh(k)$ per character.
	\end{proof}

	We maintain two instances of the algorithm of \cref{clm:streaming_single} and two instances of the algorithm of \cref{prp:alg}.
	We feed the first instance of \cref{clm:streaming_single} with the input stream $S$, obtaining 
	$\CR(\Ctx{k}{S}{q})$ for subsequent positions $q\in [1\dd |S|]$. 
	Upon retrieving $\CR(\Ctx{k}{S}{q})$, we extract $S[q]$ using \cref{prp:RLSLP}\ref{it:access}
	and forward $S[q]$ to the first instance of \cref{prp:alg}, which lists indices $v\in [1\dd 3n]$ such that $s_v = q$.
	We also forward $S[q]$ to the second instance of \cref{clm:streaming_single}, obtaining  $\CR(\Ctx{k}{S}{p})$ for all subsequent positions $p$ such that $\CR(\Ctx{k}{S}{p})=S[p\dd q)$.
	Upon retrieving $\CR(\Ctx{k}{S}{p})$, we extract $S[p]$ using \cref{prp:RLSLP}\ref{it:access}
	and forward $S[p]$ to the second instance of \cref{prp:alg}, which lists indices $t\in [1\dd 3n]$ such that $s_t = q$.
	For each such position $t$, we have $\Dtx{k}{S}{s_t} = \Ctx{k}{S}{p}\Ctx{k}{S}{q}$.
	We compute $\mu_t = \mu(t,\min\{v\in [1\dd 3n] : s_v = q\})$ based on the output of the first instance of \cref{prp:alg}. If $\mu_t > \mu_{t-1}$, we construct $\LZ\left(\rev{\Dtx{k}{S}{s_t}}\right)$
	using \cref{prp:RLSLP}\ref{it:concat} and \cref{prp:RLSLP}\ref{it:rlz}.
	
	By \cref{prp:RLSLP,prp:alg,clm:streaming_single}, the algorithm uses $\tOh(k)$ space and $\tOh(nk)$ time.
\end{proof}

\subsection{Applications of Hamming distance sketches}

Let us start by reminding the fingerprints (sketches) for testing string equality.

 
\begin{fact}[see e.g.~\cite{DBLP:journals/ibmrd/KarpR87}]\label{fact:kr}
There exists a fingerprint $\psi$ (parameterized by an integer $n\in \Zp$, a threshold $\delta$ with $1 \ge \delta \ge n^{-\Oh(1)}$, an alphabet $\Sigma=[0\dd n^{\Oh(1)})$, and a seed of $\Oh(\log n)$ random bits) such that:
\begin{enumerate}
	\item The fingerprint $\psi(S)$ of a string $S\in \Sigma^{\le n}$ takes $\Oh(\log \delta^{-1})$ bits.
	Given streaming access to $S$, it can be constructed in $\Oh(|S|)$ time using $\Oh(\log n)$ bits of space.
	\item For all strings $X,Y\in \Sigma^{\le n}$, we have $\Pr[\psi(X)= \psi(Y)] \le \delta$ if $X\ne Y$ (and $\psi(X)=\psi(Y)$ otherwise).
\end{enumerate}
\end{fact}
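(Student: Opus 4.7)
The plan is to realize $\psi$ as the standard Karp--Rabin polynomial fingerprint augmented with the string length. First I would fix a prime $p$ in the range $[Cn\delta^{-1}, 2Cn\delta^{-1}]$ for a sufficiently large absolute constant $C$ (such a prime exists by Bertrand's postulate and can be hardcoded, as it is shared by all strings). The seed encodes a uniformly random $x\in \{0,1,\ldots,p-1\}$ and so occupies $\Oh(\log p) = \Oh(\log(n\delta^{-1})) = \Oh(\log n)$ bits, using $\delta \ge n^{-\Oh(1)}$. The fingerprint is then
\[\psi(S) \;=\; \bigl(\,|S|,\;\textstyle\sum_{i=1}^{|S|} S[i]\cdot x^{i-1} \bmod p\,\bigr),\]
which takes $\Oh(\log n + \log p) = \Oh(\log \delta^{-1})$ bits (again absorbing $\log n$ into $\log \delta^{-1}$).

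Next I would verify the collision bound. Suppose $X\neq Y$ yet $\psi(X) = \psi(Y)$. Equality of the first coordinate forces $|X|=|Y|=:m\le n$, and equality of the second yields $f(x) \equiv 0 \pmod{p}$ for the integer polynomial $f(T) = \sum_{i=1}^{m}(X[i]-Y[i])\,T^{i-1}$. Since $X\neq Y$, at least one coefficient of $f$ is a non-zero integer in $(-\sigma,\sigma)$; by taking $C$ large enough to dominate $\sigma = n^{\Oh(1)}$, we ensure $\sigma < p$, so this coefficient is also non-zero modulo $p$ and $f$ is a non-zero polynomial over $\mathbb{F}_p$. Because $\deg f < m \le n$, it has at most $n-1$ roots in $\mathbb{F}_p$, hence $\Pr_x[f(x)\equiv 0\pmod p] \le (n-1)/p \le \delta$, provided $C\ge 1$.

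For the streaming construction, apply Horner's rule incrementally: maintain the running values $h_t := \sum_{i=1}^{t} S[i]\cdot x^{i-1} \bmod p$ and $q_t := x^{t}\bmod p$, along with a counter for $t$; each uses $\Oh(\log p) = \Oh(\log n)$ bits. Upon receiving $S[t{+}1]$, perform the constant number of modular operations $h_{t+1}\gets h_t + S[t{+}1]\cdot q_t \bmod p$ and $q_{t+1}\gets q_t\cdot x \bmod p$, which take $\Oh(1)$ time on $w$-bit words since $w=\Omega(\log n)$. After the stream ends, output $(|S|, h_{|S|})$. Total time is $\Oh(|S|)$ and working space is $\Oh(\log n)$ bits, as required. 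I anticipate no substantial obstacle; the only calibration is choosing $C$ so that $p$ exceeds the alphabet size, which is what guarantees the non-vanishing of $f$ modulo $p$ and hence the $\delta$ collision probability.
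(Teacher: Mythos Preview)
The paper does not provide a proof of this fact; it is stated with a citation to Karp--Rabin as folklore. Your construction is the standard one and is essentially right, but two calibration points deserve attention.

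First, your claim that an \emph{absolute} constant $C$ can ensure $p > \sigma$ is not correct: with $p \in [Cn\delta^{-1}, 2Cn\delta^{-1}]$, if the alphabet size is, say, $\sigma = n^2$ and $\delta$ is close to~$1$, then $p = \Theta(n) < \sigma$ for large $n$, and a nonzero coefficient $X[i]-Y[i]$ could vanish modulo~$p$. The easy fix is to pick $p$ in the range $[\,(n+\sigma)\delta^{-1},\,2(n+\sigma)\delta^{-1}\,]$; since $\sigma, \delta^{-1} = n^{O(1)}$, one still has $\log p = O(\log n)$, so the seed and working space remain $O(\log n)$ bits.

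Second, the ``absorption'' of $\log n$ into $\log \delta^{-1}$ is backwards: the hypothesis $\delta \ge n^{-O(1)}$ yields $\log \delta^{-1} = O(\log n)$, not the reverse. Your fingerprint (which stores $|S|$ and a residue modulo $p$) therefore genuinely occupies $\Theta(\log n)$ bits, which can exceed $O(\log \delta^{-1})$ when $\delta$ is bounded away from~$0$. To match the stated bound in full generality one would hash the residue down to $O(\log \delta^{-1})$ bits via a pairwise-independent family, at the cost of a constant factor in the collision probability. That said, in every invocation of this fact within the paper one has $\delta = \Theta(1/n)$, so $\log \delta^{-1} = \Theta(\log n)$ and your one-step construction already suffices there.
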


For two equal-length strings $X,Y$, the set of \emph{mismatch positions} is defined as $\MP(X,Y)=\{i\in [1\dd |X|]: X[i]\ne Y[i]\}$
and the \emph{mismatch information} 
$\MI(X,Y) = \{(i,X[i],Y[i]) : i\in \MP(X,Y)\}$.
Below, we adapt the Hamming sketches of~\cite{clifford2018streaming} to large alphabets.

\begin{theorem}\label{thm:skH}
	For every constant $\delta\in (0,1)$, there exists a sketch $\skH_k$ (parameterized by integers $n\ge k\ge 1$,
	an alphabet $\Sigma=[0\dd \sigma)$, and a seed of $\Oh(\log (n\log \sigma))$ random bits) such that:
	\begin{enumerate}
		\item The sketch $\skH_k(S)$ of a string $S\in \Sigma^{n}$ takes $\Oh(k\log (n\sigma))$ bits. Given streaming access to $S$, it can be constructed in $\Oh(n\log (n\sigma) \log (n\log \sigma))$ time using $\Oh(k\log (n\sigma))$ bits of space.
		\item There exists a decoding algorithm that, given $\skH_k(X)$ and $\skH_k(Y)$ for strings $X,Y\in \Sigma^{n}$, with probability at least $1-\delta$ either returns $\MI(X,Y)$ or certifies that $\hd(X,Y)>k$.
		The algorithm uses $\Oh(k \log (n\sigma) \log^2 (n\log \sigma))$ time and $\Oh(k \log (n\sigma))$ bits of space. 
	\end{enumerate}
\end{theorem}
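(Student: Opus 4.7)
Plan: Reduce to the polynomial-alphabet $k$-mismatch streaming sketch of~\cite{clifford2018streaming}, denoted $\sketch^{\mathrm{pol}}_k$: for strings over $[0\dd N^{\Oh(1)})$ of length $N$ it has size $\Oh(k \log N)$ bits, builds in $\tOh(N)$ streaming time with an $\Oh(\log N)$-bit seed, and decodes to $\MI$ or to a certificate of $\hd > k$ in $\tOh(k)$ time with failure probability $\delta$. The adaptation to general $\Sigma = [0\dd \sigma)$ is by character-level Reed--Solomon expansion. Fix a prime $p$ with $n < p = n^{\Oh(1)}$, set $b := \lceil \log_p \sigma \rceil$, and let $\mathrm{RS}\colon \mathbb{F}_p^b \to \mathbb{F}_p^{2b}$ be a systematic Reed--Solomon code of minimum distance $b + 1$ (treat a $b$-digit message as a polynomial of degree $< b$ over $\mathbb{F}_p$ and evaluate at $2b$ distinct points). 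Encode each character $c \in \Sigma$ first as its base-$p$ representation in $\mathbb{F}_p^b$ and then via $\mathrm{RS}$; concatenate to obtain $\widetilde{S} \in \mathbb{F}_p^{2nb}$. The MDS property of $\mathrm{RS}$ yields the key structural fact: $X[i] = Y[i]$ iff $\widetilde{X}$ and $\widetilde{Y}$ agree on the entire $i$-th block of $2b$ positions, whereas $X[i] \ne Y[i]$ forces them to disagree on at least $b + 1$ of those $2b$ positions. Declare $\skH_k(S) := \sketch^{\mathrm{pol}}_{2bk}(\widetilde{S})$.

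Size and time bookkeeping: $|\skH_k(S)| = \Oh(2bk \log(2nb)) = \Oh(k \log(n\sigma))$ bits; the seed fits in $\Oh(\log(nb)) = \Oh(\log(n\log\sigma))$ bits; and each incoming character $S[i]$ is encoded in $\tOh(b)$ bit operations and fed as $2b$ field-element updates to the streaming construction of $\sketch^{\mathrm{pol}}$, giving a total construction cost of $\Oh(n \log(n\sigma) \log(n\log\sigma))$ time in $\Oh(k \log(n\sigma))$ bits of working space. Decoding runs the $\sketch^{\mathrm{pol}}_{2bk}$ decoder first: if it certifies $\hd(\widetilde{X}, \widetilde{Y}) > 2bk$, then $\hd(X, Y) > k$ by the MDS upper bound on the blow-up, and we emit that certificate. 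Otherwise, we group the returned triples by character block $i = \lceil j / (2b) \rceil$; by the MDS lower bound, every block with any mismatch produces at least $b + 1 \ge b$ revealed coordinates of each of $\widetilde{X}, \widetilde{Y}$, exactly the threshold for uniquely decoding the underlying Reed--Solomon word to recover $X[i]$ and $Y[i]$. The $\le k$ per-block decodings cost $\Oh(k b^2 \log p) = \tOh(k \log \sigma)$ bit operations, dominated by the $\Oh(k \log(n\sigma) \log^2(n\log\sigma))$ cost of the $\sketch^{\mathrm{pol}}$ decoder.

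The only non-routine point is this MDS bookkeeping: verifying that every character-block containing any mismatch yields at least $b$ revealed coordinates per string, which is exactly the minimum-distance condition of Reed--Solomon-$[2b, b]$. Reed--Solomon decoding is itself deterministic, so the failure probability is inherited unchanged from $\sketch^{\mathrm{pol}}_{2bk}$ and remains at most $\delta$. The seed of $\sketch^{\mathrm{pol}}$ is the only randomness in $\skH_k$ and is used identically for $\widetilde{X}$ and $\widetilde{Y}$, so the distributional assumptions of the underlying analysis transfer without modification.
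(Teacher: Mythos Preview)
Your reduction to the polynomial-alphabet sketch via Reed--Solomon expansion is correct and takes a genuinely different route from the paper. The paper also reduces to~\cite{clifford2018streaming}, but via fingerprinting: it splits each character into $b=\lceil\log_{n\log\sigma}\sigma\rceil$ digits and tags every digit with a Karp--Rabin hash $\psi(S[i])$ of the full character, so that (with high probability) a character mismatch forces \emph{all} $b$ block positions to differ, giving an exact $b$-fold Hamming blow-up. Your MDS argument guarantees only that more than half of the $2b$ block positions differ, which makes you double the block length and the threshold; in exchange, your expansion layer is deterministic and contributes no additional failure probability on top of the base sketch. Both reductions are clean, and yours is arguably the more elegant of the two.

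Two bookkeeping points deserve attention. First, the choice $p=n^{\Oh(1)}$ is slightly too small when $\sigma$ is astronomically large (specifically when $\log\log\sigma\gg\log n$): then $b=\Theta(\log_n\sigma)$ picks up an extra $\log(n\log\sigma)/\log n$ factor in the size, construction-time, and decoding-time bounds, and you may also run out of evaluation points (you need $p\ge 2b$). Taking $p$ to be a prime near $n\log\sigma$---mirroring the paper's digit base---fixes both issues and recovers the stated bounds verbatim. Second, your ``$\le k$ per-block decodings'' should read ``$<2k$'': when the base decoder returns $\MI(\widetilde X,\widetilde Y)$ you only know $\hd(\widetilde X,\widetilde Y)\le 2bk$, hence $\hd(X,Y)\le 2bk/(b+1)<2k$, not $\le k$. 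This is still $\Oh(k)$ interpolations so the complexity survives, and returning the full $\MI(X,Y)$ even when $\hd(X,Y)\in(k,2k)$ is permitted by the theorem statement (you could also just count the blocks and certify $\hd>k$ when more than $k$ appear).
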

\begin{proof}
The construction of~\cite{clifford2018streaming} satisfies the required conditions provided that $\sigma = n^{\Oh(1)}$. Henceforth, we assume without loss of generality that $\sigma$ is a power of two satisfying $\sigma \ge n\log \sigma$.

We interpret each character in $\Sigma$ as a block of $b=\lceil \log_{n \log \sigma} \sigma \rceil$
characters in $[0\dd n\log \sigma)$.
Moreover, we consider the fingerprints $\psi$ of \cref{fact:kr} with $\delta_{\ref{fact:kr}}=\frac{\delta}{2n}$ and  $n_{\ref{fact:kr}}=\sigma_{\ref{fact:kr}}=n\log \sigma$.
This construction uses  $\Oh(\log n_{\ref{fact:kr}}) = \Oh(\log (n \log \sigma))$ random bits
and produces fingerprints of $\Oh(\log \delta^{-1}_{\ref{fact:kr}})=\Oh(\log n)$ bits.

Given a string $S\in \Sigma^{n}$, we define a string $\bar{S}[1\dd nb]$ so that 
$\bar{S}[ib-j+1]=(S[i][j], \psi(S[i]))$ for $i\in [1\dd |S|]$ and $j\in [1\dd b]$.
Consequently, we set $\skH_k(S):=\skH_{\bar{k}}(\bar{S})$ using the the sketch of~\cite{clifford2018streaming}
with $\bar{\delta}=\frac\delta2$, $\bar{n}=nb$, $\bar{k}=kb$, and $\bar{\sigma}\le n^{\Oh(1)}\log \sigma$.
Note that this is feasible since $\bar{\sigma}\le n^{\Oh(1)} \le \bar{n}^{\Oh(1)}$ if $n \le \log \sigma$ and $\bar{\sigma} \le \log^{\Oh(1)}\sigma \le (\log_{\log^2\sigma} \sigma)^{\Oh(1)}\le b^{\Oh(1)}\le\bar{n}^{\Oh(1)}$ otherwise.
This construction uses $\Oh(\log \bar{n})=\Oh(\log (n\log \sigma))$ further random bits
and produces a sketch of $\Oh(\bar{k} \log \bar{n}) = \Oh(kb \log (nb)) = \Oh(k \log_{n\log \sigma} \sigma \log (n\log \sigma) = \Oh(k \log \sigma)$ bits.

The auxiliary string $\bar{S}$ is constructed in $\Oh(\bar{n}) = \Oh(n \log \sigma)$ time
using $\Oh(\log (n\sigma))$ bits of space.
The stream representing $\bar{S}$ is passed to the encoding algorithm of~
\cite{clifford2018streaming},
which takes $\Oh(\bar{n}\log^2 \bar{n}) =\Oh(n\log_{n \log \sigma} \sigma \log^2 (n\log \sigma)) = \Oh(n\log(n\sigma) \log (n\log\sigma))$ time and $\Oh(\bar{k} \log \bar{n}) = \Oh(k\log \sigma)$ bits of space.

The decoding algorithm, given $\skH_{\bar{k}}(\bar{X})$ and $\skH_{\bar{k}}(\bar{Y})$,
runs the decoding algorithm of~\cite{clifford2018streaming}.
If the latter certifies $\hd(\bar{X},\bar{Y})>\bar{Y}$, we certify that $\hd(X,Y)>k$.
Otherwise, we interpret the output as $\MI(\bar{X},\bar{Y})$. For each position $i\in [1\dd |X|]$ such that $(ib-b\dd ib]\sub \MP(\bar{X},\bar{Y})$
we retrieve $X[i][j]$ and $Y[i][j]$ for each $j\in [1\dd b]$ from $(ib-j+1,\bar{X}[ib-j+1],\bar{Y}[ib-j+1])\in \MI(\bar{X},\bar{Y})$, and then we output $(i,X[i],Y[i])\in\MI(X,Y)$.

As for correctness, with at most $\delta_{\ref{fact:kr}}+n\bar{\delta}=\delta$ probability loss, we may assume that the decoder of~\cite{clifford2018streaming} is successful and that, for all $i\in [1\dd |X|]$,
we have $\psi(X[i])=\psi(Y[i])$ if and only if $X[i]=Y[i]$.
The latter assumption yields $\MP(\bar{X},\bar{Y}) = \bigcup_{i\in \MP(X,Y)}(ib-b\dd ib]$
and thus $\hd(\bar{X},\bar{Y})=b\cdot \hd(X,Y)$. Hence, we correctly certify $\hd(X,Y)>k$ if $\hd(\bar{X},\bar{Y})>\bar{k}$, and we correctly reconstruct $\MI(X,Y)$ otherwise.

The decoder uses $\Oh(\bar{k}\log^3 \bar{n})  =\Oh(k\log_{n\log \sigma} \sigma \log^3(n\log \sigma))=\Oh(k\log \sigma \log^2 (n\log \sigma))$ time and $\Oh(\bar{k}\log\bar{n}) = \Oh(k\log \sigma)$ bits of space.
\end{proof}

Next, consider an alphabet $\hat{\Sigma}:=\Sigma\times \Zz$ 
and a function $\pi : \hat{\Sigma} \to \Zz$ defined so that $\pi((a,v))=v$ for $(a,v)\in \hat{\Sigma}$
and $\pi(S)=\sum_{i=1}^{|S|} \pi(S[i])$ for $S\in \hat{\Sigma}^*$.  
For two strings $X,Y\in \hat{\Sigma}^*$ of the same length, we define the \emph{prefix mismatch information}
\[\PMI(X,Y) = \{(i,\pi(X[1\dd i)),\pi(Y[1\dd i))) : i\in \MP(X,Y)\}.\]

\begin{proposition}\label{prp:mpi}
For every constant $\delta\in (0,1)$, there exists a sketch $\skL_k$ (parameterized by integers $n\ge k\ge 1$, an alphabet $\hat{\Sigma}=[1\dd \sigma]\times [0\dd n^{\Oh(1)}]$, and a seed of $\Oh(\log (n\log \sigma))$ random bits) such that:
\begin{enumerate}
	\item The sketch $\skL_k(S)$ of a string $S\in \hat{\Sigma}^n$ takes $\Oh(k\log^2 n)$ bits. Given streaming access to $S$, it can be constructed in $\Oh(n \log n \log(n\log \sigma) + n\log^2 n)$ time using $\Oh(k\log^2 n + \log n\log(n\log \sigma))$ bits of space.
	\item There exists a decoding algorithm that, given the sketches $\skL_k(X)$ and $\skL_k(Y)$ of strings $X,Y\in \hat{\Sigma}^{n}$
	satisfying $\pi(X),\pi(Y)< n$, with probability at least $1-\delta$ either returns $\PMI(X,Y)$ or certifies that $\hd(X,Y)>k$.
	The algorithm uses $\Oh(k \log^4 n)$ time and $\Oh(k \log^2 n)$ bits of space. 
\end{enumerate}
\end{proposition}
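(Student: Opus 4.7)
The plan is to apply \cref{thm:skH} at every dyadic scale, combining Karp--Rabin fingerprints of length-$2^j$ blocks with the $\pi$-sums of those blocks so that each mismatch detected at level~$j$ reveals not only which block differs between $X$ and $Y$ but also $\pi$ of that block in both strings. Padding with neutral symbols so that $n$ is a power of two, for each $j\in [0\dd \log n]$ and $b\in [1\dd n/2^j]$ I write $B_j(b):=[(b{-}1)2^j{+}1\dd b\,2^j]$ and set
\[\tilde S^{(j)}[b] := \bigl(\psi_j(S[B_j(b)]),\; \pi(S[B_j(b)])\bigr),\]
where $\psi_j$ is a fresh fingerprint from \cref{fact:kr} with collision probability $\delta/\poly(n)$. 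The alphabet of $\tilde S^{(j)}$ has size $n^{\Oh(1)}$, so
\[\skL_k(S):=\bigl(\pi(S),\; \skH_k(\tilde S^{(0)}),\; \ldots,\; \skH_k(\tilde S^{(\log n)})\bigr)\]
takes $\Oh(k\log^2 n)$ bits. The streaming construction maintains a rolling fingerprint and a partial $\pi$-sum at each level, feeding every completed block into the streaming encoder from \cref{thm:skH}.

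For decoding, I first invoke the $\skH_k$ decoder at every level; if any certifies $\hd(\tilde X^{(j)},\tilde Y^{(j)})>k$, I certify $\hd(X,Y)>k$, which is sound because, barring fingerprint collisions, $\tilde X^{(j)}[b]\ne \tilde Y^{(j)}[b]$ iff $X[B_j(b)]\ne Y[B_j(b)]$, so $\hd(\tilde X^{(j)},\tilde Y^{(j)})\le \hd(X,Y)$. Otherwise, the decoded $\MI(\tilde X^{(j)},\tilde Y^{(j)})$ lists at every level the mismatched blocks together with $\pi(X[B_j(b)])$ and $\pi(Y[B_j(b)])$; at $j=0$ this recovers $\MP(X,Y)$ itself.

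The heart of the argument is recovering the absolute prefix sum $P_X(i):=\pi(X[1\dd i))$ for every $i\in \MP(X,Y)$, rather than just the easy relative quantity $P_X(i)-P_Y(i)$. I organise the mismatched blocks into a \emph{trace tree} over the canonical binary partition of $[1\dd n]$: a node at level~$j$ and index~$b$ belongs to the trace tree iff $X[B_j(b)]\ne Y[B_j(b)]$. Every ancestor of a mismatch lies in the trace tree, and critically every internal trace-tree node has at least one trace-tree child---otherwise both children would match, forcing the parent to match too. I compute $\pi(X[B_j(b)])$ (and $\pi(Y[B_j(b)])$) top-down for every trace-tree node and its two children: the root has $\pi(X)$ stored in the sketch; given $\pi(X[B_j(b)])$ for a trace-tree node, a trace-tree child's value is read off from the $\MI$ at its level, whereas a non-trace-tree child (for which $X$ and $Y$ coincide on the block, so $\pi$ is common to the two strings) is recovered by subtracting its trace-tree sibling's value from the parent's. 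Finally, each $i\in \MP(X,Y)$ induces the standard dyadic decomposition of $[1\dd i)$ into $\Oh(\log n)$ left-sibling intervals of ancestors of~$i$, each a child of a trace-tree node, so $P_X(i)$ (and $P_Y(i)$) is obtained by summing the precomputed values.

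Correctness holds with probability $1-\Oh(\delta)$ by a union bound over the $\Oh(\log n)$ $\skH_k$ invocations (each tuned to failure probability $\delta/\Oh(\log n)$) together with the fingerprint-collision events. The decoding cost is dominated by these decoder calls on alphabets of size $n^{\Oh(1)}$, totalling $\Oh(k\log^4 n)$ time; the subsequent trace-tree recursion and per-mismatch assembly touch only $\Oh(k\log n)$ nodes and are absorbed. Working space is $\Oh(k\log^2 n)$ bits throughout, and the claimed encoding time comes from maintaining $\Oh(\log n)$ rolling Karp--Rabin fingerprints alongside the streaming $\skH_k$ encoders.
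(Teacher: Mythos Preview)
Your proposal is correct and follows essentially the same idea as the paper: build a complete binary tree over $[1\dd n]$, tag each node with the pair (Karp--Rabin fingerprint of its block, $\pi$-sum of its block), sketch these tags with a Hamming sketch, and reconstruct each $\pi(X[1\dd i))$ by walking the root-to-leaf path and using subtraction for the non-mismatching sibling.

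The only organisational difference is that you apply a separate $\skH_k$ sketch at each of the $\Oh(\log n)$ levels, whereas the paper linearises the \emph{entire} tree (all $2n{-}1$ nodes in pre-order) into one string and applies a single $\skH_{k\lceil\log(2n)\rceil}$ sketch. Both choices give the same $\Oh(k\log^2 n)$-bit size and the same decoding cost; the paper's version avoids iterating over levels explicitly, and your per-level version makes the $\hd\le k$ argument at each level slightly more transparent. One small caveat: with ``fresh'' randomness $\psi_j$ and a fresh $\skH_k$ seed per level, your seed length becomes $\Oh(\log n\cdot\log(n\log\sigma))$ rather than the claimed $\Oh(\log(n\log\sigma))$; to match the statement you should reuse one fingerprint $\psi$ and one $\skH$ seed across all levels (the correctness analysis is a union bound anyway, so independence across levels is not needed).
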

\begin{proof}
Let us fix a complete binary tree $\Tr$ with $n$ leaves, numbered with $[1\dd n]$ in the left-to-right order,
and let $v_1,\ldots,v_{2n-1}$ denote the nodes of $\Tr$ in the pre-order. For each node $v_i\in \Tr$, let $[p_i\dd q_i)\sub [1\dd n]$ be the indices of the leaves in the subtree of $v_i$.
Consider the fingerprints $\psi$ of \cref{fact:kr} parameterized by $\delta_{\ref{fact:kr}}=\frac{\delta}{4n-2}$, $n_{\ref{fact:kr}}= \Oh(n\log (n\sigma))$, and $\sigma_{\ref{fact:kr}}=2$:
Given a string $S\in \hat{\Sigma}^n$, we define a string $\Tr(S)[1\dd 2n)$
so that
$\Tr(S)[i] = (\pi(S[p_i\dd q_i)), \psi(S[p_i\dd q_i)))$ for every $i\in [1\dd 2n)$, where $\psi$ expands each character of $\hat{\Sigma}$ into a sequence of $\Oh(\log (n\sigma))$ bits.
We set $\skL_k(S):=\skH_{k_{\ref{thm:skH}}}(\Tr(S))$,
where $\skH_{k_{\ref{thm:skH}}}$ is the sketch of \cref{thm:skH}
with $\delta_{\ref{thm:skH}}=\frac{\delta}{2}$, $n_{\ref{thm:skH}}=2n-1$, $k_{\ref{thm:skH}}=\min(k\lceil{\log (2n)}\rceil,n_{\ref{thm:skH}})$, and $\sigma_{\ref{thm:skH}}\le 2^{\Oh(\log (n / \delta_{\ref{fact:kr}}))}\le n^{\Oh(1)}$.

This construction uses $\Oh(\log n_{\ref{fact:kr}} + \log (n_{\ref{thm:skH}}\log \sigma_{\ref{thm:skH}})) = \Oh(\log (n\log \sigma))$ random bits and produces sketches of $\Oh(k_{\ref{thm:skH}}\log(n_{\ref{thm:skH}}\sigma_{\ref{thm:skH}}))=\Oh(k\log^2 n)$ bits.

The encoding algorithm transforms $S$ into $\Tr(S)$ and feeds $\Tr(S)$ into the encoding procedure of \cref{thm:skH}. 
Construction of $\Tr(S)$ is organized in $\Oh(\log n)$ layers,
each responsible for nodes $v_i\in \Tr$ at a fixed level.
The intervals $[p_i\dd q_i)$ corresponding to these nodes are disjoint so, at any time, a layer produces a single character $\Tr(S)[i]$ and, by \cref{fact:kr}, spends $\Oh(n\log \sigma)$ amortized time and uses $\Oh(\log n)$ bits of space to process a single character $S[j]$.
Since the tree $\Tr$ is linearized in the post-order fashion, all levels can read the string $S$ with a common left-to-right pass and outputting $\Tr(S)$ does not require any buffers.
Overall, construction of $\Tr(S)$ takes $\Oh(n \log n \log(n\log \sigma))$ time and uses $\Oh(\log n \log(n\log \sigma))$ bits of space.
The encoder of \cref{thm:skH} takes $\Oh(n_{\ref{thm:skH}}\log(n_{\ref{thm:skH}}\sigma_{\ref{thm:skH}})\log(n_{\ref{thm:skH}}\log \sigma_{\ref{thm:skH}})) = \Oh(n\log^2 n)$ time and uses $\Oh(k_{\ref{thm:skH}}\log(n_{\ref{thm:skH}}\sigma_{\ref{thm:skH}}))=\Oh(k\log^2 n)$ bits of space.

The decoding algorithm first retrieves $\MI(\Tr(X),\Tr(Y))$ from $\skH_{k_{\ref{thm:skH}}}(\Tr(X))$ and $\skH_{\ref{thm:skH}}(\Tr(Y))$ using the decoder of \cref{thm:skH}.
If the output of that subroutine can be interpreted as $\MI(\Tr(X),\Tr(Y))$, then we 
use \cref{clm:retr} below to retrieve $\MI(X,Y)$. 
If the size of the obtained set does not exceed $k$, we return the set.
In the remaining cases, we certify that $\hd(X,Y)>k$. 

\begin{claim}\label{clm:retr}
	For every $X,Y\in \hat{\Sigma}^n$, the set
	$\PMI(X,Y)$ can be extracted from $\MI(\Tr(X),\Tr(Y))$ in time and space $\Oh(\hd(X,Y)\log n)$  with success probability at least $1-\frac\delta2$.
\end{claim}
\begin{proof}
	We assume that, for every $i\in [1\dd 2n)$, the equality $\psi(X[p_i\dd q_i))=\psi(Y[p_i\dd q_i))$ implies $X[p_i\dd q_i) = Y[p_i\dd q_i)$.
	By \cref{fact:kr}, each of the implications fails with probability at most  $\delta_{\ref{fact:kr}}$,
	so the overall failure probability can be bounded by $\delta_{\ref{fact:kr}}\cdot (2n-1)= \frac\delta2$.

	Our assumption yields \[\MP(\Tr(X),\Tr(Y))=\{i\in [1\dd 2n) : [p_i\dd q_i) \cap \MP(X,Y)\ne \emptyset\}.\]
	In particular, $\MP(X,Y) = \{p_i : v_i \text{ is a leaf and } i \in \MP(\Tr(X),\Tr(Y))\}$.
	Thus, it remains to describe how to extract $\pi(X[1\dd p))$ and $\pi(Y[1\dd p))$ for each $p\in \MP(X,Y)$;
	by symmetry, we focus on $\pi(X[1\dd p))$.
	For this, we process subsequent nodes $v_i$ on the path from the root of $\Tr$ to the leaf $v_j$ such that $\{p\}=[p_j\dd q_j)$	maintaining $\pi(X[1\dd p_i))$. Note that the values $\pi(X[p_i\dd q_i))$ can be extracted 
	from $\MI(\Tr(X),\Tr(Y))$ due to $p\in [p_i\dd q_i)$.
	
	If $v_i$ is the root, then  $\pi(X[1\dd p_i)) = \pi(\eps) =0$.
	If $v_{i}$ is the left child of $v_{i'}$, then $p_i = p_{i'}$, so $\pi(X[1\dd p_i))=\pi(X[1\dd p_{i'}))$
	has already been computed. 
	If $v_{i}$ is the right child of $v_{i'}$, then $q_i = q_{i'}$, so $\pi(X[1\dd p_i)) + \pi(X[p_i\dd q_i))
	= \pi(X[1\dd q_i)) = \pi(X[1\dd q_{i'})) = \pi(X[1\dd p_{i'}))+ \pi(X[p_{i'}\dd q_{i'}))$. 
	Consequently, $\pi(X[1\dd p_i))$ can be retrieved from $\pi(X[1\dd p_{i'}))$, which has already been computed,
	as well as $\pi(X[p_i\dd q_i))$ and $\pi(X[p_{i'}\dd q_{i'}))$, which are available in $\MI(\Tr(X),\Tr(Y))$.
	When this process reaches $v_j$, it results in the sought value $ \pi(X[1\dd p)) = \pi(X[1\dd p_j))$.
\end{proof}

It remains to analyze correctness of the decoding algorithm.
The decoder of \cref{thm:skH} fails with probability at most $\delta_{\ref{thm:skH}}=\frac{\delta}{2}$
and the procedure of \cref{clm:retr} fails with probability at most $\frac{\delta}{2}$.
Thus, with at most $\delta$ probability loss, we may assume that both calls are successful.
 
If $\hd(\Tr(X),\Tr(Y))\le k_{\ref{thm:skH}}$, then the decoder of \cref{thm:skH} retrieves $\MI(\Tr(X),\Tr(Y))$
and the procedure of \cref{clm:retr} results in $\MI(X,Y)$.
Depending on whether $\hd(X,Y)=|\MI(X,Y)|\le k$ or not, our decoding algorithm thus correctly returns $\MI(X,Y)$ or certifies that $\hd(X,Y)>k$.

Otherwise, the algorithm of \cref{thm:skH} certifies $\hd(\Tr(X),\Tr(Y))> k_{\ref{thm:skH}}$,
and the whole decoding procedure certifies $\hd(X,Y)>k$. This is correct because of $\hd(\Tr(X),\Tr(Y))\le \hd(X,Y)\lceil{\log (2n)}\rceil$.  To see this, observe that $[p_i\dd q_i)\cap \MP(X,Y) = \emptyset$ implies $X[p_i\dd q_i)=Y[p_i\dd q_i)$ and $\Tr(X)[i] = \Tr(Y)[i]$.
However, every leaf of $\Tr$ has at most $\lceil{\log (2n)}\rceil$ ancestors (including itself).
Consequently, for every $j\in \MP(X,Y)$, there are at most $\lceil{\log (2n)}\rceil$ nodes $v_i$
such that $j\in [p_i\dd q_i)$.
\end{proof}

\subsection{Edit Distance Sketches}\label{sec:actualsketches}
We are now ready to show the main result of this section.

\thskE*
\begin{proof}
Let $c_{\ref{prp:alg}}$ be the constant of \cref{prp:alg} for $\delta_{\ref{prp:alg}}=\frac{\delta}{3}$,
and let $c_{\ref{lem:bcgk}}$ be the constant of \cref{lem:bcgk}.
We shall use $\bCGK_{k'}(W)$ with $k'=(c_{\ref{prp:alg}}+5)k$ and $W = \Alg(n,r,S)$, where $r$ is a random seed of $\Oh(\log^2 n)$ bits.
Observe that the alphabet of $\bCGK_{k'}(W)$ can be interpreted as $[0\dd n^{\Oh(k)})\times [0\dd n]$
because the $|\LZ(\rev{\Dtx{k'}{S}{s}})|=\Oh(k)$ for each $s\in [1\dd |S|]$.
We shall use \cref{thm:skH,prp:mpi} with $n_{\ref{thm:skH}}=n_{\ref{prp:mpi}}=3n$, $\delta_{\ref{thm:skH}}=\delta_{\ref{prp:mpi}}=\tfrac{\delta}{3}$, $k_{\ref{thm:skH}}=k_{\ref{prp:mpi}}=\min(n_{\ref{thm:skH}},\lfloor c_{\ref{lem:bcgk}}(1+c_{\ref{prp:alg}})k\log(3n)\rfloor)$, and $\sigma_{\ref{thm:skH}}=\sigma_{\ref{prp:mpi}} = n^{\Oh(k)}.$

The sketch $\skE_k(S)$ of a string $S\in \Sigma^{\le n}$ consists of $|S|$
as well as the sketches $\skH_{k_{\ref{thm:skH}}}(\bCGK_{k'}(W))$ and $\skL_{k_{\ref{prp:mpi}}}(\bCGK_{k'}(W))$.
This construction uses \[\Oh(\log^2 n+\log(n_{\ref{thm:skH}}\log \sigma_{\ref{thm:skH}})+\log (n_{\ref{prp:mpi}}\log \sigma_{\ref{prp:mpi}})) = \Oh(\log^2 n + \log(nk\log n))=\Oh(\log^2 n)\]
random bits and produces sketches of bit-size
\[\Oh(k_{\ref{thm:skH}}\log(n_{\ref{thm:skH}}\sigma_{\ref{thm:skH}}) + k_{\ref{prp:mpi}}\log^2 n_{\ref{prp:mpi}})
=\Oh(k\log n \log(n^{\Oh(k)})+k\log^3 n)=\Oh(k^2\log^3 n).\]

The encoding algorithm uses \cref{cor:bcgkp_construction} to transform the input stream representing $S$ to an auxiliary stream representing $\bCGK_{k'}(W)$, which we forward to the encoders constructing $\skH_{k_{\ref{thm:skH}}}(\bCGK_{k'}(S))$ and $\skL_{k_{\ref{prp:mpi}}}(\bCGK_{k'}(W))$.
Thus, it takes $\tOh(nk' + n_{\ref{thm:skH}}\log\sigma_{\ref{thm:skH}} + n_{\ref{prp:mpi}}) = \tOh(nk)$ time and uses $\tOh(k' + k_{\ref{thm:skH}}\log \sigma_{\ref{thm:skH}} + k_{\ref{prp:mpi}})=\tOh(k^2)$
space.

\paragraph*{Decoding Algorithm}
Let $\A_\Alg$ be the zip alignment of walks $W_X = (x_t)_{t=1}^{3n+1} = \Alg(n,r,X)$ and $W_Y = (y_t)_{t=1}^{3n+1} = \Alg(n,r,Y)$ over strings  $X,Y \in \Sigma^{\le n}$.

Given sketches  $\skE_{k}(X), \skE_{k}(Y)$,
we run the decoders for $\skH_{k_{\ref{thm:skH}}}(\bCGK_{k'}(W_X)),\skH_{k_{\ref{thm:skH}}}(\bCGK_{k'}(W_Y))$ and $\skL_{k_{\ref{prp:mpi}}}(\bCGK_{k'}(W_X)),\skL_{k_{\ref{prp:mpi}}}(\bCGK_{k'}(W_Y))$. We certify $\ed(X,Y)>k$ if either procedure certifies $\hd(\bCGK_{k'}(W_X),\bCGK_{k'}(W_Y))>k_{\ref{thm:skH}}=k_{\ref{prp:mpi}}$. Otherwise, the decoder interprets the outputs as $\MI(\bCGK_{k'}(W_X),\bCGK_{k'}(W_Y))$ and $\PMI(\bCGK_{k'}(W_X),\bCGK_{k'}(W_Y))$, respectively.
In this case, we construct $\Enc^M(X,Y)$ for $M = \{(x,y)\in \mtch_{X,Y}(\A_\Alg) : x\in P_X\text{ or }y\in P_Y\}$,
where $P_X$ and $P_Y$ are defined in the statement of \cref{lem:bcgk}. Finally, we compute $\gr_k(X,Y)$ using \cref{prp:greedify} or $\min(\ed(X,Y),k+1)$ using \cref{cor:ED}.

\begin{claim}\label{clm:decoder}
Given  $\MI(\bCGK_{k'}(W_X),\bCGK_{k'}(W_Y))$ and $\PMI(\bCGK_{k'}(W_X),\bCGK_{k'}(W_Y))$,
the encoding $\Enc^M(X,Y)$ can be constructed in $\tOh(k^3)$ time using $\tOh(k^2)$ space.
Moreover, $|\LZ(X^MY^M)|=\tOh(k^2)$.
\end{claim}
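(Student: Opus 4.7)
My plan is first to decode $\MI$ and $\PMI$ into the sequence of non-$\bot$ sketch entries of $\bCGK_{k'}(W_X)$ and $\bCGK_{k'}(W_Y)$ at mismatch indices, together with the walk values $x_t$ and $y_t$. Each non-$\bot$ entry $\bCGK_{k'}(W_X)[t]$ stores the pair $(\LZ(\rev{\Dtx{k'}{X}{x_t}}), x_t - x_u)$, where $u$ is the preceding non-$\bot$ index. Because $\pi$ projects onto the second coordinate and telescopes, $\pi(\bCGK_{k'}(W_X)[1\dd t))$ equals $x_u$; adding the stored difference recovers $x_t$ exactly whenever $\bCGK_{k'}(W_X)[t] \ne \bot$ (symmetrically for $W_Y$). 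Feeding each recovered $\LZ(\rev{\Dtx{k'}{X}{x_t}})$ into \cref{prp:RLSLP}\ref{it:fromLZ} then yields $\CR(\Dtx{k'}{X}{x_t})$. Since the threshold $k_{\ref{thm:skH}}=k_{\ref{prp:mpi}}$ is $\Oh(k\log n)$, this process produces $\tOh(k)$ compressed double contexts with known starting positions, which together cover $P_X$ in $X$ and $P_Y$ in $Y$.

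Next, I would assemble $\CR(X^MY^M)$ by splicing these recovered fragments together with maximal runs of the dummy symbol $\#$. By \cref{lem:bcgk}, every non-$\#$ position of $X^M$ lies in $P_X$ or is matched by $\A_\Alg$ into $P_Y$; since $\width(\A_\Alg)=\Oh(k)$ with probability at least $1-\delta_{\ref{prp:alg}}$, the latter positions form $\Oh(k)$-thin bands just outside each $P_Y$-context. Within each band, the greedy structure of $\A_\Alg$ combined with $\Oh(\log n)$-time LCE queries on neighbouring recovered contexts (\cref{prp:RLSLP}\ref{it:lce}) determines the alignment and lets me fill in $X[x]=Y[y]$ from the already-decoded $Y$-context. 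I then use \cref{prp:RLSLP}\ref{it:extract} to obtain $\CR$-representations of the surviving fragments of $X^M$ and $Y^M$, generate the interleaving $\#$-runs via \cref{prp:RLSLP}\ref{it:fromLZ} on the obvious single-phrase LZ-like encoding of $\#^\ell$, and finally chain everything through \cref{prp:RLSLP}\ref{it:concat} to produce $\CR(X^MY^M)$. The rank structures $\RS_\#(X^M)$ and $\RS_\#(Y^M)$ come from \cref{prop:rank} applied to the sorted list of maximal $\#$-segments recorded during this assembly.

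For the size bound, each double context $\Dtx{k'}{S}{s}$ decomposes into two fragments with $\maxLZ\le k' = \Oh(k)$, so $|\LZ(\Dtx{k'}{S}{s})|\le 2k' = \Oh(k)$ by subadditivity. With only $\tOh(k)$ relevant mismatch indices, the total LZ size of the $X$-contexts covering $P_X$ is $\tOh(k^2)$, and symmetrically for $Y$. The straddle fragments outside $P_X$ (resp.~$P_Y$) are substrings of contexts already present in $Y^M$ (resp.~$X^M$), so placing $X^M$ next to $Y^M$ lets the LZ parser of $X^MY^M$ reuse those phrases; the $\tOh(k)$ dummy runs between fragments all use the single symbol $\#$ and contribute only $\tOh(k)$ further phrases. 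By \cref{fct:lz_properties}, this yields $|\LZ(X^MY^M)| = \tOh(k^2)$. The dominant cost is $\tOh(k)$ invocations of \cref{prp:RLSLP}\ref{it:concat}, each taking $\tOh(k^2)$ time and space, matching the claimed budget of $\tOh(k^3)$ time and $\tOh(k^2)$ space.

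The main obstacle I anticipate is the bookkeeping needed for the straddle bands: $x_t$ is unavailable whenever $\bCGK_{k'}(W_X)[t]=\bot$, so matching $P_Y$-interior positions to their $X$-side partners requires pairing consecutive $P_Y$-contexts with their $P_X$-neighbours and pushing the alignment through the gap using the width guarantee $\width(\A_\Alg)=\Oh(k)$. I would carry this out in a single left-to-right sweep over the mismatch indices, maintaining an $\Oh(k)$-sized buffer of candidate $(x_t,y_t)$ pairs and resolving each gap via $\Oh(k\log n)$-time LCE queries; this keeps the straddle stage within $\tOh(k^2)$ time and comfortably inside the overall time and space budgets.
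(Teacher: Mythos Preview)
Your high-level plan—decode the double contexts from $\MI$ and $\PMI$, splice them with dummy runs, and invoke \cref{prp:RLSLP}—matches the paper's. But the paper makes two choices that sidestep the complications you anticipate.

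First, the paper uses a \emph{larger} $M$ than you do: it masks every match $(x,y)\in\mtch(\A_\Alg)$ with $x\notin P_X$ \emph{or} $y\notin P_Y$ (not ``and''). Consequently every non-$\#$ position of $X^M$ already lies in $P_X$, so $X^M$ is assembled from the recovered $X$-contexts alone, with no straddle bands to fill in. The second phase only \emph{adds} $\#$s: for each dummy segment $Y'[y\dd y')$ it determines the matching interval in $X$ and overwrites it with $\#$s. Your smaller $M$ forces you to recover actual $X$-characters from $Y$-contexts, which is strictly harder and unnecessary.

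Second, and this is the real gap, the step you flag as ``the main obstacle''—computing the alignment shift on a segment where one side's walk value is unavailable—cannot be resolved by LCE queries plus the width bound. In a straddle band you do not know the $X$-characters, so there is nothing on the $X$ side to LCE against; the $\Theta(k)$ candidate shifts remain ambiguous. The paper extracts the shift purely from $\PMI$: for a dummy segment ending at $y'$, it locates the (non-mismatch) anchor index $w$ whose $Y$-context covers $Y[y'-1]$, observes that $\bCGK_{k'}(W_X)[w]=\bCGK_{k'}(W_Y)[w]$ forces the needed shift to equal $x_w-y_w$, and then shows that $x_w$ and $y_w$ are precisely the prefix sums $\pi(\bCGK_{k'}(W_X)[1\dd v))$ and $\pi(\bCGK_{k'}(W_Y)[1\dd v))$ at a specific mismatch index $v$ determined from $y'$. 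This arithmetic on $\PMI$ is the technical core of the claim, and your LCE-based workaround does not provide a substitute for it.

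A minor point: in your $\LZ$-size argument, the straddle fragments of $X^M$ occur \emph{before} the $Y$-contexts in $X^MY^M$, so they are not previous factors as you claim. The $\tOh(k^2)$ bound still holds (each of the $\tOh(k)$ fragments individually has $\maxLZ=\Oh(k)$), but not for the stated reason.
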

\begin{proof}
We proceed in two phases. In the first phase, we construct the compressed representation $\CR(X')$ of a string 
$X'\in (\Sigma\cup \{\#\})^{|X|}$ such that $X'[x]=X[x]$ if $x\in P_X$ and $X'[x]=\#$ otherwise.
We initialize $X':=\#^{|X|}$ (via $\LZ(\#^{|X|})$ using \cref{prp:RLSLP}\ref{it:fromLZ}).
Next, we iterate over positions $t\in [1\dd 3n]$ such that $\bot \ne \bCGK_{k'}(W_X)[t]\ne \bCGK_{k'}(W_Y)[t]$.
We retrieve $x_t$ and $\LZ\big(\rev{\Dtx{k'}{X}{x_t}}\big)$ from the mismatch information,
convert $\LZ\big(\rev{\Dtx{k'}{X}{x_t}}\big)$ to $\CR(\Dtx{k'}{X}{x_t})$ (\cref{prp:RLSLP}\ref{it:fromLZ}), and update $\CR(X')$, setting $X'[x_t\dd x_t+|\Dtx{k'}{X}{x_t}|):=\Dtx{k'}{X}{x_t}$ (\cref{prp:RLSLP}\ref{it:extract}\ref{it:concat}). We also symmetrically construct the compressed representation $\CR(Y')$ of a string 
$Y'\in (\Sigma\cup \{\#\})^{|Y|}$ such that $Y'[y]=Y[y]$ if $y\in P_Y$ and $Y'[y]=\#$ otherwise.

In the second phase, we convert $\CR(X')$ to $\CR(X^M)$. Here, the goal is make sure that $X^M[x]=\#$ 
not only for $x\in P_X$, but also when $X[x]\simeq_{\A_\Alg} Y[y]$ for some $y\in P_Y$.
For this, we iterate over dummy segments $Y'[y\dd y')$.
By \cref{lem:bcgk}, $\A_\Alg$ matches $Y[y\dd y')$ to a fragment of $X[x\dd x')$.
Hence, we shall identify the shift $x-y$ and set $X^M[x\dd x') :=\#^{x'-x}$ (\cref{prp:RLSLP}\ref{it:fromLZ}\ref{it:extract}\ref{it:concat}).
Let $\MP_{\bCGK} :=  \MP(\bCGK_{k'}(W_X),\bCGK_{k'}(W_Y))$.
Define $u,v\in [1\dd 3n+1]$ so that $u=v=3n+1$ if  $\pi(\bCGK_{k'}(W_Y)[1\dd t]) \ne y'$ for all $t\in \MP_{\bCGK}$; otherwise, $u$ is the smallest index  in $\MP_{\bCGK}$
with $\pi(\bCGK_{k'}(W_Y)[1\dd u])=y'$, whereas $v$ is the smallest index $\MP_{\bCGK}$ with $\pi(\bCGK_{k'}(W_Y)[1\dd v))=\pi(\bCGK_{k'}(W_Y)[1\dd u))$.
We claim that $x-y=\pi(\bCGK_{k'}(W_X)[1\dd v))-\pi(\bCGK_{k'}(W_Y)[1\dd v))$.

If $y'\le |Y|$, then $y'-1 \notin P_Y$ and $y'\in P_Y$. Hence, $Y[y']$ is covered by $\Dtx{k'}{Y}{y_t}$ for some $t\in \MP_{\bCGK}$ with $\bCGK_{k'}(W_X)[t]\ne \bot$, whereas $Y[y'-1]$ is not covered by $\Dtx{k'}{Y}{y_t}$.
Hence, there is $t\in \MP_{\bCGK}$ with $y_t = y'$, and therefore $y_u = y'$.
Similarly, $y_u=y'$ holds if $y'=|Y|+1$.
Let $w\in [1\dd 3n]$ be the smallest index such that $\bCGK_{k'}(W_Y)[w]\ne \bot$
and $\Dtx{k'}{Y}{y_t}$ covers $Y[y'-1]$ (such an index exists by \cref{lem:cover}).
Due to $y'-1\in P_Y$, we have $\bCGK_{k'}(W_X)[w]=\bCGK_{k'}(W_Y)[w]$, and hence $x-y=x_w-y_w$.
Definition of $w$ further yields $\bCGK_{k'}(W_Y)[w+1\dd v) = \bot^{u-w-1}$,
and thus $\pi(\bCGK_{k'}(W_Y)[1\dd v))=y_w$.
Moreover, since $\pi(\bCGK_{k'}(W_Y)[1\dd t))=y_w$ for $t\in [w+1\dd v]$,
we have $\pi(\bCGK_{k'}(W_X)[1\dd v))=x_w$ by definition of $v$.
This completes the proof that $x-y=\pi(\bCGK_{k'}(W_X)[1\dd v))-\pi(\bCGK_{k'}(W_Y)[1\dd v))$.

To derive $\Enc^M(X,Y)$, it suffices to convert  $\CR(Y')$ to $\CR(Y^M)$ (symmetrically),
and to construct $\CR(X^MY^M)$ using \cref{prp:RLSLP}\ref{it:concat}.
Since $\maxLZ(\Dtx{k'}{X}{x_t}),\maxLZ(\Dtx{k'}{Y}{y_t})=\Oh(k)$ holds for all $t\in [1\dd 3n+1]$
and since $|\MP_{\bCGK}|=\Oh(k\log n)$, the $\maxLZ(\cdot)$ measure of all intermediate
strings is $\tOh(k^2)$. Consequently, the $\tOh(k)$ applications of \cref{prp:RLSLP}
cost $\tOh(k^3)$ time and use $\tOh(k^2)$ space.
\end{proof}

To complete the complexity analysis, observe that the decoding procedure of \cref{thm:skH}  uses $\Oh(k_{\ref{thm:skH}} \log (n_{\ref{thm:skH}}\sigma_{\ref{thm:skH}}) \log^2(n_{\ref{thm:skH}}\log \sigma_{\ref{thm:skH}})) = \tOh(k^2)$ time and $\Oh(k_{\ref{thm:skH}} \log (n_{\ref{thm:skH}}\sigma_{\ref{thm:skH}})) = \tOh(k^2)$ bits of space.
The procedure of \cref{prp:mpi} uses $\Oh(k_{\ref{prp:mpi}} \log^2 n_{\ref{prp:mpi}}) = \tOh(k)$ bits of space and costs $\Oh(k_{\ref{prp:mpi}}  \log^4 n_{\ref{prp:mpi}})= \tOh(k)$ time.
Finally, due to $|\LZ(X^MY^M)|=\tOh(k^2)$ and $|\brkp_{k}(X,Y)|=\Oh(k^5)$ (\cref{lm:greedyalg}),
the algorithm of \cref{prp:greedify} uses $\tOh(k^2)$ space and costs $\tOh(k^5)$ time (dominating the overall decoding complexity).
If we only aim to retrieve $\min(\ed(X,Y),k+1)$, the algorithm of \cref{cor:ED} takes $\tOh(k^2)$ time and space (in which case the overall decoding uses $\tOh(k^2)$ space and costs $\tOh(k^3)$ time.

It remains to argue that the decoding algorithm is correct.
With $\delta_{\ref{prp:alg}}=\frac{\delta}{3}$ probability loss, we may assume that $\A_\Alg\in \ga_{c_{\ref{prp:alg}}(\ed(X,Y))^2}(X,Y)$ and $\width(\A_{\Alg})\le c_{\ref{prp:alg}}\ed(X,Y)$.
With $\delta_{\ref{thm:skH}}+\delta_{\ref{prp:mpi}}=\frac{2\delta}{3}$ further probability loss,
we may assume that the decoders of \cref{thm:skH,prp:mpi} are successful.
If $\ed(X,Y)>k$, then the correctness follows from \cref{prp:greedify,cor:ED} because 
$M\sub \mtch_{X,Y}(\A_\Alg)$ is a non-crossing matching.
Otherwise, \cref{prp:alg} guarantees $\cost(\A_\Alg)\le c_{\ref{prp:alg}}k^2$ and $\width(\A_{\Alg})\le c_{\ref{prp:alg}}k$ so, in particular, $k'\ge \width(\A_\Alg)+5k$.
By \cref{lem:bcgk}, we thus have $\hd(\bCGK_{k'}(W_X),\bCGK_{k'}(W_X))\le c_{\ref{lem:bcgk}}(1+c_{\ref{prp:alg}})k\log(3n)$. Hence, the decoders of \cref{thm:skH,prp:mpi} report the mismatch information.
\cref{lem:close} further implies that $\Delta_X(\A,\A_\Alg)\sub P_X$ and  $\Delta_Y(\A,\A_\Alg)\sub P_Y$
for all $\A\in \ga_k(X,Y)$. In particular, $\mtch_{X,Y}(\A)\sub M$, and therefore $\mtch_k(X,Y)\sub M$.
Consequently, the algorithms of \cref{prp:greedify,cor:ED} correctly compute $\gr_k(X,Y)$
and $\min(\ed(X,Y),k+1)$, respectively.
\end{proof}

Next, we boost the success probability and strengthen the sketches so that we can retrieve $\qgr_k(X,Y)$ instead of $\gr_k(X,Y)$. 

\begin{restatable}{corollary-restatable}{corskE}\label{cor:skq}
	There is a sketch $\skq_k$ (parametrized by $\delta\in(0,\frac12)$, integers $n\ge k \ge 1$, an alphabet $\Sigma = [0\dd n^{\Oh(1)})$, and a seed of $\Oh(\log^2 n \log(1/\delta))$ random bits) such that:
	\begin{enumerate}[label=\textrm{(\alph*)}]
		\item The sketch $\skq_k(S)$ of a string $\Sigma^{\le n}$ takes $\Oh(k^2 \log^3 n \log(1/\delta))$ bits. Given streaming access to $S$, it can be constructed in $\tOh(nk \log(1/\delta))$ time using $\tOh(k^2 \log(1/\delta))$ space.\label{it:skq_construction}
		\item There exists a decoding algorithm that, given $\skq_{k}(X), \skq_{k}(Y)$ for $X,Y \in \Sigma^{\le n}$,
		with probability at least $1-\delta$ computes $\qgr_k(X,Y)$.
		The algorithm takes $\tOh(k^5 \log(1/\delta))$ time and uses $\tOh(k^2 \log(1/\delta))$ space.\label{it:skq_qgr}
	\end{enumerate}
\end{restatable}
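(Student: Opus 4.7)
The plan is to derive $\skq_k$ from $\skE_{k+1}$ of \cref{thm:ske} via two standard moves: a sentinel-based reduction from quasi-greedy to greedy encodings, and probability amplification through independent repetition with majority vote.

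First, I exploit \cref{cor:qgr_to_gr}\ref{it:encodings}: fixing sentinel symbols $\$_1\ne\$_2$ outside $\Sigma$, recovering $\gr_{k+1}(\$_1 X,\$_2 Y)$ suffices to reconstruct $\qgr_k(X,Y)$ in $\tOh(k^2)$ extra time and space. Accordingly, I define the basic building block $\skq^0_k(S):=(\skE_{k+1}^{r}(\$_1 S),\,\skE_{k+1}^{r}(\$_2 S))$, built by feeding $\$_i$ before the incoming stream to each copy of the encoder of \cref{thm:ske}, using threshold $k+1$, input length $n+1$, a fixed constant failure probability $\delta_0\in(0,\tfrac14)$, and a shared seed $r$ of $\Oh(\log^2 n)$ bits. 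Decoding applies the $\skE_{k+1}$ decoder to recover $\gr_{k+1}(\$_1X,\$_2Y)$ and then invokes \cref{cor:qgr_to_gr}\ref{it:encodings} to obtain a candidate for $\qgr_k(X,Y)$; the candidate is correct with probability at least $1-\delta_0$.

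To drive the error down to $\delta$, I take $c=\Theta(\log(1/\delta))$ independent instances with fresh seeds $r_1,\ldots,r_c$ and set $\skq_k(S):=(\skq^{0,r_i}_k(S))_{i=1}^{c}$. The decoder runs the basic decoder $c$ times, yielding candidates $G_1,\ldots,G_c$, and returns the value that occurs most frequently. Since $\qgr_k(X,Y)$ is a deterministic function of $(X,Y,k)$, every successful instance outputs the same value, and by a Chernoff bound more than $c/2$ instances succeed with probability at least $1-\delta$, so the majority vote indeed yields $\qgr_k(X,Y)$. The main implementation obstacle is that distinct instances may produce distinct \emph{bit-level} encodings of the same $\qgr_k$ value, because the internal RLSLP inside the $\CR$ component of $\Enc^{M}(X,Y)$ is not canonical. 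To vote on values rather than on bit strings, I compare candidates pairwise: the $\bot$-status and the dummy-segment lists (read off from the $\RS_\#$ components) are compared directly, and equality of the underlying strings $X^{M}Y^{M}$ is tested by concatenating the two $\CR$ representations into a single one using \cref{prp:RLSLP}\ref{it:concat} and issuing an $\LCE$ query via \cref{prp:RLSLP}\ref{it:lce}. Each such comparison costs $\tOh(k^2)$ time, and the $\Oh(c^2)$ comparisons contribute only $\tOh(k^2\log^2(1/\delta))$ overall, which is absorbed.

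Complexity accounting matches the claim. The sketch size is $c\cdot 2\cdot \Oh(k^2\log^3 n)=\Oh(k^2\log^3 n\log(1/\delta))$ bits, with total seed length $\Oh(\log^2 n\log(1/\delta))$. Running the $c$ basic encoders in parallel on the input stream gives $\tOh(nk\log(1/\delta))$ time and $\tOh(k^2\log(1/\delta))$ space for construction, proving~\ref{it:skq_construction}. Decoding runs the $c$ basic decoders, each costing $\tOh(k^5)$ time and $\tOh(k^2)$ space by \cref{thm:ske}\ and \cref{cor:qgr_to_gr}\ref{it:encodings}, plus the pairwise comparisons analyzed above, giving $\tOh(k^5\log(1/\delta))$ time and $\tOh(k^2\log(1/\delta))$ space, proving~\ref{it:skq_qgr}.
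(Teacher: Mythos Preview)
Your proof is correct and follows essentially the same approach as the paper: both use $\Theta(\log(1/\delta))$ independent copies of $\skE_{k+1}$ applied to $\$_1S$ and $\$_2S$, decode each to a candidate for $\gr_{k+1}(\$_1X,\$_2Y)$, convert via \cref{cor:qgr_to_gr}, and take a majority using $\CR$-concatenation plus an $\LCE$ query to compare candidates. Your treatment is, if anything, slightly more explicit about why bit-level voting would be unsound and about comparing the $\bot$-status and dummy-segment lists separately.
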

\begin{proof}
We shall use $\mu = \Oh(\log(1/\delta))$ sketches $\skE_{k_{\ref{thm:ske}}}$ of \cref{thm:ske}
with $\delta_{\ref{thm:ske}}=\frac13$, $n_{\ref{thm:ske}}=n+1$, $k_{\ref{thm:ske}}=k+1$, and independent seeds.
For each of the $\mu$ sketches  $\skE_{k_{\ref{thm:ske}}}$,
the sketch $\skq_k(S)$ contains $\skE_{k_{\ref{thm:ske}}}(\$_1S)$, $\skE_{k_{\ref{thm:ske}}}(\$_2S)$,
where $\$_1,\$_2\notin \Sigma$ are two distinct symbols.
This construction uses $\Oh(\mu \log^2 n_{\ref{thm:ske}})=\Oh(\log^2 n \log(1/\delta))$ random bits
and produces sketches of $\Oh(\mu k^2_{\ref{thm:ske}}\log^3 n_{\ref{thm:ske}})=\Oh(k^2 \log^3 n \log(1/\delta))$ bits. 

The encoding algorithm calls $2\mu$ instances of the encoding algorithm of \cref{thm:ske}.
Hence, it uses $\tOh(\mu k^2_{\ref{thm:ske}})=\Oh(k^2 \log(1/\delta))$ space and costs
$\tOh(\mu n_{\ref{thm:ske}} k_{\ref{thm:ske}})=\tOh(nk \log(1/\delta))$ time.

The decoding algorithm, given $\skq_{k}(X), \skq_{k}(Y)$ for $X,Y \in \Sigma^{\le n}$,
runs the decoder of \cref{thm:ske} for $\skE_{k_{\ref{thm:ske}}}(\$_1X),\skE_{k_{\ref{thm:ske}}}(\$_1Y)$
for each of the $\mu$ sketches  $\skE_{k_{\ref{thm:ske}}}$.
This yields $\mu$ candidates for $\gr_{k_{\ref{thm:ske}}}(\$_1X,\$_2Y)$,
which we convert to candidates for $\qgr_k(X,Y)$ using \cref{cor:qgr_to_gr}.
Finally, we determine the majority answer among the $\mu$ candidates.
The equality test uses \cref{prp:RLSLP}\ref{it:concat}\ref{it:lce}
to compare two candidates for $\CR(X^MY^M)$.

Recall that the decoding procedures of \cref{thm:ske} use $\tOh(\mu k^5_{\ref{thm:ske}})=\tOh(k^5 \log(1/\delta))$ time and $\tOh(\mu k^2_{\ref{thm:ske}})=\tOh(k^2 \log(1/\delta))$
space.
The applications of \cref{cor:qgr_to_gr} and the equality tests take  $\tOh(\mu k^2_{\ref{thm:ske}})=\tOh(k^2 \log(1/\delta))$ time and space. The entire decoding algorithm uses $\tOh(k^2 \log(1/\delta))$ space and $\tOh(\mu k^5_{\ref{thm:ske}})=\tOh(k^5 \log(1/\delta))$ time.

As for correctness, since the $\mu$ sketches  $\skE_{k_{\ref{thm:ske}}}$ are independent,
by the Chernoff bound, the majority answer is wrong with probability at most $\exp(-\Oh(\mu))$.
Setting $\mu = \Oh(\log (1/\delta))$ (with a sufficiently large constant factor) guarantees a success probability of $1-\delta$.
\end{proof}

\section{Pattern Matching with \texorpdfstring{\boldmath$k$}{k} Edits}\label{sec:algorithms}
In this section, we present solutions for pattern matching with $k$ edits in the semi-streaming and streaming settings. 

\subsection{Periodicity under Edit Distance}
We start by recalling combinatorial properties of strings periodic under the edit distance.

\defkperiodic*

\begin{claim}\label{prop:period_prefixes}
Suppose that a string $X$ is a prefix of a string $Y$, where $|X| < |Y| \le 2|X|$. If $X$ is $k$-periodic with $k$-period $Q$, $|Q| \le |X| /  128k$, then either $Y$ is not $k$-periodic, or $Y$ is $k$-periodic with $k$-period $Q$.  
\end{claim}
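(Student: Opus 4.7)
The plan is to translate the alignment witnessing $k$-periodicity of $Y$ with some primitive $k$-period $Q'$ into one witnessing $k$-periodicity with period $Q$. Let $\A_Y : Y \to Q'^\infty[1\dd n_Y]$ be an alignment of cost $\le 2k$. Restricting $\A_Y$ to the prefix $X$ (\cref{fact:edk2}) yields $\ed(X,Q'^\infty[1\dd m]) \le 2k$ for some $m$ with $|m - |X|| \le 2k$; combined with the assumption $\ed(X,Q^\infty[1\dd n_X]) \le 2k$, this gives $\ed(Q^\infty[1\dd n_X], Q'^\infty[1\dd m]) \le 4k$ by the triangle inequality.

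I then apply a Fine--Wilf-type argument. Since $|Q| \le |X|/128k$ and $|Q'| \le |Y|/128k \le |X|/64k$, both $n_X$ and $m$ are at least $|X| - 2k$, far larger than $|Q| + |Q'| + 4k$. Any optimal alignment of $Q^\infty[1\dd n_X]$ and $Q'^\infty[1\dd m]$ has at most $4k+1$ maximal matched blocks, so some block $W$ has length $\ge 2(|Q|+|Q'|)$. As a factor of both $Q^\infty$ and $Q'^\infty$, $W$ has periods $|Q|$ and $|Q'|$; Fine--Wilf together with the primitivity of $Q$ and $Q'$ forces $|Q|=|Q'|=:q$ and $Q' = \rot^s(Q)$ for some shift $s$ that, by tracking the drift accumulated between $(1,1)$ and $W$, is bounded by $\Oh(k)$.

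It remains to combine $\A_X$ with the suffix portion of $\A_Y$ into an alignment of $Y$ with a prefix of $Q^\infty$. The suffix $Z=Y[|X|+1\dd |Y|]$ aligns under $\A_Y$ with $Q'^\infty[m+1\dd n_Y] = Q^\infty[m+s+1\dd n_Y+s]$, a fragment of $Q^\infty$. If one arranges $n_X = m+s$, the concatenation of $\A_X$ with the suffix portion of $\A_Y$ (reinterpreted in $Q^\infty$ coordinates) produces the desired alignment of $Y$ with $Q^\infty[1\dd n_X + (n_Y-m)]$. The phase compatibility $n_X \equiv m + s \pmod{q}$ is natural: since $X$ is dominantly periodic with period $q$ (having only $\Oh(k)$ error positions in a string of length $\ge 128kq$), both $\A_X$ and the restriction of $\A_Y$ to $X$ must reach endpoint positions in $Q^\infty$ sharing the same residue modulo $q$, and the exact equality can be enforced by adjusting $\A_X$ within the cost budget $2k$.

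The main obstacle is bounding the cost of the combined alignment by $2k$ rather than the naive $\cost(\A_X) + \cost(\A_Y|_Z) \le 4k$. The resolution should identify the $X$-contributions of $\A_X$ and $\A_Y|_X$: both encode the same dominant periodic structure of $X$ (up to the rotation by $s$ between $Q$ and $Q'$), so replacing $\A_X$ with an appropriate reinterpretation of $\A_Y|_X$ and appending $\A_Y|_Z$ merely re-expresses $\A_Y$ in $Q^\infty$ coordinates, preserving $\cost(\A_Y) \le 2k$. Making this reinterpretation rigorous---and using the density constant $128$ to close the remaining gap caused by the shift $s$---is the technical crux.
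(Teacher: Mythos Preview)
Your Steps 1–2 are fine: the triangle inequality plus Fine--Wilf on a long matched block do give $|Q|=|Q'|$ and $Q'=\rot^s(Q)$ for some $s$ with $\min(s,q-s)=\Oh(k)$. The gap is Step 3. Reinterpreting $\A_Y$ in $Q^\infty$-coordinates only gives $\ed(Y,\,Q^\infty[s{+}1\dd s{+}n_Y])\le 2k$, and that is \emph{not} an alignment with a prefix of $Q^\infty$ when $s\ne 0$. Your proposed fix (``adjust $\A_X$ within the cost budget'', ``use the density constant $128$'') provides no mechanism for absorbing the $\Theta(\min(s,q-s))$ phase shift without exceeding $2k$. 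And $s\ne 0$ is genuinely possible at the level of $X$: for instance, if $X=c_1\cdots c_{2k}\cdot Q^\infty[2k{+}1\dd |X|]$ with fresh characters $c_i$, then $X$ is $k$-periodic with period $Q$ (substitute the $c_i$'s) \emph{and} with period $\rot^{2k}(Q)$ (delete the $c_i$'s). So you cannot hope that Step~2 alone forces $s=0$; a real argument is needed for Step~3, and you have not supplied one.

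The paper's route is different in structure. Rather than building an alignment of $Y$ with a prefix of $Q^\infty$, it argues by contradiction toward $Q'=Q$ as strings. It partitions $X$ into blocks $X_1X_2\cdots$ along the (restricted) $Q'$-alignment, so that each unedited $X_j$ equals $Q'$; a counting argument (using $|X|\ge 128k\,|Q|$ to ensure there are far more than $4k$ four-block groups) locates four consecutive blocks $F=X_{4i+1}\cdots X_{4i+4}$ unedited under \emph{both} alignments. Then $F=(Q')^4$ is simultaneously a length-$4q'$ window of $Q^\infty$, and the paper finishes by a direct case analysis on $q,q'$ against primitivity, bypassing Fine--Wilf and any alignment splicing. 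That is what your approach is missing: the paper never tries to manufacture a cost-$2k$ alignment of $Y$ with $Q^\infty$; it kills off $Q'\ne Q$ outright. If you want to continue along your own line, you would need an additional argument that rules out $Q'=\rot^s(Q)$ with $s\ne 0$ (equivalently, the equal-length subcase), since that is exactly where the ``shift by $s$'' obstruction lives.
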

\begin{proof}
\newcommand{\suff}{\mathsf{suff}}
\newcommand{\pref}{\mathsf{pref}}
Suppose by contradiction that $Y$ is $k$-periodic with $k$-period $Q' \neq Q$. Let $q = |Q|$ and $q' = |Q'|$. Assume first $q \le q'$. Fix an alignment of the smallest cost between $Y$ and a prefix of $(Q')^\infty$. It induces an alignment $\A'$ of cost at most $2k$ between $X$ and a prefix of $(Q')^\infty$ and hence generates a partition $X = X_1 X_2 \ldots X_z$, where each $X_i$, $1 \le i \le z-1$, is aligned with $Q'$ and $X_z$ is aligned with a prefix of~$Q'$. From $|Q| \le |X| /  128k$ we obtain $|X|\ge 128 k$ and therefore  
$$z \ge (|X|-2k)/q' \ge \frac{128-2}{128} |X| / (|Y|/128 k) \gg 20 k$$
Consider fragments $X_1 X_2 X_3 X_4$, $X_5 X_6 X_7 X_8$, and so on. The total number of such fragments is at least $(20k-3) / 4 > 4k$, and at least $2k+1$ of them are not edited under $\A'$. Fix an optimal alignment $\A$ between $X$ and a prefix $Q^\infty$. The cost of $\A$ is bounded from above by $2k$, and therefore there is at least one fragment $X_{4i+1}X_{4i+2}X_{4i+3}X_{4i+4}$ that is not edited. Consider one such fragment $F$. On the one hand, $F = Q' Q' Q' Q'$. On the other hand, $F = \suff(Q) Q^j \pref(Q)$, where $\suff(Q)$ and $\pref(Q)$ are some suffix and some prefix of $Q$, respectively. 

Suppose first that $q'$ is a multiple of $q$. In this case, $Q' = \left(Q[\ell+1\dd q] Q[1 \dd \ell]\right)^r$, where $\ell = q-|\suff(Q)|$ and an integer $r$, which contradicts the fact that $Q'$ is primitive. Otherwise, consider the copy of $Q$ that contains $F[q']$. Consider also a substring $Q Q Q$ of $F$ formed by the copy of $Q$ that contains $F[2q']$, and the preceding and succeeding copies of $Q$. We then obtain that there is an occurrence of $Q$ in $QQQ$ that is not aligned with any copy of $Q$ (otherwise, $q'$ is a multiple of $q$), and therefore $Q$ is not primitive, a contradiction.   

The case $q > q'$ can be treated analogously.
\end{proof}

Note that \cref{prop:period_prefixes} implies in particular that a string can have at most one $k$-period.

\subsection{Semi-streaming Algorithm}\label{sec:rpst}
We first present a deterministic algorithm for pattern matching with $k$ edits in the semi-streaming setting. 


\subsubsection{Preprocessing Stage} 
Consider a set $\Pi$ of $\Oh(\log m)$ prefixes $P_i$ of $P$ initialised to contain $P$ itself as well as the prefixes of length $2^\ell$ for all $0 \le \ell \le \lfloor{\log |P|} \rfloor$. Order the prefixes by lengths, and consider two consecutive prefixes $P', P''$. If $P'$ is $k$-periodic with $k$-period $Q'$ while~$P''$ is not $k$-periodic, we add two more prefixes to $\Pi$. Namely, if~$\ell$ be the maximum integer such that $P[1\dd\ell]$ is $k$-periodic with $k$-period $Q'$, add to $\Pi$ the prefixes $P[1\dd\ell]$ and $P[1\dd\ell+1]$. Note that $P[1\dd\ell+1]$ is not $k$-periodic by~\cref{prop:period_prefixes}. 

Let $\Pi = \{P_1, P_2, \ldots, P_z\}$ be the resulting set of prefixes. We assume that the patterns are ordered in the ascending order of their lengths. During the preprocessing step, for each $i$ such that $P_i$ is $k$-periodic, we compute its $k$-period~$Q_i$. We use notation $\ell_i = P_i$ and $q_i = |Q_i|$ (if defined). Importantly, we do not have to store $Q_i$ explicitly, we can simply memorize its endpoints in $P_i$ which takes $\Oh(\log m)$ extra space in total. We also store, for each of the $\Oh(k)$ rotations $D$ of $Q_i$ that can be a difference of a chain of $k$-edit occurrences of $P_i$ (\cref{cor:structure-period}), the encodings $\qgr_{32k} (D, D)$ and $\qgr_{30k} (P[\ell_{i-1}+1 \dd \ell_i], D^\infty[1\dd\Delta_i])$, where $\Delta_i = \ell_i - \ell_{i-1}+k$. 

\subsubsection{Main Stage}
The main stage of the algorithm starts after we have preprocessed the pattern. During the main stage, we receive the text as a stream, one character of a time. 
We exploit the following result:

\begin{fact}[cf.~\cite{DBLP:journals/iandc/Ukkonen85}]\label{fct:dynprog}
Given a read-only string $X$ of length $m$ and a streaming string $Y$. There is a dynamic programming algorithm that correctly identifies all prefixes $Y'$ of $Y$ within edit distance at most $k \le m$ from $X$, as well as $\ed(Y', X)$ itself. The algorithm takes $\Oh(km)$ time and $\Oh(k)$ space besides the space required to store $X$. 
\end{fact}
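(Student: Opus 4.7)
The plan is to instantiate the standard banded dynamic programming algorithm of Ukkonen. Define $D[i,j]=\ed(X[1\dd i], Y[1\dd j])$ for $i\in[0\dd m]$ and $j\in[0\dd |Y|]$; these values satisfy the usual recurrence $D[i,j]=\min\{D[i-1,j-1]+[X[i]\ne Y[j]],\; D[i-1,j]+1,\;D[i,j-1]+1\}$ with $D[0,j]=j$ and $D[i,0]=i$. The task is to output, for each $j$ processed so far, the value $D[m,j]$ whenever $D[m,j]\le k$.

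First I would observe, via \cref{fact:edk2}, that any prefix $Y'=Y[1\dd j]$ of $Y$ with $\ed(Y',X)\le k$ satisfies $|j-m|\le k$. Consequently, only $j\in[m-k\dd m+k]$ are relevant, so the algorithm may halt after reading the first $m+k$ characters of $Y$. Similarly, whenever $D[i,j]\le k$ one has $|i-j|\le k$, so it suffices to compute, for each incoming $j$, the entries of the $j$th column restricted to the band $I_j:=[\max(0,j-k)\dd \min(m,j+k)]$, substituting $+\infty$ for any out-of-band entry referenced by the recurrence.

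Next I would describe the streaming implementation: upon arrival of $Y[j]$, load into a register the $\Oh(k)$ values $D[i,j-1]$ for $i\in I_{j-1}$ (maintained from the previous step), then compute $D[i,j]$ for $i\in I_j$ in increasing order of $i$ using read-only access to $X$ to test $X[i]=Y[j]$. This update costs $\Oh(k)$ time and only needs $\Oh(k)$ working space (two columns, each of length at most $2k+1$, plus $\Oh(1)$ counters). If $m\in I_j$ and $D[m,j]\le k$, report $(j,D[m,j])$. The algorithm stops once $j>m+k$. Correctness of this banded computation is standard: the discarded entries exceed $k$ and hence cannot contribute to any $D[i',j']\le k$ with $(i',j')$ in the band, by the observation above.

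Finally, I would tally the resources. The stream is read in $m+k=\Oh(m)$ steps (since $k\le m$), each costing $\Oh(k)$ time, for $\Oh(km)$ time in total; the working memory is $\Oh(k)$ machine words beyond the read-only copy of $X$. This matches the bounds of the fact. There is no real obstacle here; the only thing worth checking carefully is that dropping values outside the band never causes a spurious $+\infty$ to infiltrate an entry that the algorithm claims is $\le k$, which follows directly from \cref{fact:edk2} applied to the induced alignment witnessing $D[i,j]\le k$.
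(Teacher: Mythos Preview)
The paper does not give its own proof of this fact; it simply states it with a citation to Ukkonen's paper. Your proposal is a correct and faithful rendering of the standard banded dynamic programming argument that the citation points to, so there is nothing to compare against.
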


\paragraph{Chains of $k$-edit occurrences.}  
During the main stage of the algorithm, we store the following information. 
Let $r$ be the newly arrived position of the text $T$. For each~$2 \le i \le z$, consider all $k$-edit occurrences of $P_{i-1}$ in $T[r- \ell_{i} - k +1 \dd r]$. We call such occurrences \emph{active}. We denote the set of right endpoints of the active occurrences by $\aOcc_k^E(P_{i-1}, T)$. By~\cref{cor:structure-nonperiod,cor:structure-period}, $\aOcc_k^E(P_{i-1}, T)$ forms $\Oh(k^3)$ chains. For each chain~$\chain$, we store the following information:
\begin{enumerate}
\item The leftmost position $lp$ and the size $|\chain|$ of $\chain$;
\item An integer $\ed(\chain)$ equal to the smallest edit distance from $P_{i-1}$ to a suffix of $T[1\dd r]$ for every endpoint $r \in \chain$;
\item If $|\chain| \ge 2$, we store the shift $\Delta$ of the difference $D = Q_{i-1}[1+\Delta \dd q_{i-1}] Q_{i-1}[1 \dd q_{i-1}]$ of $\chain$. 
\end{enumerate}

If $p^\ast$ is the first position added to $\chain$ (it can be different from $lp$ as we will update chains to contain only active occurrences), then at the position $(p^\ast+1)$ we start running the dynamic programming algorithm for $T[p^\ast+1\dd]$ and $P[\ell_{i-1}+1\dd \ell_{i}]$ (\cref{fct:dynprog}). 

Furthermore, consider the moment when we detect the second position in $\chain$ (if it exists) and hence the difference $D$ of the chain. Starting from this moment, for every newly added position $p \in \chain$, at the position $(p+1)$ we start computing the greedy encoding $\qgr_{32k} (T[p+1 \dd p+\Delta_i], D^\infty[1\dd\Delta_i])$. We continue running the algorithm until either the computation is over or a new position in the chain is detected. In the end, we compute the encoding for the rightmost position in the chain.

\paragraph{Detecting new $k$-edit occurrences of $P_i$.} 
We now explain how to detect new $k$-edit occurrences of the prefixes $P_i$. Let $r$ be the latest arrived position of $T$. If $i = 1$, then since $k \ge 1$, $r \in \aOcc_k^E(P_1, T)$. Below we consider three possible cases for $i \ge 2$: $P_{i-1}$ is $k$-periodic, $P_i$ is not $k$-periodic; $P_{i-1}$ is not $k$-periodic; $P_{i-1}$ and $P_i$ are $k$-periodic. 

\subparagraph{Case 1: $P_{i-1}$ is $k$-periodic, $P_i$ is not $k$-periodic.} By construction, in this case $\ell_{i-1} + 1 = \ell_i$. The position $r \in \aOcc_k^E(P_{i}, T)$ iff one of the following conditions is satisfied:
\begin{enumerate}
\item The smallest edit distance from $P_{i-1}$ to a suffix of $T[1\dd r]$ is at most $k-1$ (this corresponds to the case when the last character of $P_i$ is deleted in an optimal alignment of a suffix of $T[1 \dd r]$ and $P_i$);
\item The smallest edit distance from $P_{i-1}$ to a suffix of $T[1\dd r-1]$ is at most $k-1$ and $P_i[\ell_i] \neq T[r]$ (this corresponds to the case when the last character of $P_i$ is substituted for $T[r]$ in an optimal alignment of a suffix of $T[1 \dd r]$ and $P_i$);
\item The smallest edit distance from $P_{i-1}$ to a suffix of $T[1\dd r-1]$ is at most $k$ and $P_i[\ell_i] = T[r]$ (this corresponds to the case when the last character of $P_i$ is matched with $T[r]$ in an optimal alignment of a suffix of $T[1 \dd r]$ and $P_i$);
\item The smallest edit distance from $P_{i}$ to a suffix of $T[1\dd r-1]$ is at most $k-1$ (this corresponds to the case when $T[r]$ is deleted in an optimal alignment of a suffix of $T[1 \dd r]$ and $P_i$).
\end{enumerate}
We can decide which of the conditions is satisfied, and therefore whether $r \in \aOcc_k^E(P_{i}, T)$ in $\Oh(k^3)$ time using $\aOcc_k^E(P_{i-1},T)$ and $\aOcc_k^E(P_{i},T)$. Moreover, we can compute the smallest edit distance from $P_i$ to a suffix of $T[1\dd r]$ if it is bounded by $k$. 

For the next two cases, we will use the following simple observation that follows from \cref{fact:edk2}:
\begin{observation}
Let $\ed_{i-1}(r')$ be the smallest edit distance from $P_{i-1}$ to a suffix of $T[1\dd r']$, and define
\begin{equation}\label{eq:ed}
d = \min_{\substack{r' \in \aOcc_k^E(P_{i-1}, T) \\ r' \in [r+1-\Delta_i, r+1-\Delta_i+2k]}} \{\ed_{i-1}(r') + \ed(P[\ell_{i-1}+1 \dd \ell_i], T[r'+1 \dd r])\}
\end{equation}
The smallest edit distance from $P_i$ to $T[1\dd r]$ is equal to $d$ if $d \le k$ and is larger than $k$ otherwise.
\end{observation}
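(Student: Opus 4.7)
The plan is to prove both directions of the equivalence. The first (easy) direction shows $\ed_i(r) \le d$ whenever the minimum $d$ in the definition is finite. For each candidate $r' \in \aOcc_k^E(P_{i-1}, T)$ in the appropriate range, fix an optimal alignment $\A_1$ of $P_{i-1}$ to some suffix $T[r''+1\dd r']$ achieving $\ed_{i-1}(r')$, and an optimal alignment $\A_2$ of $P[\ell_{i-1}+1\dd \ell_i]$ to $T[r'+1\dd r]$. Concatenating $\A_1$ and $\A_2$ yields an alignment of $P_i = P[1\dd \ell_i]$ to the suffix $T[r''+1\dd r]$ whose cost equals $\ed_{i-1}(r') + \ed(P[\ell_{i-1}+1\dd \ell_i], T[r'+1\dd r])$. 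Taking the minimum over feasible $r'$ gives $\ed_i(r)\le d$.

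For the reverse direction, assume $\ed_i(r)\le k$ and fix an optimal alignment $\A = (x_t,y_t)_{t=1}^m$ of $P_i$ to some suffix $T[r''+1\dd r]$. Let $t^\ast$ be the smallest index with $x_{t^\ast}=\ell_{i-1}+1$, and set $r' = r''+y_{t^\ast}-1$. The induced alignments $\A_{[1\dd\ell_{i-1}+1),[1\dd y_{t^\ast})}$ and $\A_{[\ell_{i-1}+1\dd],[y_{t^\ast}\dd]}$ are alignments of $P_{i-1}$ with $T[r''+1\dd r']$ and of $P[\ell_{i-1}+1\dd \ell_i]$ with $T[r'+1\dd r]$, respectively, and by the structure of edit-distance alignments their costs sum to $\cost(\A)=\ed_i(r)$. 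Hence
\[
\ed_{i-1}(r') + \ed(P[\ell_{i-1}+1\dd \ell_i], T[r'+1\dd r]) \le \ed_i(r)\le k.
\]

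It remains to verify that $r'$ is indeed feasible in the minimum defining $d$. Since $\ed_{i-1}(r')\le k$, we have $r'\in \Occ_k^E(P_{i-1},T)$, and \cref{fact:edk2} applied to the second subalignment yields $|(r-r')-(\ell_i-\ell_{i-1})|\le k$, which (using $\Delta_i = \ell_i-\ell_{i-1}+k$) places $r'$ in the window $[r+1-\Delta_i, r+1-\Delta_i+2k]$ as required. The same inequality, together with $\ell_{i-1}\ge 1$, implies $r'\ge r-\ell_i-k+1$ and $r'\le r$, so $r'$ is an \emph{active} $k$-edit occurrence of $P_{i-1}$, i.e.\ $r'\in \aOcc_k^E(P_{i-1},T)$. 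Thus $d\le \ed_i(r)\le k$, and combined with the first direction we obtain $d=\ed_i(r)$.

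The only mild subtlety is the standard verification that the induced subalignments are indeed alignments whose costs add up exactly; this follows from the fact that $(x_{t^\ast},y_{t^\ast})=(\ell_{i-1}+1,y_{t^\ast})$ is a split point at which the alignment passes through the boundary between $P_{i-1}$ and $P[\ell_{i-1}+1\dd \ell_i]$ without crossing it. No deeper combinatorial argument is needed.
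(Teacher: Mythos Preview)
Your proof is correct and follows exactly the approach the paper indicates (the paper simply states that the observation ``follows from \cref{fact:edk2}'' and gives no further argument). You have supplied the standard split-and-recombine justification that the citation to \cref{fact:edk2} is meant to evoke.

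Two very minor remarks. First, the inequality $|(r-r')-(\ell_i-\ell_{i-1})|\le k$ actually yields $r'\in[r-\Delta_i,\,r-\Delta_i+2k]$, which is shifted by one from the window $[r+1-\Delta_i,\,r+1-\Delta_i+2k]$ in the statement; this is an off-by-one in the paper's range rather than a flaw in your reasoning (and the boundary case $r'=r-\Delta_i$ can be absorbed by choosing the \emph{largest} index $t^\ast$ with $x_{t^\ast}=\ell_{i-1}+1$ instead of the smallest, or by shifting the split by one position). Second, your justification that $r'\le r$ via the same inequality would require $\ell_i-\ell_{i-1}\ge k$, which is not guaranteed; the bound $r'\le r$ follows more directly from the construction $r'=r''+y_{t^\ast}-1$ with $y_{t^\ast}\le r-r''+1$.
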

It follows that to decide whether $r \in \aOcc_k^E(P_i, T)$, it suffices to compute the value $\min\{d, k+1\}$, where $d$ is as defined above. 

\subparagraph{Case 2: $P_{i-1}$ is not $k$-periodic.} In this case, $\aOcc_k^E(P_{i-1},T)$ is stored as $\Oh(k^3)$ chains of size one. Therefore, we can find the positions $r'$ from~\cref{eq:ed} in $\Oh(k^3)$ time. Moreover, for each position $r'$, we run the dynamic programming algorithm for $T[r'+1\dd]$ and $P[\ell_{i-1}+1\dd \ell_i]$, which outputs the edit distance between $T[r'+1\dd r]$ and $P[\ell_{i-1}+1\dd \ell_i]$ if it is at most $k$. As we also know the smallest edit distance between $P_{i-1}$ and a suffix of $T[1 \dd r']$, we can compute $d$ in $\Oh(k^3)$ time. 

\subparagraph{ Case 3: $P_{i-1}$ and $P_i$ are $k$-periodic.} 
We can identify all positions $r'$ from~\cref{eq:ed} in $\Oh(k^3)$ time. (It suffices to check each of the $\Oh(k^3)$ chains that we store for $P_{i-1}$). We must now test each of these positions. Consider a position $r'$ and let $\chain$ be the chain containing it. It suffices to compute the edit distance between $P[\ell_{i-1}+1\dd\ell_i]$ and $T[r'+1\dd r]$ as we already know the smallest edit distance from $P_{i-1}$ to a suffix of $T[1 \dd r']$. If $|\chain| = 1$, the distance has been computed by the dynamic programming algorithm.  Otherwise, we use quasi-greedy encodings. On a high level, our goal is to compute the edit distance between $\pi = P[\ell_{i-1} \dd \ell_i]$ and $\tau = T[r'+1 \dd r]$ via a string $\mu = D^\infty [1 \dd \Delta_i]$, where $D$ is the difference of $\chain$ and $\Delta_i = \ell_i-\ell_{i-1}+k$. 

\begin{lemma}\label{lm:dist_to_Y}
There is $\ed(\pi,\mu) \le 26k$.
\end{lemma}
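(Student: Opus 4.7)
The plan is to exploit the shared periodic structure of $P_{i-1}$ and $P_i$ via their common $k$-period, then use the chain existence to pin down the phase of $D$ relative to that period.

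First I would establish that $P_{i-1}$ and $P_i$ share a single $k$-period. In Case 3 both prefixes are $k$-periodic, so by the construction of $\Pi$ no extra prefix was inserted between them, and consecutive prefixes in $\Pi$ satisfy $\ell_i \le 2\ell_{i-1}$. \cref{prop:period_prefixes} then forces the $k$-periods of $P_{i-1}$ and $P_i$ to coincide; denote the common period $Q$ with $q = |Q|$. By $k$-periodicity, $\ed(P_{i-1}, Q^\infty[1\dd\ell_{i-1}]) \le 2k$ and $\ed(P_i, Q^\infty[1\dd\ell_i]) \le 2k$.

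Next I would relate $\pi$ to a substring of $Q^\infty$. Take an optimal alignment $\A$ of $P_i$ with $Q^\infty[1\dd\ell_i]$ of cost $\le 2k$. The restriction $\A_{[\ell_{i-1}+1\dd\ell_i+1),[r+1\dd\ell_i+1)}$ aligns $\pi$ with $Q^\infty[r+1\dd\ell_i]$, where $r$ is the image of $\ell_{i-1}+1$ under $\A$; by \cref{fact:edk2}, $|r-\ell_{i-1}|\le \width(\A) \le 2k$ and the restricted alignment has cost $\le 2k$, so $\ed(\pi, Q^\infty[r+1\dd\ell_i]) \le 2k$.

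Then I would use the chain assumption to relate $\mu$ to a substring of $Q^\infty$. Pick consecutive chain positions $p_j < p_{j+1}$ with $T(p_j\dd p_{j+1}] = D$; both are $k$-edit occurrences of $P_{i-1}$. The occurrence at $p_{j+1}$ provides $T[a\dd p_{j+1}]$ of length $\ell_{i-1}\pm k$ with $\ed(T[a\dd p_{j+1}], P_{i-1}) \le k$, hence $\ed(T[a\dd p_{j+1}], Q^\infty[1\dd\ell_{i-1}]) \le 3k$. Because $|D|$ differs from $\ell_{i-1}$ by at most $10k$ (\cref{cor:structure-period}), $|a-(p_j+1)|\le 11k$, so $D$ and $T[a\dd p_{j+1}]$ agree except on a length-$\le 11k$ prefix, giving $\ed(D,Q^\infty[1\dd\ell_{i-1}])\le 14k$. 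This identifies $D$ as a prefix of $Q^\infty$ shifted by a specific rotation $s$; extending to $\mu = D^\infty[1\dd\Delta_i]$ costs at most one extra crossover into a second copy of $D$, which—since $\Delta_i \le \ell_{i-1}+k$ and $|D|\ge \ell_{i-1}-10k$—adds at most $\Delta_i - |D| \le 11k$ characters beyond a single copy and thus at most $O(k)$ more edits. This yields $\ed(\mu, Q^\infty[s+1\dd s+\Delta_i]) = O(k)$ for the appropriate $s$.

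The main obstacle will be the phase alignment, i.e., showing that the rotation $s$ coming from the chain matches $r \pmod q$ up to $O(k)$ slack. I would argue this by tracking both (i) the alignment of $T[a\dd p_{j+1}]$ with $P_{i-1}$ composed with $P_{i-1}\to Q^\infty$, which forces $T[p_{j+1}]$ to align with position $\ell_{i-1}\pm O(k)$ in $Q^\infty$, and (ii) the analogous claim for $p_j$, which fixes the phase of $D = T(p_j\dd p_{j+1}]$ as a substring of $Q^\infty$. Simultaneously, $\A$ forces $P_i[\ell_{i-1}+1]$ to sit at phase $r\bmod q$ of $Q^\infty$. Consistency of all these $O(k)$-wide windows yields $s\equiv r\pmod q$ with $O(k)$ error. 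Once the phases are aligned, $\ed(Q^\infty[s+1\dd s+\Delta_i], Q^\infty[r+1\dd \ell_i]) \le |{\Delta_i-(\ell_i-r)}| + O(k) = |r-\ell_{i-1}+k|+O(k) \le O(k)$ just by prepending/appending characters. The triangle inequality $\ed(\pi,\mu) \le \ed(\pi,Q^\infty[r+1\dd\ell_i]) + \ed(Q^\infty[r+1\dd\ell_i], Q^\infty[s+1\dd s+\Delta_i]) + \ed(Q^\infty[s+1\dd s+\Delta_i], \mu)$ then assembles the three bounds ($2k$, $\le 5k$, and $\le 19k$ respectively in the worst case) to the claimed $26k$.
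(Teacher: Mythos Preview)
Your argument has a basic misreading of what the chain difference $D$ is. You write ``$|D|$ differs from $\ell_{i-1}$ by at most $10k$'' and later ``$|D|\ge \ell_{i-1}-10k$'', but \cref{cor:structure-period} says the difference is of the form $\rot^s(Q)$ with $|m-s|\le 10k$: here $m=\ell_{i-1}$ is constraining the \emph{shift} $s$, not the length. Since $D$ is a cyclic rotation of $Q_{i-1}$, we have $|D|=q_{i-1}\le \ell_{i-1}/(128k)$, which is tiny compared to $\ell_{i-1}$. This immediately kills two of your steps: the bound $\ed(D,Q^\infty[1\dd\ell_{i-1}])\le 14k$ is false (the edit distance is at least $\ell_{i-1}-q_{i-1}$, which is huge), and the claim that $\mu=D^\infty[1\dd\Delta_i]$ uses ``at most one extra crossover into a second copy of $D$'' is far off --- you need roughly $\Delta_i/q_{i-1}\ge 128k$ copies. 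With only an approximate relationship between $D$ and a substring of $Q^\infty$, errors would accumulate across those many copies and the argument collapses.

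The fix, and what the paper does, is to use the structural statement of \cref{cor:structure-period} in full: $D$ is \emph{exactly} a rotation of $Q_i=Q_{i-1}$, so $D^\infty$ is \emph{equal} to $Q_i^\infty[s\dd]$ for some $s$ with $|s-\ell_{i-1}|\le 10k$. Hence $\mu=Q_i^\infty[s\dd s+\Delta_i-1]$ with zero error, and a single triangle inequality against $Q_i^\infty[\ell_{i-1}+1\dd\ell_i]$ (paying only for the $\le 10k$ shift at each endpoint plus the $k$ in $\Delta_i$) gives $\ed(\mu,Q_i^\infty[\ell_{i-1}+1\dd\ell_i])\le 20k$. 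Combined with $\ed(\pi,Q_i^\infty[\ell_{i-1}+1\dd\ell_i])\le 6k$ (which you essentially have), this yields the $26k$. No analysis of text occurrences, no phase-matching argument, and no crossover accounting is needed.
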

\begin{proof}
As $P_{i-1}$ and $P_i$ are $k$-periodic, by~\cref{prop:period_prefixes} we obtain that $P_i = P[1\dd\ell_i]$ is $k$-periodic with $k$-period $Q_i = Q_{i-1}$, that is, there is a prefix of $Q_i^\infty$ such that the edit distance between it and $P_i$ is at most $2k$. By~\cref{fact:edk2}, there is a substring $Q_i^\infty[r \dd t]$ such that $|r-\ell_{i-1}| \le 2k$ and $|t-\ell_{i}| \le 2k$ and $\ed(Q_i^\infty[r \dd t], \pi) \le 2k$. By the triangle inequality, we obtain that $\ed(Q_i^\infty[\ell_{i-1}+1\dd\ell_i], \pi) \le 6k$. Let $a = \ell_{i-1}-7k \pmod{q_{i}}$ and $b = \ell_{i-1}+7k \pmod{q_{i}}$. By~\cref{cor:structure-period}, $D$ is a rotation of $Q_{i-1} = Q_i$ with shift $\Delta$, where $\Delta \in [a-3k,b+3k]$ if $a \le b$ and $\Delta \in [0, b+3k] \cup [a-3k, q_i)$ if $a > b$. It follows that $D^\infty = Q_i^\infty[s \dd]$, where $|s - \ell_{i-1}| \le 10k$. As $\mu = D^\infty[1\dd \Delta_i] = Q_i^\infty[s \dd s+\Delta_i-1]$, by the triangle inequality we obtain $\ed(\mu, Q_i^\infty[\ell_{i-1}+1\dd\ell_i]) \le 20k$. Applying the triangle inequality one more time, we obtain the claim.
\end{proof}

Let $\G_P = \qgr_{30k} (\pi, \mu)$ and $\G_T = \qgr_{30k} (\mu, \tau)$. By~\cref{cor:qgr_to_ed}, knowing $\G_P$ and $\G_T$ is sufficient to compute the edit distance between $\pi$ and $\tau$. Note that we do not know $\G_T$ yet, we must compute it using the available information. Let $p$ be the rightmost position in $\chain$. 

\begin{enumerate}
\item Recall that at the position $(p+1)$ we launched an algorithm that is computing $\qgr_{32k} (T[p+1 \dd p+\Delta_i], D^\infty[1\dd\Delta_i])$ with a delay of $k$ characters  (\cref{cor:greedy_long}). We have $p+\Delta_i \ge r' + \Delta_i \ge r$. Therefore, upon reaching $r$, we can use the memory of the algorithm to compute $\G_{T,1} = \qgr_{32k} (T[p+1 \dd r], D^\infty[1\dd r-p])$ in $\tOh(k^5)$ time and $\tOh(k^2)$ space (\cref{lm:greedy_concatenation}).

\item By the definition of chains, $T[r'+1\dd p] = D^j$ for some integer $j$. By~\cref{lm:greedy_concatenation}, we can use $\qgr_{32k}(D, D)$ computed during the preprocessing step to compute $\G_{T,2} = \qgr_{32k}(T[r'+1\dd p], D^j)$ in $\tOh(k^5)$ time and $\tOh(k^2)$ space. By applying~\cref{lm:greedy_concatenation} again, we can compute $\qgr_{32k}(T[r'+1\dd p], D^j D^\infty[1\dd r-p])$ in $\tOh(k^5)$ time and $\tOh(k^2)$ space.

\item Finally, we compute via~\cref{cor:greedy_long} the encoding $\qgr_{30k} (\eps, D^\infty[r-p+1\dd \Delta_i])$, where $\eps$ is the empty string and $\Delta_i-(r-p) \le 2k$. We then apply~\cref{obs:qgr_larger} to compute $\qgr_{30k+(\Delta_i-(r-p))}(T[r'+1\dd p], D^j D^\infty[1\dd r-p])$ and further~\cref{lm:greedy_concatenation} to compute $\qgr_{30k}(T[r'+1\dd p], D^\infty[1\dd \Delta_i]) = \G_T$.
\end{enumerate}

\paragraph{Updating the chains.}
When we detect a new $k$-edit occurrence of $P_i$, we must decide if it should be added to some existing chain or if we must create a new chain for this occurrence. 

To this end, for each $1 \le i \le z$, we consider $\Oh(k)$ of its rotations of $Q_i$ that can be the difference of a chain of $k$-edit occurrences of $P_i$ in $T$ (\cref{cor:structure-period}). For each rotation $R$, we run a constant-space and linear-time deterministic pattern matching algorithm~\cite{RYTTER2003763}. The algorithm processes the text $T$ as a stream and if there is an occurrence $T[\ell\dd r]$ of the rotation reports it while reading $T[r]$. The algorithm uses~$\Oh(1)$ space and $\Oh(1)$ amortised time per character of~$T$.

Suppose that we detect a new right endpoint $r$ of a $k$-edit occurrence $T[\ell \dd r]$ of $P_i$. We must decide whether $r$ belongs to an existing chain of $k$-edit occurrences of $P_i$ or starts a new one. In order to do this, we first find the chain $\chain$ that contains $r-q_i+1$ if it exists by checking each chain in turn. We then check that the smallest edit distance from a suffix of $T[1\dd r]$ to $P_i$ equals to  $\ed(\chain)$ and that $T[r-q_i+1\dd r]$ is equal to the difference of the chain. (Recall that we run the exact pattern matching algorithm for each rotation of $Q_i$ that can be the difference of a chain so that both checks can be performed in $\Oh(1)$ time). 

If these conditions are not satisfied, we create a new chain that contains $r$ only. Otherwise, we add $r$ to $\chain$ (i.e., increment the size of $\chain$). To finalize the update of the chains, we must delete all $k$-edit occurrences that become inactive: for each $i$ and for each chain of $k$-edit occurrences of $P_i$, we ``delete'' the first $k$-edit occurrence if it starts before $r-\ell_i-k+1$. To ``delete'' a $k$-edit occurrence, we simply update the endpoints of the first occurrence in the chain and the total number of occurrences in the chain.

\subsubsection{Analysis}
We summarize the results of this section:

\begin{theorem}\label{th:rpst}
Assume a read-only pattern $P$ of length $m$ and a streaming text $T$ of length $n$. There is a deterministic algorithm that finds the set $\Occ_k(P,T)$ using $\tOh(k^5)$ space and $\tOh(k^6)$ amortised time per character of the text $T$.
\end{theorem}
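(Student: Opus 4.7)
The plan is to verify correctness, then bound the space and the amortized per-character time.

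For correctness, I would proceed by induction on the level $i \in [2\dd z]$, maintaining the invariant that after processing $T[1\dd r]$ the algorithm stores the active set $\aOcc_k^E(P_{i-1}, T) \cap [r - \ell_i - k + 1 \dd r]$ partitioned into $\Oh(k^3)$ chains, each annotated with its leftmost endpoint, size, common edit distance $\ed(\chain)$, and (when $|\chain| \ge 2$) difference shift $\Delta$; and that when processing $T[r]$ at level $i$ the algorithm correctly decides whether $r \in \aOcc_k^E(P_i, T)$. The inductive step splits into Cases 1--3 of the main stage: Case~1 is a single-character edit-distance recursion; Case~2 uses $|\aOcc_k^E(P_{i-1}, T)| = \Oh(k^2)$ from \cref{cor:structure-nonperiod} to test each candidate via \cref{fct:dynprog}; Case~3, the delicate one, uses \cref{lm:dist_to_Y} to guarantee $\ed(\pi,\mu) \le 26k$ so that the quasi-greedy encodings with threshold $30k$ are well-defined, and then invokes \cref{cor:qgr_to_ed} on them to recover $\min(k+1, \ed(\pi,\tau))$. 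Chain bookkeeping is correct because the preprocessed constant-space rotation matchers of~\cite{RYTTER2003763} decide in $\Oh(1)$ per character whether $T(r-q_i \dd r]$ equals the expected difference, which together with $\ed(\chain)$ distinguishes an extension of an existing chain from the start of a new one.

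For space, the preprocessing occupies $\tOh(k^3 \log m)$: $\Oh(\log m)$ prefix endpoints plus, for each $k$-periodic prefix, $\Oh(k)$ candidate rotations $D$ with $\tOh(k^2)$-sized encodings $\qgr_{32k}(D,D)$ and $\qgr_{30k}(P(\ell_{i-1}\dd\ell_i], D^\infty[1\dd\Delta_i])$ (\cref{cor:qgr_to_gr}). The main-stage memory is dominated by the active chains: by \cref{cor:structure-nonperiod,cor:structure-period} there are $\Oh(k^3)$ per level, and each stores only constant metadata, one dynamic-programming instance of \cref{fct:dynprog} ($\Oh(k)$ space), and at most one background encoding computation from \cref{cor:greedy_long} ($\tOh(k^2)$ space). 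Summing across the $\Oh(\log m)$ levels yields $\tOh(k^5)$.

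For the amortized per-character time I would account, per level, for three contributions: (i) the background encoder of \cref{cor:greedy_long} runs on each of the $\Oh(k^3)$ chains at cost $\tOh(k^3)$ amortized per character, for $\tOh(k^6)$ in total; (ii) the up to $\Oh(k^3)$ dynamic-programming instances of \cref{fct:dynprog} together cost $\tOh(k^4)$; and (iii) testing whether $r$ is a new $k$-edit occurrence is straightforward in Cases~1--2 from the stored edit distances, while in Case~3 one must evaluate the minimum in~\cref{eq:ed} by combining the background encoder's output with the preprocessed $\qgr_{32k}(D,D)$ via \cref{lm:greedy_concatenation}, lowering the threshold through \cref{obs:qgr_larger}, and applying \cref{cor:qgr_to_ed}. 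Amortizing the Case~3 work by charging each test against the lifetime of the tested candidate in the active window (of length $\Oh(k)$) and against the chain-extension events keeps its per-character cost within $\tOh(k^6)$, which is absorbed by (i).

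The principal obstacle I foresee is the $k^2$-character delay of the streaming encoder of \cref{cor:greedy_long}: when the Case~3 trigger fires at position $r$ for a chain with rightmost endpoint $p$, the background instance has only guaranteed $\qgr_{32k}(T(p\dd p'], D^\infty[1\dd p'-p])$ for some $p'$ with $|r - p'| \le k^2$, so the encoding required is not yet produced verbatim. I plan to close the gap by appending $\Oh(1)$ preprocessed copies of $\qgr_{32k}(D,D)$ together with a short prefix of the appropriate rotation via \cref{lm:greedy_concatenation}, then lowering the threshold through \cref{obs:qgr_larger} so that it matches the target $\qgr_{30k}$ encoding of the pair $(\pi,\mu)$, all comfortably within the $\tOh(k^5)$ budget of the trigger.
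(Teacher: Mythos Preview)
Your overall plan matches the paper's: the algorithm, the space accounting, and the $\tOh(k^6)$ dominant term from $\Oh(k^3)$ background encoders at $\tOh(k^3)$ amortised each are all right. There is, however, a concrete gap in your handling of the delay.

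When the trigger fires at position $r$ for a chain with rightmost endpoint $p$, you correctly observe that the background instance of \cref{cor:greedy_long} has only produced $\qgr_{32k}(T(p\dd p'],D^\infty[1\dd p'-p])$ for some $p'$ with $r-p'\le k^2$. The missing piece is the encoding of the pair $(T(p'\dd r],\,D^\infty(p'-p\dd r-p])$. Your proposed fix --- appending copies of $\qgr_{32k}(D,D)$ and a short prefix of a rotation --- cannot produce this: concatenating $\qgr_{32k}(D,D)$ appends $D$ on \emph{both} coordinates, yielding an encoding of $(T(p\dd p']D,\ldots)$ rather than $(T(p\dd r],\ldots)$, and there is no reason $T(p'\dd r]$ should equal a power or prefix of $D$. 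Copies of $\qgr_{32k}(D,D)$ are indeed used in the algorithm, but for prepending $T(r'\dd p]=D^j$ and for padding the $\mu$-side out to length $\Delta_i$ (steps~2 and~3 of the Case~3 test), not for bridging the $T$-gap in step~1.

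The paper closes the gap differently. The state of the \cref{cor:greedy_long} instance after reading $T[r]$ contains, besides the encoding up to the last block boundary $p'$, the \emph{buffer} $T(p'\dd r]$ of at most $k^2$ recently received text characters awaiting the next block. Since we also have random access to $D^\infty$ via $P$, one computes $\qgr_{32k}(T(p'\dd r],\,D^\infty(p'-p\dd r-p])$ directly via \cref{claim:short_strings_qga} (both arguments have length at most $k^2$) and then concatenates via \cref{lm:greedy_concatenation}, all in $\tOh(k^5)$ time. This is what the text means by ``use the memory of the algorithm.''

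Two minor points. First, the target of the trigger is $\G_T=\qgr_{30k}(\mu,\tau)$, not $\qgr_{30k}(\pi,\mu)$ as you wrote; the latter is $\G_P$, precomputed. Second, no amortisation is needed for the Case~3 testing: there are at most $2k+1$ candidates $r'$ per position and level, and each test costs $\tOh(k^5)$, already within the $\tOh(k^6)$ per-character budget.
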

\begin{proof}
Let us first upper bound the space complexity of the algorithm. For each $i = 1, \ldots, z = \Oh(\log m)$, we store the set $\aOcc_k(P_i, T)$ as $\Oh(k^3)$ chains. For each chain, we launch the dynamic programming algorithm (\cref{fct:dynprog}), which takes $\tOh(k^2)$ space and~\cref{cor:greedy_long} that takes $\tOh(k^2)$ space. The pattern matching algorithms for the rotations of $Q_i$ take $\tOh(k)$ space in total. Finally, testing if a position of the text is the rightmost position of a $k$-edit occurrence of $P_i$ requires $\tOh(k^2)$ space. 

We now show the time bound. Updating the chains takes $\tOh(1)$ time. At any time, we run $\tOh(k^3)$ instances of~\cref{cor:greedy_long} that takes $\Oh(k^3)$ amortised time per character. To test each position $r$, we spend $\tOh(k \cdot k^3)$ time.
\end{proof}
 

\subsection{Streaming Algorithm}\label{sec:spst}
We now modify the algorithm for the semi-streaming model to develop a fully streaming algorithm. W.l.o.g., assume $k \le m$ and take $\delta = 1/n^c$ for $c$ large enough. 

\subsubsection{Preprocessing}
We define the prefixes $P_i$ and their periods $Q_i$ exactly in the same way as in Section~\ref{sec:rpst}. Recall that $\ell_i = |P_i|$ and $q_i = |Q_i|$. For every $i > 1$, we store the following information, where all sketches $\skq$ (\cref{cor:skq}) are parametrized by probability $\delta$, maximal length $\Delta_i$, the alphabet of $P$ and $T$, and a seed of $\Oh(\log^2 n \log(1/\delta))$ random bits:

\begin{enumerate}
\item $P[\ell_i]$ and the sketch $\skq_{k}(P[\ell_{i-1} \dd \ell_{i}])$;
\item For each of rotation $D$ of $Q_i$ that can be a difference of a chain of $k$-edit occurrences of $P_i$, the encoding $\qgr_{30k} (P[\ell_{i-1}+1 \dd \ell_i], D^\infty[1 \dd \Delta_i])$, where $\Delta_i = \ell_i - \ell_{i-1}+k$;
\item For each of rotation $D$ of $Q_i$ that can be a difference of a chain of $k$-edit occurrences of $P_i$, the sketches $\skq_{32k}(D)$ and $\skq_{32k}(D[1 \dd \Delta_i \pmod {q_i}])$.
\end{enumerate}

\subsubsection{Main Stage}
As in~\cref{sec:rpst}, for each $i$, we store $\aOcc_k^E(P_{i-1}, T)$ in $\Oh(k^3)$ chains. For each chain $\chain$, we store the leftmost position $lp$ in it, its size $|\chain|$, and the smallest edit distance, $\ed(\chain)$, from a suffix of $T[1\dd lp]$ to $P_{i-1}$. If $|\chain| \ge 2$, we also store its difference (defined by the shift of the rotation of $Q_{i-1}$). 

If $p^\ast$ is the first position added to $\chain$, then at the position $(p^\ast+1)$, we start running the streaming algorithm of~\cref{cor:skq}\ref{it:skq_construction} for computing the sketch $\skq_{k}(T[p^ast+1\dd p^ast+\Delta_{i}])$. 

Furthermore, consider the moment when we detect the second position in $\chain$ (if it exists) and hence the difference $D$ of the chain. Starting from this moment, for every newly added position $p \in \chain$, at the position $(p+1)$ we start computing the quasi-greedy encoding $\qgr_{32k} (T[p+1 \dd p+\Delta_i], D^\infty[1\dd\Delta_i])$ as follows: Assume that we have computed $\qgr_{32k} (T[p+1 \dd p+\ell\cdot q_i], D^\infty[1\dd\ell\cdot q_i])$. Suppose first that $(\ell+1)\cdot q_i \le \Delta_i$. While reading $T[p+\ell \cdot q_i+1\dd p+(\ell+1)\cdot q_i]$, we compute $\skq_{32k}(T[p+\ell \cdot q_i+1\dd p+(\ell+1)\cdot q_i])$ again via the algorithm of~\cref{cor:skq}\ref{it:skq_construction}. We then use this sketch and $\skq_{32k+1}(D)$ to compute $\qgr_{32k}(T[p+\ell \cdot q_i+1\dd p+(\ell+1)\cdot q_i], D)$ (\cref{cor:skq}\ref{it:skq_qgr}) and then $\qgr_{32k} (T[p+1 \dd p+(\ell+1)\cdot q_i], D^\infty[1\dd(\ell+1)\cdot q_i])$ (\cref{lm:greedy_concatenation}). If $(\ell+1)\cdot q_i > \Delta_i$, we use the sketches $\skq_{32k}(T[p+\ell\cdot q_i \dd p+\Delta_i])$ and $\skq_{32k}(D[1 \dd \Delta_i \pmod {q_i}])$, the rest is analogous. We continue running the algorithm until either the computation is completed or a new $k$-edit occurrence in the chain has been detected. In other words, in the end we compute the encoding for the rightmost position in the chain.

\paragraph{Detecting new $k$-edit occurrences of $P_i$.} 
We now explain how we modify the algorithm for detecting new $k$-edit occurrences of the prefixes $P_i$. The algorithm for Case~1 does not change.
Instead of the dynamic programming algorithm in Case~2, we use $\skq_{k}$ and use~\cref{cor:skq}\ref{it:skq_qgr} and then \cref{cor:qgr_to_ed} to compute the edit distance.  It remains to explain how we modify the algorithm for Case 3. 

We exploit~\cref{eq:ed} again. We can find all positions $r'$ in $\Oh(k^3)$ time. To test a position $r'$, it suffices to compute the edit distance between $P[\ell_{i-1}+1\dd \ell_i]$ and $T[r'+1\dd r]$. Let $\chain$ be the chain with difference $D$ that contains $r'$. If $|\chain| = 1$, we use the edit distance sketch, and otherwise the quasi-greedy encodings. By~\cref{lm:dist_to_Y}, $\ed(P[\ell_{i-1} \dd \ell_i], D^\infty [1 \dd \Delta_i]) \le 26k$ and hence by~\cref{cor:qgr_to_ed}, the edit distance between $P[\ell_{i-1}+1\dd \ell_i]$ and $T[r'+1\dd r]$ can be computed from the encodings $\G_P = \qgr_{30k} (\pi, \mu)$ and $\G_T = \qgr_{30k} (\mu, \tau)$ for $\pi = P[\ell_{i-1} \dd \ell_i]$, $\mu = D^\infty [1 \dd \Delta_i]$, and $\tau = T[r'+1 \dd r]$. $\G_P$ was computed during the preprocessing step and we store it explicitly. Hence, we only need to explain how to compute $\G_T$. Let $p$ be the rightmost position in the chain $\chain$. 

\begin{enumerate}
\item Recall that $T[r'+1\dd p] = D^j$ for $j = (p-r')/q_i$. We first compute $\qgr_{32k}(D, D)$ from $\skq_{32k}(D,D)$ via~\cref{cor:skq}\ref{it:skq_qgr}, and then $\G_{T,1} = \qgr_{32k}(T[r'+1\dd p], D^j) = \qgr_{32k}(D^j, D^j)$ in $\tOh(k^5)$ time and $\tOh(k^2)$ space as in Section~\ref{sec:rpst}. 

\item At the position $(p+1)$ we launched the streaming algorithm computing $\qgr_{32k} (T[p+1 \dd p+\Delta_i], D^\infty[1\dd\Delta_i])$ with a delay of $q_i$ characters. We have $p+\Delta_i \ge r' + \Delta_i \ge r$. Therefore, at a position $p+\ell \cdot q_i$, where $\ell = \lfloor (r-p+1) / q_i \rfloor$, the algorithm computes $\G_{T,2} = \qgr_{32k} (T[p+1 \dd p+\ell \cdot q_i], D^\infty[1\dd \ell \cdot q_i])$. 
The algorithm then continues to compute the sketch $\skq_{32k}(T[p+\ell \cdot q_i+1 \dd r])$. We use this sketch and the sketch $\skq_{32k}(D[1 \dd \Delta_i \pmod {q_i}])$ to compute $\G_{T,3} = \qgr_{30k} (T[p+\ell \cdot q_i+1 \dd r], D[1 \dd \Delta_i \pmod {q_i}])$ via~\cref{cor:skq}\ref{it:skq_qgr} and~\cref{obs:qgr_larger}. 

\item We finally concatenate $\G_{T,1}$ and $\G_{T,2}$ to obtain $\qgr_{32k} (T[r'+1 \dd r'+(\ell+j) \cdot q_i], D^\infty[1\dd (\ell+j) \cdot q_i])$, and then $\qgr_{32k} (T[r'+1 \dd r'+(\ell+j) \cdot q_i], D^\infty[1\dd (\ell+j) \cdot q_i])$ and $\G_{T,3}$ to obtain $\G_T$ via~\cref{lm:greedy_concatenation} (note that the difference of lengths of strings in $\G_{T,3}$ is bounded by~$2k$). 
\end{enumerate}

\paragraph{Updating the chains.}
When we detect a new $k$-edit occurrence of $P_i$, we must decide if it should be added to some existing chain or if we must create a new chain for this occurrence. We use the algorithm of Section~\ref{sec:rpst}, but replace the constant-space pattern matching algorithm with the streaming pattern matching algorithm~\cite{Porat:09} that for a rotation of $Q_i$ takes $\tOh(1)$ space and $\tOh(1)$ time per character and retrieves all its occurrences correctly with probability at least $1-\delta$. 

\subsubsection{Analysis}
We summarize the results of this section:

\begin{theorem}\label{th:spst}
Given a pattern $P$ of length $m$ and a text $T$ of length $n$. There is a streaming algorithm that finds the set $\Occ_k^E(P,T)$ using $\tOh(k^5)$ space and $\tOh(k^8)$ amortised time per character of the text $T$. The algorithm computes $\Occ_k^E(P,T)$ correctly with high probability. 
\end{theorem}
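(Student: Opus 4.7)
The plan is to analyze correctness and complexity of the streaming algorithm described in \cref{sec:spst}, which structurally parallels the semi-streaming algorithm of \cref{th:rpst} but replaces read-only access to $P$ by preprocessed sketches $\skq_k$ of pattern prefixes, and replaces dynamic programming and deterministic rotation matching by their randomized streaming counterparts. First, I would condition on the success of every randomized subroutine invoked over the processing of $T$: the $\poly(n)$ sketch decodings from \cref{cor:skq}\ref{it:skq_qgr} and the $\Oh(k\log m)$ streaming pattern matching routines of~\cite{Porat:09} used to detect occurrences of the rotations of the $Q_i$. Since each subroutine fails with probability at most $\delta = 1/n^c$, choosing $c$ large enough and applying a union bound yields the claimed high-probability correctness.

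Conditioned on these successes, correctness follows by induction on the levels $i=1,\ldots,z$ exactly as in the semi-streaming proof. The set $\aOcc_k^E(P_i,T)$ is represented as $\Oh(k^3)$ chains by \cref{cor:structure-nonperiod,cor:structure-period}; each new candidate position $r$ is classified via \cref{eq:ed} by enumerating at most $\Oh(k^3)$ predecessors $r' \in \aOcc_k^E(P_{i-1},T)$. In Case~2, the edit distance $\ed(P[\ell_{i-1}+1 \dd \ell_i], T[r'+1\dd r])$ is recovered (up to threshold $k$) by decoding $\skq_k$ via \cref{cor:skq}\ref{it:skq_qgr} and applying \cref{cor:qgr_to_ed}. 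In Case~3, the encoding $\G_T = \qgr_{30k}(\mu,\tau)$ is assembled from three pieces: the power $\qgr_{32k}(D^j,D^j)$, obtained by $\Oh(\log j)$ doublings of the preprocessed $\qgr_{32k}(D,D)$ via \cref{lm:greedy_concatenation}; the running encoding $\qgr_{32k}(T[p+1\dd p+\ell q_i], D^\infty[1\dd \ell q_i])$ maintained online by compressing each length-$q_i$ block of $T$ through $\skq_{32k}$ and \cref{cor:skq}\ref{it:skq_qgr} and concatenating via \cref{lm:greedy_concatenation}; and a trailing piece for the up-to-$q_i$ residual characters obtained from the preprocessed sketch $\skq_{32k}(D[1\dd \Delta_i \bmod q_i])$. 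The bound $\ed(\pi,\mu) \le 26k$ of \cref{lm:dist_to_Y} certifies that threshold $30k$ is admissible for \cref{cor:qgr_to_ed}, and the slack $32k - 30k = 2k$ accommodates the length discrepancies encountered during concatenations.

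For the space bound, I would account level-by-level. At each of the $\Oh(\log m)$ levels, preprocessing leaves $\tOh(k^2)$ space for $\skq_k(P[\ell_{i-1}\dd \ell_i])$ and $\tOh(k^3)$ total space across the $\Oh(k)$ encodings $\qgr_{30k}(\cdot,D^\infty[1\dd \Delta_i])$ and sketches $\skq_{32k}(D)$, $\skq_{32k}(D[1\dd \Delta_i \bmod q_i])$ for feasible chain differences. The online state consists of $\Oh(k^3)$ chains, each carrying an in-progress $\skq_k$ sketch, a partial $\qgr_{32k}$ encoding, and $\Oh(1)$ scalar fields, occupying $\tOh(k^2)$ per chain by \cref{cor:skq}\ref{it:skq_construction} and \cref{cor:greedy_size}. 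Summing $\tOh(k^5)$ per level over $\Oh(\log m)$ levels and absorbing the $\tOh(k)$ rotation-matching instances of~\cite{Porat:09} yields the target $\tOh(k^5)$.

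The main obstacle is the amortized time analysis. Per character of $T$, at each level I must: (a) feed the character into $\Oh(k^3)$ active streaming sketch constructions at cost $\tOh(k^2)$ each, giving $\tOh(k^5)$; (b) advance the delayed encoding construction per chain, which each $q_i$ characters performs one sketch decoding ($\tOh(k^5)$) and one concatenation via \cref{lm:greedy_concatenation} ($\tOh(k^5)$), amortizing to $\tOh(k^5/q_i) \le \tOh(k^5)$ per chain per character, i.e.\ $\tOh(k^8)$ across all chains; and (c) for each candidate $r$, test the $\Oh(k^3)$ predecessors $r'$ from \cref{eq:ed} by assembling $\G_T$ and invoking \cref{cor:qgr_to_ed} at cost $\tOh(k^5)$ per test, giving $\tOh(k^8)$ per level. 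Multiplying by $\Oh(\log m)$ levels and observing that the streaming pattern matching of~\cite{Porat:09} adds only $\tOh(k)$ per character yields the claimed $\tOh(k^8)$ amortised time per character of $T$.
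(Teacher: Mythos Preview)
Your proof plan is correct and follows essentially the same approach as the paper: correctness via a union bound over the $\poly(n)$ sketch decodings and the rotation-matching instances of~\cite{Porat:09}, space via $\tOh(k^3)$ chains carrying $\tOh(k^2)$-size sketches and encodings, and time dominated by the $\tOh(k^3)$ concatenations/decodings per character at $\tOh(k^5)$ each. Minor cosmetic differences (you quote $\tOh(k^2)$ per character for sketch construction whereas \cref{cor:skq}\ref{it:skq_construction} gives $\tOh(k)$, and you separate the amortized encoding maintenance in~(b) from the per-position tests in~(c), whereas the paper lumps both into ``$\tOh(k^3)$ instances of \cref{lm:greedy_concatenation,cor:qgr_to_ed} per character'') do not affect the final bounds.
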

\begin{proof}
By~\cref{cor:structure-period}, for a fixed $i$ only $\Oh(k)$ rotations of $Q_i$ can be a difference of a chain of occurrences of $P_i$. The sketch $\skq_{30k+1}(\cdot)$ takes $\tOh(k^2)$ space (\cref{cor:skq}\ref{it:skq_construction}) and $\qgr_{30k}(\cdot,\cdot)$ $\tOh(k^2)$ space as well (\cref{cor:qgr_to_gr}). Therefore, the information computed during the preprocessing stage occupies $\tOh(k^3)$ space. During the main stage, we store $\tOh(k^3)$ chains. For each chain, we run the algorithm of~\cref{cor:skq}\ref{it:skq_construction} that takes $\tOh(k^2)$ space. The algorithm than computes the quasi-greedy encoding (\cref{cor:skq}\ref{it:skq_qgr}) takes $\tOh(k^2)$ space as well. In total, the information we store for the chains occupies $\tOh(k^5)$ space. When checking for new occurrences, we apply~\cref{lm:greedy_concatenation,cor:qgr_to_ed}, which require an overhead of $\tOh(k^2)$ space. Finally, the streaming pattern matching algorithms for the rotations of $Q_i$ that can be differences of occurrences of $P_i$ take $\tOh(k)$ space in total. The space bound follows. 

At any time, we run $\tOh(k^3)$ instances of the algorithm of ~\cref{cor:skq}\ref{it:skq_construction} that takes $\tOh(k)$ amortised time per character. In addition, for every character we run $\tOh(k^3)$ instances of the algorithms of~\cref{lm:greedy_concatenation,cor:qgr_to_ed} taking $\tOh(k^8)$ time in total. The pattern matching algorithms for the rotations of $Q_i$ take $\tOh(k)$ time per character.

Note that the only probabilistic procedures in the algorithm are streaming pattern matching~\cite{Porat:09} and that of~\cref{cor:skq}\ref{it:skq_qgr} that computes quasi-greedy encodings. These procedures are called $\poly(n,k) = \poly(n)$ times. By choosing the constant $c$ in $\delta = 1/n^c$ large enough, we can guarantee that the algorithm is correct with high probability by the union bound.
\end{proof}

\bibliography{main}
\end{document}